\keywords{denotational semantics, concurrent games, relational models}
\newcommand{\ev}[1]{|#1|}
\newcommand{\R}{\mathcal{R}}
\newcommand{\N}{\mathcal{N}}
\def\profto{\!\!\!\xymatrix@C-.75pc{\ar[r]|-{\! +\!} &}\!\!\!}
\newcommand{\relto}{\profto}
\newcommand{\id}{\mathrm{id}}
\newcommand{\tensor}{\otimes}
\newcommand{\lin}{\multimap}
\newcommand{\bij}{\simeq}
\newcommand{\evm}{\mathsf{ev}}
\newcommand{\tuple}[1]{\langle #1 \rangle}
\newcommand{\M}{\mathcal{M}}
\newcommand{\der}{\epsilon}
\newcommand{\dig}{\delta}
\newcommand{\mon}{m}
\newcommand{\eval}{\Downarrow}
\newcommand{\intr}[1]{\llbracket #1 \rrbracket}
\newcommand{\conflict}{\mathrel{\#}} 
\newcommand{\imc}{\rightarrowtriangle}
\newcommand{\conf}[1]{\mathscr{C}(#1)}
\newcommand{\cov}{{{\,\mathrel-\joinrel\subset\,}}}
\renewcommand{\tilde}[1]{\mathscr{S}(#1)}
\newcommand{\ptilde}[1]{\mathscr{S}_+(#1)}
\newcommand{\ntilde}[1]{\mathscr{S}_-(#1)}
\newcommand{\tildep}[1]{\mathscr{S}^+(#1)}
\newcommand{\sym}{\cong}
\newcommand{\dom}{\mathsf{dom}}
\newcommand{\cod}{\mathsf{cod}}
\newcommand{\pol}{\mathrm{pol}}
\newcommand{\coin}{\mathbf{coin}}
\renewcommand{\qu}{\mathbf{q}}
\newcommand{\rintr}[1]{\llparenthesis #1 \rrparenthesis}
\newcommand{\gbool}{\mathbf{B}}
\newcommand{\gnat}{\mathbf{N}}
\newcommand{\gx}{\mathbf{X}}
\newcommand{\zconf}[1]{\mathscr{C}^0(#1)}
\newcommand{\x}{\mathsf{x}}
\newcommand{\y}{\mathsf{y}}
\newcommand{\z}{\mathsf{z}}
\newcommand{\sconf}[1]{\mathscr{C}_{\sym}(#1)}
\newcommand{\wconf}[1]{\mathscr{C}_{\sym}^0(#1)}
\newcommand{\nconf}[1]{\mathscr{C}^0(#1)}
\newcommand{\iso}{\cong}
\newcommand{\Ty}{\mathsf{ty}}
\newcommand{\Ctx}{\mathsf{ctx}}
\newcommand{\pr}{\partial}
\newcommand{\gcc}{\mathsf{gcc}}
\newcommand{\confp}[1]{\mathscr{C}^+(#1)}
\newcommand{\confpn}[1]{\mathscr{C}^{+,\neq \emptyset}(#1)}
\newcommand{\confn}[1]{\mathscr{C}^{\neq \emptyset}(#1)}
\newcommand{\wconfn}[1]{\mathscr{C}^{0, \neq \emptyset}_{\sym}(#1)}
\renewcommand{\v}{\mathscr{V}}
\renewcommand{\cc}{\mathsf{c}\!\mathsf{c}}
\newcommand{\simstrat}{\approx}
\newcommand{\C}{\mathcal{C}}
\newcommand{\term}{e}
\newcommand{\inj}{\mathsf{inj}}
\newcommand{\cleq}{\trianglelefteq}
\newcommand{\D}{\mathcal{D}}
\newcommand{\coll}{{\smallint\!}}
\newcommand{\wit}{\mathsf{wit}}
\newcommand{\swit}{\mathsf{wit}^{\sym}}
\newcommand{\spconf}[1]{\mathscr{C}_{\sym}^+(#1)}
\newcommand{\rep}[1]{\underline{#1}}
\newcommand{\pswit}{\text{$\sim^{\!\!+}\!\!$-$\wit$}}
\newcommand{\tto}{\Rightarrow}
\newcommand{\longto}{\longrightarrow}
\newcommand{\Sym}{\mathsf{Sym}}
\newcommand{\m}{\mathfrak{m}}
\def\acts{\curvearrowright}
\definecolor{grey}{rgb}{.7,.7,.7}
\newcommand{\grey}[1]{{\color{grey}#1}}
\newcommand{\Rel}[1]{\text{$#1$-$\mathsf{Rel}$}}
\newcommand{\Strat}{\mathsf{Strat}}
\newcommand{\PCF}{\mathbf{PCF}}
\newcommand{\nPCF}{\mathbf{nPCF}}
\newcommand{\tbool}{\mathbb{B}}
\newcommand{\tnat}{\mathbb{N}}
\newcommand{\tx}{\mathbb{X}}
\newcommand{\ty}{\mathbb{Y}}
\newcommand{\ttrue}{\mathbf{t\!t}}
\newcommand{\tfalse}{\mathbf{f\!f}}
\newcommand{\ite}[3]{\mathbf{if}\,#1\,#2\,#3}
\newcommand{\tsucc}{\mathbf{succ}}
\newcommand{\tpred}{\mathbf{pred}}
\newcommand{\iszero}{\mathbf{iszero}}
\newcommand{\Y}{\mathcal{Y}}
\begin{document}

\title[The Quantitative Collapse of Concurrent Games with Symmetry]{The Quantitative
Collapse\\of Concurrent Games with Symmetry}

\author[P. Clairambault]{Pierre Clairambault}	
\address{Univ Lyon, EnsL, UCBL, CNRS,  LIP, F-69342, LYON Cedex 07, France}	
\email{pierre.clairambault@ens-lyon.fr}  

\author[H. Paquet]{Hugo Paquet}
\address{University of Oxford}
\email{hugo.paquet@cs.ox.ac.uk}

\newcommand{\changed}[1]{{#1}}

\begin{abstract}
We explore links between the thin concurrent games of
Castellan, Clairambault and Winskel, and the weighted relational models of linear logic studied by 
Laird, Manzonetto, McCusker and Pagani.
More precisely, we show that there is an interpretation-preserving ``collapse'' functor from the former to the latter. On objects, the functor defines for each game a set of possible \emph{execution states}. Defining the action on morphisms is more subtle, and this is the main contribution of the paper. 

Given a strategy and an execution state, our functor needs to \emph{count} the witnesses for this state within the strategy. Strategies in thin concurrent games describe non-linear behaviour explicitly, so in general each witness exists in countably many symmetric \emph{copies}. The challenge is to define the right notion of witnesses, factoring out this infinity while matching the weighted relational model. Understanding how witnesses compose is particularly subtle and requires a delve into the combinatorics of witnesses and their symmetries. 

In its basic form, this functor connects thin concurrent games and a relational model weighted by $\mathbb{N} \cup \{+\infty\}$. We will additionally consider a generalised setting where both models are weighted by elements of an arbitrary continuous semiring; this covers the probabilistic case, among others. Witnesses now additionally carry a value from the semiring, and our interpretation-preserving collapse functor extends to this setting. 
\end{abstract}

\maketitle

\section{Introduction}

The \emph{relational model} is one of the simplest model of linear logic. It naturally gives rise to a model of higher-order programming often described as \emph{quantitative}, because aspects of computation such as the multiplicity of function calls are represented explicitly. The model assigns to every type a set known as its
\emph{web}, whose elements are thought of as (desequentialized)
execution states, and to any term, a relation. 


Relations are equivalently boolean-valued matrices, and a natural
extension of the model consists in considering more general
coefficients. This idea was first explored by Lamarche
\cite{DBLP:journals/tcs/Lamarche92} and developed in detail by Laird,
Manzonetto, McCusker
and Pagani \cite{DBLP:conf/lics/LairdMMP13,DBLP:journals/iandc/Laird20}. Their construction gives a family of \emph{weighted relational models}, in which the interpretation of a term is a matrix assigning to each point of the web a \emph{weight}, coming from a continuous semiring $\R$, the \emph{resource semiring}. This has an operational interpretation for a program $M$: while the relational model only indicates whether a given state $\alpha$ \emph{can} be realized by an execution of $M$ (\emph{i.e.} do we have $\alpha \in \intr{M}$?), the weighted
relational model aggregates information about all executions that realize this state. In the simplest case, $\R = \mathbb{N}\cup \{+\infty\}$, the model simply \emph{counts} these executions. Using various
semirings one can adequately represent probabilistic
evaluation, best and worst case time analysis, etc. 



Another well-established quantitative model -- or rather, family of
models -- is game semantics \cite{ho,ajm}. In game semantics, 
an execution is regarded as a play in a
two-player game between Player (playing for the program), and Opponent
(playing for the environment). Types are presented as games, whose
rules specify the possible executions, and terms as strategies
describing the interactive behaviour of the program under any evaluation
context. The connections between game semantics and
relational semantics have been thoroughly studied
\cite{DBLP:conf/csl/BaillotDER97,DBLP:conf/lics/Mellies05,DBLP:conf/tlca/Boudes09,DBLP:conf/lics/Ong17}.
In particular, the family of \emph{concurrent games}
\cite{DBLP:conf/lics/RideauW11,cg2} inherit from Melli\`es'
\emph{asynchronous games} \cite{DBLP:conf/lics/Mellies05} a particularly
neat relationship with relational semantics. In this framework, both games
and strategies are \emph{event structures}
\cite{DBLP:conf/ac/Winskel86}, and as such admit a canonical notion of
state/position: the \emph{configurations}. As we will see, the web can be recovered as a subset
of the configurations of the game. Then, we can ``collapse'' a strategy into a relation, by recording which of these configurations are reached, and forgetting the chronological history (see
\cite{mall} for a recent account). In an affine setting, i.e. without replication, this collapse operation can immediately be generalised to weighted relations with $\R = \mathbb{N} \cup \{+\infty\}$: if $\sigma$ is a strategy on a game $A$, and $x$ is (a point of the web
corresponding to) a configuration of $A$, the collapse simply \emph{counts} the distinct configurations of $\sigma$ realizing $x$. 

The difficulty arises in the non-affine setting, necessary for
languages with duplication of resources. For this, the mature extension of concurrent
games is \emph{thin concurrent games with symmetry} \cite{cg2}. In thin
concurrent games, infinite games arise from
the construction $\oc A$, which creates countably many copies of $A$ labelled with natural numbers, called \emph{copy indices}. (This is similar to the situation in AJM games \cite{ajm}.) Games are
equipped with sets of bijections called \emph{symmetries} (so they are
\emph{event structures with symmetry}
\cite{DBLP:journals/entcs/Winskel07}) which specify authorized
reindexings. Additionally strategies must act uniformly with respect to
these symmetries. The collapse to the relational model is relatively
undisturbed by symmetry: points of the web now correspond to a subset
of \emph{symmetry classes} of configurations and, as before, we
collapse a strategy to the set of symmetry classes it reaches (see \emph{e.g.} \cite{cg3}). 
However the extension to the \emph{weighted} relational model is no longer obvious. We cannot simply count all concrete configurations of
$\sigma$ witnessing some symmetry class: there are infinitely many. This prompts the central question of this paper: how can we count configurations up to symmetry, in order to match coefficients of the weighted relational model? In other words: what does the weighted relational model count?

An answer to this question is our main contribution. For a symmetry class of configurations of the game, an apparent ``obvious'' solution is to consider the set of corresponding \emph{symmetry classes} of configurations of the strategy. Suprisingly, the induced coefficient is wrong! Instead we are led to introduce a notion of
\emph{positive witnesses} for a given symmetry class of configurations of the
game. We will show that counting positive witnesses yields an interpretation-preserving collapse to the
relational model weighted by $\mathbb{N} \cup \{+\infty\}$. In proving 
this, the main
challenge is functoriality of the collapse: whereas in the affine
case, witnesses in a composite $\tau \odot \sigma$ cleanly correspond
to pairs of witnesses in $\sigma$ and $\tau$, this
fails with symmetry, and is only salvaged via a proper account of symmetries on both sides. 

Finally, we also 
extend our results to any continuous semiring $\R$ with a condition
called \emph{integer
division}. We consider an extension of thin concurrent games where strategies carry valuations
in $\R$, and show that this collapses into the $\R$-weighted relational model.

\paragraph{Related work} To our knowledge, the first quantitative
collapse from games to relations is from
\emph{probabilistic} concurrent games to the relational model weighted by
$\overline{\mathbb{R}}_+ = \mathbb{R} \cup \{+\infty\}$ in
\cite{DBLP:conf/lics/CastellanCPW18}. However, when working on an
extension to a quantum language
\cite{DBLP:journals/pacmpl/ClairambaultV20}, the first author discovered
an error in \cite{DBLP:conf/lics/CastellanCPW18}: the paper uses
symmetry classes as witnesses, which  \textemdash as we show here\textemdash{} is
inadequate. The correct notion of witness and its validity \emph{w.r.t.}
composition was established by the first author in an unpublished
report \cite{DBLP:journals/corr/abs-2006-05080}. Here we
complete this to a full interpretation-preserving functor and to the
$\R$-weighted case. In particular, Theorem \ref{thm:main4} for $\R =
\overline{\mathbb{R}}_+$ corrects the collapse theorem of
\cite{DBLP:conf/lics/CastellanCPW18}.

\paragraph{Outline} In Section \ref{sec:wrel_npcf} we recall the
$\R$-weighted relational model and the language of concern for most of
the paper, a non-deterministic $\PCF$. In Section \ref{sec:cg_pcf}, we
recall \emph{thin concurrent games} and the corresponding interpretation of
non-deterministic $\PCF$. In Section \ref{sec:wit_comp} we address the
main challenge of the paper, the definition of positive witnesses
and their compatibility with composition. In Section \ref{sec:pres_intr}, we fix $\R = \mathbb{N} \cup \{+ \infty\}$, show a number of properties ensuring that the interpretation is preserved, and prove our main result (Theorem
\ref{thm:main2}). Finally, in Section \ref{sec:coll_rw} we generalize the result to an arbitrary $\R$ (Theorem
\ref{thm:main4}).

\section{The Weighted Relational Model and $\nPCF$}
\label{sec:wrel_npcf}

\paragraph{Notations.} If $X$ is a set, we write $\mathcal{P}_f(X)$ for
the finite subsets, and $\M_f(X)$ for finite
multisets. For $x_1, \dots, x_n \in X$, $[x_1, \dots, x_n] \in
\M_f(X)$ is the corresponding multiset. We use $\mu, \nu \in \M_f(X)$
to range over multisets; and write $\mu + \nu \in \M_f(X)$ for the sum
of multisets, where $x \in X$ has multiplicity the sum of its
multiplicities in $X$ and $Y$.
If $\R = (\ev{\R}, +, \cdot, 0,
1)$ is a semiring and $x, y \in \ev{R}$, the \emph{Kronecker
symbol} $\delta_{x, y}$ means $1$ if $x = y$, and $0$ otherwise. If
$X$ is a set, we write $\sharp X \in \mathbb{N} \cup \{+\infty\}$ for
its cardinality if $X$ is finite, $+\infty$ otherwise.

We assume some familiarity with
categorical logic, in particular 
Seely categories \cite{panorama}.
\subsection{Continuous semirings}
We first recall the construction of the \emph{$\R$-weighted
relational model}, where $\R$ is a continuous semiring of
\emph{resources}. Our presentation follows
\cite{DBLP:conf/lics/LairdMMP13}.

A \textbf{complete partial order (cpo)} is a poset
$(X, \leq)$ with a bottom and such that any directed subset $D
\subseteq X$ has a sup $\vee D \in X$. 
\changed{For a cpo $X$, $F : X \to X$} is
\textbf{continuous} if it is monotone and preserves all suprema of
directed sets, \emph{i.e.} $F(\bigvee D) = \bigvee (F(D))$. 
\changed{An $n$-ary function $X^n \to X$} is continuous if it is continuous in each of its parameters. 
\begin{defi}
A \textbf{continuous semiring} $\R$ is a semiring $(\ev{\R}, +, \cdot,
0, 1)$ equipped with a partial order $\leq$ such that $(\ev{\R}, \leq)$
is a cpo with $0$ as bottom, and $+$ and $\cdot$ are continuous.
\end{defi}

We often denote the carrier set $\ev{\R}$ just by $\R$. The point
of considering the ordered structure on $\R$, is that for any $\R$ and
possibly infinite subset $S \subseteq \R$, the indexed sum 
\begin{eqnarray}
\sum_{x \in S} x &=& \bigvee_{F \subseteq_f S} \left( \sum_{x\in F} x
\right)\label{eq:inf_sum}
\end{eqnarray}
is always defined as the supremum of all the partial sums.

We impose two further conditions on continuous semirings. As in
\cite{DBLP:conf/lics/LairdMMP13}, they should be \textbf{commutative}:
$r \cdot r' = r' \cdot r$ for all $r, r' \in \R$.
Additionally, they should \emph{have integer
division}. If $x \in \R$ and $n \in \mathbb{N}$ is an integer, then one
may define $n*x = x + \dots + x$ (with $n$ occurrences of $x$). We say
that $\R$ has \textbf{integer division} if for all $n \geq 1$, for all
$x, y \in \R$, if $n*x = n*y$ then $x = y$. Unlike commutativity, this
condition is not required in \cite{DBLP:conf/lics/LairdMMP13};
nevertheless, all examples considered in
\cite{DBLP:conf/lics/LairdMMP13} do have integer division. From now on,
we assume all continuous semirings satisfy these two conditions.

Our core example of a continuous semiring is the following: 
\begin{defi}\label{def:semiring_N}
We write $\N$ for the continuous semiring $(\ev{\N}, +, \cdot, 0, 1)$ \changed{
equipped with the standard order on $\mathbb{N}$ extended with $x \leq +\infty$ for all $x \in \N$.
To ensure continuity we take $+$} to be the usual sum extended with $(+\infty) + x =
x + (+\infty) = +\infty$, and $\cdot$ to be  
multiplication extended with $+\infty \cdot 0 = 0 \cdot +\infty = 0$,
and $+\infty \cdot x = x \cdot +\infty = +\infty$ for any $x>0$. 
\end{defi}
As described in \cite{DBLP:conf/lics/LairdMMP13}, $\N$ may be used to
\emph{count} operational reduction sequences in a non-deterministic
language. There are other examples
\cite{DBLP:conf/lics/LairdMMP13}, including the completed non-negative
reals $\overline{\mathbb{R}}_+$, which provide an adequate model for
$\PCF$ with probabilistic choice. \changed{($\N$ has a canonical place
among those examples, because it is an initial object in the category
of continuous semirings and structure-preserving continuous maps.) 
 }


\subsection{Weighted relations} \changed{We fix a continuous semiring $\R$ and define the category $\Rel{\R}$ of $\R$-weighted relations. }
An \textbf{$\R$-relation} from a set $X$ to a set $Y$ is simply a function
\[
\alpha : X \times Y \to \R\,,
\]
also written $\alpha : X \profto Y$, regarded as a matrix with
coefficients in $\R$. We \changed{usually} 
 write $\alpha_{x, y}$ for
the coefficient $\alpha(x, y) \in \R$. For $\alpha : X \relto Y$
and $\beta : Y \relto Z$ and $x \in X, z \in Z$, we set
\begin{eqnarray}
(\beta \circ \alpha)_{x, z} &=& \sum_{y \in Y} \alpha_{x, y} \cdot
\beta_{y, z}\label{eq:relsum}
\end{eqnarray}
for the coefficients of the \textbf{composition} $\beta \circ \alpha : X
\relto Z$. For $X$ a set, the \textbf{identity} on $X$ has
$(\id_X)_{x, x'} = \delta_{x, x'}$; 
\emph{i.e.} the diagonal matrix on $X$ with only $1$'s as diagonal
coefficients. 

\begin{prop}
For any continuous semiring $\R$, there is a category $\Rel{\R}$ with sets as objects, and $\R$-relations from $X$ to $Y$ as morphisms. 
\end{prop}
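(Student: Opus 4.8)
The plan is to verify the three category axioms: associativity of composition, and the left and right identity laws. The key technical point throughout is that the infinite sums in \eqref{eq:relsum} are well-defined by \eqref{eq:inf_sum}, and that the continuity and commutativity of $\R$ allow us to manipulate them freely — in particular, to exchange the order of two indexed sums and to distribute $\cdot$ over an indexed sum, since $\cdot$ is continuous hence preserves directed suprema in each argument.

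First I would record the basic sum-manipulation lemmas we need: for a family $(x_i)_{i \in I}$ in $\R$ and any $r \in \R$ we have $r \cdot \left( \sum_{i \in I} x_i \right) = \sum_{i \in I} r \cdot x_i$ and symmetrically on the other side, and for a doubly-indexed family $(x_{i,j})_{(i,j) \in I \times J}$ we have $\sum_{i \in I} \sum_{j \in J} x_{i,j} = \sum_{(i,j) \in I \times J} x_{i,j} = \sum_{j \in J} \sum_{i \in I} x_{i,j}$. Both follow from the definition \eqref{eq:inf_sum} as a directed sup of finite partial sums, together with continuity of $+$ and $\cdot$; the Fubini-type statement also uses that finite subsets of $I \times J$ are dominated by ``rectangles'' $F \times G$ with $F \subseteq_f I$, $G \subseteq_f J$, which are cofinal.

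Next, associativity: given $\alpha : X \relto Y$, $\beta : Y \relto Z$, $\gamma : Z \relto W$, I would expand both $((\gamma \circ \beta) \circ \alpha)_{x,w}$ and $(\gamma \circ (\beta \circ \alpha))_{x,w}$ into the double sum $\sum_{y \in Y} \sum_{z \in Z} \alpha_{x,y} \cdot \beta_{y,z} \cdot \gamma_{z,w}$, using the distributivity lemma to pull the fixed coefficient inside or outside the inner sum, then apply Fubini to see the two expansions agree. Associativity of $\cdot$ in $\R$ makes the triple product unambiguous. For the identity laws, I would compute $(\id_Y \circ \alpha)_{x,y} = \sum_{y' \in Y} \alpha_{x,y'} \cdot \delta_{y',y}$; every term with $y' \neq y$ is $\alpha_{x,y'} \cdot 0 = 0$, so the sum collapses to the single term $\alpha_{x,y} \cdot 1 = \alpha_{x,y}$ (here one must note that adding zeros to a sum does not change its value, which is immediate from \eqref{eq:inf_sum}). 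The law $\alpha \circ \id_X = \alpha$ is symmetric.

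The main obstacle — really the only non-routine point — is justifying the interchange of the two infinite sums in the associativity proof, since for genuinely infinite $Y$ and $Z$ this is not a finite reshuffling but requires the cpo structure. I expect this to go through cleanly from the cofinality remark above plus continuity of $+$, but it is the step that genuinely uses the hypothesis that $\R$ is a \emph{continuous} semiring rather than merely a commutative one; commutativity of $\cdot$ is not actually needed for this proposition, only associativity and distributivity, together with $0$ being absorbing (for the identity laws) and neutral (for discarding zero terms).
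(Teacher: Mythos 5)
Your verification is correct: the paper states this proposition without proof (treating it as routine and deferring to Laird, Manzonetto, McCusker and Pagani), and your argument is exactly the standard one it implicitly relies on, with the genuinely non-trivial step --- the Fubini-style interchange of the two indexed sums via cofinality of rectangles and continuity of $+$ and $\cdot$ --- correctly identified and handled. Your closing observation that commutativity of $\cdot$ is not needed here, only associativity, distributivity and the absorbing/neutral behaviour of $0$ and $1$, is also accurate.
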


\subsection{Categorical structure} \changed{
$\Rel{\R}$ is a Seely category: it is symmetric monoidal closed with
finite products and a linear exponential comonad. We review this
structure now. In fact, although this is not true in general for Seely
categories, $\Rel{\R}$ is \emph{compact closed}.}

\subsubsection{Compact closed structure}
The \emph{tensor} $X \tensor Y$ of
two sets $X, Y$, is simply their cartesian product $X \times Y$.
The \textbf{tensor} of $\alpha_1 : X_1 \relto Y_1$ and $\alpha_2 : X_2
\relto Y_2$ has coefficients
\[
(\alpha_1 \tensor \alpha_2)_{(x_1, x_2), (y_1, y_2)} = (\alpha_1)_{x_1,
y_1} \cdot (\alpha_2)_{x_2, y_2}
\]
for $(x_1, x_2) \in X_1 \tensor X_2$ and $(y_1, y_2) \in Y_1 \tensor
Y_2$, yielding $\alpha_1 \tensor \alpha_2 : X_1 \tensor X_2 \relto Y_1
\tensor Y_2$.

This operation yields a bifunctor $- \tensor - : \Rel{\R} \times
\Rel{\R} \to \Rel{\R}$ completed with
\[
\begin{array}{rcrcl}
\alpha_{X, Y, Z} &:& (X \tensor Y) \tensor Z &\relto& X \tensor (Y
\tensor Z)\\
\lambda_X &:& 1 \tensor X &\relto & X\\
\rho_X &:& X \tensor 1 &\relto& X\\
s_{X, Y} &:& X \tensor Y &\relto& Y \tensor X
\end{array}
\]
defined as the obvious variants of the identity matrix, where $1 =
\{\bullet\}$ is a singleton set. Those satisfy the \changed{necessary}
 naturality and
coherence properties, making $\Rel{\R}$ a symmetric monoidal category.
Furthermore, any set $X$ has a \emph{dual} $X^*$ defined simply as $X$
itself, and
\[
\eta_X : 1 \relto X \times X\,,
\qquad
\qquad
\epsilon_X : X\times X \relto 1
\]
turn $\Rel{\R}$ into a \emph{compact closed category}. In particular, it
follows that $\Rel{\R}$ is automatically \emph{symmetric monoidal
closed}. For $X$ and $Y$ any two sets, this gives us a notion of
\emph{linear arrow} $X \lin Y$, defined simply as $X \times Y$. We also
get a \textbf{currying} bijection for $X, Y, Z$:
\[
\Lambda : \Rel{\R}(X\tensor Y, Z) \bij \Rel{\R}(X, Y \lin Z)
\]
with, for any $\alpha : X \tensor Y \relto Z$, $\Lambda(\alpha)_{x, (y,
z)} = \alpha_{(x, y), z}$. We also get an \textbf{evaluation} morphism:
$\evm_{X, Y} : (X \lin Y) \tensor X \relto Y$
defined as having coefficients $(\evm_{X, Y})_{((x, y), x'), y')} =
\delta_{x, x'} \cdot \delta_{y,y'}$.

\subsubsection{Cartesian structure} Furthermore, $\Rel{\R}$ is 
cartesian. First, the empty set $\emptyset$ is a terminal
object, also written $\top$. If $X, Y$ are sets, we define $X \with Y$ as $X + Y$ their
\textbf{tagged disjoint union}, defined as $(\{1\} \times X) \uplus (\{2\}
\times Y)$. Note that here and from now on, we use $\uplus$ to denote
the standard set-theoretic union, when it is known to be disjoint. We
have
\[
\pi_1 : X \with Y \relto X\,,
\qquad
\qquad
\pi_2 : X \with Y \relto Y\,,
\]
the \textbf{projections} respectively defined as $(\pi_1)_{(i, x), y} = 1$
if $i = 1$ and $x = y$, and $0$ otherwise -- $\pi_2$ is defined
symmetrically. For $\alpha : X \relto Y$ and $\beta : X \relto Z$, their
\textbf{pairing} is
\[
\tuple{\alpha, \beta} : X \relto Y \with Z\,,
\]
defined with $\tuple{\alpha, \beta}_{(1, x), z} = \alpha_{x, z}$ and
$\tuple{\alpha, \beta}_{(2, y), z} = \beta_{y, z}$.
\changed{This makes $\Rel{\R}$ a cartesian 
category.} 
One must keep in mind that $\Rel{\R}$ is \emph{not} cartesian closed, as the closed structure is with respect to the tensor $\tensor$
and not the cartesian product $\with$.

\subsubsection{Linear exponential comonad.} \changed{We define a comonad $\oc$ on $\Rel{\R}$. On objects the operation $X \mapsto \oc X$ constructs the free commutative comonoid: this is defined as $\oc X = \M_f(X)$.
The action on morphisms is determined by the universal property of $\oc X$, but we give an explicit definition.}
For $\alpha : X \relto Y$, we set
\[
(\oc \alpha)_{\mu, [y_1, \dots, y_n]} = 
\sum_{\substack{(x_1, \dots, x_n)\,,\text{s.t.}\\ \mu = [x_1, \dots, x_n]}} \prod_{1\leq i \leq n}
\alpha_{x_i, y_i}\,.
\]

\changed{
Note that $\oc \alpha$ only has nonzero coefficients for pairs of multisets of the same size.
} 
Likewise, we define $\der_X : \oc X \relto X, \dig_X : \oc X \relto \oc
\oc X$,
$\mon_{X, Y} : \oc X \tensor \oc Y \relto \oc (X \with Y)$ with
\begin{eqnarray*}
(\der_X)_{\mu, x} &=& \delta_{\mu, [x]}\\
(\dig_X)_{\mu, [\nu_1, \dots, \nu_n]} &=& \delta_{\mu, \nu_1 + \dots +
\nu_n}\\
(\mon_{X, Y})_{([x_1, \dots, x_n], [y_1, \dots, y_p]), \mu} &=&
\delta_{\mu, [(1, x_1), \dots, (1, x_n), (2, y_1), \dots, (2, y_p)]}
\end{eqnarray*}
and $\mon^0
: 1 \relto \oc \top$ is defined as $1$ on its only point $(\bullet, [\,])$.
\changed{We have defined all the structure of a Seely category
\cite{panorama}, and the necessary axioms can be verified. We obtain that the Kleisli category $\Rel{\R}_\oc$
is cartesian closed.
}

\subsubsection{Recursion}\label{subsubsec:recursion}
\changed{For the interpretation of $\nPCF$ in
$\Rel{\R}$ we must give structure for recursion.} First, for any sets $X, Y$,
we order the homset $\Rel{\R}(X, Y)$ \changed{pointwise, i.e.}
\[
\alpha \leq \beta 
\quad
\Leftrightarrow
\quad
\forall x \in X, y \in Y,~\alpha_{x, y} \leq_\R \beta_{x, y}\,.
\]
It is straightforward that this defines a cpo, with bottom $\bot$ the \changed{zero matrix.}
All operations on weighted
relations involved in the Seely category structure (\emph{i.e.}
composition, tensor and pairing) are continuous with respect to this
order.

As all operations are continuous, we can define, for every set $X$, a
continuous operator
\[
\begin{array}{rcrcl}
F &:& \Rel{\R}_\oc(\top, \oc (\oc X \lin X) \lin X) &\to&
\Rel{\R}_\oc(\top, \oc (\oc X \lin X) \lin X)\\
&& \alpha & \mapsto& \lambda f.\,f\,(\alpha\,f)
\end{array}
\]
where \changed{the $\lambda$-calculus notation is well-defined since $\Rel{\R}_\oc$ is cartesian closed.} We then
define $\Y_X \in \Rel{\R}_\oc(\top, \oc (\oc X \lin X) \lin X)$ as usual with
\[
\Y_X = \bigvee_{n \in \mathbb{N}} F^n(\bot) \in \Rel{\R}_\oc(\top, \oc
(\oc X \lin X) \lin X)\,,
\] 
and with a context $Y$, we set
$\Y_{Y, X} = \Y_X \circ_\oc e_{Y}$ where $\circ_\oc$ denotes Kleisli
composition and $e_{Y} \in \Rel{\R}_\oc(Y,
\top)$ the terminal morphism, yielding 
$\Y_{Y, X} \in \Rel{\R}_\oc(Y, \oc(\oc X \lin X) \lin X)$.

\subsection{Interpretation of $\nPCF$} Now, we define $\nPCF$ and its
interpretation.

\subsubsection{Non-deterministic $\PCF$} 
The \textbf{types} of $\nPCF$ are given by the following grammar:
\[
\begin{array}{rcll}
A, B &::=& \tbool \mid \tnat \mid A \to B
\end{array}
\]
where $\tbool$ and $\tnat$ are respectively types for \emph{booleans}
and \emph{natural numbers}. We refer to $\tbool$ and $\tnat$ as
\emph{ground types}, and use $\tx, \ty$ to range over those.  
\begin{figure}
\boxit{
\begin{mathpar}
\inferrule
        { }
        { \Gamma \vdash \ttrue : \tbool }
\and
\inferrule
        { }
        { \Gamma \vdash \tfalse : \tbool }
\and
\inferrule
        { }
        { \Gamma \vdash n : \tnat }
\and
\inferrule
        { }
        { \Gamma, x : A \vdash x : A }
\and 
\inferrule
        { \Gamma, x : A \vdash M : B }
        { \Gamma \vdash \lambda x^A.\,M : A\to B }
\and 
\inferrule
        { \Gamma \vdash M : A \to B \\ 
          \Gamma \vdash N : A }
        { \Gamma \vdash M\,N : B }
\and
\inferrule 
        { \Gamma \vdash M : \tbool \\ 
          \Gamma \vdash N_1 : \tx \\
          \Gamma \vdash N_2 : \tx }
        { \Gamma \vdash \ite{M}{N_1}{N_2} : \tx }
\and
\inferrule
        { \Gamma \vdash M : \tnat }
        { \Gamma \vdash \tsucc\,M : \tnat }
\and
\inferrule
        { \Gamma \vdash M : \tnat }
        { \Gamma \vdash \tpred\,M : \tnat }
\and
\inferrule
        { \Gamma \vdash M : \tnat }
        { \Gamma \vdash \iszero\,M : \tbool }
\and
\inferrule
        { \Gamma \vdash M : A \to A } 
        { \Gamma \vdash \Y\,M : A }
\end{mathpar}
}
\caption{Typing rules for $\PCF$}
\label{fig:typ_pcf}
\end{figure}
\textbf{Typed terms} are defined via the typing rules of Figure
\ref{fig:typ_pcf} -- in this paper, all terms are
well-typed.
\textbf{Contexts} are lists of typed variables $x_1
: A_1, \dots, x_n : A_n$. \textbf{Typing judgments} have the form
$\Gamma \vdash M : A$, where $\Gamma$ is a context and $A$ is a type. In
addition to the rules listed in Figure \ref{fig:typ_pcf}, \changed{the language has}
 an explicit exchange rule \changed{for permuting variable declarations in contexts}.  Conditionals are restricted to the base
type, but \changed{as usual in call-by-name} general conditionals can be defined as syntactic sugar.
Finally, $\nPCF$ also has a
\textbf{non-deterministic choice} with typing rule:
\[
\inferrule
	{ }
	{ \Gamma \vdash \coin : \tbool }
\]

We omit the operational semantics \cite{DBLP:conf/lics/LairdMMP13}, \changed{and only recall that} we get for $\vdash M : \tx$ and value $v:\tx$ a
\emph{weight}, \changed{defined as} 
%
the number of distinct reduction
sequences evaluating $M$ to $v$. We write $M \eval^n v$ if there are exactly $n$ reduction sequences from $M$ to $v$.

\subsubsection{Interpretation} We may now define the interpretation of
$\nPCF$ in $\Rel{\N}_\oc$.

To every type $A$ we associate a set $\rintr{A}$, its \textbf{web}, 
defined by $\rintr{\tbool} = \{\ttrue, \tfalse\}$ and
$\rintr{\tnat} = \mathbb{N}$ the set of natural numbers, extended to all
types via $\rintr{A \to B} = \oc \rintr{A} \lin \rintr{B}$.
For contexts:
\[
\rintr{x_1 : A_1, \dots, x_n : A_n} = \with_{1\leq i \leq n}
\rintr{A_i}\,,
\]
and terms $\Gamma \vdash M : A$ are interpreted as morphisms
$\rintr{M} \in \Rel{\N}_\oc(\rintr{\Gamma}, \rintr{A})$. We omit the
standard definitions for the $\lambda$-calculus constructions. For
$\PCF$ combinators, we set: 
\begin{eqnarray*}
\rintr{\Gamma \vdash v : \tx}_{\gamma, v'} &=& \delta_{\gamma, []}
\cdot \delta_{v, v'}\\
\rintr{\Gamma \vdash \ite{M}{N_1}{N_2} : \tx} &=& \mathsf{if} \circ_\oc
\tuple{\rintr{M}, \rintr{N_1}, \rintr{N_2}}\\
\rintr{\Gamma \vdash \tpred\,M : \tnat} &=& \mathsf{pred} \circ_\oc
\rintr{M}\\
\rintr{\Gamma \vdash \tsucc\,M : \tnat} &=& \mathsf{succ} \circ_\oc
\rintr{M}\\
\rintr{\Gamma \vdash \iszero\,M : \tbool} &=& \mathsf{iszero} \circ_\oc
\rintr{M}\\
\rintr{\Gamma \vdash \Y\,M : A} &=& \Y_{\rintr{\Gamma},\rintr{A}}\,\rintr{M}
\end{eqnarray*}
\begin{figure}
\begin{minipage}{.49\linewidth}
\[
\scalebox{.8}{$
\begin{array}{rcrcll}
\mathsf{if} &:& \oc (\tbool \with \tx \with \tx) &\relto& \tx\\
&&\mathsf{if}_{\gamma, v} &=& 1 &\text{if $\gamma = [(1, \ttrue), (2, v)]$}\\
&&&&&\text{or $\gamma = [(1, \tfalse), (3, v)],$}\\
&&\mathsf{if}_{\gamma, v} &=& 0 &\text{otherwise.}\\\\
\mathsf{pred} &:& \oc \tnat &\relto & \tnat\\
&&\mathsf{pred}_{\gamma, n} &=& 1 &\text{if $\gamma = [0]$ and $n = 0$,}\\
&&&&&\text{or $\gamma = [k+1]$ and $n = k$,}\\
&&\mathsf{pred}_{\gamma, n} &=& 0 &\text{otherwise.}
\end{array}
$}
\]
\end{minipage}
\hfill
\begin{minipage}{.49\linewidth}
\[
\scalebox{.8}{$
\begin{array}{rcrcll}
\mathsf{succ} &:& \oc \tnat &\relto& \tnat\\
&&\mathsf{succ}_{\gamma, n} &=& 1 &\text{if $\gamma = [k]$ and $n =
k+1$,}\\
&&\mathsf{succ}_{\gamma, n} &=& 0 &\text{otherwise.}\\\\
\mathsf{iszero} &:& \oc \tnat &\relto& \tbool\\
&&\mathsf{iszero}_{\gamma, b} &=& 1 &\text{if $\gamma = [0]$ and $b =
\ttrue$,}\\
&&&&&\text{or $\gamma = [k+1]$ and $b = \tfalse$,}\\
&&\mathsf{iszero}_{\gamma, b} &=& 0 &\text{otherwise.}
\end{array}
$}
\]
\end{minipage}
\caption{Interpretation of basic $\PCF$ combinators}
\label{fig:pcf_comb}
\end{figure}
with the weighted relations in Figure \ref{fig:pcf_comb}. Finally, we set
$\rintr{\Gamma \vdash \coin : \tbool}_{\gamma, v} = \rintr{\ttrue}_{\gamma,
v} + \rintr{\tfalse}_{\gamma, v}$,
so that $\rintr{\Gamma \vdash \coin : \tbool} : \oc \rintr{\Gamma} \relto
\rintr{\tbool}$ as required, concluding the interpretation of $\nPCF$. 

The reader is referred to \cite{DBLP:conf/lics/LairdMMP13} for the proof
of the following adequacy property:

\begin{thm}\label{th:weight_adequacy}
For any $\vdash M : \tx$, for any value $v:\tx$ and $n \in
\mathbb{N}$, 
$M \eval^n v$ iff $\rintr{M}_{[], v} = n $.
\end{thm}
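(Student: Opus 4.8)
The statement in question, Theorem~\ref{th:weight_adequacy}, is an adequacy result: it asserts that the $\N$-weighted relational interpretation of a closed term of ground type exactly counts its reduction sequences to a given value. Since the excerpt explicitly defers the proof to \cite{DBLP:conf/lics/LairdMMP13}, the plan is to reconstruct the standard argument in the style of quantitative denotational semantics, combining a soundness (correctness) direction with a computability/reducibility argument for completeness.

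First I would establish \emph{quantitative soundness}: if $M \longrightarrow M'$ is a single reduction step, then $\rintr{M} = \rintr{M'}$ as morphisms in $\Rel{\N}_\oc$. This is the usual substitution lemma plus a case analysis on the reduction rules; the key extra point in the weighted setting is that the $\coin$ rule must be handled carefully, since $\coin \longrightarrow \ttrue$ and $\coin \longrightarrow \tfalse$ are \emph{two distinct} steps, and $\rintr{\coin} = \rintr{\ttrue} + \rintr{\tfalse}$ matches exactly this sum of contributions. Iterating, if there are exactly $n$ reduction sequences $M \eval^n v$ then, because distinct reduction sequences contribute distinct, additively-combined witnesses to the coefficient, one gets $\rintr{M}_{[],v} \geq n$ (the inequality, because there might a priori be denotational mass not accounted for by finite reductions). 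Making ``distinct reduction sequences contribute distinct summands'' precise is where one must track the tree of nondeterministic choices; the cleanest route is to prove it for a \emph{weak-head} reduction strategy and note that $\rintr{M}_{[],v}$ is computed as a sum over such normalizing sequences.

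Next I would prove the reverse inequality $\rintr{M}_{[],v} \leq \sharp\{\text{reduction sequences } M \to v\}$ by a logical-relations (computability) argument adapted to coefficients. One defines, by induction on types, a family of relations between elements (or finite approximations) of $\rintr{A}$ and closed terms of type $A$, saying roughly that a web element $\alpha$ is ``realized with multiplicity at most $k$'' by $M$ if $M$ reduces to an appropriate value the right number of times; the fundamental lemma then shows every typable term is computable, using continuity of all the Seely operations (already noted in \S\ref{subsubsec:recursion}) to handle $\Y$ via the chain $\Y_X = \bigvee_n F^n(\bot)$, so that the coefficient of $\rintr{\Y\,M}$ is the supremum of the coefficients of its finite unfoldings, each of which is covered by finitely many reductions. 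Combining the two inequalities gives the equality, and since coefficients lie in $\mathbb{N}\cup\{+\infty\}$, finiteness of $\rintr{M}_{[],v}=n$ forces exactly $n$ reduction sequences.

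The main obstacle is the completeness direction, specifically getting the \emph{exact} count rather than a mere ``nonzero iff terminating'' statement: one must verify that the indexed sums in the definitions of composition \eqref{eq:relsum}, of $\oc\alpha$, and of $\dig,\mon$ do not collapse or duplicate witnesses in a way that would mismatch the operational count. Concretely, the worry is multiset bookkeeping in $\oc\rintr{A}$: a value of type $A\to B$ may be called several times, and the multiset of arguments in the web must correspond bijectively to the family of call sites in a reduction sequence, with the product $\prod_i \alpha_{x_i,y_i}$ reproducing exactly the product of the per-call multiplicities. I would isolate this as a combinatorial lemma about how $\oc$ and the Kleisli structure interact with substitution, prove it once, and then the rest of the fundamental lemma is routine. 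Having done this, I would simply cite \cite{DBLP:conf/lics/LairdMMP13} for the full details, as the present paper only needs the statement.
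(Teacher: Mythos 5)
The paper does not prove this theorem at all: it is imported verbatim from \cite{DBLP:conf/lics/LairdMMP13} (``The reader is referred to \ldots for the proof of the following adequacy property''), so there is no in-paper argument to compare against. Your reconstruction---quantitative invariance of the interpretation under reduction for the lower bound, a logical-relations/computability argument for the upper bound, and continuity of the Seely-category operations to handle $\Y$ via its finite approximants---is the standard route and is essentially the shape of the proof in the cited reference.

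One imprecision you should repair before fleshing this out: the soundness lemma cannot be ``if $M \longrightarrow M'$ then $\rintr{M} = \rintr{M'}$'' with $\coin$ patched as an afterthought, because that unsummed statement is simply false at $\coin$. The uniform statement is $\rintr{M} = \sum_{M \to N} \rintr{N}$, the sum ranging over \emph{all} one-step reducts of $M$ counted with multiplicity (deterministic redexes contribute a one-element sum). With the lemma in that form, ``distinct reduction sequences contribute distinct summands'' is no longer a separate bookkeeping claim to be verified against a weak-head strategy: it is just the induction on reduction length, unfolding the sum once per step, which yields $\rintr{M}_{[],v} \geq \sharp\{\text{sequences } M \to^{\ast} v \text{ of length} \leq k\}$ for every $k$ and hence $\geq n$ when $M \eval^n v$. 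The converse inequality does require the computability argument you describe, and your identification of the $\oc$/multiset bookkeeping in the substitution lemma as the delicate point is accurate.
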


\subsubsection{What does the $\N$-weigthed relational model count?}
\label{subsubsec:question}
Theorem \ref{th:weight_adequacy} shows that at ground types, the
$\N$-weighted relational model counts the distinct reduction
sequences to a value. But for a general type $A$, a term $\vdash M : A$, and $x \in
\rintr{A}$, the
meaning of the coefficient $\rintr{M}_x \in \mathbb{N} \uplus
\{+\infty\}$ is more difficult to describe. Even at higher-order we expect it to be
related to the cardinality of some set of concrete witnesses: but which one?

In this paper we give one answer to this question, in 
terms of \emph{concurrent games}.


\section{Concurrent game semantics of $\nPCF$}
\label{sec:cg_pcf}

Our games model of $\nPCF$ is based on \emph{thin concurrent games}
\cite{cg2}, to which we add an \emph{exhaustivity} mechanism inspired by
Melliès \cite{DBLP:conf/lics/Mellies05}. \changed{Game semantics is naturally \emph{affine} and the purpose of exhaustivity is to ensure strict linearity, in order to  establish a tighter correspondence with $\Rel{\R}$. As we will see, the two models can be related by a functor preserving some of the Seely category structure.}


\subsection{Event structures with symmetry} We start with preliminaries
on \emph{event structures with symmetry}
\cite{DBLP:journals/entcs/Winskel07}, the mathematical structure on
which thin concurrent games rest.

\subsubsection{Event structures}
Specifically, we use prime event structures with binary conflict:

\begin{defi}\label{def:es}
An \textbf{event structure (\emph{es})} is a triple $E = (\ev{E},
\leq_E, \conflict_E)$, where
$\ev{E}$ is a countable
 set of \textbf{events}, $\leq_E$ is a partial
order called
\textbf{causal dependency} and $\conflict_E$ is an irreflexive symmetric
binary
relation on $\ev{E}$ called \textbf{conflict}, satisfying:
\[
\begin{array}{rl}
\text{\emph{finite causes}:}& \forall e\in \ev{E},~[e]_E = \{e'\in
\ev{E} \mid
e'\leq_E e\}~\text{is finite,}\\
\text{\emph{conflict inheritance}:}&
\forall e_1 \conflict_E e_2,~\forall e_2 \leq_E e'_2,~e_1 \conflict_E
e'_2\,.
\end{array}
\]
\end{defi}

We write $e \imc_E e'$ for \textbf{immediate causality}, \emph{i.e.} $e
<_E e'$ with no event in between. A notion of critical importance \changed{for working with} 
event structures is that of \emph{configurations}:

\begin{defi}
A (finite) \textbf{configuration} of event structure $E$ is a finite
$x \subseteq \ev{E}$ which is
\[
\begin{array}{rl}
\text{\emph{down-closed}:}& \forall e\in x,~\forall e' \in
\ev{E},~e'\leq_E e\implies
e' \in x.\\
\text{\emph{consistent}:}& \forall e,e'\in x,~\neg(e \conflict_E e')\,.
\end{array}
\]

We write $\conf{E}$ for the set of finite configurations on $E$.
\end{defi}

The set $\conf{E}$ is naturally ordered by inclusion; it is the
\emph{domain of configurations}. Configurations are typically ranged
over by variables $x, y, z$. For $x, y \in \conf{E}$, we write $x \cov
y$ if $x$ is immediately below $y$ in the inclusion order, \emph{i.e.}
there is $e \in \ev{E}$ such that $e \not \in x$ and $y = x \cup \{e\}$
-- in that case, we also write $x \vdash_E e$ and say that $x$
\textbf{enables} $e$.
Observe also that any $x \in \conf{E}$ inherits a partial order $\leq_x$, \changed{the restriction of $\leq_E$ to $x\times x$.
}  We usually consider a configuration $x \in \conf{E}$ as a partially ordered set.

Event structures are a so-called \emph{truly concurrent} model:
rather than presenting observable execution traces, they
list computational events along with their
causal dependence and independence. The causal order
$\leq_E$ is ``conjunctive'': for an event to occur, \emph{all} its
dependencies must be met first. The conflict relation $\conflict_E$
represents an irreconciliable non-deterministic choice. Finally,
\emph{configurations} provide the adequate notion of \emph{state}.

\subsubsection{Symmetry} Plain event structures are not expressive
enough for our purposes, notably to handle repetitions in games.
Instead, we use event structures \emph{with symmetry}:

\begin{defi}\label{def:isofam}%
An \textbf{isomorphism family} on event structure $E$ is a set
$\tilde{E}$ of
bijections between configurations of $E$, satisfying the additional
conditions:
\[
\begin{array}{rl}
\text{\emph{groupoid:}}& \tilde E\text{~contains identity bijections;
is closed
under composition and inverse.}\\
\text{\emph{restriction:}}&
\text{for all~$\theta : x \bij y \in \tilde{E}$ and $x \supseteq x' \in
\conf{E}$,}\\
&\text{there is a (necessarily) unique $\theta \supseteq \theta' \in
\tilde{E}$
such that $\theta' : x' \bij y'$.}\\
\text{\emph{extension:}}&\text{for all $\theta : x \bij y \in
\tilde{E}$, $x
\subseteq x' \in \conf{E}$,}\\
&\text{there is a (not necessarily unique) $\theta \subseteq \theta' \in
\tilde{E}$ such that $\theta' : x' \bij y'$.}
%
\end{array}
\]

The pair $(E, \tilde E)$ is called an \textbf{event structure with
symmetry (\emph{ess})}.
\end{defi}

We regard isomorphism families as
\emph{proof-relevant} equivalence relations: they convey the information
of which configurations are interchangeable, witnessed by an explicit
bijection. 

If $E$ is an ess, we call the elements of $\tilde{E}$ 
\textbf{symmetries}.  It is easy to prove that symmetries
are automatically order-isos
\cite{DBLP:journals/entcs/Winskel07}. 
We write $\theta : x \sym_{E} y$ to mean that
$\theta : x \simeq y$ is a bijection s.t. $\theta \in \tilde{E}$ \changed{with} $x = \dom(\theta)$ and $y = \cod(\theta)$. We also write $x
\sym_E y$ to mean that there is a symmetry $\theta$ s.t. $\theta :
x \sym_E y$. This induces an equivalence relation on configurations -- we
write $\sconf{E}$ for the set of equivalence classes, called
\textbf{symmetry classes}, and use $\x, \y, \z \in \sconf{E}$
to range over them. Symmetry classes are always non-empty 
sets of configurations, \changed{and the 
symmetry class of the empty configuration is always a singleton $\{\emptyset\}$}. Abusing notation we 
write $\emptyset \in \sconf{E}$, which should cause no confusion.

In \emph{thin concurrent games}, both games and strategies are certain ess. 

\subsection{Games}
\label{subsec:games}
 We introduce our games, and the corresponding
constructions.

\subsubsection{Definition}

First, we recall \emph{thin concurrent games} in the sense of \cite{cg2}:

\begin{defi}\label{def:tcg}
A \textbf{thin concurrent game (tcg)} is an ess $A = (\ev{A}, \leq_A,
\conflict_A)$ with isomorphism families $\tilde{A}, \ptilde{A},
\ntilde{A}$ s.t. $\ptilde{A} \subseteq \tilde{A}$, $\ntilde{A}
\subseteq \tilde{A}$, and 
\[
\pol_A : \ev{A} \to \{-, +\}
\]
a \textbf{polarity function} preserved by symmetries, and additionally
subject to the conditions: 
\[
\begin{array}{rl}
\text{\emph{orthogonality:}} & \text{for all $\theta \in \tilde{A}$, if
$\theta \in \ptilde{A} \cap \ntilde{A}$, then $\theta = \id_x$ for some
$x \in \conf{A}$,}\\
\text{\emph{$-$-receptivity:}} & \text{if $\theta \in \ntilde{A}$ and
$\theta \subseteq^- \theta' \in \tilde{A}$, then $\theta' \in
\ntilde{A}$,}\\
\text{\emph{$+$-receptivity:}} & \text{if $\theta \in \ptilde{A}$ and
$\theta \subseteq^+ \theta' \in \tilde{A}$, then $\theta' \in
\ptilde{A}$,}
\end{array}
\]
where $\theta \subseteq^p \theta'$ means that $\theta \subseteq \theta'$
adding only (pairs of) events of polarity $p$.
\end{defi}

We shall see examples in Section \ref{subsubsec:games_construction},
accompanying the constructions. Intuitively, negative events
correspond to Opponent moves, i.e. actions of the execution environment,
while events with positive polarity are Player moves, i.e. actions of the program under study. 
Symmetries correspond to changing the copy
indices arising from $\oc(-)$. Positive symmetries reindex Player events, while negative symmetries reindex Opponent events.

For us in this paper, a \emph{game} will be a tcg along
with a \emph{payoff function}:

\begin{defi}\label{def:game}
A \textbf{game} is a tcg $A = (\ev{A}, \leq_A, \conflict_A, \tilde{A},
\ptilde{A}, \ntilde{A}, \pol_A)$ with
\[
\kappa_A : \conf{A} \to \{-1, 0, +1\}
\]
a \textbf{payoff function} satisfying the following conditions:
\[
\begin{array}{rl}
\text{\emph{invariant:}} & \text{for all $\theta : x \sym_A y$, we have
$\kappa_A(x) = \kappa_A(y)$,}\\
\text{\emph{representable:}} & \text{postponed until Section
\ref{subsubsec:representability}.}
\end{array}
\]

Writing $\min(A)$ for the minimal events of $A$, a \textbf{$-$-game}
must additionally satisfy: 
\[
\begin{array}{rl}
\text{\emph{negative:}} & \text{for all $a \in \min(A)$, $\pol_A(a) =
-$,}\\
\text{\emph{initialized:}} & \kappa_A(\emptyset) \geq 0\,.
\end{array}
\]

Finally, a $-$-game $A$ is \textbf{strict} if
$\kappa_A(\emptyset) = 1$ and all its initial moves are in pairwise
conflict. It is \textbf{well-opened} if it is strict
with exactly  one initial move. 
\end{defi}

Contexts and types of $\nPCF$ will be represented as strict $-$-games, 
but more general games will arise during the model construction.

The condition \emph{representable} is not necessary to get a model of
$\nPCF$ but only for the collapse theorem. We shall only introduce and
describe it later on, when it becomes relevant.
The payoff function $\kappa_A$ assigns a value to each configuration.
Configurations with payoff $0$ are called \textbf{complete}: they
correspond to 
\changed{\emph{terminated}} 
 executions,
which have reached an adequate stopping point. 
Otherwise, $\kappa_A$ assigns a responsibility for why a
configuration is non-complete. If $\kappa_A(x) = -1$ then Player is responsible, otherwise it is Opponent. 

The payoff structure helps to manage the
mismatch between game semantics, which are inherently \emph{affine}, and
relational semantics, which are inherently \emph{linear}.
\changed{Using payoff we will restrict to the strategies that behave
linearly; then we can investigate the properties of our collapse
at the level of Seely categories. This can also be achieved
with other techniques; for example one make the weighted relational
model affine by decomposing the comonad $\oc$ on $\Rel{\R}$ as the
composition $\oc_\mathrm{contr} \circ \oc_{\mathrm{weak}}$ of a comonad
allowing weakening, and one allowing arbitrary duplication. A similar
construction appears in \cite[8.10]{panorama}. With either approach we
obtain a cartesian closed functor between the respective Kleisli
categories for $\oc$. }

From \emph{invariant}, all configurations in a
symmetry class $\x \in \sconf{A}$ have the same payoff, so we may 
write $\kappa_A(\x)$ unambiguously.
We now introduce constructions on games.

\subsubsection{Basic games}\label{subsubsec:games_construction} 
Firstly  
\begin{figure}
\begin{minipage}{.45\linewidth}
\[
\xymatrix@R=15pt@C=10pt{
&\qu^-
\ar@{.}[dl]
\ar@{.}[dr]\\
\ttrue^+\ar@{~}[rr]&&
\tfalse^+
}
\]
\caption{The $-$-game $\gbool$}
\label{fig:ar_bool}
\end{minipage}
\hfill
\begin{minipage}{.45\linewidth}
\[
\xymatrix@R=15pt@C=10pt{
&&\qu^-
\ar@{.}[dll]
\ar@{.}[dl]
\ar@{.}[d]
\ar@{.}[dr]\\
0^+     \ar@{~}[r]&
1^+     \ar@{~}[r]&
2^+     \ar@{~}[r]&
\dots
}
\]
\caption{The $-$-game $\gnat$}
\label{fig:ar_nat}
\end{minipage}
\end{figure}
we draw in Figures \ref{fig:ar_bool} and \ref{fig:ar_nat} the $-$-games
corresponding to the basic types $\tbool$ and $\tnat$. The diagrams
represent the event structures, read from top (the minimum) to bottom
(maximal events). Events are annotated with their polarity, and the
wiggly line indicates \emph{conflict} -- we adopt the convention that we
only draw \emph{minimal} conflict, \emph{i.e.} we omit it when it can be
deduced via \emph{conflict inheritance}. In Figure \ref{fig:ar_nat}, all
positive events are assumed to be in pairwise conflict (this is not
reflected in the diagram for readability). The isomorphism families are
not represented, but for these games they are trivial and only comprise
identity bijections between configurations. Finally, we have
\[
\kappa_\gx(\emptyset) = +1\,,
\qquad
\kappa_\gx(\{\qu\}) = -1\,,
\qquad
\kappa_\gx(\{\qu, v\}) = 0\,,
\]
with $\gx \in \{\gbool, \gnat\}$. This covers all possible
configurations on $\gbool$ and $\gnat$. 

Although there are more configurations on $\gbool$ and $\gnat$
than points in $\rintr{\tbool}$ and $\rintr{\tnat}$, the mismatch is
resolved when considering \emph{complete} configurations:

\begin{lem}
Writing $\nconf{A} = \{x \in \conf{A} \mid \kappa_A(x) = 0\}$, 
we have two bijections:
\[
\begin{array}{rcl}
\nconf{\gbool} &\bij& \rintr{\tbool}\\
\{\qu, b\} &\mapsto & b
\end{array}
\qquad
\qquad
\begin{array}{rcrcl}
\nconf{\gnat} &\bij& \rintr{\tnat}\\
\{\qu,n\} &\mapsto & n\,.
\end{array}
\]
\end{lem}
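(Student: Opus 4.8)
The plan is to verify directly that the two proposed maps are well-defined bijections by enumerating the complete configurations of $\gbool$ and $\gnat$. Both games are described explicitly via Figures \ref{fig:ar_bool} and \ref{fig:ar_nat} together with the payoff equations $\kappa_\gx(\emptyset) = +1$, $\kappa_\gx(\{\qu\}) = -1$, $\kappa_\gx(\{\qu, v\}) = 0$, so the argument is essentially a finite case analysis made uniform over the two games.

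First I would describe $\conf{\gbool}$ and $\conf{\gnat}$ completely. For $\gbool$, the only events are $\qu^-$ and the positive values $\ttrue, \tfalse$, with $\ttrue \conflict_\gbool \tfalse$ and both depending on $\qu$; hence a down-closed consistent subset is either $\emptyset$, $\{\qu\}$, $\{\qu,\ttrue\}$, or $\{\qu,\tfalse\}$. Similarly for $\gnat$, the events are $\qu^-$ and the values $n^+$ for $n \in \mathbb{N}$, all positive values in pairwise conflict and each depending on $\qu$, so the configurations are $\emptyset$, $\{\qu\}$, and $\{\qu,n\}$ for $n\in\mathbb{N}$. This is the step that carries the real content, though it is routine given the diagrams.

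Next, using the payoff equations, I read off that $\kappa_\gx(x) = 0$ exactly for the configurations of the form $\{\qu, v\}$: indeed $\kappa_\gx(\emptyset) = +1 \neq 0$ and $\kappa_\gx(\{\qu\}) = -1 \neq 0$, while $\kappa_\gx(\{\qu, v\}) = 0$ by definition. Hence $\nconf{\gbool} = \{\{\qu,\ttrue\},\{\qu,\tfalse\}\}$ and $\nconf{\gnat} = \{\{\qu,n\} \mid n \in \mathbb{N}\}$. The maps $\{\qu,b\}\mapsto b$ and $\{\qu,n\}\mapsto n$ are then patently well-defined functions into $\rintr{\tbool} = \{\ttrue,\tfalse\}$ and $\rintr{\tnat} = \mathbb{N}$ respectively. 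Injectivity is immediate since the configuration $\{\qu,v\}$ is determined by $v$, and surjectivity holds because every $b \in \{\ttrue,\tfalse\}$ (resp.\ every $n \in \mathbb{N}$) is a positive event enabled by $\{\qu\}$, so $\{\qu,b\}$ (resp.\ $\{\qu,n\}$) is a valid configuration mapping to it. This establishes both bijections.

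I do not anticipate a genuine obstacle here; the only mild subtlety is bookkeeping — making sure the enumeration of configurations of $\gnat$ is exhaustive despite the infinitely many conflicting positive events, and noting that the conflict inheritance convention in the diagram means we must add all the pairwise conflicts among the $n^+$ that were suppressed for readability. Once that is acknowledged, the proof is a direct check.
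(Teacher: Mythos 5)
Your proof is correct and is exactly the direct verification the paper leaves implicit: the lemma is stated without proof immediately after the enumeration of configurations and payoffs of $\gbool$ and $\gnat$, with the remark that these cover all possible configurations. Your case analysis and the bookkeeping about suppressed conflicts match that intended argument.
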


So the web $\rintr{\tx}$ corresponds to complete configurations of
$\gx$. For affine or linear languages with no
replication, this correspondence is preserved by all type constructors.
But in the presence of replication, one must consider \emph{complete
symmetry classes} instead:

\begin{lem}\label{lem:r_basic}
Writing $\wconf{A} = \{\x \in \sconf{A} \mid \kappa_A(\x) = 0\}$, we
have two bijections:
\[
\begin{array}{rcrcl}
s^\tbool &:& \wconf{\gbool} &\bij& \rintr{\tbool}
\end{array}
\qquad
\qquad
\begin{array}{rcrcl}
s^\tnat &:& \wconf{\gnat} &\bij& \rintr{\tnat}\,.
\end{array}
\]
\end{lem}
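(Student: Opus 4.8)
The plan is to reduce this to the preceding lemma by exploiting the fact that symmetry is trivial on the basic games. First I would recall that, as observed when $\gbool$ and $\gnat$ were introduced, their isomorphism families contain only identity bijections between configurations. Hence for $\gx \in \{\gbool, \gnat\}$ the relation $x \sym_{\gx} y$ holds exactly when $x = y$, so every symmetry class of $\gx$ is a singleton and the assignment $x \mapsto \{x\}$ is a bijection $\conf{\gx} \bij \sconf{\gx}$.

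Next I would transport this to the complete objects. Because the payoff function is invariant under symmetry and each class $\x \in \sconf{\gx}$ has the form $\{x\}$ for a unique $x$, we have $\kappa_{\gx}(\x) = 0$ if and only if $\kappa_{\gx}(x) = 0$. Therefore the bijection above restricts to a bijection $\nconf{\gx} \bij \wconf{\gx}$. Composing with the bijections of the preceding lemma then yields $s^\tbool$ and $s^\tnat$ explicitly, $s^\tbool : \{\{\qu, b\}\} \mapsto b$ and $s^\tnat : \{\{\qu, n\}\} \mapsto n$, each a bijection as a composite of bijections.

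There is essentially no obstacle in this proof: its entire content is the triviality of the symmetries on $\gbool$ and $\gnat$, together with the identification of complete configurations already established. The reason the statement is phrased in terms of complete symmetry classes rather than complete configurations is forward-looking: at function types, $\oc(-)$ introduces genuine, non-trivial symmetries, and it is there that the correspondence between the web and (symmetry classes of) configurations requires real work; phrasing the base-type lemma in this uniform way sets up that later development.
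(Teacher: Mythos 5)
Your proof is correct and matches the paper's own (essentially one-line) justification: the paper remarks immediately after the lemma that the symmetries on $\gbool$ and $\gnat$ are trivial, so symmetry classes are in one-to-one correspondence with configurations, and the result then follows from the preceding lemma on complete configurations exactly as you describe. Your closing remark about why the statement is phrased via symmetry classes also agrees with the paper's stated motivation.
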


This is not saying much: the symmetries on $\gbool$ and $\gnat$ are trivial, so symmetry classes are in one-to-one correspondence with
configurations. This will not always be the case.  

\subsubsection{Basic constructions on ess} We start with some
constructions on plain ess. 

\begin{defi}
Consider $E_1$ and $E_2$ two event structures with symmetry.

Then, we define their \textbf{parallel composition} $E_1 \parallel E_2$
as comprising the components: 
\[
\begin{array}{rcrcl}
\text{\emph{events:}} &~~~& \ev{E_1\parallel E_2} &=& \{1\}\times
\ev{E_1} ~~ \uplus ~~ \{2\}\times \ev{E_2}\\
\text{\emph{causality:}} && (i, e) \leq_{E_1 \parallel E_2} (j, e')
&\Leftrightarrow&
i=j~\&~e \leq_{E_i}
e'\\
\text{\emph{conflict:}} && (i, e) \conflict_{E_1 \parallel E_2} (j, e')
&\Leftrightarrow& i=j ~\&~ e \conflict_{E_i} e'\,,\\
\text{\emph{symmetry:}} && \theta \in \tilde{E_1 \parallel E_2}
&\Leftrightarrow& \exists \theta_1 \in \tilde{E_1}, \theta_2 \in
\tilde{E_2}, \theta = \theta_1 \parallel \theta_2
\end{array}
\]
where, if $\theta_i : x_i \sym_{E_i} y_i$, we set $(\theta_1 \parallel
\theta_2)(i, e) = (i, \theta_i(e))$.
\end{defi}

Note that any configuration $x \in \conf{E_1 \parallel E_2}$ decomposes
uniquely as $(\{1\} \times x_1) \uplus (\{2\} \times x_2)$, which we
also write $x_1 \parallel x_2$. This is compatible with symmetry: if
$\theta : x_1 \parallel x_2 \sym_{E_1 \parallel E_2} y_1 \parallel
y_2$, then $\theta$ decomposes uniquely as $\theta_1 \parallel \theta_2$
with $\theta_i : x_i \sym_{E_i} y_i$. We may also observe that any
symmetry class $\x \in \sconf{E_1 \parallel E_2}$ has the form $\x_1
\parallel \x_2$, which is shorthand for $\{x_1 \parallel x_2 \mid x_1
\in \x_1,~x_2 \in \x_2\}$ for $\x_1 \in \sconf{E_1}$ and $\x_2 \in
\sconf{E_2}$.

\changed{We also use a variant of the above where components are in conflict:}

\begin{defi}
Let $E_1$ and $E_2$ be two event structures with symmetry.

Then, we define their \textbf{sum} $E_1 + E_2$
as comprising the components:
\[
\begin{array}{rcrcl}
\text{\emph{events:}} &~~~& \ev{E_1\parallel E_2} &=& \{1\}\times
\ev{E_1} ~~ \uplus ~~ \{2\}\times \ev{E_2}\\
\text{\emph{causality:}} && (i, e) \leq_{E_1 \parallel E_2} (j, e')
&\Leftrightarrow&
i=j~\&~e \leq_{E_i}
e'\\
\text{\emph{conflict:}} && (i, e) \conflict_{E_1 \parallel E_2} (j, e')
&\Leftrightarrow& i\neq j ~\vee~ e \conflict_{E_i} e'\,,\\
\text{\emph{symmetry:}} && \theta \in \tilde{E_1 \parallel E_2}
&\Leftrightarrow& \exists \theta_1 \in \tilde{E_1}, \theta_2 \in
\tilde{E_2}, \theta = \theta_1 \parallel \theta_2\,,
\end{array}
\]
where, necessarily, one of $\theta_1$ or $\theta_2$ must be empty.
\end{defi}

Any non-empty $x \in \conf{E_1 + E_2}$ may be written uniquely either as
$\{1\} \times x_1$ for $x_1 \in \conf{E_1}$, or as $\{2\} \times x_2$ for $x_2 \in
\conf{E_2}$. By convention, we write $(1, x_1) \in
\conf{E_1 + E_2}$ for the former and $(2, x_2) \in \conf{E_1 + E_2}$ for
the latter. Likewise, non-empty symmetries on $E_1 + E_2$ may be written
uniquely either as $(1, \theta_1)$ or as $(2, \theta_2)$ for $\theta_i \in
\tilde{E_i}$, defined in the obvious way. For $\x_1 \in \sconf{E_1}$, we
also write $(1, \x_1) \in \sconf{E_1\with E_2}$ as a shorthand for
$\{(1, x_1) \mid x_1 \in \x_1\}$ and likewise for $(2, \x_2) \in
\sconf{E_1 \with E_2}$ for $\x_2 \in \sconf{E_2}$.
Every non-empty symmetry class on $E_1 \with E_2$ may be written
uniquely via one of these two shapes. 

\subsubsection{Basic constructions on games}
\label{subsubsec:basic_constructions}

Our first construction is the \textbf{dual}
$A^\perp$ of a game $A$. We set $A^\perp$ as the same ess as $A$,
changing only the components relative to polarities: more precisely,
$\pol_{A^\perp} = - \pol_A$, $\ptilde{A^\perp} = \ntilde{A}$, 
$\ntilde{A^\perp} = \ptilde{A}$\changed{, and $\kappa_{A^\perp} = -\kappa_A.$}

Parallel composition splits into \emph{tensor} and \emph{par},
\changed{which differ only in their payoff function:}   

\begin{defi}\label{def:games_tensor}
Consider two games $A$ and $B$. 

We define their \textbf{tensor} $A \tensor B$ as having ess $A
\parallel B$, with the additional components:
\[
\begin{array}{rrcl}
\text{\emph{polarities:}} &
\pol_{A\tensor B}(1, a) &=& \pol_A(a)\\
&\pol_{A\tensor B}(2, b) &=& \pol_B(b)\\
\text{\emph{positive symmetries:}} & 
\theta_A \parallel \theta_B \in \ptilde{A\tensor B} &\Leftrightarrow&
\theta_A \in \ptilde{A} ~ \& ~ \theta_B \in \ptilde{B}\,\\
\text{\emph{negative symmetries:}} &
\theta_A \parallel \theta_B \in \ntilde{A\tensor B} &\Leftrightarrow& 
\theta_A \in \ntilde{A} ~ \& ~ \theta_B \in \ntilde{B}\,\\
\text{\emph{payoff:}} & 
\kappa_{A\tensor B}(x_A \parallel x_B) &=& \kappa_A(x_A) \tensor
\kappa_B(x_B)
\end{array}
\]
where the binary operation $\tensor$ on $\{-1, 0, +1\}$ is defined in
Figure \ref{fig:op_payoff}.
We also define the \textbf{par} $A \parr B$ with the same components,
except for $\kappa_{A\parr B}(x_A \parallel x_B) = \kappa_A(x_A) \parr
\kappa_B(x_B)$.
\end{defi}

The tensor of two $-$-games is again a $-$-game. 
\begin{figure}
\[
\begin{array}{c|ccc}
\tensor & -1 & 0 & +1\\
\hline
-1 & -1 & -1 & -1\\
0 & -1 & 0 & +1 \\
+1 & -1 & +1 & +1
\end{array}
\qquad\qquad\qquad\qquad
\begin{array}{c|ccc}
\parr & -1 & 0 & +1\\
\hline
-1 & -1 & -1 & +1\\
0 & -1 & 0 & +1 \\
+1 & +1 & +1 & +1
\end{array}
\]
\caption{Payoff for $\tensor$ and $\parr$}
\label{fig:op_payoff}
\end{figure}
The par also preserves $-$-games, but we 
\changed{will often use it on more general games: for $-$-games $A$ and $B$ we will eventually define a strategy \emph{from $A$ to $B$} as a strategy on $A^\perp \parr B$. This is a game but not a $-$-game.}

Analogously to Lemma \ref{lem:r_basic}, we have:

\begin{lem}\label{lem:r_tensor}
Consider $A$ and $B$ any games. Then, we have
\[
\begin{array}{rcrcl}
r^\tensor_{A, B} &:& \conf{A\tensor B} &\bij& \conf{A} \times
\conf{B}\\
&&x_A \parallel x_B &\mapsto& (x_A, x_B)
\end{array}
\quad
\begin{array}{rcrcl}
s^{\tensor}_{A, B} &:& \wconf{A\tensor B} &\bij& \wconf{A} \times \wconf{B}\\
&&\x_A \parallel \x_B &\mapsto& (\x_A, \x_B)
\end{array}
\]
and likewise, $s^\parr_{A, B} : \wconf{A\parr B} \bij \wconf{A} \times
\wconf{B}$ with the same function.
\end{lem}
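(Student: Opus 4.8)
The plan is to build both statements on top of the unique decompositions already recorded for parallel composition, exploiting that $A \tensor B$ and $A \parr B$ share the underlying event structure with symmetry $A \parallel B$ and differ only in payoff.

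For $r^\tensor_{A, B}$, recall that every $x \in \conf{A \parallel B}$ decomposes uniquely as $x_A \parallel x_B$ with $x_A \in \conf{A}$ and $x_B \in \conf{B}$. Conversely, for arbitrary $x_A \in \conf{A}$ and $x_B \in \conf{B}$, the set $x_A \parallel x_B$ is down-closed and consistent in $A \parallel B$ --- both conditions being local to a single component --- hence it lies in $\conf{A \parallel B}$. Together these two facts say precisely that $r^\tensor_{A, B}$ is a bijection, with inverse $(x_A, x_B) \mapsto x_A \parallel x_B$. Since $A \parr B$ has the same ess, the same map is a bijection $\conf{A \parr B} \bij \conf{A} \times \conf{B}$ as well.

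For $s^\tensor_{A, B}$ and $s^\parr_{A, B}$, recall similarly that every symmetry of $A \parallel B$ decomposes uniquely as $\theta_A \parallel \theta_B$, and consequently that every symmetry class of $A \parallel B$ has the form $\x_A \parallel \x_B$ for unique $\x_A \in \sconf{A}$ and $\x_B \in \sconf{B}$; this gives a bijection $\sconf{A \parallel B} \bij \sconf{A} \times \sconf{B}$. It remains only to check that it carries complete symmetry classes to pairs of complete symmetry classes, and conversely. Since payoff is symmetry-invariant we may compute on representatives: $\kappa_{A \tensor B}(\x_A \parallel \x_B) = \kappa_A(\x_A) \tensor \kappa_B(\x_B)$, and inspecting the table of Figure \ref{fig:op_payoff} shows that the operation $\tensor$ on $\{-1, 0, +1\}$ returns $0$ precisely on the pair $(0, 0)$. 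Hence $\kappa_{A \tensor B}(\x_A \parallel \x_B) = 0$ iff $\kappa_A(\x_A) = 0 = \kappa_B(\x_B)$, so the decomposition bijection restricts to $s^\tensor_{A, B} : \wconf{A \tensor B} \bij \wconf{A} \times \wconf{B}$. The table for $\parr$ also attains $0$ only at $(0, 0)$, so the identical argument yields $s^\parr_{A, B} : \wconf{A \parr B} \bij \wconf{A} \times \wconf{B}$, given by the same underlying map.

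There is essentially no obstacle here: the statement is bookkeeping on the decomposition properties of $\parallel$ recorded earlier, and the only point requiring care is reading off from Figure \ref{fig:op_payoff} that both $\tensor$ and $\parr$ take the value $0$ exactly at $(0, 0)$, which is what makes completeness componentwise.
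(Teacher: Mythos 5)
Your proof is correct and follows essentially the same route as the paper, which simply observes that the statement reduces to the unique decomposition properties of $\parallel$ together with the fact that, in Figure \ref{fig:op_payoff}, both $\tensor$ and $\parr$ yield payoff $0$ exactly on the pair $(0,0)$. Your write-up just makes that one-line argument explicit.
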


The proof is straightforward, and uses that in Figure
\ref{fig:op_payoff}, for either $\tensor$ and $\parr$, a configuration
$x_A \parallel x_B$ has null payoff iff it has null payoff on both
sides. This connects the tensor product of games with that in the
relational model, which is defined as the cartesian product of sets.
\changed{We move to another construction on games.}
\begin{defi}\label{def:with}
For two strict $-$-games $A_1$ and $A_2$, we define their \textbf{with} $A_1 \with A_2$ as having ess $A_1 +
A_2$, with the additional components:
\[
\begin{array}{rrcll}
\text{\emph{polarities:}} &
\pol_{A\with B}(1, a) &=& \pol_A(a)\\
&\pol_{A\with B}(2, b) &=& \pol_B(b)\\
\text{\emph{positive symmetries:}} & 
(i, \theta) \in \ptilde{A_1 \with A_2} &\Leftrightarrow&
\theta \in \ptilde{A_i}\,\\
\text{\emph{negative symmetries:}} &
(i, \theta) \in \ntilde{A_1 \with A_2} &\Leftrightarrow& 
\theta \in \ntilde{A_i}\,,\\
\text{\emph{payoff:}} & 
\kappa_{A_1\with A_2}((i, x)) &=& \kappa_{A_i}(x)\,,&
(x \neq \emptyset)\\
&\kappa_{A_1\with A_2}(\emptyset) &=& 1\,,
\end{array}
\]
yielding a strict $-$-game.
\end{defi}
\changed{As we will see, this construction gives a cartesian product in our forthcoming category of strategies. It can also be applied to non-strict $-$-games, but then it is not a product: if one of the $A_i$ is not strict then the
corresponding projection does not respect payoff (in the sense of Definition \ref{def:exh_strat}), because we have set $\kappa_{A_1
\with A_2}(\emptyset) = 1$. On the other hand having $\kappa_{A_1 \with
A_2}(\emptyset) = 0$ breaks the correspondence with the relational
model, since the empty configuration does not correspond in a canonical way to one of the components.}

On complete symmetry classes this matches the corresponding
construction in $\Rel{\R}$:

\begin{lem}\label{lem:conf_with}
Consider $A$ and $B$ any strict $-$-games. Then, we have
\[
\begin{array}{rcrcl}
r^\with_{A, B} &:& \confn{A\with B} &\bij& \confn{A} + \confn{B}\\
s^\with_{A, B} &:& \wconf{A\with B} &\bij& \wconf{A} + \wconf{B}\,.
\end{array}
\]
\end{lem}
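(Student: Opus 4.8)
The plan is to establish the two bijections in Lemma~\ref{lem:conf_with} by unravelling the underlying set-theoretic constructions $A \with B$ and $A + B$, and then checking that the payoff and symmetry conditions restrict the bijection appropriately. We already know from the discussion following Definition~\ref{def:with} that the ess of $A \with B$ is $A + B$, and from the analysis of the sum construction that every non-empty configuration $x \in \conf{A + B}$ is uniquely of the form $(1, x_A)$ with $x_A \in \conf{A}$ or $(2, x_B)$ with $x_B \in \conf{B}$, with an analogous decomposition for non-empty symmetry classes. So the candidate map for $r^\with_{A,B}$ sends $(1, x_A)$ to the left summand element $(1, x_A) \in \confn{A} + \confn{B}$ and $(2, x_B)$ to $(2, x_B)$; and the candidate for $s^\with_{A,B}$ does the same on symmetry classes. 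The content is to verify that these restrict to \emph{complete} (resp.\ complete-symmetry-class) configurations on both sides.

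First I would treat $r^\with_{A,B}$. Given a non-empty complete configuration $x$ of $A \with B$, say $x = (1, x_A)$, the payoff clause $\kappa_{A_1 \with A_2}((1, x_A)) = \kappa_A(x_A)$ shows that $x$ is complete iff $x_A$ is complete, so the map lands in $\confn{A} + \confn{B}$; conversely any complete configuration of $A$ or $B$ gives a complete non-empty configuration of $A \with B$ under the inverse map. The only subtlety is the empty configuration: we have $\kappa_{A \with B}(\emptyset) = 1 \neq 0$, so $\emptyset \notin \confne{A \with B}$ (writing $\confne{-}$ for non-empty complete configurations; note that the statement uses $\confn{-}$ for non-empty configurations in general, so I should double-check whether the intended domain is non-empty complete configurations — if so I would state it accordingly, and if the lemma really means all non-empty configurations then $r^\with_{A,B}$ is just the structural decomposition of the sum with no payoff content). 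In either reading, since $A$ and $B$ are strict $-$-games and hence $\kappa_A(\emptyset) = \kappa_B(\emptyset) = 1$, the empty configuration on neither side is complete, so dropping $\emptyset$ on the left of the bijection matches dropping both copies of $\emptyset$ on the right, and the bijection is exact. For $s^\with_{A,B}$ I would argue identically, using that payoff is symmetry-invariant (so $\kappa_{A \with B}(\x)$ is well-defined on symmetry classes) and that a symmetry class $\x = (1, \x_A)$ is complete iff $\x_A$ is complete, again by the payoff clause; the decomposition of non-empty symmetry classes of $A + B$ established earlier gives that this is a bijection, and the empty class $\{\emptyset\} \in \wconf{A \with B}$ has payoff $1$ so it is excluded, consistently with both $\{\emptyset\} \in \wconf{A}$ and $\{\emptyset\} \in \wconf{B}$ being excluded.

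The only step requiring any care, and the place where I would be most careful about statements, is the bookkeeping around the empty configuration: the tagged disjoint union $+$ on the relational side keeps $\emptyset \in \confn{A}$ and $\emptyset \in \confn{B}$ as two a priori distinct elements, whereas $A \with B$ has a single $\emptyset$, so a naive structural bijection between \emph{all} configurations would fail to be injective. This is precisely why the restriction to \emph{complete} configurations (and the choice $\kappa_{A \with B}(\emptyset) = 1$) is needed, as already flagged in the remark after Definition~\ref{def:with}. I therefore expect the proof to be short: once the domain is correctly taken to be the non-empty complete (symmetry classes of) configurations, everything follows from the uniqueness of the sum decomposition together with the single payoff equation $\kappa_{A_1 \with A_2}((i,x)) = \kappa_{A_i}(x)$, and naturality/monotonicity issues do not arise since we are only asserting bijections of sets. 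I would close by noting that these bijections are the components at complete (symmetry classes of) configurations of the evident isomorphism of ess, and that they are natural enough to be used freely in the sequel.
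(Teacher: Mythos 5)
Your proposal is correct and matches the paper's (implicit) justification: the paper states this lemma without a detailed proof, offering only the remark that the bijections follow from the unique decomposition of non-empty configurations and symmetry classes of $A+B$, and that $s^\with_{A,B}$ relies on strictness to exclude the empty configuration from the complete ones. Your handling of the empty-configuration bookkeeping and of the payoff clause $\kappa_{A_1\with A_2}((i,x)) = \kappa_{A_i}(x)$ is exactly the intended argument (and note that, per the paper's notation, $\confn{-}$ does mean all non-empty configurations, so the first bijection is indeed the purely structural one, as you anticipated).
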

This generalizes directly to the $n$-ary case.
The bijection follows our notation $(i, \x)$ for symmetry classes of
non-empty configurations. As discussed above, the bijection
$s^\with_{A, B}$ relies on strictness, which
ensures that configurations with null payoff cannot be empty.

\subsubsection{Arrow}\label{subsubsec:arrow}
\changed{
Next we give the construction of a \emph{linear} function space. The event structure $A \lin B$ is easier to describe when $B$ is well-opened, so we consider this case first. 
}
\begin{defi}
\label{def:linearfunctionspace}
\changed{
Consider two games $A$ and $B$, where $B$ is well-opened and has unique initial move $b_0$. The \textbf{linear function space} $A \lin B$ has the following
components:}
\[
\begin{array}{rrcl}
\text{\emph{events:}} & 
\ev{A\lin B} &=& \ev{A^\perp \parallel B}\\
\text{\emph{causality:}} &
\leq_{A\lin B} &=& {\leq_{A^\perp \parallel B}} \uplus \{((2, b_0),
(1, a)) \mid a \in \ev{A}\}\\
\text{\emph{conflict:}} &
\conflict_{A\lin B} &=& \conflict_{A^\perp \parallel B}\\
\text{\emph{symmetries:}} &
\tilde{A\lin B} &=& \{\theta : x \sym_{A^\perp \parallel B} y \mid x, y
\in \conf{A\lin B}\}\\
\text{\emph{polarities:}} &
\pol_{A\lin B} &=& \pol_{A^\perp \parallel B}\\
\text{\emph{positive symmetries:}} &
\ptilde{A\lin B} &=& \{\theta : x \sym_{A^\perp \parallel B}^+ y \mid x,
y \in \conf{A\lin B}\}\\
\text{\emph{negative symmetries:}} &
\ntilde{A\lin B} &=& \{\theta : x \sym_{A^\perp \parallel B}^- y \mid x,
y \in \conf{A\lin B}\}\\
\text{\emph{payoff:}} &
\kappa_{A\lin B}(x_A \parallel x_B) &=& \kappa_{A^\perp}(x_A) \parr
\kappa_B(x_B) \qquad (x_B \neq \emptyset)\\
&\kappa_{A \lin B}(\emptyset) &=& 1\,.
\end{array}
\]

This is a well-opened $-$-game.
\end{defi}

Again, symmetry classes of $A\lin B$ relate with the
corresponding construction in $\Rel{\R}$:

\begin{lem}\label{lem:conf_lin_wo}
Consider two games $A$ and $B$ with $B$ well-opened. Then, we have
\[
\begin{array}{rcrcl}
r^\lin_{A, B} &:& \confn{A\lin B} &\bij& \conf{A} \times \confn{B}\\
s^\lin_{A, B} &:& \wconf{A\lin B} &\bij& \wconf{A} \times \wconf{B}
\end{array}
\]
\end{lem}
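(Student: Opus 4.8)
The plan is to obtain both bijections by reducing to the corresponding facts about parallel composition, the one subtlety being the extra causal links $((2,b_0),(1,a))$ added in $A \lin B$, which well-openedness of $B$ lets us control. I would start from the observation that $A^\perp$ has the same underlying ess as $A$ (same configurations, same symmetries, with $\kappa_{A^\perp} = -\kappa_A$), so that a subset of $\ev{A\lin B}$ is a configuration of $A\lin B$ precisely when it is a configuration of $A^\perp \parallel B$ that is moreover down-closed for the new dependency, i.e.\ contains $(2,b_0)$ as soon as it contains some $(1,a)$. In particular every configuration of $A\lin B$ still decomposes uniquely as $x_A \parallel x_B$ with $x_A \in \conf A$, $x_B \in \conf B$, and likewise every symmetry of $A^\perp \parallel B$ decomposes uniquely as $\theta_A \parallel \theta_B$ with $\theta_A \in \tilde A$, $\theta_B \in \tilde B$.

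For $r^\lin_{A,B}$, I would first record that, $B$ being well-opened with unique initial move $b_0$, a configuration of $B$ is nonempty iff it contains $b_0$; so $\confn B = \{x_B \in \conf B \mid b_0 \in x_B\}$. Then a nonempty configuration $x_A \parallel x_B$ of $A \lin B$ is exactly a pair with $x_A \in \conf A$ arbitrary and $x_B \in \confn B$: if $x_A \neq \emptyset$ the extra causality forces $b_0 \in x_B$ and hence $x_B$ nonempty; if $x_A = \emptyset$ then nonemptiness of the whole forces $x_B$ nonempty; and conversely any such pair is down-closed for the new dependency, hence a configuration. This gives the bijection $x_A \parallel x_B \mapsto (x_A, x_B)$.

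For $s^\lin_{A,B}$, I would use that payoff is symmetry-invariant (so completeness is a property of symmetry classes), together with the payoff formula: for $x_B \neq \emptyset$, $\kappa_{A\lin B}(x_A \parallel x_B) = (-\kappa_A(x_A)) \parr \kappa_B(x_B)$, which by the $\parr$-table of Figure \ref{fig:op_payoff} is $0$ iff $\kappa_A(x_A) = 0$ and $\kappa_B(x_B) = 0$. Since $B$ is strict, $\kappa_B(x_B) = 0$ already forces $x_B \neq \emptyset$ (hence $b_0 \in x_B$), so the isolated clause $\kappa_{A\lin B}(\emptyset) = 1$ never interferes and such $x_A \parallel x_B$ is automatically a configuration of $A\lin B$. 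Next, because $\tilde{A \lin B}$ is the set of symmetries of $A^\perp \parallel B$ between configurations of $A \lin B$ — a constraint automatically met between complete configurations — the unique decomposition of symmetries together with invariance of $\kappa_A, \kappa_B$ yields that, between complete configurations, $x_A \parallel x_B \sym_{A\lin B} y_A \parallel y_B$ iff $x_A \sym_A y_A$ and $x_B \sym_B y_B$. Hence the complete symmetry class of $x_A \parallel x_B$ is $\x_A \parallel \x_B$ with $\x_A \in \wconf A$, $\x_B \in \wconf B$ the classes of $x_A, x_B$, and $\x_A \parallel \x_B \mapsto (\x_A, \x_B)$ is the desired bijection $\wconf{A\lin B} \bij \wconf A \times \wconf B$.

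Most of this is routine bookkeeping; checking that the displayed maps and their componentwise-pairing inverses are mutually inverse poses no difficulty. The one place calling for genuine care — and the reason the well-opened case is treated first — is the interplay between the freshly added causality, the (non)emptiness conditions, and the fact that $\tilde{A\lin B}$ is only a restriction of $\tilde{A^\perp \parallel B}$: each of these could a priori break the clean $\parallel$-decomposition, and it is well-openedness of $B$ (a single initial move $b_0$ below all of $A$) that makes them align. The general case of arbitrary $B$ is then recovered afterwards.
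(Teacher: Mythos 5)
Your proof is correct and is exactly the routine argument the paper leaves implicit (it states Lemma \ref{lem:conf_lin_wo} without proof): decompose along $\parallel$, use well-openedness of $B$ to reconcile the added causal links $((2,b_0),(1,a))$ with (non)emptiness, and read off completeness from the $\parr$ table together with $\kappa_B(\emptyset)=1$. Your observation that strictness of $B$ is what prevents the clause $\kappa_{A\lin B}(\emptyset)=1$ from interfering matches the paper's own remark immediately following the lemma.
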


Again, the bijection $s^\lin_{A, B}$ relies on $B$
being \emph{strict}: if we had $\kappa_B(\emptyset) = 0$, then 
there would be no symmetry class on $A\lin B$
corresponding to a pair $(\x_A, \emptyset)$ with $\x_A$ non-empty.

It is easy to extend the construction of $A \lin B$ to the case where $B$ is
\emph{strict} but not well-opened: in that case, $B$ 
has a canonical form $B \iso \bigwith_{b \in \min(B)} B_b$
with $B_b$ well-opened, for the obvious notion of isomorphism \changed{between} games.
This lets us set:
\[
A \lin B \quad = \quad \bigwith_{b \in \min(B)} A \lin B_b\,,
\]
with in particular $A \lin \top = \top$ for $\top$ the empty game with
$\kappa_\top(\emptyset) = 1$. We retain:

\begin{lem}\label{lem:bij_lin_strict}
Consider two games $A$ and $B$ with $B$ strict. Then, we have 
bijections
\[
\begin{array}{rcrcl}
r^\lin_{A, B} &:& \confn{A\lin B} &\bij& \conf{A} \times \confn{B}\\
s^\lin_{A, B} &:& \wconf{A\lin B} &\bij& \wconf{A} \times \wconf{B}
\end{array}
\]
\end{lem}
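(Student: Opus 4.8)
The plan is to reduce the statement to the well-opened case already handled in Lemma~\ref{lem:conf_lin_wo}, using the decomposition $B \iso \bigwith_{b \in \min(B)} B_b$ into well-opened games together with the defining equation $A \lin B = \bigwith_{b \in \min(B)} A \lin B_b$. Two degenerate cases are immediate: if $|\min(B)| = 1$ then $B$ is by definition well-opened and Lemma~\ref{lem:conf_lin_wo} applies directly; if $\min(B) = \emptyset$ then $B = \top$, $A \lin B = \top$, and all four sets $\confn{A\lin B}$, $\conf{A}\times\confn{B}$, $\wconf{A\lin B}$, $\wconf{A}\times\wconf{B}$ are empty. The content is the remaining case $|\min(B)| \geq 2$ (possibly countably infinite), handled uniformly below.

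For $r^\lin_{A,B}$ I would exhibit a composite of bijections. Since each $A \lin B_b$ is a well-opened (hence strict) $-$-game, the $n$-ary version of Lemma~\ref{lem:conf_with} applied to the family $(A \lin B_b)_b$ gives $\confn{A \lin B} \bij \biguplus_b \confn{A \lin B_b}$; Lemma~\ref{lem:conf_lin_wo} applied to each summand gives $\confn{A \lin B_b} \bij \conf{A} \times \confn{B_b}$; distributivity of cartesian product over disjoint union gives $\biguplus_b (\conf{A} \times \confn{B_b}) \bij \conf{A} \times \biguplus_b \confn{B_b}$; and the $n$-ary Lemma~\ref{lem:conf_with} applied to $B = \bigwith_b B_b$ gives $\biguplus_b \confn{B_b} \bij \confn{B}$. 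Chasing a non-empty configuration of $A \lin B$ through the chain: it lies in exactly one component $A \lin B_b$, where it has the form $x_A \parallel x_{B_b}$ with $x_{B_b} \neq \emptyset$ (forced by the dependency $(2,b_0) \leq_{A\lin B_b} (1,a)$), and is sent to $(x_A, x_B)$ with $x_B = (b, x_{B_b})$ the corresponding non-empty configuration of $B$; thus the composite restricts on each component to the bijection of Lemma~\ref{lem:conf_lin_wo}. The construction of $s^\lin_{A,B}$ is word-for-word the same, replacing $\confn{-}$ by $\wconf{-}$ throughout and using the second bijections of Lemmas~\ref{lem:conf_with} and~\ref{lem:conf_lin_wo}.

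The only delicate points — and hence the main, rather minor, obstacle — are bookkeeping facts underlying this chain. First, $B \iso \bigwith_b B_b$ must be a genuine isomorphism of games, respecting polarities, the three isomorphism families, and the payoff function; this follows from Definitions~\ref{def:with} and~\ref{def:game}, the empty configuration causing no trouble since $\kappa_B(\emptyset) = 1 = \kappa_{\bigwith_b B_b}(\emptyset)$ by strictness. Second, one must read the $\with$ construction and Lemma~\ref{lem:conf_with} over a possibly countably infinite index set, which is harmless because every configuration and every symmetry is a finite object lying entirely in a single component. Third — and this is where strictness of $B$ is essential — the exceptional empty configuration of each $A\lin B_b$ and of $\bigwith_b A \lin B_b$, which carries payoff $1$ rather than $0$, must be seen to play no role: it is excluded from $\confn{-}$ on the nose, and, being non-complete, it never appears in $\wconf{-}$ either. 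This is exactly the phenomenon noted in the remark after Lemma~\ref{lem:conf_lin_wo}, now applied in the iterated setting.
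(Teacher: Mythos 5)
Your proof is correct and follows essentially the same route as the paper's: decompose $B$ as $\bigwith_{b\in\min(B)} B_b$, apply Lemma~\ref{lem:conf_with} and Lemma~\ref{lem:conf_lin_wo} componentwise, and conclude by distributivity of product over disjoint union, with the same argument repeated for symmetry classes. The extra bookkeeping you supply (degenerate cases, infinite index sets, the role of strictness in excluding the empty configuration) is consistent with, and slightly more explicit than, the paper's one-line justification.
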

\begin{proof}
Obtained by composition
\begin{eqnarray*}
\confn{A\lin B} &\bij& \sum_{b \in \min(B)} \confn{A \lin B_b}\\
&\bij& \sum_{b \in \min(B)} \conf{A} \times \confn{B_b}\\
&\bij& \conf{A} \times (\sum_{b \in \min(B)} \confn{B_b})\\
&\bij& \conf{A} \times \confn{B}
\end{eqnarray*}
using Lemmas \ref{lem:conf_with} and \ref{lem:conf_lin_wo}, together with the 
distributivity of cartesian product over disjoint union. The exact same
reasoning applies to complete symmetry classes.
\end{proof}

For $x_A \in \conf{A}$ and $x_B \in \confn{B}$, we write 
$x_A \lin x_B = (r_{A,B}^\lin)^{-1}(x_A, x_B) \in \confn{A\lin B}$, and
likewise $\x_A \lin \x_B \in \wconf{A\lin B}$ for $\x_A \in \wconf{A}$ and $\x_B
\in \wconf{B}$.

\subsubsection{Exponentials}
 \changed{We define $\oc(-)$, which will eventually extend to a linear
exponential comonad. It is the only source of non-trivial symmetries in
the games used for $\nPCF$. For a $-$-game $A$, $\oc A$ is understood
as an infinitary tensor of symmetric copies of $A$. The point is to
allow for duplication and weakening, and so in particular $\oc A$ is
\emph{not} strict. 
 } 

\begin{defi}\label{def:bang}
Consider $A$ a $-$-game. Then, we define the \textbf{bang} $\oc A$ with
the components:
\[
\begin{array}{rrcl}
\text{\emph{events:}} &
\ev{\oc A} &=& \mathbb{N} \times \ev{A}\\
\text{\emph{causality:}} &
(i, a_1) \leq_{\oc A} (j, a_2) &\Leftrightarrow&
i=j \,\wedge\, a_1 \leq_A a_2\\
\text{\emph{conflict:}} &
(i, a_1) \conflict_{\oc A} (j , a_2) &\Leftrightarrow&
i=j \,\wedge\, a_1 \conflict_A a_2\\
\text{\emph{symmetries:}} &
\theta \in \tilde{\oc A} &\Leftrightarrow& 
\exists \pi : \mathbb{N} \bij \mathbb{N},\quad
\exists (\theta_n)_{n \in \mathbb{N}} \in \tilde{A}^\mathbb{N}\,\\
&&&\forall (i, a) \in \dom(\theta), \theta(i, a) = (\pi(i),
\theta_i(a))\\
\text{\emph{polarities:}} &
\pol_{\oc A}(i, a) &=& \pol_A(a)\\
\text{\emph{positive symmetries:}} &
\theta \in \ptilde{\oc A} &\Leftrightarrow&
\exists (\theta_n)_{n\in \mathbb{N}} \in \ptilde{A}^{\mathbb{N}}\,,\\
&&&\forall (i, a) \in \dom(\theta), \theta(i, a) = (i, \theta_i(a))\\
\text{\emph{negative symmetries:}} &
\theta \in \ntilde{\oc A} &\Leftrightarrow&
\exists \pi : \mathbb{N} \bij \mathbb{N},\,
\exists (\theta_n)_{n \in \mathbb{N}} \in \ntilde{A}^{\mathbb{N}}\,,\\
&&&\forall (i, a) \in \dom(\theta), \theta(i, a) = (\pi(i),
\theta_i(a))\\
\text{\emph{payoff:}} &
\kappa_{\oc A}(\parallel_{i \in I} x_i) &=& \bigotimes_{i\in I}
\kappa_A(x_i) \qquad (I\subseteq \mathbb{N},\,\forall i \in I,\, x_i \neq
\emptyset)\\
&\kappa_{\oc A}(\emptyset) &=& 0
\end{array}
\]
where $\parallel_{i\in I} x_i = \biguplus_{i\in I} \{i\}\times x_i$.
This yields a
$-$-game $\oc A$.
\end{defi}

The definition of payoff uses implicitely that the tensor operation on $\{-1,
0, +1\}$ defined in Figure \ref{fig:op_payoff} is associative.
The $-$-game $\oc A$ is non-strict by intention:
Opponent is free to open \changed{any number of copies}, including zero.

\changed{
Again, by considering symmetry classes we recover the matching construction in $\Rel{\R}$.
}

\begin{lem}\label{lem:r_bang}
Consider $A$ a $-$-game. Then, we have bijections
\[
\begin{array}{rcrcl}
s^\oc_A &:& \wconf{\oc A} &\bij& \M_f(\wconfn{A})
\end{array}
\qquad
\begin{array}{rcrcl}
s^{\oc, \neq \emptyset}_A &:& \wconfn{\oc A} &\bij& \M^{\neq
\emptyset}_f(\wconfn{A})
\end{array}
\]
with $\wconfn{A}$ the non-empty complete classes, and $\M^{\neq
\emptyset}_f(X)$ the non-empty finite multisets.

In particular, if $A$ is a strict $-$-game, the first bijection
specializes to
\[
\begin{array}{rcrcl}
s^\oc_A &:& \wconf{\oc A} &\bij& \M_f(\wconf{A})\,.
\end{array}
\]
\end{lem}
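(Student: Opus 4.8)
The plan is to make the configurations, symmetries and payoff of $\oc A$ completely explicit, after which both bijections become bookkeeping about finite multisets of symmetry classes of $A$.

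First I would record the shape of configurations of $\oc A$. Exactly as in the binary case (the remark following the definition of $\parallel$), any $x \in \conf{\oc A}$ decomposes uniquely as $\parallel_{i \in I} x_i = \biguplus_{i \in I} \{i\} \times x_i$ with $I \subseteq \mathbb{N}$ finite and each $x_i \in \conf{A}$ nonempty: the slice $x_i = \{a \mid (i,a) \in x\}$ is down-closed and consistent in $A$ because $\leq_{\oc A}$ and $\conflict_{\oc A}$ only relate events in the same copy, and $x$ finite forces $I$ finite. Now, the operation $\tensor$ on $\{-1,0,+1\}$ of Figure~\ref{fig:op_payoff} has unit $0$ and satisfies $a \tensor b = 0$ iff $a = b = 0$; combined with $\kappa_{\oc A}(\emptyset) = 0$, the payoff clause for $\oc A$ says that $\parallel_{i \in I} x_i$ is complete iff every $x_i$ is complete. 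In particular the empty configuration is complete in $\oc A$. Hence $\wconf{\oc A}$ consists of the classes of the $\parallel_{i \in I} x_i$ with each $x_i$ nonempty and complete (the case $I = \emptyset$ included), and $\wconfn{\oc A}$ of the same with $I \neq \emptyset$.

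The core step is to understand $\sym_{\oc A}$. Unfolding the definition of $\tilde{\oc A}$, a symmetry $\theta : \parallel_{i \in I} x_i \sym_{\oc A} \parallel_{j \in J} y_j$ is determined by a bijection $\pi : \mathbb{N} \bij \mathbb{N}$ and a family $(\theta_n)_{n \in \mathbb{N}}$ with $\theta(i,a) = (\pi(i), \theta_i(a))$; comparing domains and codomains, this forces $\pi$ to restrict to a bijection $I \bij J$ and each $\theta_i$ to restrict to a symmetry $x_i \sym_A y_{\pi(i)}$, while conversely any such $\pi$ and $(\theta_i)_{i \in I}$ (padded by $\id_\emptyset$ outside $I$) assemble into a symmetry of $\oc A$. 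Therefore $\parallel_{i \in I} x_i \sym_{\oc A} \parallel_{j \in J} y_j$ holds iff there is a bijection $\pi : I \bij J$ with $x_i \sym_A y_{\pi(i)}$ for all $i$, i.e.\ iff the finite multisets $\sum_{i \in I} [\x_i]$ and $\sum_{j \in J} [\y_j]$ of symmetry classes of $A$ (with $\x_i$ the class of $x_i$, $\y_j$ that of $y_j$) coincide. This characterization is the only genuinely delicate part; the rest is assembling the statement.

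Finally I would define, for $\x \in \wconf{\oc A}$ with representative $\parallel_{i \in I} x_i$, the multiset $s^\oc_A(\x) = \sum_{i \in I} [\x_i] \in \M_f(\sconf{A})$ where $\x_i \in \sconf{A}$ is the class of $x_i$. By the previous paragraph this is independent of the representative and injective, and by the first paragraph each $\x_i$ lies in $\wconfn{A}$, so $s^\oc_A(\x) \in \M_f(\wconfn{A})$. For surjectivity, given $[\x_1,\dots,\x_n] \in \M_f(\wconfn{A})$, pick $x_k \in \x_k$ and take $\parallel_{1 \le k \le n} x_k \in \conf{\oc A}$, which is complete and whose image is $[\x_1,\dots,\x_n]$. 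The empty configuration maps to the empty multiset, and restricting to representatives with $I \neq \emptyset$ gives $s^{\oc,\neq\emptyset}_A : \wconfn{\oc A} \bij \M^{\neq\emptyset}_f(\wconfn{A})$. Lastly, if $A$ is strict then $\kappa_A(\emptyset) = 1 \neq 0$, so $\emptyset$ is not complete in $A$, whence $\wconfn{A} = \wconf{A}$ and the first bijection specializes as stated.
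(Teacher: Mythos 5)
Your proof is correct and follows essentially the same route as the paper's: define the map on concrete configurations by sending $\parallel_{i\in I} x_i$ to the multiset of the classes $\x_i$, check invariance under $\sym_{\oc A}$, and establish injectivity by assembling componentwise symmetries along a bijection $\pi : I \bij J$ (extended to a permutation of $\mathbb{N}$) and surjectivity by choosing representatives. The extra explicitness about the payoff of $\tensor$ and the characterization of $\sym_{\oc A}$ is welcome but does not change the argument.
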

\begin{proof}
We prove the first bijection. We first define $s^\oc_A$ on concrete
configurations of $\oc A$:
\[
\begin{array}{rcrcl}
s^\oc_A &:& \zconf{\oc A} &\to& \M_f(\wconfn{A})\\
&& \emptyset &\mapsto& []\\
&& \parallel_{i\in I} x_i &\mapsto& [\x_i \mid i \in I] \qquad
(\forall i \in I,\,x_i \neq \emptyset)
\end{array}
\]
where $\x_i$ denotes the symmetry class of $x_i$. It is straightforward
that if $x \sym_{\oc A} y$ then $s^\oc_A(x) = s^\oc_A(y)$, so $s^\oc_A$
lifts to $s^\oc_A : \wconf{\oc A} \to \M_f(\wconfn{A})$, keeping the same
notation.

\emph{Injective.} Assume $s^\oc_A(\parallel_{i\in I} x_i) =
s^\oc_A(\parallel_{j \in J} y_j)$ with every $x_i$ and $y_j$ non-empty.
We have
\[
[\x_i \mid i \in I] = [\y_j \mid j \in J]\,,
\]
meaning that there is a bijection $\pi : I \bij J$ such that for all $i
\in I$, we have $\x_i = \y_{\pi(i)}$. In turn, this means that for all
$i \in I$, there is some $\theta_i : x_i \sym_A y_{\pi(i)}$. Now,
completing $\pi$ to $\pi' : \mathbb{N} \bij \mathbb{N}$ arbitrarily
yields $\theta :\,\parallel_{i\in I} x_i \sym_{\oc A}\,\parallel_{j \in J}
y_j$ as required.

\emph{Surjective.} Consider $\mu \in \M_f(\wconfn{\oc A})$. If $\mu =
[]$, its pre-image is the empty complete symmetry class. Otherwise, write
$\mu = [\x_i \mid i \in I]$ with each $\x_i$ 
non-empty. For $i \in I$, fix some $x_i \in \x_i$.
\emph{W.l.o.g.} we may assume $I \subseteq_f \mathbb{N}$, so that 
$\parallel_{i\in I} x_i$ gives the required pre-image.

Clearly, $s^\oc_A$ restricts to $s^{\oc,
\neq \emptyset}_A : \wconfn{\oc A} \bij \M^{\neq
\emptyset}_f(\wconfn{A})$.
\end{proof}


Note that unless $A$ is strict, not all finite multisets in
$\M_f(\wconf{A})$ correspond to a complete symmetry class on $\oc A$.
For instance $[\emptyset]$, or any $[\emptyset, \dots, \emptyset]$ do
not.

\subsubsection{Interpretation of types and arenas} 
\changed{We give the complete} interpretation of types. \emph{Types} are interpreted as well-opened
$-$-games: for base types we use the well-opened games $\gbool$ and $\gnat$
defined in Section \ref{subsubsec:games_construction}, and we set
$\intr{A\to B} = \oc \intr{A} \lin \intr{B}$. \emph{Contexts} are
interpreted as strict $-$-games, with $\intr{x_1 : A_1, \dots, x_n :
A_n} = \with_{1\leq i \leq n} \intr{A_i}$.

Putting together Lemmas \ref{lem:r_basic}, \ref{lem:conf_with},
\ref{lem:conf_lin_wo} and \ref{lem:r_bang}, we immediately get:

\begin{lem}\label{lem:web_wconf}
For any type $A$ and context $\Gamma$, there are bijections:
\[
\begin{array}{rcrcl}
s^{\Ty}_A &:& \rintr{A} &\bij& \wconf{\intr{A}}
\end{array}
\qquad
\qquad
\begin{array}{rcrcl}
s^{\Ctx}_\Gamma &:& \mathcal{M}_f(\rintr{\Gamma}) &\bij& \wconf{\oc
\intr{\Gamma}}\,.
\end{array}
\]
\end{lem}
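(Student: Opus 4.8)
The plan is to obtain both bijections by composing the structural bijections already established for the type and context constructors, proceeding by induction on the type $A$.

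First, for the bijection $s^{\Ty}_A : \rintr{A} \bij \wconf{\intr{A}}$, I would argue by induction on the structure of $A$. For the base cases $A = \tbool$ and $A = \tnat$, this is exactly Lemma \ref{lem:r_basic}, giving $s^\tbool$ and $s^\tnat$. For the inductive case $A = B \to C$, recall that $\intr{B \to C} = \oc \intr{B} \lin \intr{C}$ and, on the relational side, $\rintr{B \to C} = \oc \rintr{B} \lin \rintr{C} = \M_f(\rintr{B}) \times \rintr{C}$. Since $\intr{C}$ is a well-opened (hence strict) $-$-game, Lemma \ref{lem:conf_lin_wo} gives $s^\lin_{\oc \intr{B}, \intr{C}} : \wconf{\oc \intr{B} \lin \intr{C}} \bij \wconf{\oc \intr{B}} \times \wconf{\intr{C}}$; and since $\intr{B}$ is a $-$-game, Lemma \ref{lem:r_bang} gives $s^\oc_{\intr{B}} : \wconf{\oc \intr{B}} \bij \M_f(\wconfn{\intr{B}})$. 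The missing piece is that $\M_f(\wconfn{\intr{B}})$ agrees with $\M_f(\wconf{\intr{B}})$: for $\intr{B}$ the game of a type this holds because $\intr{B}$ is well-opened, hence strict, so $\wconf{\intr{B}} = \wconfn{\intr{B}}$ (the empty configuration has payoff $1 \neq 0$ and so is not complete). Combining, and applying $\M_f(-)$ to the induction hypothesis $s^{\Ty}_B$, I get the chain $\rintr{B \to C} = \M_f(\rintr{B}) \times \rintr{C} \bij \M_f(\wconf{\intr{B}}) \times \wconf{\intr{C}} \bij \wconf{\oc \intr{B}} \times \wconf{\intr{C}} \bij \wconf{\oc \intr{B} \lin \intr{C}} = \wconf{\intr{B \to C}}$, and define $s^{\Ty}_{B \to C}$ as the composite.

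For the context bijection, I would use Lemma \ref{lem:conf_with} (in its $n$-ary form) to get $\wconf{\oc \intr{\Gamma}}$ — but care is needed: $\oc \intr{\Gamma}$ is $\oc(\with_{i} \intr{A_i})$, and the comonoidal structure $\mon$ on $\Rel{\R}$ together with its game-semantic analogue mean that $\oc(\with_i \intr{A_i})$ decomposes as $\parallel_i \oc \intr{A_i}$ up to isomorphism. So I would first invoke Lemma \ref{lem:r_bang} to get $\wconf{\oc \intr{\Gamma}} \bij \M_f(\wconfn{\intr{\Gamma}})$, then note $\wconfn{\intr{\Gamma}} = \wconf{\intr{\Gamma}} \setminus \{\emptyset\}$ and use Lemma \ref{lem:conf_with} to rewrite $\wconf{\intr{\Gamma}}$ (equivalently $\wconfn{\intr{\Gamma}}$, using strictness of each $\intr{A_i}$) as $\sum_i \wconf{\intr{A_i}} \bij \sum_i \rintr{A_i} = \rintr{\Gamma}$ via $\sum_i s^{\Ty}_{A_i}$. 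Applying $\M_f(-)$ yields $\wconf{\oc \intr{\Gamma}} \bij \M_f(\rintr{\Gamma})$, which is $s^{\Ctx}_\Gamma$ up to taking the inverse.

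The main obstacle I anticipate is bookkeeping rather than mathematical depth: one must be careful that all the $\M_f$-versus-$\M_f^{\neq \emptyset}$ and $\conf{}$-versus-$\confn{}$ distinctions line up, which ultimately rests on the strictness of the games interpreting types and contexts (so that the empty configuration is never complete at those games, while $\oc(-)$ of a strict game does have a complete empty configuration). A secondary point to make explicit is that $\oc(\with_i \intr{A_i})$ and $\parallel_i \oc \intr{A_i}$ carry the same complete symmetry classes; this is the game-semantic counterpart of the Seely isomorphism and can be checked directly from Definitions \ref{def:with} and \ref{def:bang}, or simply absorbed into the combined use of Lemmas \ref{lem:r_bang} and \ref{lem:conf_with}. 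Once these are pinned down, the proof is just the assembly of the composites described above.
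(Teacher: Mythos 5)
Your proof is correct and is exactly the argument the paper intends: the paper proves this lemma simply by saying it follows from "putting together Lemmas \ref{lem:r_basic}, \ref{lem:conf_with}, \ref{lem:conf_lin_wo} and \ref{lem:r_bang}", and your induction on types composes precisely those four bijections, with the strictness/well-openedness observations (so that $\wconfn{\intr{B}} = \wconf{\intr{B}}$, matching the "in particular" clause of Lemma \ref{lem:r_bang}) being the only bookkeeping required. The detour through the Seely isomorphism $\oc(\with_i \intr{A_i}) \iso \parallel_i \oc \intr{A_i}$ is unnecessary, as you yourself note by reverting to the direct route via Lemmas \ref{lem:r_bang} and \ref{lem:conf_with}.
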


So the web of $A$ may  be regarded as the set of
complete symmetry classes of $\intr{A}$. 
The games obtained as the interpretation of types have a particular
shape:

\begin{defi}\label{def:arena}
An \textbf{arena} is a $-$-game $A$ satisfying:
\[
\begin{array}{rl}
\text{\emph{alternating:}} & \text{if $a_1 \imc_A a_2$, $\pol_A(a_1)
\neq
\pol_A(a_2)$,}\\
\text{\emph{forestial:}} & \text{if $a_1 \leq_A a$ and $a_2 \leq_A a$,
then $a_1
\leq_A a_2$ or $a_2 \leq_A a_1$.}
\end{array}
\]
\end{defi}

For any type $A$ and context $\Gamma$, it is straightforward
that the $-$-games $\intr{A}$ and
$\intr{\Gamma}$ are arenas. Arenas do not quite characterize the games
arising from the interpretation, but get close enough for the purposes of
this paper. More precisely, 
arenas guarantee that moves have at most one causal
immediate predecessor, hence recovering the notion of \emph{justifier}
familiar from more traditional game semantics \cite{cg3}. This will play
a very minor role in this paper: it is necessary for the
deadlock-freeness property used in Section \ref{subsubsec:wit_w_sym}.

\subsection{Strategies} Now that games are set, we introduce
\emph{strategies} used to interpret terms.

\subsubsection{Plain strategies}
We start by recalling the notion of \emph{causal strategy} in its
formulation in \cite{cg3} -- here, we only say \emph{strategy} as
it is the only sort of strategy we consider.

\begin{defi}\label{def:caus_strat}
A \textbf{prestrategy} on game $A$ comprises an
ess
$(\ev{\sigma}, \leq_{\sigma}, \conflict_\sigma, \tilde{\sigma})$
with
\[
\pr : \ev{\sigma} \to \ev{A}
\]
a function called the \textbf{display map}, subject to the following
conditions:
\[
\begin{array}{rl}
\text{\emph{rule-abiding:}} & \text{for all $x \in \conf{\sigma}$,
$\pr(x) \in \conf{A}$,}\\
\text{\emph{locally injective:}} & \text{for all $s_1, s_2 \in x \in
\conf{\sigma}$, if $\pr(s_1) = \pr(s_2)$ then $s_1 =
s_2$,}\\
\text{\emph{symmetry-preserving:}} &
\text{for all $\theta \in \tilde{\sigma}$, $\pr(\theta) =
\{(\pr(s_1), \pr(s_2)) \mid (s_1, s_2) \in \theta\} \in
\tilde{A}$,}\\
\text{\emph{$\sim$-receptive:}} &
\text{for all $\theta : x \sym_\sigma y$, and extensions $x
\vdash_\sigma s_1^-$,
$\pr(\theta) \vdash_{\tilde{A}} (\pr(s^-_1), a^-_2)$,}\\
&\text{there is a unique $s_2^- \in \ev{\sigma}$ s.t.
$\theta \vdash_{\tilde{\sigma}} (s^-_1, s^-_2)$ and
$\pr(s_2^-) = a_2^-$,}\\
\text{\emph{thin:}} & 
\text{for all $\theta : x \sym_\sigma y$, and extension $x
\vdash_\sigma s_1^+$,}\\
&\text{there is a \emph{unique} extension $y \vdash_\sigma s_2^+$
such that $\theta \vdash_{\tilde{\sigma}} (s_1^+, s_2^+)$.}
\end{array}
\]

Additionally, we say that $\sigma$ is a \textbf{strategy} if it
satisfies the further three conditions:
\[
\begin{array}{rl}
\text{\emph{negative:}} & \text{for all $s \in \ev{\sigma}$, if $s$ is
minimal then $s$ is negative,}\\
\text{\emph{courteous:}} & \text{for all $s_1 \imc_\sigma s_2$, if
$\pol(s_1) = +$ or $\pol(s_2) = -$ then
$\pr(s_1) \imc_A \pr(s_2)$,}\\
\text{\emph{receptive:}} & \text{for all $x \in \conf{\sigma}$, for all
$\pr(x) \vdash_A a^-$,}\\
&\text{there is a unique $x \vdash_\sigma s^- \in \conf{\sigma}$ such
that
$\pr(s) = a$,}
\end{array}
\]

We write $\sigma : A$ to mean that $\sigma$ is a strategy on game $A$.
\end{defi}

We disambiguate some notations used in the definition. First,
$\sigma$ implicitly comes with polarities,
imported from $A$ as $\pol_\sigma(s) = \pol_A(\pr(s))$. We often tag
events to indicate their polarity as in $s^-, s^+$; the sign is
not considered part of the variable name but conveys the  polarity information. We also used
the enabling relation on isomorphism families, defined
by $\theta \vdash_{\tilde{A}} (a_1, a_2)$ iff $(a_1, a_2)
\not \in \theta$ and $\theta \cup \{(a_1, a_2)\} \in \tilde{A}$.

A strategy is a causal presentation, in one global object, of the entire
computational behaviour of a program on the interface with its execution
environment. While the \emph{events} or \emph{moves} in $\ev{\sigma}$
are not technically moves of the game, the \emph{display map}
$\pr_\sigma$ associates to any move of $\sigma$ the corresponding move
in the game, and is subject to adequate conditions ensuring
compatibility with the structure. More precisely, 
conditions \emph{rule-abiding}, \emph{locally injective} and
\emph{symmetry-preserving} together amount to $\pr$ being a \textbf{map
of event structures with symmetry} \cite{DBLP:journals/entcs/Winskel07} -- the adequate
simulation maps between ess. 

Conditions \emph{courteous} and
\emph{receptive} are the usual conditions for concurrent strategies
\cite{DBLP:conf/lics/RideauW11,DBLP:journals/lmcs/CastellanCRW17},
expressing that the strategy should be invariant under asynchronous
delay. The condition \emph{$\sim$-receptive} forces
strategies to consider as symmetric pairs of Opponent events symmetric
in the game, and hence to treat them uniformly.  Finally, \emph{thin} is
a minimality condition forcing strategies to pick \emph{one} canonical
representative up to symmetry for positive moves. For further
explanations and discussions on those conditions, see \cite{cg2}.

\begin{figure}
\[
\xymatrix@C=1pt@R=10pt{
&\oc (\oc \gbool&&&&\lin&&&&\gbool)&&&&\lin&&&&\gbool\\
&&&&&&&&&&&&&&&&&\qu^-
	\ar@{-|>}[dllllllll]\\
&&&&&&&&&\qu^+_{\grey{0}}
	\ar@{.}@/^/[urrrrrrrr]
	\ar@{-|>}[dll]
	\ar@{-|>}[drr]
	\ar@{-|>}[dllllllll]\\
&\qu^-_{\grey{0,i}}
	\ar@{-|>}[d]
	\ar@{.}@/^/[urrrrrrrr]
&&&&&&\ttrue^-_{\grey{0}}
	\ar@{-|>}[d]
	\ar@{~}[rrrr]
	\ar@{.}@/^/[urr]
&&&&\tfalse^-_{\grey{0}}
	\ar@{-|>}[d]
	\ar@{.}@/_/[ull]\\
&\ttrue^+_{\grey{0,i}}
	\ar@{.}@/^/[u]
&&&&&&\qu^+_{\grey{1}}
	\ar@{-|>}[dl]
	\ar@{-|>}[dr]
	\ar@{-|>}[dlllllll]
	\ar@{.}@/^/[uuurrrrrrrrrr]
&&&&\qu^+_{\grey{1}}
	\ar@{-|>}[dl]
	\ar@{-|>}[dr]
	\ar@{-|>}[dlllllllll]
	\ar@{.}@/^/[uuurrrrrr]\\
\qu^-_{\grey{1,j}}
	\ar@{-|>}[d]
	\ar@{.}@/^/[urrrrrrr]
&&\qu^-_{\grey{1,k}}
	\ar@{-|>}[d]
	\ar@{.}@/^/[urrrrrrrrr]
&&&&\ttrue^-_{\grey{1}}
	\ar@{~}[rr]
	\ar@{.}@/^/[ur]&&
\tfalse^-_{\grey{1}}
	\ar@{-|>}[drrrrrrrr]
	\ar@{.}@/_/[ul]
&&\ttrue^-_{\grey{1}}
	\ar@{~}[rr]
	\ar@{.}@/^/[ur]
	\ar@{-|>}[drrrrrrrr]
&&\tfalse^-_{\grey{1}}
	\ar@{.}@/_/[ul]\\
\ttrue^+_{\grey{1, j}}
	\ar@{.}@/^/[u]&&
\ttrue^+_{\grey{1, k}}
	\ar@{.}@/^/[u]
&&&&&&&&&&&&&&\ttrue^+
	\ar@{.}@/^/[uuuuur]
&&\ttrue^+
	\ar@{.}@/_/[uuuuul]
}
\]
\caption{A strategy $\sigma : \oc (\oc \gbool \lin \gbool) \lin
\gbool$}
\label{fig:ex_strat}
\end{figure}

As an illustration, we show a strategy in Figure \ref{fig:ex_strat}.  We
take advantage of this to introduce our convention for drawing
strategies.  Ignoring dotted lines, the diagram represents the event
structure $(\ev{\sigma}, \leq_\sigma, \conflict_\sigma)$, via its
immediate causality relation $\imc$. For convenience, each node is
labeled with the corresponding move in the game as observed through
$\pr_\sigma$ -- as each move in the game originates in one of the ground
types appearing in the type, we attempt as much as possible to place
moves in the diagram in the corresponding column. The immediate
causality in the game is conveyed through the dotted lines. The
grey subscripts correspond to the \emph{copy indices}, \emph{i.e.} the
tags introduced in Definition \ref{def:bang} to address the components
of $\oc A$. For moves under several $\oc$, we write the copy indices in
a sequence ranging from the outermost to the innermost $\oc$. The
representation is symbolic: the actual strategy comprises branches as
shown for all $\grey{i}, \grey{j}, \grey{k} \in \mathbb{N}$. Finally,
the one component of $\sigma$ missing in this representation is the
isomorphism family $\tilde{\sigma}$ which is too unwieldy to draw -- in
this particular case it consists of all order-isomorphisms changing only
copy indices.

Only a few strategies are definable via $\nPCF$: for
instance, in \cite{cg3} the same strategies are used to model a
higher-order concurrent language with shared memory. One could, with an
adequate notion of innocence, characterize exactly those strategies
definable through $\nPCF$ -- this is done in
\cite{DBLP:phd/hal/Castellan17}. For our present purposes this is
not required; nevertheless we have to add two further conditions \changed{for the collapse functor to exist.} 

\subsubsection{Visible strategies} \emph{Visibility} is a locality
property for the control flow. First, we define:

\begin{defi}
Consider $E$ an event structure. 

A \textbf{grounded causal chain (gcc)} in $E$ is
$\rho = \{\rho_1, \dots, \rho_n\} \subseteq \ev{E}$ forming
\[
\rho_1 \imc_E \dots \imc_E \rho_n
\]
a chain with $\rho_1$ minimal with respect to $\leq_E$.
We write $\gcc(E)$ for the gccs in $E$.
\end{defi}

If $\sigma$ is forestial, then $\gcc(\sigma)$ comprises 
simply its \emph{finite branches}. But in general, a gcc $\rho \in
\gcc(\sigma)$ is not down-closed; it may be seen as an individual
\emph{thread}, and $\sigma$ may be regarded as a collection of such
threads with the information of their non-deterministic
branchings, as well as the points where they causally fork or merge.
Visibility states that each thread respects the local scope, \emph{i.e.}
it may not use resources introduced in another thread:

\begin{defi}\label{def:visible}
A strategy $\sigma : A$ is \textbf{visible} if for all $\rho \in
\gcc(\sigma)$, $\pr_{\sigma}(\rho) \in \conf{A}$.
\end{defi}

In this paper, we require visibility as it restricts the behaviour of
strategies just enough so that their interactions never deadlock.
Visibility is \changed{far from sufficient in the way of} capturing the behaviour of
 $\nPCF$ terms -- for those the causal structure is always
forest-shaped, which trivially entails visibility. We work with
visibility rather than a forest-shaped causality as it costs us nothing, and
makes our collapse theorem slightly more general.

Much more information on visibility may be found in \cite{cg3}.

\subsubsection{Exhaustive strategies} \emph{Exhaustivity} is an elegant
mechanism due to Melli\`es \cite{DBLP:conf/lics/Mellies05}. \changed{As
discussed previously, it ensures that strategies are \emph{linear}
rather than \emph{affine}}. It also enforces a form of
well-bracketing,  
forcing strategies to respect the call stack discipline. 

First, we must define a notion of \emph{stopping state} for a strategy.

\begin{defi}
Consider $A$ a game, and $\sigma : A$ a strategy.
A configuration $x \in \conf{\sigma}$ is \textbf{$+$-covered} if for all
$m \in x$, if $m$ is maximal in $x$ (with respect to $\leq_\sigma$),
then $\pol_\sigma(m) = +$.

We write $\confp{\sigma}$ for the set of all $+$-covered configurations
of $\sigma$.
\end{defi}

In other words, a configuration is $+$-covered if no Opponent move is
left unresponded.

\begin{defi}\label{def:exh_strat}
Consider $A$ a game, and $\sigma : A$ a strategy. We set the condition:
\[
\begin{array}{rl}
\text{\emph{exhaustive:}} & \text{for all $x \in \confp{\sigma}$,
$\kappa_A(\pr_\sigma(x)) \geq 0$.}
\end{array}
\]
\end{defi}

For example, $\sigma : \gbool_1 \lin \gbool_2$ (indices
for disambiguation) \emph{cannot} reply on $\gbool_2$ without evaluating
its argument, as $\kappa_{\gbool_1 \lin \gbool_2}(\{\qu_2^-,
b_2^+\}) = \kappa_{\gbool^{\perp}}(\emptyset) \parr
\kappa_{\gbool}(\{\qu^-, b^+\}) = -1 \parr 0 = -1$.

In this paper we include all details of the compositionality of
exhaustivity, which to the best of our knowledge do not appear in any
published source.

%
%
%

\subsection{Composition of plain strategies} Postponing for now the
stability under composition of visibility and exhaustivity, we
recall the composition of plain strategies.

For games $A$ and $B$, a \textbf{strategy from $A$ to $B$} is
a strategy on $A^\perp \parr B$, also written $A \vdash B$. Fix from
now on $A, B$ and $C$, and $\sigma : A \vdash B$, $\tau : B \vdash C$.
We aim to define $\tau \odot \sigma : A \vdash C$.  
The concrete definition of composition is covered at length elsewhere
\cite{cg2}; instead we give a characterization in terms of its
\emph{states}.

\subsubsection{Synchronization} Given configurations $x^\sigma \in
\conf{\sigma}$, $x^\tau \in
\conf{\tau}$, by convention we write
\[
\pr_\sigma(x^\sigma) = x^\sigma_A \parallel x^\sigma_B \in
\conf{A\vdash B}\,,
\qquad
\qquad
\pr_\tau(x^\tau) = x^\tau_B \parallel x^\tau_C \in \conf{B \vdash
C}\,,
\]
for the corresponding projections to the game. In defining composition,
the first stage is to capture when such configurations $x^\sigma \in
\conf{\sigma}$ and $x^\tau \in \conf{\tau}$ may successfully
\emph{synchronise}.

\begin{defi}\label{def:caus_comp}
Consider two configurations $x^\sigma \in \conf{\sigma}$ and $x^\tau \in
\conf{\tau}$.
They are \textbf{causally compatible} if \emph{(1)} 
\textbf{matching}: $x^\sigma_B = x^\tau_B = x_B$; and \emph{(2)} if the
bijection
\[
\varphi_{x^\sigma, x^\tau}
\quad
:
\quad
x^\sigma \parallel x^\tau_C 
\quad
\stackrel{\pr_\sigma \parallel x^\tau_C}{\simeq} 
\quad
x^\sigma_A \parallel x_B \parallel x^\tau_C 
\quad
\stackrel{x^\sigma_A \parallel \pr_\tau^{-1}}{\simeq}
\quad
x^\sigma_A \parallel x^\tau\,,
\]
obtained by composition using local injectivity of $\pr_\sigma$ and
$\pr_\tau$, is \textbf{secured}, \emph{i.e.} the relation
\[
(m, n) \vartriangleleft (m', n') 
\qquad
\Leftrightarrow
\qquad
m <_{\sigma \parallel C} m'
\quad
\vee
\quad
n <_{A \parallel \tau} n'\,,
\]
defined on (the graph of) $\varphi_{x^\sigma, x^\tau}$ by importing
causal constraints of $\sigma$ and $\tau$, is acyclic.
\end{defi}

Securedness eliminates deadlocks: two matching $x^\sigma \in
\conf{\sigma}$, $x^\tau \in \conf{\tau}$ agree on the final state in
$B$, but $\sigma$ and $\tau$ may impose
incompatible constraints making it unreachable.


\subsubsection{Plain composition} It turns out that up to an adequate
isomorphism, there is a \emph{unique} strategy $\tau \odot \sigma :
A\vdash C$ whose configurations are such synchronizations, in a way
compatible with symmetry. In order to state this we first disambiguate
our notion of isomorphism.

\begin{defi}
Consider $A$ a game, and $\sigma, \tau : A$ two strategies. 

An \textbf{isomorphism} $\varphi : \sigma \iso \tau$ is an invertible
map of ess such that $\pr_\tau \circ \varphi = \pr_\sigma$.
\end{defi}

Before we introduce composition, we extend two earlier notions
from configurations to symmetries. A symmetry $\theta
\in \tilde{\sigma}$ is \textbf{$+$-covered} if either of $\dom(\theta)$
or $\cod(\theta)$ is: as symmetries are order-isomorphisms and preserve
polarities, the difference is immaterial. We write $\tildep{\sigma}$ for
the set of $+$-covered symmetries of $\sigma$.
Likewise, $\theta^\sigma \in
\tilde{\sigma}$ and $\theta^\tau \in \tilde{\tau}$ are \textbf{causally
compatible} if they are \emph{matching}, \emph{i.e.} $\pr_\sigma
\theta^\sigma = \theta^\sigma_A \parallel \theta^\sigma_B$ and $\pr_\tau
\theta^\tau = \theta^\tau_B \parallel \theta^\tau_C$ with
$\theta^\sigma_B = \theta^\tau_B$; and \emph{secured}, \emph{i.e.}
either $\dom(\theta^\sigma), \dom(\theta^\tau)$ or $\cod(\theta^\sigma),
\cod(\theta^\tau)$ are.

Now we state the following proposition, whose proof may be found in
\cite{cg3}:

\begin{prop}\label{prop:char_comp}
Consider $\sigma : A \vdash B$ and $\tau : B \vdash C$ two strategies.

Then, there is a strategy $\tau \odot \sigma : A \vdash C$, unique up
to iso, such that there are order-isos:
\[
\begin{array}{rcrcl}
(- \odot -) 
&\!\!:\!\!&
\{(x^\tau, x^\sigma) \in \confp{\tau} \times \confp{\sigma} \mid
\text{$x^\sigma$ and $x^\tau$ causally compatible}\} 
&\!\!\simeq\!\!&
\confp{\tau \odot \sigma}\\
(- \odot -)
&\!\!:\!\!&
\{(\theta^\tau, \theta^\sigma) \in \tildep{\tau} \times
\tildep{\sigma} \mid \text{$\theta^\sigma$ and $\theta^\tau$ causally
compatible}\}
&\!\!\simeq\!\!&
\tildep{\tau\odot \sigma}
\end{array}
\]
commuting with $\dom, \cod$, and s.t. for $\theta^\sigma \in \tildep{\sigma}$
and $\theta^\tau \in \tildep{\tau}$ causally compatible,
\[
\pr_{\tau\odot \sigma}(\theta^\tau \odot \theta^\sigma) =
\theta^\sigma_A \parallel \theta^\tau_C\,.
\]
\end{prop}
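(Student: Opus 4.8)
The plan is to follow the standard two-stage construction of composition in thin concurrent games, \emph{interaction} followed by \emph{hiding}, and then read off the state characterization from it. First I would form the interaction $\tau \circledast \sigma$, the synchronized product of $\sigma$ and $\tau$ over the shared game $B$: an event structure with symmetry displaying onto $A \parallel B \parallel C$ whose causal order is the transitive closure of the orders pulled back from $\sigma$ and $\tau$. By the standard ``Zipper'' analysis (see \cite{cg2,cg3}), the configurations of $\tau \circledast \sigma$ are exactly the causally compatible pairs $(x^\sigma, x^\tau)$ in the sense of Definition~\ref{def:caus_comp} — \emph{matching} on $B$ accounts for forming the pullback, and \emph{secured} for the acyclicity that makes the glued relation a partial order — and similarly its symmetries are the causally compatible pairs of symmetries. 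Composition $\tau \odot \sigma$ is then obtained by hiding: one restricts the interaction to the events displaying into $A$ or $C$, with the induced order, conflict and isomorphism family. That $\tau \odot \sigma$ thus defined is a strategy on $A^\perp \parr C$ (negative, courteous, receptive, $\sim$-receptive and thin) is proved in \cite{cg2,cg3}, and I would cite it rather than reprove it.

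The real content is the restriction of these bijections to $+$-covered objects. Given $y \in \confp{\tau \odot \sigma}$, let its \emph{minimal witness} $\hat y \in \conf{\tau \circledast \sigma}$ be the down-closure, inside the interaction, of the events of $y$. By minimality, every event of $\hat y$ not in $y$ lies strictly below some event of $y$; hence no synchronization event (one displaying into $B$) is maximal in $\hat y$, and every maximal event of $\hat y$ lies in $y$, hence is positive in $\tau \odot \sigma$. Writing $\hat y = (x^\tau, x^\sigma)$ under the interaction bijection, I would then show $x^\sigma \in \confp{\sigma}$ and $x^\tau \in \confp{\tau}$. For a maximal event $s$ of $x^\sigma$ there are two cases. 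Either $s$ is maximal in $\hat y$, so $s$ displays into $A$ and is positive for $\tau \odot \sigma$, which for such an event means $\pol_\sigma(s) = +$; or $s$ has an immediate successor $e$ in $\hat y$ outside the $\sigma$-component, so $e$ displays into $C$, and since $s \imc e$ cannot be routed through a par it must come directly from the causality of $\tau$, forcing $s$ to display into $B$ with $\pr_\tau(s) \imc_\tau \pr_\tau(e)$; as $\pr_\tau(s)$ lies in $B^\perp$ and $\pr_\tau(e)$ in $C$, between which there is no immediate causality, \emph{courteous} for $\tau$ forbids $\pol_\tau(s) = +$, whence $\pol_\tau(s) = -$, i.e.\ $\pol_\sigma(s) = +$. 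Thus $x^\sigma$ is $+$-covered, and symmetrically $x^\tau$. Conversely, hiding the synchronization events of a causally compatible pair $(x^\sigma, x^\tau)$ of $+$-covered configurations yields a $+$-covered configuration of $\tau \odot \sigma$: a maximal visible event that were negative for, say, $\sigma$ would be negative in $A^\perp$, hence maximal and negative in $x^\sigma$, contradicting its $+$-coverage. These two passages are mutually inverse, and monotone since they are restrictions of the interaction and hiding bijections, which are order-isos on all configurations; and they commute with $\dom$, $\cod$ and the display maps by construction, which in particular gives $\pr_{\tau \odot \sigma}(\theta^\tau \odot \theta^\sigma) = \theta^\sigma_A \parallel \theta^\tau_C$. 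The statement for symmetries is proved by the same argument with configurations replaced by symmetries, using that a symmetry is $+$-covered precisely when its domain is.

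For uniqueness up to isomorphism, I would use that a strategy over a fixed game is determined up to iso of event structures with symmetry by its display map together with the inclusion order and isomorphism family on its $+$-covered configurations (every configuration of a strategy being recovered from $+$-covered ones via receptivity); any strategy on $A^\perp \parr C$ satisfying the displayed bijections therefore carries the same $+$-covered data as $\tau \odot \sigma$ and so is isomorphic to it.

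The main obstacle is the back-and-forth of the second paragraph: identifying the minimal witnesses of $+$-covered composite configurations with causally compatible pairs of $+$-covered configurations on the two sides, in both directions. The delicate point is that the hidden synchronization events over $B$ may lie below visible events; the arguments that they are never maximal in the minimal witness and that the maximal events of the component projections are forced positive rely essentially on \emph{courteous} together with the fact that $A^\perp \parr B$ and $B^\perp \parr C$ place the shared component $B$ in \emph{parallel} with $A$ and with $C$ — a \emph{par}, never a sequential dependency — so that no courteous immediate causality can bridge the $B$-part to the $A$- or $C$-part. Everything else (securedness bookkeeping, monotonicity, and commutation with $\dom$, $\cod$, $\pr$) is routine.
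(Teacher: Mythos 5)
Your overall route---interaction followed by hiding, then the minimal-witness analysis of $+$-covered configurations---is the right one, and it is essentially how the result is established in the reference (\cite{cg3}) that the paper itself cites in lieu of a proof. But there is a genuine gap in the central case analysis of your second paragraph. When $s$ is maximal in $x^\sigma$ but not maximal in $\hat y$, its immediate successor $e$ in $\hat y$ is indeed contributed by $\tau$'s causality; however nothing forces $e$ to lie \emph{outside} the $\sigma$-component. The omitted third case is that $e$ is another synchronization event displaying into $B$. There your appeal to \emph{courteous} proves nothing: if $\pol_\tau(s) = +$, courtesy for $\tau$ yields $\pr_\tau(s) \imc \pr_\tau(e)$ \emph{inside} $B^\perp$, which is perfectly consistent (the game $B$ may have internal immediate causality, as $\gbool$ does). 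In fact courtesy works in the opposite direction to the one you use it in: far from placing $e$ in $C$, the hypothesis $\pol_\tau(s)=+$ forces $e$ to be a $B$-event, precisely because immediate causality in $B^\perp \parallel C$ cannot cross the parallel composition. Your argument only covers the sub-case where $e$ happens to display into $C$.

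To close the missing case you need an ingredient beyond courtesy. From $\pr_\tau(s) \imc \pr_\tau(e)$ within $B^\perp$ you get $\pr_\sigma(s) <_{A^\perp \parallel B} \pr_\sigma(e)$ with $e$ a synchronization event, hence $e \in x^\sigma$; then \emph{rule-abiding} applied to $[e]_\sigma \subseteq x^\sigma$ together with \emph{local injectivity} on $x^\sigma$ lifts this game-level causality to $s <_\sigma e$, contradicting maximality of $s$ in $x^\sigma$ and so refuting $\pol_\sigma(s) = -$. This is where the structure of the display map, and not just courtesy, enters. A smaller omission of the same flavour occurs in your converse direction: that a maximal negative visible event of the composite configuration is maximal in $x^\sigma$ (resp.\ $x^\tau$), and that the interaction of a causally compatible pair of $+$-covered configurations has no maximal synchronization event (so that it really is the minimal witness of its visible part, which you also need for injectivity), both rest on the observation that a $B$-event maximal in both components would have to be positive for $\sigma$ and for $\tau$ simultaneously, which is impossible as these polarities are opposite. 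The remainder of your outline---construction of the interaction, hiding, the order-isos, commutation with $\dom$, $\cod$ and the display maps, and uniqueness via the $+$-covered data---matches the cited development and is reasonable to delegate to it, as the paper does.
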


So $\tau \odot \sigma$ is the unique (up
to iso) strategy whose configurations correspond to
matching pairs for which the causal constraints
imposed by $\sigma$ and $\tau$ are compatible. 

\changed{The idea of considering matching pairs of configurations
brings us close to the composition of relations, spans, or profunctors.
In general, however, the composition of strategies is more restricted,
because of \emph{causal compatibility}. This is a
well-known feature of game semantics, which allows for the
interpretation of more sophisticated language constructs. 
But for the strategies of this paper, matching configurations are
always causally compatible. This follows from \emph{visibility},
which we will exploit to prove that composition never \emph{deadlocks};
see Section \ref{subsubsec:wit_w_sym}. This is crucial for
constructing a functor to the relational model. }


\subsubsection{Copycat} As we have defined a notion of composition, it
is natural to introduce here the accompanying identity, the
\emph{copycat strategy}. For any game $A$, the copycat strategy $\cc_A :
A \vdash A$ is an asynchronous forwarder: any Opponent move on either
side enables (\emph{i.e.} is the unique causal dependency for) the
corresponding event on the other side.

We only give the definition of copycat on arenas, as it is sufficient
and slightly simpler:

\begin{defi}\label{def:copycat}
For each arena $A$, the \textbf{copycat strategy} $\cc_A : A \vdash
A$ comprises:
\[
\begin{array}{rcl}
\ev{\cc_A} &=& \ev{A \vdash A}\\
\pr_{\cc_A}(i, a) &=& (i, a)\\
(i, a) \leq_{\cc_A} (j,a') &\Leftrightarrow& \text{$a <_A a'$; or $a =
a'$, $\pol_{A \vdash A}(i, a) = -$ and $\pol_{A^\perp \parallel
A}(j, a') = +$}\\
(i, a) \conflict_{\cc_A} (j, a') &\Leftrightarrow& a \conflict_A a'\,,
\end{array}
\]
with symmetries those bijections of the form $\theta_1 \parallel
\theta_2 : x_1 \parallel x_2 \sym_{\cc_A} y_1 \parallel y_2$ such that
\[
\theta_1 : x_1 \sym_A y_1\,,
\quad
\theta_2 : x_2 \sym_A y_2\,,
\quad
\text{and}
\quad
\theta_1 \cap \theta_2 : x_1 \cap x_2 \sym_A y_1 \cap y_2\,.
\]
\end{defi}

As for composition, we shall rely on a characterization
of its $+$-covered configurations:

\begin{prop}\label{prop:cc_pcov}
Consider $A$ any arena. Then, we have:
\[
\begin{array}{rcl}
\confp{\cc_A} &=& \{x_A \parallel x_A \in \conf{A\parallel A} \mid x_A
\in \conf{A}\}\,\\ 
\tildep{\cc_A} &=& \{\theta_A \parallel \theta_A \in \tilde{A\parallel
A}\mid \theta_A \in
\tilde{A}\}\,.
\end{array}
\]
\end{prop}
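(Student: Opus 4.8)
The plan is to prove the two displayed equalities in turn, deriving the statement about symmetries from the one about configurations. I start from the observation that $\pr_{\cc_A}$ is the identity on events, so by \emph{rule-abiding} every $x \in \conf{\cc_A}$ is a configuration of $A^\perp \parallel A$, hence decomposes uniquely as $x = x_1 \parallel x_2$ with $x_1, x_2 \in \conf{A}$; moreover an event $(i,a)$ of $\cc_A$ is positive exactly when ($i = 2$ and $\pol_A(a) = +$) or ($i = 1$ and $\pol_A(a) = -$), and from the definition $(i,a) <_{\cc_A} (3-i,a)$ whenever $(i,a)$ is negative. For the inclusion $\supseteq$, given $x_A \in \conf{A}$ one checks directly --- using consistency and down-closure of $x_A$ in $A$ together with the descriptions of $\conflict_{\cc_A}$ and $\leq_{\cc_A}$ --- that $x_A \parallel x_A$ is a configuration of $\cc_A$; and it is $+$-covered because a maximal event $m = (i,a)$ of negative polarity would lie strictly below its positive copy $(3-i,a)$, which belongs to $x_A \parallel x_A$ since $a \in x_A$, contradicting maximality.

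The substance of the proof is the converse inclusion, obtained from the claim: if $x = x_1 \parallel x_2 \in \conf{\cc_A}$ and $x_1 \neq x_2$, then $x \notin \confp{\cc_A}$. By the symmetry of the two sides, assume $x_1 \setminus x_2 \neq \emptyset$. First, $x_1 \setminus x_2$ is upward closed in $(x_1, \leq_A)$: if $b \in x_1 \setminus x_2$ and $b <_A b' \in x_1$, then $b' \notin x_2$, for otherwise down-closure of $x_2$ in $A$ would force $b \in x_2$. Pick $b^*$ which is $\leq_A$-maximal in $x_1 \setminus x_2$; by upward closure it is $\leq_A$-maximal in $x_1$. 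If $\pol_A(b^*) = -$, then $(1,b^*)$ is positive in $\cc_A$, and since $(1,b^*) \in x$ (as $b^* \in x_1$), down-closure of $x$ along $(2,b^*) <_{\cc_A} (1,b^*)$ forces $b^* \in x_2$, a contradiction; hence $\pol_A(b^*) = +$ and $(1,b^*)$ is negative. It remains to check that $(1,b^*)$ is $\leq_{\cc_A}$-maximal in $x$: an event of $x$ strictly above it is either the positive copy $(2,b^*)$, impossible as $b^* \notin x_2$, or some $(j,a')$ with $b^* <_A a'$; if $j = 1$ this contradicts $\leq_A$-maximality of $b^*$ in $x_1$, while if $j = 2$ then $a' \in x_2$ forces $b^* \in x_2$ by down-closure in $A$, contradicting $b^* \notin x_2$. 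So $(1,b^*)$ is a negative maximal event of $x$, whence $x \notin \confp{\cc_A}$, and the inclusion $\subseteq$ follows.

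For the symmetries, let $\theta \in \tildep{\cc_A}$ and write $\theta = \theta_1 \parallel \theta_2$ with $\theta_i : x_i \sym_A y_i$ and, as required by the definition of $\tilde{\cc_A}$, $\theta_1 \cap \theta_2 : x_1 \cap x_2 \sym_A y_1 \cap y_2$. Since $\theta$ is $+$-covered and symmetries are polarity-preserving order-isomorphisms (so the choice of $\dom$ vs.\ $\cod$ is immaterial), both $\dom(\theta) = x_1 \parallel x_2$ and $\cod(\theta) = y_1 \parallel y_2$ are $+$-covered configurations, so the configuration part gives $x_1 = x_2 =: x_A$ and $y_1 = y_2 =: y_A$. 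Then $\theta_1 \cap \theta_2$ is a bijection $x_A \to y_A$ contained in the bijection $\theta_1 : x_A \to y_A$; since both are finite of equal cardinality, $\theta_1 = \theta_1 \cap \theta_2 = \theta_2 =: \theta_A \in \tilde{A}$, hence $\theta = \theta_A \parallel \theta_A$. Conversely, for any $\theta_A \in \tilde{A}$ the bijection $\theta_A \parallel \theta_A$ satisfies the three conditions defining $\tilde{\cc_A}$ and, by the configuration part, has $+$-covered domain, so it lies in $\tildep{\cc_A}$.

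I expect the main obstacle to be the converse inclusion for configurations: passing to a $\leq_A$-maximal element of $x_1 \setminus x_2$ and then showing that its copycat image is genuinely $\leq_{\cc_A}$-maximal in all of $x$ --- ruling out events above it both along $\leq_A$ and along a copycat link --- is where the combinatorial content of the proposition lies. The remaining checks (that $x_A \parallel x_A$ is a configuration, and the cardinality argument for symmetries) are routine.
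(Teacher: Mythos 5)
Your proof is correct. The paper does not actually prove Proposition \ref{prop:cc_pcov}: it states it as a known characterization (deferring to \cite{cg2,cg3}) and only offers the informal gloss that in a $+$-covered configuration every Opponent move must have been forwarded. Your argument is a faithful and complete formalization of exactly that gloss --- the key step being the choice of a $\leq_A$-maximal element $b^*$ of $x_1 \setminus x_2$, the polarity case analysis showing $(1,b^*)$ must be negative, and the verification that it is genuinely $\leq_{\cc_A}$-maximal in $x$ --- and the derivation of the symmetry clause from the configuration clause via the cardinality argument on $\theta_1 \cap \theta_2$ is clean. The only implicit step worth making explicit is that $x_1$ and $x_2$ are themselves configurations of $A$ (down-closed and consistent), which follows immediately from down-closure and consistency of $x$ in $\cc_A$ restricted to the first clauses of $\leq_{\cc_A}$ and $\conflict_{\cc_A}$; you use this repeatedly and it is correct.
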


In $+$-covered configurations, any Opponent moves must have caused a
Player move; so they must all have been forwarded. This makes
$+$-covered configurations completely balanced; once again it hints at
the link with relational semantics as through its $+$-covered
configurations, $\cc_A$ matches the identity relation on configurations.

\subsubsection{Congruence} Arenas and strategies up to isomorphism form
\changed{a bicategory, where the 2-cells are given by a notion of \emph{maps between strategies}: these are defined as maps of event structures that commute with the display maps \cite{cg2}. For the purposes of semantics, we usually consider strategies up to isomorphism, and this gives a category. Unfortunately this category is not the right one: the requirement that isomorphisms commute with the display map is too strict. For $\oc$ to satisfy the comonad laws, we require a more permissive equivalence relation between
strategies, which allows for the choice of copy indices to be different in each strategy. Formally, this weaker notion of isomorphism is given by a map of event structures which commutes with the display maps \emph{up to positive symmetry}:}
\begin{defi}\label{def:pos_iso}
Consider $\sigma, \tau : A$ two causal strategies on arena $A$.

A \textbf{positive isomorphism} $\varphi : \sigma \simstrat \tau$ is
an isomorphism of ess satisfying
\[
\pr_\tau \circ \varphi \sim^+ \pr_\sigma\,,
\]
\emph{i.e.} for all $x \in \conf{\sigma}$, $\{(\pr_\sigma(s),
\pr_\tau\circ \varphi(s)) \mid s \in x\} \in \ptilde{A}$. If there is a
positive iso $\varphi : \sigma \simstrat \tau$ we say $\sigma$ and
$\tau$ are \textbf{positively isomorphic}, and write $\sigma \simstrat
\tau$. 
\end{defi}

This means that $\sigma$ and $\tau$ are the same up to renaming of
events. Intuitively this renaming might cause a reindexing of
positive events, but it must keep the copy indices of negative events
unchanged.
Crucially, \changed{the induced notion of} equivalence is preserved by composition, \changed{\emph{i.e.} it is a congruence. }

\begin{prop}\label{prop:congruence}
Consider $\sigma, \sigma' : A \vdash B$, $\tau, \tau' : B \vdash
C$ s.t. $\sigma \simstrat \sigma'$ and $\tau \simstrat \tau'$.

Then, we have
$\tau \odot \sigma 
\simstrat
\tau' \odot \sigma'$.
\end{prop}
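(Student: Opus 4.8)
The plan is to work through the characterization of composition given in Proposition~\ref{prop:char_comp} and show that a positive isomorphism $\varphi : \sigma \simstrat \sigma'$ together with $\psi : \tau \simstrat \tau'$ induces a positive isomorphism $\tau \odot \sigma \simstrat \tau' \odot \sigma'$. Since $\confp{\tau\odot\sigma}$ and $\tildep{\tau\odot\sigma}$ are described (up to iso) as the causally compatible pairs $(x^\tau, x^\sigma)$, respectively $(\theta^\tau, \theta^\sigma)$, a natural candidate for the underlying map is the one sending $x^\tau \odot x^\sigma$ to $\varphi(x^\sigma) \odot \psi(x^\tau)$, once I have checked that $\varphi$ and $\psi$ preserve causal compatibility. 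Concretely, $\varphi$ and $\psi$ are plain isomorphisms of ess (bijective, causality- and symmetry-preserving), so they restrict to order-isos $\confp{\sigma}\simeq\confp{\sigma'}$ and $\tildep{\sigma}\simeq\tildep{\sigma'}$; the only point needing care is the \emph{matching} and \emph{secured} conditions of Definition~\ref{def:caus_comp}, i.e. that the images still agree on $B$ and still form an acyclic dependency graph.

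First I would recall the definition of positive isomorphism (Definition~\ref{def:pos_iso}): $\pr_{\sigma'}\circ\varphi \sim^+ \pr_\sigma$, so for each $x^\sigma\in\conf\sigma$ the family $\{(\pr_\sigma(s), \pr_{\sigma'}\varphi(s))\mid s\in x^\sigma\}$ is a \emph{positive} symmetry of $A^\perp\parr B$. Decomposing over $A^\perp\parr B = A^\perp\parallel B$, this means $\pr_\sigma(x^\sigma)$ and $\pr_{\sigma'}(\varphi(x^\sigma))$ are related by a symmetry that is positive on $A^\perp$ (hence \emph{negative} on $A$) and positive on $B$. Dually, $\psi$ relates $\pr_\tau(x^\tau)$ to $\pr_{\tau'}(\psi(x^\tau))$ by a symmetry positive on $B$ (the left component of $B\vdash C$, which is $B^\perp$, so negative reindexings on $B$) and positive on $C$. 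The key observation is that on the shared game $B$, the reindexing induced by $\varphi$ is \emph{positive} while the one induced by $\psi$ is \emph{negative} with respect to $B$ as it sits in $\sigma$; since $B$ appears as $B$ in $A^\perp\parr B$ but as $B^\perp$ in $B^\perp\parr C$, the polarities are opposite, and one checks these two reindexings on $B$ are forced to \emph{coincide}. This is exactly what makes $(\varphi(x^\sigma), \psi(x^\tau))$ matching whenever $(x^\sigma, x^\tau)$ is, and likewise for symmetries.

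The plan then is: (i) establish the above compatibility of reindexings on $B$, so that causal compatibility of $(x^\sigma,x^\tau)$ implies that of $(\varphi(x^\sigma),\psi(x^\tau))$ --- securedness is immediate because $\varphi,\psi$ are order-isos and the dependency relation $\vartriangleleft$ is defined purely from the causal orders of $\sigma$, $\tau$ (resp. $\sigma'$, $\tau'$), which are preserved; (ii) conclude that $(x^\tau,x^\sigma)\mapsto (\psi(x^\tau),\varphi(x^\sigma))$ is a well-defined order-iso between the sets of causally compatible pairs, hence via Proposition~\ref{prop:char_comp} induces an iso of ess $\Phi:\tau\odot\sigma\simeq\tau'\odot\sigma'$ on $+$-covered configurations and symmetries, which by the rigidity of strategies (receptivity, $\sim$-receptivity determining the non-$+$-covered part) extends uniquely to a full iso of ess; (iii) verify the positive-symmetry condition $\pr_{\tau'\odot\sigma'}\circ\Phi\sim^+\pr_{\tau\odot\sigma}$ by computing, for a $+$-covered symmetry $\theta^\tau\odot\theta^\sigma$, that $\pr_{\tau\odot\sigma}(\theta^\tau\odot\theta^\sigma) = \theta^\sigma_A\parallel\theta^\tau_C$ and its image is $(\varphi\theta^\sigma)_A\parallel(\psi\theta^\tau)_C$, and the comparison symmetry between these is $\{(\pr_{\sigma}s,\pr_{\sigma'}\varphi s)\}_A \parallel \{(\pr_\tau t,\pr_{\tau'}\psi t)\}_C$, which is positive on $A$ (from $\varphi$, since it is positive on the $A^\perp$ side means negative on... wait --- here I must be careful: in $A^\perp\parr C$ the component $A$ appears as $A^\perp$, and $\varphi$ being a positive iso gives positivity on the $A^\perp$ component of $\sigma$'s game, which is the $A^\perp$ of $A^\perp\parr B$, matching the $A^\perp$ of $A^\perp\parr C$) and positive on $C$ (from $\psi$). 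So the comparison symmetry is positive, i.e. lies in $\ptilde{A^\perp\parr C}$, and it suffices to check it extends from $+$-covered symmetries to all of $\tilde{\tau\odot\sigma}$ by receptivity.

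The main obstacle I expect is step (i): pinning down precisely why the two reindexings on the shared interface $B$ --- one coming from $\varphi$ (positive on $B$, from the right of $A^\perp\parr B$) and one from $\psi$ (positive on $B^\perp$, i.e. negative on $B$, from the left of $B^\perp\parr C$) --- must agree, so that the matching condition $\theta^\sigma_B=\theta^\tau_B$ is preserved. Intuitively this holds because causal compatibility already forces $x^\sigma_B = x^\tau_B$ as \emph{configurations}, and the interaction respects symmetry, so any reindexing applied coherently to $\sigma$ must be the ``same'' reindexing seen by $\tau$ --- but making this rigorous requires tracking the bijection $\varphi_{x^\sigma,x^\tau}$ from Definition~\ref{def:caus_comp} through $\varphi$ and $\psi$ and checking it remains secured, which is where I'd spend most of the effort. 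Once that combinatorial bookkeeping is done, the rest is a routine, if somewhat lengthy, verification using the universal characterization of composition and the standard rigidity of concurrent strategies. I would also note that, alternatively, this proposition is essentially Proposition~3.23 (or similar) of \cite{cg2} and could be cited, but since we are being self-contained about composition via its $+$-covered characterization, the above direct argument is the natural route.
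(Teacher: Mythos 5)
Your plan founders at step (i), and the issue is not just bookkeeping: the claim that the two reindexings on the shared interface $B$ ``are forced to coincide'' is false, and with it the definition of the candidate map $x^\tau \odot x^\sigma \mapsto \psi(x^\tau)\odot\varphi(x^\sigma)$. Unwind the polarities: $\varphi$ being a positive iso on $A \vdash B = A^\perp \parr B$ gives, for each $x^\sigma$, a comparison symmetry whose $B$-component is a \emph{positive} symmetry $\theta_B^+ : x^\sigma_B \sym_B^+ \varphi(x^\sigma)_B$, i.e.\ a reindexing of Player-of-$\sigma$'s moves in $B$; whereas $\psi$ on $B^\perp \parr C$ gives a symmetry whose $B$-component is \emph{negative} for $B$, $\theta_B^- : x^\tau_B \sym_B^- \psi(x^\tau)_B$, reindexing the complementary set of moves. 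These two reindexings are chosen independently by the two isomorphisms, act on disjoint sets of events, and there is no coherence constraint linking them; so even though $x^\sigma_B = x^\tau_B$, the images $\varphi(x^\sigma)_B$ and $\psi(x^\tau)_B$ will generically differ, and the pair $(\varphi(x^\sigma), \psi(x^\tau))$ is simply not matching. (The paper's own proof, in Appendix~\ref{app:horizontal}, states this explicitly: ``There is, of course, no reason why $\varphi(x^\sigma)$ and $\psi(x^\tau)$ would be compatible.'')

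The missing idea is exactly the one the paper flags as ``the main argument'', namely Proposition~\ref{prop:sync_sym}: rather than expecting the images to match on the nose, you take the mediating symmetry $\theta_B = \theta_B^- \circ (\theta_B^+)^{-1} : \varphi(x^\sigma)_B \sym_B \psi(x^\tau)_B$, check that the induced composite bijection is secured (which does follow from securedness of the original pair, since $\varphi, \psi$ are order-isos), and then apply synchronization up to symmetry to produce \emph{new} configurations $y^{\sigma'} \sym_{\sigma'} \varphi(x^\sigma)$ and $y^{\tau'} \sym_{\tau'} \psi(x^\tau)$ that are causally compatible; the image of $x^\tau \odot x^\sigma$ is $y^{\tau'}\odot y^{\sigma'}$, not $\psi(x^\tau)\odot\varphi(x^\sigma)$. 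Well-definedness and injectivity then rest on the uniqueness clause of Proposition~\ref{prop:sync_sym}, which ultimately comes from \emph{thin} (any symmetry of the composite displaying to a positive symmetry of $A \vdash C$ is an identity). Your steps (ii) and (iii) are otherwise sound in outline --- securedness preservation, extension from $+$-covered configurations by receptivity, and positivity of the comparison symmetry on $A$ and $C$ --- but they must be run on the resynchronized configurations, and the symmetry case needs the analogous statement for higher symmetries (Proposition~\ref{prop:sync_sym2}).
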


The proof is fairly elaborate \cite{cg2}. Details are out of scope,
however the main argument is reviewed later as Proposition
\ref{prop:sync_sym}: from two configurations able to synchronize
\emph{up to symmetry}, one can extract symmetric configurations
synchronizing \emph{on the nose}.

\subsection{Composition of visible and exhaustive strategies.} 
\changed{
The fact that visible strategies compose is detailed in
\cite{cg3}, and we do not repeat the details here. Note that for any arena $A$, the copycat strategy $\cc_A : A \vdash A$ is visible, and for arenas $A$, $B$ and $C$ and strategies $\sigma : A \vdash B$ and $\tau :
B \vdash C$, if $\sigma$ and $\tau$ are visible then so is $\tau\odot
\sigma$. We emphasize that these results apply to arenas but not to arbitrary games; in particular, the proof exploits that every event has a unique
immediate predecessor in the game \cite{cg3}.}

 The compositionality of exhaustivity does
not appear anywhere \changed{in the literature}, so we include the details, straightforward as
they are. We start with copycat.

\begin{prop}
For any arena $A$, $\cc_A : A \vdash A$ is exhaustive.
\end{prop}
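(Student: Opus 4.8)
The plan is to reduce the statement to the explicit description of the $+$-covered configurations of copycat given in Proposition~\ref{prop:cc_pcov}, and then to perform a short three-case check on payoff values.

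First I would recall that $\cc_A$ is a strategy on the game $A^\perp \parr A$, so exhaustivity requires $\kappa_{A^\perp \parr A}(\pr_{\cc_A}(x)) \geq 0$ for every $x \in \confp{\cc_A}$. By Proposition~\ref{prop:cc_pcov}, each such $x$ has the form $x_A \parallel x_A$ for a single $x_A \in \conf{A}$; and since the display map of copycat is the identity, $\pr_{\cc_A}(x_A \parallel x_A) = x_A \parallel x_A$, now viewed as a configuration of $A^\perp \parr A$. Then I would unfold the payoff: by the definition of $\parr$ and of the dual game (where $\kappa_{A^\perp} = -\kappa_A$),
\[
\kappa_{A^\perp \parr A}(x_A \parallel x_A) \;=\; \kappa_{A^\perp}(x_A) \parr \kappa_A(x_A) \;=\; \bigl(-\kappa_A(x_A)\bigr) \parr \kappa_A(x_A)\,,
\]
so it suffices to observe, from the table for $\parr$ in Figure~\ref{fig:op_payoff}, that $(-v) \parr v \geq 0$ for every $v \in \{-1,0,+1\}$: indeed $0 \parr 0 = 0$, and $(-1)\parr(+1) = (+1)\parr(-1) = +1$. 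Hence $\cc_A$ is exhaustive; note in passing that the value is $0$ precisely when $x_A$ is complete, matching the intuition that copycat behaves as the identity relation on configurations.

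I do not expect a genuine obstacle here. The only points that need a little care are invoking Proposition~\ref{prop:cc_pcov} for the shape of $\confp{\cc_A}$, and applying the payoff clauses for $\parr$ and $(-)^\perp$ uniformly across all configurations — no special treatment of the empty configuration is required, since the par payoff is defined pointwise and $A$ being a $-$-game only further restricts $\kappa_A(\emptyset)$ to $\{0,1\}$, for which $(-\kappa_A(\emptyset))\parr\kappa_A(\emptyset) \in \{0, +1\}$ as well.
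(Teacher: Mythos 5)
Your proof is correct and follows exactly the paper's route: reduce to the shape $x_A \parallel x_A$ of $+$-covered configurations via Proposition~\ref{prop:cc_pcov}, then observe that $(-\kappa_A(x_A)) \parr \kappa_A(x_A)$ is always non-negative. You merely make explicit the three-case check against the $\parr$ table that the paper leaves implicit.
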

\begin{proof}
By Lemma \ref{prop:cc_pcov}, $+$-covered configurations of copycat are
$x_A \parallel x_A \in \confp{\cc_A}$
for $x_A \in \conf{A}$. But then $\kappa_{A\vdash A}(x_A \parallel x_A)
= (-\kappa_A(x_A)) \parr \kappa_A(x_A)$, always non-negative.
\end{proof}

We now prove that exhaustive strategies are stable under
composition.

\begin{prop}
Consider $A, B$ and $C$ games, and $\sigma : A\vdash B$, $\tau : B
\vdash C$ exhaustive.

Then, $\tau \odot \sigma : A \vdash C$ is exhaustive.
\end{prop}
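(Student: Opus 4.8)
The plan is to reduce exhaustivity of $\tau \odot \sigma$ to a purely arithmetic property of the three-valued payoffs via the characterization of $+$-covered configurations in Proposition \ref{prop:char_comp}.

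First I would take an arbitrary $x \in \confp{\tau\odot\sigma}$ and, by Proposition \ref{prop:char_comp}, write $x = x^\tau \odot x^\sigma$ for causally compatible $x^\sigma \in \confp{\sigma}$ and $x^\tau \in \confp{\tau}$; by \emph{matching}, $\pr_\sigma(x^\sigma) = x^\sigma_A \parallel x_B$ and $\pr_\tau(x^\tau) = x_B \parallel x^\tau_C$ share the same middle component $x_B$. Applying Proposition \ref{prop:char_comp} to the identity symmetries $\id_{x^\sigma}$ and $\id_{x^\tau}$ (which are causally compatible, with $\id_{x^\tau} \odot \id_{x^\sigma} = \id_{x}$) yields $\pr_{\tau\odot\sigma}(x) = x^\sigma_A \parallel x^\tau_C$. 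Since $A \vdash C = A^\perp \parr C$ and $\kappa_{A^\perp} = -\kappa_A$, this gives
\[
\kappa_{A\vdash C}(\pr_{\tau\odot\sigma}(x)) = \bigl(-\kappa_A(x^\sigma_A)\bigr) \parr \kappa_C(x^\tau_C)\,.
\]
On the other hand, exhaustivity of $\sigma$ applied to $x^\sigma \in \confp{\sigma}$ gives $\bigl(-\kappa_A(x^\sigma_A)\bigr) \parr \kappa_B(x_B) \geq 0$, and exhaustivity of $\tau$ applied to $x^\tau \in \confp{\tau}$ gives $\bigl(-\kappa_B(x_B)\bigr) \parr \kappa_C(x^\tau_C) \geq 0$.

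It then remains to prove the following elementary fact about the operation $\parr$ of Figure \ref{fig:op_payoff}: for all $p, q, r \in \{-1, 0, +1\}$, if $p \parr q \geq 0$ and $(-q) \parr r \geq 0$, then $p \parr r \geq 0$. Equivalently, writing $p \preceq r$ for $(-p) \parr r \geq 0$, the relation $\preceq$ is transitive — and then exhaustivity of $\sigma$ and $\tau$ read exactly as $\kappa_A(x^\sigma_A) \preceq \kappa_B(x_B)$ and $\kappa_B(x_B) \preceq \kappa_C(x^\tau_C)$, whence $\kappa_A(x^\sigma_A) \preceq \kappa_C(x^\tau_C)$, i.e. $\kappa_{A\vdash C}(\pr_{\tau\odot\sigma}(x)) \geq 0$. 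The arithmetic fact is checked by a short case analysis: if $p = +1$ or $r = +1$ then $p \parr r = +1$; otherwise $p, r \leq 0$, and inspecting $q$, the cases $q = +1$ and $q = -1$ each contradict one of the two hypotheses, while $q = 0$ forces $p = r = 0$, so $p \parr r = 0$. This three-valued case analysis is the only real content of the argument; the rest is bookkeeping with the composition characterization, so I do not anticipate a genuine obstacle.
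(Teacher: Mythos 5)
Your proof is correct and follows essentially the same route as the paper's: decompose a $+$-covered configuration of $\tau \odot \sigma$ via Proposition \ref{prop:char_comp}, apply exhaustivity of $\sigma$ and $\tau$ to the two components, and conclude by a case analysis on the three-valued payoffs. Your packaging of that case analysis as transitivity of the relation $p \preceq r \Leftrightarrow (-p) \parr r \geq 0$ is a slightly tidier organization of the same arithmetic that the paper checks inline by contradiction on the values of $\kappa_A(x_A)$ and $\kappa_C(x_C)$.
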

\begin{proof}
By Proposition \ref{prop:char_comp}, any $+$-covered configuration of
$\tau \odot \sigma$ has the form $x^\tau \odot x^\sigma \in \confp{\tau
\odot \sigma}$ for $x^\sigma \in \confp{\sigma}$ and $x^\tau \in
\confp{\tau}$ causally compatible. We write projections as:
\[
\pr_\sigma x^\sigma = x_A \parallel x_B\,,
\qquad
\qquad
\pr_\tau x^\tau = x_B \parallel x_C
\]
and $\pr_{\tau \odot \sigma}(x^\tau \odot x^\sigma) = x_A
\parallel x_C$. If $\kappa_A(x_A) = -1$ or $\kappa_C(x_C) = 1$, then
$\kappa_{A\vdash C}(x_A \parallel x_C) = 1$ and we are done. Hence, assume
$\kappa_A(x_A) \geq 0$ and $\kappa_C(x_C) \geq 0$. If
$\kappa_A(x_A) = 1$, then we must have $\kappa_B(x_B) = 1$ as well since
$\kappa_{A\vdash B}(x_A \parallel x_B) \geq 0$ since $\sigma$ is
exhaustive. But then, with the same reasoning by exhaustivity of $\tau$,
$\kappa_C(x_C) = 1$, contradiction.
Symmetrically, if $\kappa_C(x_C) = -1$ we may deduce that $\kappa_A(x_A)
= -1$, contradiction. The only case left has $\kappa_A(x_A) = 0$ and
$\kappa_C(x_C) = 0$, so that $\kappa_{A\vdash C}(x_A \parallel x_C)=0$ as
needed.
\end{proof}

%
%
%
%

\subsection{Categorical structure} 
\changed{
We now outline the categorical structure of our model, as required for $\nPCF$, exploiting the various constructions on games introduced in \ref{subsec:games}. Due to the 2-dimensional nature of the model (isomorphisms between strategies play an important role), there are subtleties in the presentation of this structure, which we explain now. 

Strategies on thin concurrent games form a bicategory, where
associativity and unit laws for composition hold only up to iso. Above
we introduced a weaker notion, positive isomorphism, which is useful
for considering strategies up to a choice of copy indices. This gives
rise to another bicategory with the same objects and morphisms, but
more 2-cells.

A natural step could be to present the categorical structure at this
level (as in \cite{paquet2020probabilistic}). This is mathematically
important but technical, because many additional
coherence laws involving 2-cells must be verified. For this paper this
is not necessary as any 2-dimensional structure disappears in the
collapse to $\Rel{\R}$. Thus we compromise and define a model
consisting of objects (arenas), morphisms (exhaustive, visible
strategies), and an equivalence relation on each hom-set ($\approx$),
such that the laws for composition hold up to $\approx$. In other
words, we record which strategies are positively isomorphic, but forget
the isos.  

We call this a $\sim$-category. There is a general theory of these, in which the usual
coherence laws for morphisms hold only up to equivalence\footnote{We
note that a $\sim$-category is an enriched bicategory
\cite{garner2016enriched}, where the enrichment is over a 2-category of
sets with equivalence relations, equivalence-preserving maps, and
equivalence between maps, with the latter defined pointwise. This gives
a formal justification for our definitions.}. In particular there are
canonical notions of ($\sim$-)functor, ($\sim$-)comonad, monoidal
$\sim$-category, and so on, which we use below. 
}

\subsubsection{$\sim$-categories} We start by defining
\emph{$\sim$-categories}:
%
%

\begin{defi}
A 
(small) 
\textbf{$\sim$-category} $\C$ consists in a set of
\emph{objects} $\C_0$; for each $A, B$, a set of \emph{morphisms}
$\C(A, B)$ with an equivalence relation $\sim$; a \emph{composition}
operation 
\[
(- \circ -) : \C(B, C) \times \C(A, B) \to \C(A, C)
\]
for all $A, B, C \in \C_0$; an \emph{identity} morphism $\id_A \in \C(A, A)$ for
all $A \in \C_0$, subject to:
\[
\begin{array}{rl}
\text{\emph{associativity:}} & \text{for all $f \in \C(A, B), g \in
\C(B, C), h \in \C(C, D)$, $(h \circ g) \circ f \sim h \circ (g \circ
f)$,}\\
\text{\emph{identity:}} & \text{for all $f \in \C(A, B)$, $\id_B \circ f
\sim f \circ \id_A \sim f$,}\\
\text{\emph{congruence:}} & \text{for all $f \sim f' \in \C(A, B)$, $g
\sim g' \in \C(B, C)$, $g' \circ f' \sim g \circ f$.}
\end{array}
\]
\end{defi}

In particular, the development above already gives us our main
$\sim$-category of interest:

\begin{prop}\label{prop:sim_cat}
There is $\Strat$, a $\sim$-category with arenas as objects; morphisms
from $A$ to $B$ the exhaustive, visible strategies on $A \vdash B$; and
equivalence relation $\simstrat$.
\end{prop}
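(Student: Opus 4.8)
The plan is to assemble results already established. Matching the definition of $\sim$-category, we must check four things: that $\odot$ and $\cc$ restrict to the proposed class of morphisms; that $\simstrat$ is an equivalence relation on each hom-set; that associativity and the identity laws hold up to $\simstrat$; and congruence.

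First I would verify closure. For arenas $A, B, C$ and exhaustive visible strategies $\sigma : A \vdash B$ and $\tau : B \vdash C$, Proposition~\ref{prop:char_comp} yields a strategy $\tau \odot \sigma : A \vdash C$, which is visible by the composition theorem for visible strategies recalled above (from \cite{cg3}) and exhaustive by the proposition just proved. Similarly, for each arena $A$, the copycat strategy $\cc_A : A \vdash A$ of Definition~\ref{def:copycat} is a strategy, visible and exhaustive by the results above. Hence $-\odot-$ and $\cc_{(-)}$ give the required composition operation and identities. For the equivalence relation, $\simstrat$ is reflexive via the identity map of ess, which commutes strictly with display maps and hence commutes with them up to positive symmetry, since $\ptilde{A}$ --- being an isomorphism family --- contains all identity bijections; and it is symmetric and transitive because $\ptilde{A}$ is closed under inverse and composition (the groupoid condition), while maps of ess compose and, being isos here, invert.

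Next, associativity of $\odot$ and the unit laws $\cc \odot \sigma \iso \sigma \iso \sigma \odot \cc$ hold up to isomorphism of strategies: this is the bicategorical structure of thin concurrent games established in \cite{cg2}. Any isomorphism $\varphi : \sigma \iso \tau$ satisfies $\pr_\tau \circ \varphi = \pr_\sigma$, which in particular gives $\pr_\tau \circ \varphi \sim^+ \pr_\sigma$, so $\varphi$ is a positive isomorphism and $\sigma \simstrat \tau$; thus associativity and the identity laws descend to $\simstrat$. Finally, congruence of $\simstrat$ with respect to $\odot$ is precisely Proposition~\ref{prop:congruence}. This verifies all the axioms, giving the $\sim$-category $\Strat$.

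I do not expect a genuine obstacle here: the substantial content --- composition of visible strategies, the bicategorical coherence laws, and congruence of positive isomorphism --- is imported from \cite{cg2,cg3} and the propositions above. The single point worth stating explicitly is the observation that a strict isomorphism of strategies is automatically a positive isomorphism, using that isomorphism families contain identities; this is what lets the ``on the nose up to iso'' coherence of the underlying bicategory descend to the coarser relation $\simstrat$.
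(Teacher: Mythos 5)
Your proposal is correct and follows essentially the same route as the paper: the structure is assembled from the closure results already proved, congruence is Proposition~\ref{prop:congruence}, and the coherence laws descend to $\simstrat$ because strict isomorphisms of strategies (which commute with display maps on the nose) are automatically positive isomorphisms. The only difference is cosmetic: the paper records the explicit action of the associator and unitors on $+$-covered configurations (via Propositions~\ref{prop:char_comp} and~\ref{prop:cc_pcov}) because that characterization is needed later, whereas you import their existence abstractly from the bicategorical structure of \cite{cg2}.
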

\begin{proof}
The required structure was provided above; it remains to prove that
associativity of composition and neutrality of copycat hold up to
$\simstrat$. In other words, we must provide
\[
\begin{array}{rcrcl}
\alpha_{\sigma, \tau, \delta} &:& (\delta \odot \tau) \odot \sigma
&\simstrat& \delta \odot (\tau \odot \sigma)\\
\lambda_\sigma &:& \cc_B \odot \sigma &\simstrat& \sigma\\
\rho_\sigma &:& \sigma \odot \cc_A &\simstrat& \sigma\,,
\end{array}
\]
for all $\sigma : A \vdash B$, $\tau : B \vdash
C$ and $\delta : C \vdash D$.
On $+$-covered configurations, they are:
\[
\begin{array}{rcrcl}
\alpha_{\sigma, \tau, \delta} &:& \confp{(\delta \odot \tau) \odot
\sigma} &\iso& \confp{\delta \odot (\tau \odot \sigma)}\\
&& (x^\delta \odot x^\tau) \odot x^\sigma &\mapsto & x^\delta \odot
(x^\tau \odot x^\sigma)
\end{array}
\quad
\begin{array}{rcrcl}
\lambda_\sigma &:& \confp{\cc_B \odot \sigma} &\iso& 
\confp{\sigma}\\
&& (x_B \parallel x_B) \odot x^\sigma & \mapsto& x^\sigma
\end{array}
\]
and symmetrically for $\rho_\sigma$; using Propositions
\ref{prop:char_comp} and \ref{prop:cc_pcov}. In fact, these bijections 
between configurations are sufficient to define the required
isomorphisms of strategies (provided one checks that they are
order-isomorphisms, compatible with symmetry, and that they commute with
display maps) -- see \cite{cg3}. In this paper, we 
only need the above characterization of their action on configurations.
Congruence is Proposition \ref{prop:congruence}. 
\end{proof}

Functors between $\sim$-categories \changed{must preserve} $\sim$, and preserve composition and identities up to $\sim$.
Of course, any $\sim$-category quotients to a category whose morphisms
are equivalence classes. But it is preferable to refrain from
quotienting: this way, the interpretation yields a concrete strategy
rather than an equivalence class. This is particularly relevant for
recursion which involves an complete partial order on concrete
strategies, whereas it is unknown if positive isomorphism classes
satisfy the adequate completeness properties. Note that this subtlety is
present in all game semantics involving explicit copy indices, including
 in AJM games \cite{ajm}, though it is usually handled implicitly.

\subsubsection{Relative Seely categories.} 
\changed{
Following Section~\ref{subsubsec:basic_constructions}, some
constructions are only available for \emph{strict} games:
$A\with B$ is only defined when $A$ and $B$ are strict
(Definition~\ref{def:with}), and $A \lin B$ is only defined when $B$ is
strict (Definition~\ref{def:linearfunctionspace}).  
As we explain in more details below, this means that some of the Seely
category structure in $\Strat$ only exists \emph{relative} to the
inclusion functor $\Strat_s \hookrightarrow \Strat$, where $\Strat_s$
is the full sub-$\sim$-category of \emph{strict} arenas. We introduce a
notion of \emph{relative Seely category}, which generalises Seely
categories, in which some constructions ($\with$ and $\lin$) are only
available for a sub-family of objects.

Additionally, while $\oc$ is always available in $\Strat$
(there is a comonad $\oc : \Strat \to \Strat$), the collapse of
Section~\ref{sec:coll_rw} only preserves $\oc$ on strict
arenas. So to precisely capture the logical content of this
collapse operation, it makes sense to consider $\oc$ as a
\emph{relative comonad} $\Strat_s \hookrightarrow \Strat$, in the sense
of \cite{relativemonads}. The axioms for relative Seely categories
ensure that the induced Kleisli category (with objects are the strict
ones), is cartesian closed. 

Relative Seely categories model the fragment of Intuitionistic Linear Logic with
\begin{eqnarray*}
S, T &::=& \top \mid S \with T \mid A \lin S\\
A, B &::=& 1 \mid A \tensor B \mid S \mid \oc S
\end{eqnarray*}
as formulas, separated into \emph{strict} $S, T$ and \emph{general} $A,
B$ formulas. This corresponds to the logical structure preserved by our collapse; note that this covers all formulas involved in Girard's call-by-name translation for $\nPCF$ types.  

In the following we make use of the standard notions of relative adjunctions and relative comonads. For the reader unfamiliar with these, the definition also contains explicit data. For full details see Appendix~\ref{app:relative}. 

}
\begin{defi}  
\label{def:relativeseely}
\changed{
A \textbf{relative Seely category} is a symmetric monoidal category $(\C, \tensor, 1)$ equipped with a full subcategory $\C_s$ together with the following data and axioms:

\begin{itemize}
\item $\C_s$ has finite products $(\with, \top)$ preserved by the inclusion functor $J : \C_s \hookrightarrow \C$.
\item For every $B \in \C$ there is a functor $B \lin - : \C_s \to \C_s$, such that there is a natural bijection 
\[
\Lambda(-) : \C(A\tensor B, S) \bij \C(A, B \lin S).
\]
for every $A \in \C$ and $S \in \C_s$. In other words, the functors 
\[
- \tensor B : \C \to \C \qquad \qquad J(B \lin -) : \C_s \to \C 
\]
 form a $J$-relative adjunction $- \tensor B \dashv_J J(- \lin B)$. 
\item There is a $J$-relative comonad $\oc : \C_s \to \C$. Concretely we have, for every $S \in \C_s$, an object $\oc S \in \C$ and a morphism $\der_S : \oc S \to S$, and for every $\sigma: \oc S \to T$, a \textbf{promotion} $\sigma^\dagger : \oc S \to \oc T$, subject to three axioms \cite{relativemonads}.

\item The functor $\oc : \C_s \to \C$ is symmetric strong monoidal
$(\C_s, \with, \top) \to (\C, \tensor, 1)$, so there are 
\[m_0 : 1 \to \oc \top \qquad \qquad m_{S, T} : \oc S \tensor \oc T \to \oc(S \with T)\]
isos for $S, T \in \C_s$, natural in $S, T$ and satisfying the axioms for Seely categories \cite{panorama}. 
\end{itemize}
}
\end{defi}
\changed{
Note that whenever $\C_s = \C$ this is precisely a Seely category in
the usual sense; in particular any Seely category is canonically a
relative Seely category. 

For any relative Seely category, the Kleisli category associated with
the relative comonad $\oc$ is cartesian closed. This category, denoted
$\C_\oc$, has objects those of $\C_s$, and $\C_\oc(S, T)  = \C(\oc S,
T)$.  The proof is essentially as for Seely categories;
details are in Appendix~\ref{app:kleisli}. 
\begin{lem}
For a relative Seely category $\C$, the Kleisli category $\C_\oc$ is
cartesian closed with finite products given as in $\C_s$, and function
space $S \tto T = \oc S \lin T$.  
\end{lem}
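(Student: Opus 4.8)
The plan is to run the standard argument that a Seely category has a cartesian closed Kleisli category (see \cite{panorama}), checking that neither the relative nature of $\oc$ and $\lin$ nor the restriction to strict objects causes any difficulty. Recall that $\C_\oc$ has the strict objects as its objects, $\C_\oc(S, T) = \C(\oc S, T)$, Kleisli identity $\der_S$, and Kleisli composition $g \bullet f := g \circ f^\dagger$ built from the promotion operation $(-)^\dagger$ of the relative comonad $\oc$; that $\der_S$ is neutral and $\bullet$ is associative is exactly the content of the three relative comonad axioms of Definition~\ref{def:relativeseely}. There are then two things to establish: that $\top$ and $\with$ furnish finite products in $\C_\oc$, and that $\C_\oc$ is closed with respect to them with internal hom $S \tto T = \oc S \lin T$; note that $S \tto T$ does lie in $\C_s$, since by assumption $\oc S \lin - : \C_s \to \C_s$.

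For finite products, I would exploit that the inclusion $J : \C_s \hookrightarrow \C$ preserves finite products, so that $\top$ is terminal already in $\C$ and, for strict $S, T$, the projections $\pi_1 : S \with T \to S$ and $\pi_2 : S \with T \to T$ form a product cone in $\C$. Hence for any strict $R$ the set $\C_\oc(R, \top) = \C(\oc R, \top)$ is a singleton, and $\C_\oc(R, S\with T) = \C(\oc R, S\with T) \cong \C(\oc R, S)\times \C(\oc R, T) = \C_\oc(R, S)\times \C_\oc(R, T)$. It then remains only to check that this bijection is realised by the Kleisli projections $\pi_i \circ \der_{S\with T}$ and by the pairing $\langle f, g\rangle$ computed in $\C$; this reduces to the comonad axiom $\der_{S\with T}\circ \langle f,g\rangle^\dagger = \langle f,g\rangle$, which gives $(\pi_i\circ\der_{S\with T}) \bullet \langle f,g\rangle = \pi_i \circ \langle f,g\rangle$, together with uniqueness of pairings in $\C$.

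For closure, I would build the currying bijection as a composite of natural isomorphisms
\[
\C_\oc(R\with S, T) \;=\; \C(\oc(R\with S), T) \;\cong\; \C(\oc R \tensor \oc S, T) \;\cong\; \C(\oc R, \oc S \lin T) \;=\; \C_\oc(R, S\tto T),
\]
where the first isomorphism is precomposition with the Seely iso $m_{R,S} : \oc R\tensor \oc S \cong \oc(R\with S)$, and the second is the $J$-relative adjunction $-\tensor\oc S \dashv_J J(-\lin\oc S)$ of Definition~\ref{def:relativeseely}, instantiated at $\oc R\in\C$ and $T\in\C_s$. From this one reads off the evaluation morphism $\mathrm{ev}_{S,T}\in\C_\oc((S\tto T)\with S, T)$ as the transpose of $\id_{S\tto T}$, and the universal property of the exponential is precisely the claim that the displayed composite is a bijection natural in $R$ and in $T$.

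The main obstacle I anticipate is the naturality of this bijection, and in particular naturality in $R$. Naturality in $T$ is straightforward, as the constituent maps are built from data natural in $T$ together with Kleisli postcomposition. Naturality in $R$, by contrast, requires the composite to commute with Kleisli precomposition by an arbitrary $k\in\C_\oc(R',R)$; unwinding the definitions, this becomes a finite diagram chase relating the Seely isomorphism $m$ to the structure of the relative comonad $\oc$ and to the monoidal coherences --- exactly the Seely axioms assumed in Definition~\ref{def:relativeseely}. I expect this chase to be literally the same as in the non-relative case; the only extra point is to confirm that every object occurring in it is legitimately typed, i.e.\ that $\with$, $\lin$ and $\tto$ are applied there only to strict objects, which holds automatically. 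These verifications, being routine, I would relegate to Appendix~\ref{app:kleisli}.
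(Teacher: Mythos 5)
Your proposal is correct and follows essentially the same route as the paper's proof in Appendix~\ref{app:kleisli}: finite products are inherited from $\C_s$ because $J$ preserves them, and cartesian closure comes from the chain $\C_\oc(R\with S,T) = \C(\oc(R\with S),T) \cong \C(\oc R\tensor\oc S,T)\cong\C(\oc R,\oc S\lin T)=\C_\oc(R,S\tto T)$ via the Seely isomorphism and the $J$-relative adjunction, with evaluation the transpose of the identity. The only cosmetic difference is that the paper justifies the product part by citing the general fact that $J$-relative right adjoints preserve the limits $J$ preserves, where you compute the hom-set bijection directly and verify the Kleisli projections realise it; both are standard and your explicit attention to naturality in $R$ is a sound (if routine) addition.
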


For $\Strat$ these definitions must be taken in
$\sim$-categorical form, this is what we do next. The generalisation is
very straightforward, but we give the main definitions along the
way.
}

\subsubsection{Symmetric monoidal structure} We first extend the
symmetric monoidal structure.

Symmetric monoidal
$\sim$-categories are defined as expected; where the usual data 
\changed{
required to preserve $\sim$ and satisfy laws up to $\sim$.
For $\sim$-categories $\C$ and $\D$, their \emph{product} $\C \times \D$ has objects $\C_0\times \D_0$, morphisms $(A, B) \to (C, D)$ the pairs $(f, g) \in \C(A, C) \times \D(B, D)$, with obvious identity and compositions, and $(f, g) \sim (f', g')$ if $f \sim f'$ and $g \sim g'$.    

\begin{defi}
A \textbf{symmetric monoidal $\sim$-category} is a $\sim$-category $\C$ equipped with an object $1 \in \C_0$, a $\sim$-functor 
\[
(- \tensor -) : \C \times \C \to \C,
\]
and, for all $A, B, C \in \C_0$, morphisms
\[
\begin{array}{rcrcl}
\alpha_{A, B, C} &:& (A\tensor B) \tensor C &\to& A \tensor (B\tensor
C)\\
\lambda_A &:& 1 \tensor A &\to& A\\
\rho_A &:& A \tensor 1 &\to& A\\
s_{A, B} &:& A \tensor B &\to& B \tensor A
\end{array}
\]
invertible up to $\sim$, and such that the usual naturality and coherence conditions for a symmetric monoidal category \cite{maclane} hold up to $\sim$.
\end{defi}
}

For $\Strat$, on arenas the tensor is an instance of the tensor
of games given in Definition \ref{def:games_tensor}. 
The unit $1$ is the empty arena where the empty configuration has payoff $0$.
\changed{This operation extends to a tensor product of strategies, defined using parallel composition of event structures. Rather than giving the concrete construction we state a characterization in terms of $+$-covered configurations: }
\begin{prop}\label{prop:def_tensor}
Consider $A, B, C, D$ arenas, and $\sigma : A \vdash B$, $\tau : C
\vdash D$ strategies.

Then, there is a strategy $\sigma \tensor \tau : A \tensor C
\vdash B \tensor D$, unique up to iso, s.t. there are 
\[
\begin{array}{rcrcl}
(- \tensor -) &:& \confp{\sigma} \times \confp{\tau} &\simeq&
\confp{\sigma \tensor \tau}\\
(- \tensor -) &:& \tildep{\sigma} \times \tildep{\tau} &\simeq&
\tildep{\sigma \tensor \tau}
\end{array}
\]
order-isos commuting with $\dom, \cod$, and s.t. for all $\theta^\sigma \in
\tildep{\sigma}$ and $\theta^\tau \in \tildep{\tau}$, 
\[
\pr_{\sigma \tensor \tau}(\theta^\sigma \tensor \theta^\tau) =
(\theta^\sigma_A \parallel \theta^\tau_C) \parallel (\theta^\sigma_B
\parallel \theta^\tau_D)\,.
\]
\end{prop}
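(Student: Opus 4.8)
The plan is to reduce this proposition to the already-established composition theorem (Proposition \ref{prop:char_comp}), exploiting the fact that the tensor of strategies can itself be understood as a composition against a fixed ``rewiring'' strategy, or more directly by mimicking the proof of \ref{prop:char_comp}. First I would recall the concrete construction: the underlying event structure of $\sigma \tensor \tau$ is built from $\ev{\sigma} \parallel \ev{\tau}$, with display map $\pr_\sigma \parallel \pr_\tau$ composed with the evident reindexing isomorphism of arenas $(A^\perp \parr B) \parallel (C^\perp \parr D) \iso (A \tensor C)^\perp \parr (B \tensor D)$, which merely permutes the four components. The symmetries are $\tilde{\sigma} \parallel \tilde{\tau}$, and one checks directly that all the strategy axioms (negativity, courtesy, receptivity, $\sim$-receptivity, thinness) are inherited componentwise from $\sigma$ and $\tau$, since the reindexing is a strict isomorphism of arenas. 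Uniqueness up to iso then follows because any two strategies with the same $+$-covered configurations (compatibly with symmetry) are isomorphic, a standard fact for thin concurrent games used already in \ref{prop:char_comp}.

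Next I would establish the bijections on $+$-covered configurations. By the decomposition of configurations of a parallel composition (noted right after the definition of $\parallel$), every $x \in \conf{\sigma \tensor \tau}$ decomposes uniquely as $x^\sigma \parallel x^\tau$ with $x^\sigma \in \conf{\sigma}$ and $x^\tau \in \conf{\tau}$; and maximal events of $x^\sigma \parallel x^\tau$ are exactly the maximal events of $x^\sigma$ together with those of $x^\tau$, so $x$ is $+$-covered iff both $x^\sigma$ and $x^\tau$ are. This gives the order-iso $\confp{\sigma} \times \confp{\tau} \simeq \confp{\sigma\tensor\tau}$, with inclusion order on the left the product order, matching inclusion on the right. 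The same argument applied to symmetries — using that a symmetry of $\sigma\tensor\tau$ decomposes uniquely as $\theta^\sigma \parallel \theta^\tau$, and that $+$-coveredness of a symmetry is defined via its domain (or codomain) — yields $\tildep{\sigma} \times \tildep{\tau} \simeq \tildep{\sigma\tensor\tau}$, commuting with $\dom$ and $\cod$ by construction. Finally the displayed formula for $\pr_{\sigma\tensor\tau}(\theta^\sigma \tensor \theta^\tau)$ is just the description of the reindexing isomorphism of arenas applied to $\pr_\sigma(\theta^\sigma) \parallel \pr_\tau(\theta^\tau) = (\theta^\sigma_A \parallel \theta^\sigma_B) \parallel (\theta^\tau_C \parallel \theta^\tau_D)$, which it sends to $(\theta^\sigma_A \parallel \theta^\tau_C) \parallel (\theta^\sigma_B \parallel \theta^\tau_D)$.

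The main obstacle, such as it is, lies not in any deep combinatorics but in bookkeeping the reindexing of the four game components correctly and checking that it is compatible with the symmetry structures (positive and negative symmetries of $A \tensor C$ and $B \tensor D$), so that $\pr_{\sigma\tensor\tau}$ is genuinely a map of event structures with symmetry and the strategy axioms transfer. One subtle point worth spelling out is thinness: given $\theta : x \sym_{\sigma\tensor\tau} y$ decomposing as $\theta^\sigma \parallel \theta^\tau$ and a positive extension $x \vdash_{\sigma\tensor\tau} s^+$, the event $s$ lies in exactly one component (say the $\sigma$ side), so the required unique positive extension of $y$ is obtained from thinness of $\sigma$ applied to $\theta^\sigma$, trivially extended on the $\tau$ side; receptivity and $\sim$-receptivity are handled the same way. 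Once these componentwise verifications are in place, the characterization in terms of $+$-covered configurations is immediate, and uniqueness up to iso follows from the general rigidity of thin concurrent strategies. I would therefore present the proof as: (1) give the construction and note the axioms transfer componentwise; (2) derive the configuration and symmetry bijections from the parallel-composition decomposition; (3) read off the display-map formula; (4) invoke rigidity for uniqueness.
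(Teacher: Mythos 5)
Your proposal is correct and matches the paper's approach: the paper simply cites the construction of the tensor from \cite{cg2} and obtains uniqueness from the fact that a strategy is determined up to iso by its $+$-covered configurations and symmetries (Lemma 4.11 of \cite{cg3}), which is exactly the ``rigidity'' you invoke. Your write-up merely fills in the componentwise verifications and the parallel-composition decomposition that the paper delegates to the references, and these details are all sound.
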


This is clear from the construction of the tensor in \cite{cg2}
with uniqueness coming from Lemma \changed{4.11} of \cite{cg3}. 
Bifunctoriality up
to isomorphism is proved in \cite{cg2}. The components of
the symmetric monoidal structure are also given in \cite{cg2}; they are
immediate variants of the copycat strategy of Definition
\ref{def:copycat}. They also satisfy obvious characterizations of their
$+$-covered configurations, with for instance for the associator:
\[
((x_A \parallel x_B) \parallel x_C) \parallel (x_A \parallel (x_B
\parallel x_C)) \in \confp{\alpha_{A, B, C}}
\]
for $x_A \in \conf{A}, x_B \in \conf{B}$ and $x_C \in \conf{C}$ and
$\confp{\alpha_{A, B, C}}$ containing exactly those, and similarly for
$\lambda_A, \rho_A$ and $s_{A, B}$. For this paper, the only new things
to check are that the structural isomorphisms are exhaustive, and that
the tensor preserves exhaustivity.
The former is straightforward via characterization of
configurations; we detail the latter.

\begin{prop}
If $\sigma : A \vdash B$ and $\tau : C \vdash D$ are exhaustive, so
is $\sigma \tensor \tau : A \tensor C \vdash B \tensor D$.
\end{prop}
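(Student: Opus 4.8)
The plan is to reduce the claim to a finite verification in the three-element payoff algebra $\{-1, 0, +1\}$, once the structure of the $+$-covered configurations of a tensor of strategies has been supplied by Proposition~\ref{prop:def_tensor}. First I would fix an arbitrary $w \in \confp{\sigma \tensor \tau}$: by Proposition~\ref{prop:def_tensor} it decomposes as $w = x^\sigma \tensor x^\tau$ with $x^\sigma \in \confp{\sigma}$ and $x^\tau \in \confp{\tau}$, and I write $\pr_\sigma x^\sigma = x_A \parallel x_B$ and $\pr_\tau x^\tau = x_C \parallel x_D$. Using the display-map clause of Proposition~\ref{prop:def_tensor}, together with the payoff clauses for $\tensor$, for $\parr$, and for the dual (recalling that the target game is $(A \tensor C)^\perp \parr (B \tensor D)$), the quantity to control unfolds to
\[
\kappa_{(A \tensor C) \vdash (B \tensor D)}\bigl(\pr_{\sigma \tensor \tau}(w)\bigr) \;=\; \bigl(-(\kappa_A(x_A) \tensor \kappa_C(x_C))\bigr) \parr \bigl(\kappa_B(x_B) \tensor \kappa_D(x_D)\bigr)\,.
\]

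Next I would invoke the De Morgan-style identity $-(p \tensor q) = (-p) \parr (-q)$ on $\{-1, 0, +1\}$, immediate from Figure~\ref{fig:op_payoff}, to rewrite the right-hand side as $\bigl((-\kappa_A(x_A)) \parr (-\kappa_C(x_C))\bigr) \parr \bigl(\kappa_B(x_B) \tensor \kappa_D(x_D)\bigr)$. Exhaustivity of $\sigma$ and of $\tau$ now reads, respectively, $(-\kappa_A(x_A)) \parr \kappa_B(x_B) \geq 0$ and $(-\kappa_C(x_C)) \parr \kappa_D(x_D) \geq 0$, and the goal is that the displayed expression is $\geq 0$. This I would settle by a case split on the value of $\kappa_B(x_B) \tensor \kappa_D(x_D)$ (using that $\parr$ is associative and commutative, again clear from Figure~\ref{fig:op_payoff}): if it is $+1$, the whole $\parr$ is $+1$; if it is $0$, then $\kappa_B(x_B) = \kappa_D(x_D) = 0$ and the two hypotheses force $-\kappa_A(x_A) \geq 0$ and $-\kappa_C(x_C) \geq 0$, so a $\parr$ of three non-negative values is non-negative; if it is $-1$, then, say, $\kappa_B(x_B) = -1$, and exhaustivity of $\sigma$ forces $\kappa_A(x_A) = -1$, whence $-\kappa_A(x_A) = +1$ and the whole expression is $+1$ — the subcase $\kappa_D(x_D) = -1$ being symmetric through $\tau$.

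I do not anticipate a genuine obstacle here: the structural content is entirely carried by Proposition~\ref{prop:def_tensor}, and what remains is bookkeeping. The two points that require a little care are getting the signs right when unfolding the payoff (in particular the single occurrence of the dual, and the fact that the relevant identity is $-(p \tensor q) = (-p) \parr (-q)$ and not $(-p) \tensor (-q)$), and the three-valued arithmetic of $\tensor$ and $\parr$; both are finite and can be checked directly against Figure~\ref{fig:op_payoff}.
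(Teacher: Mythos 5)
Your proof is correct and follows essentially the same route as the paper's: decompose the $+$-covered configuration via Proposition~\ref{prop:def_tensor}, unfold the payoff of $(A \tensor C) \vdash (B \tensor D)$, and conclude by a finite case analysis in $\{-1,0,+1\}$ using exhaustivity of $\sigma$ and $\tau$. The only (cosmetic) differences are that you organize the cases around the value of $\kappa_B(x_B) \tensor \kappa_D(x_D)$ and make the identity $-(p \tensor q) = (-p) \parr (-q)$ explicit, whereas the paper cases directly on the component payoffs; both reduce to the same table-checking.
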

\begin{proof}
Consider $x^\sigma \tensor x^\tau \in \confp{\sigma \tensor \tau}$,
\emph{i.e.} $x^\sigma \in \confp{\sigma}$ and $x^\tau \in \confp{\tau}$.
If $\kappa_A(x^\sigma_A) = -1$ or $\kappa_C(x^\tau_C) = -1$, then we are
done; thus assume $\kappa_A(x^\sigma_A) \geq 0$ and $\kappa_C(x^\tau_C)
\geq 0$. If $\kappa_{B\tensor D}(x^\sigma_B \parallel x^\tau_D) = 1$
then we are done; thus assume $\kappa_{B \tensor D}(x^\sigma_B
\parallel x^\tau_D) \leq 0$. If $\kappa_{B\tensor D}(x^\sigma_B
\parallel x^\tau_D) = -1$, then $\kappa_B(x^\sigma_B) = -1$ or
$\kappa_D(x^\tau_D) = -1$, say the former. Since $\sigma$ is
exhaustive and $x^\sigma \in \confp{\sigma}$, we have
$\kappa_A(x^\sigma_A) = -1$, contradiction -- and likewise,
$\kappa_D(x^\tau_D) = -1$ yields a contradiction. So, $\kappa_{B\tensor
D}(x^\sigma_B \parallel x^\tau_D) = 0$, \emph{i.e.}
$\kappa_B(x^\sigma_B) = 0$ and $\kappa_D(x^\tau_D) = 0$. Finally, assume
$\kappa_A(x^\sigma_A) = 1$. Then, by exhaustivity of $\sigma$,
$\kappa_B(x^\sigma_B) = 1$ as well, contradiction. So
$\kappa_A(x^\sigma_A) = 0$, and likewise $\kappa_C(x^\tau_C) = 0$.
Summing up, the overall payoff of $\pr_{\sigma \tensor \tau}(x^\sigma
\tensor x^\tau)$ is $0$, as required.
\end{proof}

Overall, we have completed the symmetric monoidal 
 structure of
$\Strat$:

\begin{prop}
Equipped with the above, $\Strat$ is a symmetric monoidal $\sim$-category.
\end{prop}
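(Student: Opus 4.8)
The plan is to assemble the symmetric monoidal structure from the pieces already in place, checking at each step that the two new ingredients---visibility and exhaustivity---are respected, and that equations which hold \emph{up to iso} in the symmetric monoidal bicategory of thin concurrent games of \cite{cg2} transfer to equations \emph{up to $\simstrat$} in $\Strat$. The key general fact used throughout is that any isomorphism of strategies $\varphi : \sigma \iso \tau$ is in particular a positive isomorphism, since $\pr_\tau \circ \varphi = \pr_\sigma$ entails $\pr_\tau \circ \varphi \sim^+ \pr_\sigma$; so every iso is a $\simstrat$, and combined with the fact that $\simstrat$ is a congruence (Proposition \ref{prop:congruence}) this lets us demote the coherence data of \cite{cg2} to the $\sim$-categorical level without re-deriving any diagram.

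First I would check that $- \tensor -$ is a $\sim$-functor $\Strat \times \Strat \to \Strat$. On objects, the tensor of two arenas is again an arena, and the unit $1$, the empty arena with $\kappa_1(\emptyset) = 0$, is one too. On morphisms, Proposition \ref{prop:def_tensor} produces $\sigma \tensor \tau$ from $\sigma : A \vdash B$ and $\tau : C \vdash D$; this is visible (a gcc of $\sigma \tensor \tau$ lies entirely within one of the two parallel components, so its visibility reduces to that of $\sigma$ or of $\tau$) and exhaustive (by the Proposition just proved), hence a morphism of $\Strat$. Preservation of $\simstrat$ amounts to showing that $\sigma \simstrat \sigma'$ and $\tau \simstrat \tau'$ entail $\sigma \tensor \tau \simstrat \sigma' \tensor \tau'$, which follows from the characterizations of $\confp{\sigma \tensor \tau}$ and $\tildep{\sigma \tensor \tau}$ in Proposition \ref{prop:def_tensor}, transporting the given positive isos componentwise. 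Preservation of identities, $\cc_A \tensor \cc_B \simstrat \cc_{A \tensor B}$, and bifunctoriality (the tensor commutes with $\odot$ in each variable) hold up to iso by \cite{cg2}, hence up to $\simstrat$.

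Next I would verify the structural morphisms. The candidates $\alpha_{A, B, C}$, $\lambda_A$, $\rho_A$, $s_{A, B}$ are the copycat-like strategies of \cite{cg2}, described above through their $+$-covered configurations; being variants of copycat on arenas they are visible, and exhaustivity is immediate from those characterizations, exactly as for $\cc_A$, since each $+$-covered configuration balances the payoffs of its two sides. Each is invertible up to iso in $\Strat$, its inverse being the mirror-image copycat variant, hence invertible up to $\simstrat$. The naturality of these morphisms and the pentagon and triangle coherences hold up to iso in the bicategory of \cite{cg2}; since all strategies appearing in those equations are exhaustive and visible (stability of exhaustivity under $\tensor$ and $\odot$ is established above, stability of visibility is recalled from \cite{cg3}), and since $\simstrat$ contains $\iso$ and is a congruence, the same equations hold up to $\simstrat$ in $\Strat$. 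This provides all the data and axioms of a symmetric monoidal $\sim$-category.

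The main obstacle is not a single hard argument but the discipline of tracking the two layers of structure: almost everything is inherited from \cite{cg2}, but one must systematically (i) confirm that every strategy produced lands in the sub-$\sim$-category of \emph{exhaustive, visible} strategies, for which the Propositions on stability of exhaustivity under $\tensor$ and $\odot$ and the stability of visibility are precisely what is needed, and (ii) replace each ``commutes up to a specified iso'' statement of \cite{cg2} by ``commutes up to $\simstrat$'', which is legitimate exactly because $\simstrat$ is a congruence containing $\iso$. No new combinatorics of symmetries is required at this stage; that is confined to the composition of strategies (Proposition \ref{prop:congruence}) and, later, to the collapse functor.
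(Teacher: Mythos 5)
Your proposal is correct and follows essentially the same route as the paper: the paper likewise inherits the monoidal data and coherence from \cite{cg2}, observing that the only genuinely new obligations are exhaustivity of the structural copycat-like morphisms (immediate from their $+$-covered configurations) and preservation of exhaustivity by the tensor, which is the proposition proved just before the statement. Your explicit remark that every isomorphism of strategies is in particular a positive isomorphism, so that ``up to iso'' statements from \cite{cg2} transfer to ``up to $\simstrat$'' via congruence, is exactly the (implicit) mechanism the paper relies on.
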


\subsubsection{Closed structure.}
\label{subsubsec:semiclosed_strat}

\changed{The (relative) closed structure is easily derived, using the
linear function space construction of Section
\ref{subsubsec:arrow}.} For $A, B, C$ arenas with $C$ strict, the
currying bijection $\Lambda(-) : \Strat(A \tensor B, C) \to
\Strat(A, B \lin C)$
leaves the ess unchanged and only affects the display map: for $\sigma
: A \tensor B \to C$ and $x^\sigma \in \confn{\sigma}$ with
$\pr_\sigma(x^\sigma) = (x^\sigma_A \parallel x^\sigma_B) \parallel
x^\sigma_C$, 
\[
\pr_{\Lambda(\sigma)}(x^\sigma) = x^\sigma_A \parallel (x^\sigma_B \lin
x^\sigma_C)
\]
using the notation introduced in Section \ref{subsubsec:arrow}.

The evaluation morphism $\evm_{B, C} : (B \lin C) \tensor B \to C$ is a copycat-like strategy having 
\[
((x_B \lin x_C) \parallel x_B) \parallel x_C \in \confp{\evm_{B, C}}
\]
as $+$-covered non-empty configurations, for any $x_B \in \conf{B}$ and
$x_C \in \confn{C}$.

\subsubsection{Cartesian products.} \label{subsubsec:cart_id}
We now introduce the cartesian structure.

\changed{


\begin{defi}
\label{def:simproduct}
A $\sim$-category $\C$ \textbf{has binary products} if for any $A, B \in \C$, there exists an object $A \with B \in \C$, and \textbf{projections} $\pi_A \in \C(A\with
B, A)$ and $\pi_B \in \C(A \with B, B)$, such that for every $\Gamma \in \C$, $\sigma \in \C(\Gamma, A)$, and $\tau \in \C(\Gamma, B)$, there exists $\tuple{\sigma, \tau}$, unique up to $\sim$, such that 
\[
\sigma \sim \pi_A \circ \tuple{\sigma, \tau} \qquad \qquad \tau \sim \pi_B \circ \tuple{\sigma, \tau}.
\] 

An object $\top \in \C$ is \textbf{terminal} if for any $\Gamma \in
\C$ there is $\term_\Gamma \in \C(\Gamma, \top)$, unique
up to $\sim$.
\end{defi}

We prove that $\Strat_s$ has binary products and a terminal object (so it has all \emph{finite} products), and these are preserved by the inclusion functor $\Strat_s \hookrightarrow \Strat$. 

Clearly, the arena $\top$ with no events and
$\kappa_\top(\emptyset) = 1$ is strict, and terminal in $\Strat$. For any $A$ and
$B$ strict arenas, their product is given by the construction $A \with
B$ of Definition \ref{def:with}. We give a characterisation of the pairing construction $\tuple{-, =}$. }
\begin{prop}\label{prop:def_pairing}
For arenas $\Gamma, A, B$, with $A, B$ strict, and strategies $\sigma :
\Gamma \vdash A, \tau : \Gamma \vdash B$, there is a strategy $\tuple{\sigma, \tau} : \Gamma \vdash A \with
B$, unique up to iso, s.t. there are order-isos:
\[
\begin{array}{rcl}
\confp{\sigma} + \confp{\tau}  &\simeq& \confp{\tuple{\sigma, \tau}}\\
\tildep{\sigma} + \tildep{\tau} & \simeq& \tildep{\tuple{\sigma, \tau}}
\end{array}
\]
commuting with $\dom, \cod$, and such that for all $\theta^\sigma \in
\tildep{\sigma}$ and $\theta^\tau \in \tildep{\tau}$, we have
\[
\pr_{\tuple{\sigma, \tau}}(\inj_\sigma(\theta^\sigma)) =
\theta^\sigma_\Gamma \parallel (\theta^\sigma_A \parallel \emptyset)\,,
\qquad
\pr_{\tuple{\sigma, \tau}}(\inj_\tau(\theta^\tau)) =
\theta^\tau_\Gamma \parallel (\emptyset \parallel \theta^\tau_B)
\]
writing $\inj_\sigma : \confp{\sigma} \to \confp{\tuple{\sigma, \tau}}$
and $\inj_\tau : \confp{\tau} \to \confp{\tuple{\sigma, \tau}}$ for the
induced injections.
\end{prop}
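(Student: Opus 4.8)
The plan is to construct $\langle \sigma, \tau \rangle$ explicitly as the ``tagged sum'' of the event structures with symmetry underlying $\sigma$ and $\tau$, following the cartesian structure of \cite{cg2}, and then read the characterization of its $+$-covered configurations and symmetries straight off the construction; uniqueness up to iso will then follow from the same rigidity property (Lemma~4.11 of \cite{cg3}) invoked for Proposition~\ref{prop:def_tensor}. Concretely, set $\ev{\langle\sigma,\tau\rangle} = \{1\}\times\ev\sigma \uplus \{2\}\times\ev\tau$, with causality and symmetry inherited separately within each component, and conflict the union of those of $\sigma$ and $\tau$ together with all cross pairs $(1,s)\conflict(2,t)$ — coherent with conflict inheritance since causality never crosses components. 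The display map is that of $\sigma$ on the first component, post-composed with the coprojection $\Gamma^\perp\parr A \hookrightarrow \Gamma^\perp\parr(A\with B)$ sending an $A$-move $(2,a)$ to $(2,(1,a))$ (using that $A\with B$ has underlying ess $A + B$), and symmetrically for $\tau$ via the other coprojection. Strictness of $A$ and $B$ is precisely what makes $A\with B$ a strict $-$-game (Definition~\ref{def:with}), hence $\Gamma\vdash A\with B$ a game of the expected shape.

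Next I would verify that $\langle\sigma,\tau\rangle$ is a visible, exhaustive strategy. All strategy conditions are either local to a component or concern extension by a negative move at a configuration, so they transfer from $\sigma$ and $\tau$: the minimal negative moves of $\Gamma^\perp\parr(A\with B)$ are exactly the (pairwise conflicting) initial moves of $A$ and of $B$, over which $\sigma$ resp.\ $\tau$ already provide unique minimal events, giving receptivity; courtesy, thinness, $\sim$-receptivity and negativity hold component-wise, using that minimal events of a strategy always map to minimal — hence negative — moves of the game. Visibility transfers since any grounded causal chain of $\langle\sigma,\tau\rangle$ is consistent, hence contained in one component, hence a gcc of $\sigma$ or of $\tau$. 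For exhaustivity, a $+$-covered configuration is either $\emptyset$, with $\kappa_{\Gamma\vdash A\with B}(\emptyset) = \kappa_{\Gamma^\perp}(\emptyset)\parr 1 \ge 0$, or lies in one component, say the first, where it is the image of some $x^\sigma\in\confp\sigma$ with $\pr_\sigma(x^\sigma) = x_\Gamma\parallel x_A$ and $x_A\ne\emptyset$; then $\kappa_{A\with B}((1,x_A)) = \kappa_A(x_A)$ (Definition~\ref{def:with}), so its payoff is $\kappa_{\Gamma^\perp}(x_\Gamma)\parr\kappa_A(x_A) = \kappa_{\Gamma\vdash A}(\pr_\sigma(x^\sigma)) \ge 0$ by exhaustivity of $\sigma$, and symmetrically on the $B$-side.

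Then I would establish the characterization. By construction a $+$-covered configuration (resp.\ symmetry) of $\langle\sigma,\tau\rangle$ is $\emptyset$ or contained in a single component, and within the first component it is precisely a tagged $+$-covered configuration (resp.\ symmetry) of $\sigma$ — $+$-coverage being a condition on maximal events, and symmetries on $E_1 + E_2$ decomposing along components — and conversely every such datum of $\sigma$ or $\tau$ arises this way. Taking $\inj_\sigma, \inj_\tau$ to be these taggings (with $\emptyset\mapsto\emptyset$) yields the two order-isos $\confp\sigma+\confp\tau\simeq\confp{\langle\sigma,\tau\rangle}$ and $\tildep\sigma+\tildep\tau\simeq\tildep{\langle\sigma,\tau\rangle}$, evidently commuting with $\dom$ and $\cod$, and the displayed formulas for $\pr_{\langle\sigma,\tau\rangle}(\inj_\sigma(\theta^\sigma))$ and $\pr_{\langle\sigma,\tau\rangle}(\inj_\tau(\theta^\tau))$ are just the redirected display map unfolded through the two coprojections. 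Uniqueness up to iso of any strategy with these $+$-covered configurations and symmetries is, once more, the rigidity argument used for Proposition~\ref{prop:def_tensor}.

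I expect the only real friction to be the case where $A$ or $B$ is strict but not well-opened, so has several (pairwise conflicting) initial moves: then $\sigma$'s minimal events are spread over these, and one must check that imposing the cross-conflicts in $\langle\sigma,\tau\rangle$ stays coherent and that the display map lands in the correct summand of $A + B$ — which it does, precisely because in $A\with B$ the initial moves of the $A$-part conflict pairwise and also conflict with those of the $B$-part, so conflict inheritance forces the cross-conflict automatically. The one bookkeeping subtlety is that $\emptyset$ lies in both $\confp\sigma$ and $\confp\tau$ but only once in $\confp{\langle\sigma,\tau\rangle}$; the stated bijection is to be read with the two copies identified (equivalently, as a bijection on non-empty $+$-covered configurations extended by $\emptyset\mapsto\emptyset$).
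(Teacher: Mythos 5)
Your construction is exactly the one the paper delegates to the cartesian product of arenas in \cite{cg3} (tagged sum with cross-conflict, display via the coprojections into $A \with B$, uniqueness by the rigidity argument of Lemma~4.11 there), and your component-wise verifications of the strategy axioms, visibility and exhaustivity are all sound — including the key use of negativity of $\sigma,\tau$ to ensure every non-empty configuration touches the $A$- resp.\ $B$-summand. Your closing remark about $\emptyset$ appearing in both $\confp{\sigma}$ and $\confp{\tau}$ but only once in $\confp{\tuple{\sigma,\tau}}$ is a correct reading of a genuine (if harmless) imprecision in the statement, which the paper itself handles carefully only at the level of games (Lemma~\ref{lem:conf_with} uses non-empty configurations) but glosses over here.
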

\changed{
This result and the universal property of Definition~\ref{def:simproduct} follow easily from the construction of the cartesian product of arenas given in \cite{cg3}. }Projections are the obvious copycat strategies, with $+$-covered configurations 
\[
(x_A \parallel \emptyset) \parallel x_A \in \confp{\pi_A}\,,
\qquad
(\emptyset \parallel x_B) \parallel x_B \in \confp{\pi_B}
\]
for $x_A \in \conf{A}, x_B \in \conf{B}$. Their exhaustivity is easy;
note however that this uses the fact that $A$ and $B$
are strict. We detail the proof that pairing preserves
exhaustivity:
\begin{prop}
Consider $\Gamma, A, B$ arenas with $A, B$ strict, and $\sigma : \Gamma
\vdash A, \tau : \Gamma \vdash B$.

If $\sigma$ and $\tau$ are exhaustive, then so is $\tuple{\sigma, \tau}
: \Gamma \vdash A \with B$.
\end{prop}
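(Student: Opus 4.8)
The plan is to mirror the exhaustivity proofs given above for copycat, composition, the tensor, and the projections: reduce to the characterisation of $+$-covered configurations provided by Proposition~\ref{prop:def_pairing}, then compute the payoff case by case. Concretely, I would fix an arbitrary $x \in \confp{\tuple{\sigma, \tau}}$ and use Proposition~\ref{prop:def_pairing} to write $x$ in exactly one of the forms $\inj_\sigma(x^\sigma)$ with $x^\sigma \in \confp{\sigma}$, or $\inj_\tau(x^\tau)$ with $x^\tau \in \confp{\tau}$. Since $\with$ is symmetric in $A, B$ and both $\sigma$ and $\tau$ are assumed exhaustive, the two cases are symmetric, so I treat $x = \inj_\sigma(x^\sigma)$. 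Writing $\pr_\sigma(x^\sigma) = x^\sigma_\Gamma \parallel x^\sigma_A$ and instantiating the display-map formula of Proposition~\ref{prop:def_pairing} at the identity symmetry on $x^\sigma$, I get $\pr_{\tuple{\sigma,\tau}}(x) = x^\sigma_\Gamma \parallel (x^\sigma_A \parallel \emptyset)$ inside $\Gamma \vdash A \with B$, where $x^\sigma_A \parallel \emptyset$ denotes the configuration $(1, x^\sigma_A)$ of $A \with B$ — which is $\emptyset$ precisely when $x^\sigma_A = \emptyset$.

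Next I would split on whether $x^\sigma_A$ is empty. If $x^\sigma_A \neq \emptyset$, then Definition~\ref{def:with} gives $\kappa_{A\with B}((1, x^\sigma_A)) = \kappa_A(x^\sigma_A)$, so $\kappa_{\Gamma\vdash A\with B}(\pr_{\tuple{\sigma,\tau}}(x)) = \kappa_{\Gamma^\perp}(x^\sigma_\Gamma) \parr \kappa_A(x^\sigma_A) = \kappa_{\Gamma\vdash A}(\pr_\sigma(x^\sigma))$, which is $\geq 0$ by exhaustivity of $\sigma$ (applicable since $x^\sigma \in \confp{\sigma}$). If $x^\sigma_A = \emptyset$, then $(1, x^\sigma_A) = \emptyset$ in $A \with B$, so Definition~\ref{def:with} gives $\kappa_{A\with B}(\emptyset) = 1$, whence $\kappa_{\Gamma\vdash A\with B}(\pr_{\tuple{\sigma,\tau}}(x)) = \kappa_{\Gamma^\perp}(x^\sigma_\Gamma) \parr 1 = +1 \geq 0$ by Figure~\ref{fig:op_payoff}. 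In either case the payoff at $x$ is non-negative, and as $x$ was an arbitrary $+$-covered configuration, $\tuple{\sigma,\tau}$ is exhaustive.

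I do not expect a real obstacle here: the argument is bookkeeping with Definition~\ref{def:with} and the payoff operation $\parr$. The only point worth flagging is the empty-component case, where the convention $\kappa_{A\with B}(\emptyset) = 1$ from Definition~\ref{def:with} does all the work and exhaustivity of $\sigma$ is not even invoked, only that $+1$ is absorbing for $\parr$. This is also where the hypotheses enter implicitly: $A$ and $B$ must be strict for $A \with B$ — hence $\tuple{\sigma,\tau}$ — to be defined in the first place, and for the companion projections $\pi_A, \pi_B$ to themselves be exhaustive, as already noted.
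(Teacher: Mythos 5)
Your proof is correct and takes essentially the same route as the paper's: reduce to the characterisation of $+$-covered configurations of $\tuple{\sigma,\tau}$ given by Proposition~\ref{prop:def_pairing}, treat the two injections symmetrically, and compute the payoff of the displayed configuration in $\Gamma \vdash A \with B$. Your organisation of that computation --- splitting on whether $x^\sigma_A$ is empty, observing that for $x^\sigma_A \neq \emptyset$ the payoff coincides with that of $\pr_\sigma(x^\sigma)$ in $\Gamma \vdash A$ (hence is $\geq 0$ by exhaustivity of $\sigma$), and that otherwise $\kappa_{A\with B}(\emptyset) = 1$ is absorbing for $\parr$ --- is a mild streamlining of the paper's four-way case analysis on payoff values by contradiction, but the substance is the same.
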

\begin{proof}
Consider a $+$-covered configuration of $\tuple{\sigma, \tau}$, say for
instance $\inj_\sigma(x^\sigma) \in \confp{\tuple{\sigma, \tau}}$ for
$x^\sigma \in \confp{\sigma}$; with $\pr_{\tuple{\sigma, \tau}}
\inj_\sigma(x^\sigma) = x^\sigma_\Gamma \parallel (x^\sigma_A \parallel
\emptyset) \parallel x^\sigma_A$. If $\kappa_{A\with B}(x^\sigma_A
\parallel \emptyset) = 1$, then we are done; assume $\kappa_{A \with
B}(x^\sigma_A \parallel \emptyset) \leq 0$. Likewise, assume
$\kappa_\Gamma(x^\sigma_\Gamma) \geq 0$. If
$\kappa_\Gamma(x^\sigma_\Gamma) = 1$, then we must have
$\kappa_A(x^\sigma_A) = 1$ as well since $\sigma$ is exhaustive -- but
this implies $\kappa_{A\with B}(x^\sigma_A \parallel \emptyset) = 1$ as
well, contradiction. Likewise, if $\kappa_{A\with B}(x^\sigma_A
\parallel \emptyset) = -1$, then $\kappa_A(x^\sigma_A) = -1$ and so
$\kappa_\Gamma(x^\sigma_\Gamma) = -1$ since $\sigma$ is exhaustive,
contradiction. The only case left has $\kappa_\Gamma(x^\sigma_\Gamma) =
0$ and $\kappa_{A\with B}(x^\sigma_A \parallel \emptyset) = 0$, so
the display of $\inj_\tau(x^\sigma)$ has payoff $0$. The symmetric
reasoning applies to $+$-covered configurations of the form
$\inj_\tau(x^\tau)$ for $x^\tau \in \confp{\tau}$, which concludes the
proof.
\end{proof}
\changed{
In summary, we have the following:
\begin{prop}
The full sub-$\sim$-category $\Strat_s$ of strict arenas has finite products, preserved by the inclusion functor $\Strat_s \hookrightarrow \Strat$. 
\end{prop}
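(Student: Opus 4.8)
The plan is to assemble the statement from the data already collected above; nothing genuinely new is needed. Recall that having finite products means having a terminal object together with binary products, and that the inclusion $J : \Strat_s \hookrightarrow \Strat$ is evidently a $\sim$-functor (it is the identity on objects, morphisms and $\simstrat$), so ``preservation'' amounts to the universal properties holding against \emph{all} objects of $\Strat$. For the terminal object, the empty arena $\top$ with $\kappa_\top(\emptyset)=1$ is strict, and for any arena $\Gamma$ the empty strategy is vacuously visible and exhaustive and is the unique (up to $\simstrat$, trivially, as it has no events) morphism $\Gamma \vdash \top$. Hence $\top$ is terminal in $\Strat$, a fortiori in $\Strat_s$, and is preserved by $J$.

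For binary products, fix strict arenas $A$ and $B$; their with $A \with B$ (Definition~\ref{def:with}) is again a strict arena, hence an object of $\Strat_s$. Take $\pi_A, \pi_B$ to be the copycat-like strategies with $+$-covered configurations $(x_A \parallel \emptyset)\parallel x_A$ and $(\emptyset \parallel x_B)\parallel x_B$ described after Proposition~\ref{prop:def_pairing}: being variants of copycat on arenas they are visible, and they are exhaustive — this is exactly where strictness of $A$ and $B$ enters, since it gives $\kappa_{A\with B}(x_A \parallel \emptyset) = \kappa_A(x_A)$ (and symmetrically), so that $\confp{\pi_A}$ never forces a negative payoff. Given $\sigma : \Gamma \vdash A$ and $\tau : \Gamma \vdash B$ visible and exhaustive, Proposition~\ref{prop:def_pairing} produces $\tuple{\sigma,\tau} : \Gamma \vdash A\with B$; it is exhaustive by the preceding proposition, and visible because its underlying ess is built from those of $\sigma$ and $\tau$ via the sum construction, so its grounded causal chains are exactly those of $\sigma$ and of $\tau$, whose displays lie in $\conf{\Gamma\vdash A}$, resp.\ $\conf{\Gamma\vdash B}$, hence in $\conf{\Gamma\vdash A\with B}$ by Lemma~\ref{lem:conf_with}. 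Thus $\tuple{\sigma,\tau}$ is a morphism of $\Strat$, and of $\Strat_s$ when $\Gamma$ is strict.

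For the universal property of Definition~\ref{def:simproduct} we must check $\pi_A \odot \tuple{\sigma,\tau} \simstrat \sigma$, symmetrically $\pi_B \odot \tuple{\sigma,\tau}\simstrat\tau$, and uniqueness up to $\simstrat$; these statements are well-posed by congruence (Proposition~\ref{prop:congruence}). For the first equivalence, compose the order-isos on $+$-covered configurations supplied by Proposition~\ref{prop:char_comp} (for $\odot$), by the description of $\confp{\pi_A}$, and by Proposition~\ref{prop:def_pairing}, and verify that the composite is the evident identification with $\confp{\sigma}$; since, as in the proof of Proposition~\ref{prop:sim_cat}, a strategy on an arena is determined up to $\simstrat$ by its action on $+$-covered configurations (together with compatibility with symmetry and commutation with the display map), this yields $\pi_A \odot \tuple{\sigma,\tau}\simstrat\sigma$. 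Uniqueness up to $\simstrat$, hence the full universal property, follows from the construction of the cartesian product of arenas in \cite{cg3}, as indicated after Proposition~\ref{prop:def_pairing}. Finally, Proposition~\ref{prop:def_pairing} already allows $\Gamma$ to range over arbitrary arenas, so the same object $A\with B$ with the same projections satisfies the universal property against every object of $\Strat$, not merely the strict ones; this is precisely the statement that the finite products of $\Strat_s$ are preserved by $J$.

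The main obstacle, such as it is, is the $\sim$-categorical bookkeeping: one must be careful to phrase and verify the universal property up to $\simstrat$ rather than on the nose, using congruence throughout, and to confirm that the pairing construction preserves visibility (and exhaustivity, already done). Both reduce to routine inspection of the configuration characterizations recorded above and of the sum construction on ess, so no new technical device is required.
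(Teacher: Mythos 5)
Your proof is correct and follows essentially the same route as the paper: the terminal object is the empty strict arena, binary products are given by $\with$ with the copycat projections and the pairing of Proposition~\ref{prop:def_pairing}, exhaustivity of the pairing is supplied by the preceding proposition, the universal property is delegated to the construction in \cite{cg3}, and preservation by the inclusion follows because the pairing is available for an arbitrary arena $\Gamma$. One small imprecision: strictness of $A$ is really needed at the \emph{empty} $+$-covered configuration of $\pi_A$, whose display has payoff $-\kappa_{A\with B}(\emptyset) \parr \kappa_A(\emptyset) = -1 \parr \kappa_A(\emptyset)$ and hence requires $\kappa_A(\emptyset)=1$, rather than for the identity $\kappa_{A\with B}(x_A\parallel\emptyset)=\kappa_A(x_A)$ you cite, which holds for non-empty $x_A$ regardless of strictness.
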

}

\subsubsection{Exponential.}\label{subsubsec:exp_strat}
We define a $\sim$-comonad $\oc : \Strat \to \Strat$. By a general result \cite{relativemonads} the restriction of $\oc$ to the sub-$\sim$-category $\Strat_s$ gives a relative comonad $\oc : \Strat_s \to \Strat$, and below we will show that this satisfies the axioms of Definition~\ref{def:relativeseely}. 



\changed{For an object $A \in \Strat$}, $\oc A$ is given by Definition \ref{def:bang}.
Following the same pattern as for the other structure we introduce the
functorial action of $\oc$ as follows:
\begin{prop}\label{prop:char_pcov_bang}
Consider $A, B$ arenas, and $\sigma : A \vdash B$ a strategy.

Then there is a strategy $\oc \sigma : \oc A \vdash \oc B$ where the ess
is obtained as in Definition \ref{def:bang} (without the clauses on
positive and negative symmetries); there is an order-iso
\[
\,[-] : \mathsf{Fam}\left(\confpn{\sigma}\right)  \bij
\confp{\oc \sigma}
\]
with $\mathsf{Fam}(X)$ the set of families of elements of $X$ indexed by
finite subsets of $\mathbb{N}$, and where $\confpn{\sigma}$ denotes the
set of non-empty $+$-covered configurations of $\sigma$. Moreover, 
\[
\pr_{\oc \sigma}\left(\left[(x^i)_{i\in I}\right]\right)
=
(\parallel_{i\in I} x^i_A) \parallel (\parallel_{i \in I} x^i_B)
\]
where for all $i\in I$, $\pr_\sigma(x^i) = x^i_A \parallel x^i_B$.
\end{prop}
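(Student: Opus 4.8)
The plan is to follow the template of Propositions~\ref{prop:def_tensor} and~\ref{prop:def_pairing}: construct $\oc\sigma$ explicitly, read off the description of its $+$-covered configurations and symmetries directly from the construction, and obtain uniqueness up to iso from the uniqueness lemma (Lemma~4.11 of~\cite{cg3}) already used for $\tensor$. Concretely, $\oc\sigma$ has events $\mathbb{N}\times\ev{\sigma}$, with $(i,s)\leq_{\oc\sigma}(j,s')$ iff $i=j$ and $s\leq_\sigma s'$ and likewise for conflict; isomorphism family $\tilde{\oc\sigma}$ obtained from the ``symmetries'' clause of Definition~\ref{def:bang} with $\tilde{\sigma}$ in place of $\tilde{A}$ (a strategy has no polarity-refined families); and display map $\pr_{\oc\sigma}(i,s)=(1,(i,a))$ when $\pr_\sigma(s)=(1,a)$ and $\pr_{\oc\sigma}(i,s)=(2,(i,b))$ when $\pr_\sigma(s)=(2,b)$, routing copy $i$ of $\sigma$ through copy $i$ of $A$ and copy $i$ of $B$ inside $\oc A\vdash\oc B=(\oc A)^\perp\parr\oc B$. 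That this data forms an event structure with symmetry is checked exactly as for $\oc A$ in Definition~\ref{def:bang}, and the construction together with its functoriality is standard~\cite{cg2}.

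The strategy conditions hold copy-by-copy, $\oc\sigma$ being a parallel composition of copies of $\sigma$: \emph{rule-abiding}, \emph{locally injective}, \emph{negative} and \emph{courteous} transfer immediately from $\sigma$; \emph{symmetry-preserving}, \emph{$\sim$-receptive} and \emph{thin} reduce to the corresponding conditions for $\sigma$, using that a symmetry of $\oc A\vdash\oc B$ in the image of $\pr_{\oc\sigma}$ acts by a single copy-permutation, componentwise by symmetries of $A$ and $B$. The one condition where the structure of $\oc$ genuinely intervenes is \emph{receptive}: a negative move of $\oc A\vdash\oc B$ enabled at $\pr_{\oc\sigma}(x)$ that lies in a copy not yet touched by $x$ is necessarily minimal in $B$, and is received by the minimal negative event supplied by receptivity of $\sigma$ at $\emptyset$, placed in that copy; moves in already-active copies are received by receptivity of $\sigma$ inside that copy.

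For the characterization, any $x\in\conf{\oc\sigma}$ decomposes uniquely as $\parallel_{i\in I}x^i$ with $I\subseteq_f\mathbb{N}$ and each $x^i\in\conf{\sigma}$ non-empty, as for parallel composition; an event is maximal in $\parallel_{i\in I}x^i$ for $\leq_{\oc\sigma}$ iff it is maximal in the relevant $x^i$, and polarities are inherited copy-wise, so $x\in\confp{\oc\sigma}$ iff every $x^i\in\confpn{\sigma}$. Hence $[-]\colon(x^i)_{i\in I}\mapsto\parallel_{i\in I}x^i$ is a bijection $\mathsf{Fam}(\confpn{\sigma})\bij\confp{\oc\sigma}$, and it is an order-iso for the inclusion order on the right and the componentwise order on $\mathsf{Fam}(\confpn{\sigma})$ (namely $(x^i)_{i\in I}\leq(y^j)_{j\in J}$ iff $I\subseteq J$ and $x^i\subseteq y^i$ for $i\in I$); the same decomposition applied to $+$-covered symmetries yields the stated bijection for $\tildep{\oc\sigma}$. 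The display formula is immediate: each $(i,s)\in\parallel_{i\in I}x^i$ is displayed into copy $i$ of $A$ or of $B$ according to the side of $\pr_\sigma(s)$, so with $\pr_\sigma(x^i)=x^i_A\parallel x^i_B$ we get $\pr_{\oc\sigma}(\parallel_{i\in I}x^i)=(\parallel_{i\in I}x^i_A)\parallel(\parallel_{i\in I}x^i_B)$.

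It remains to note that $\oc\sigma$ is exhaustive and visible: visibility because every grounded causal chain of $\oc\sigma$ lies in a single copy of $\sigma$, and exhaustivity because, by the characterization, a $+$-covered configuration of $\oc\sigma$ displays copy-wise to $+$-covered configurations of $\sigma$, whose payoffs are $\geq 0$ by exhaustivity of $\sigma$, and a short case analysis on these copy payoffs (as in the proofs above for $\tensor$, $\with$ and composition) forces the $\parr$ of the $\oc A$-side and $\oc B$-side payoffs to be $\geq 0$. Uniqueness up to iso is then Lemma~4.11 of~\cite{cg3}. I expect the only real difficulty to be bookkeeping rather than conceptual: fixing the correct isomorphism family $\tilde{\oc\sigma}$ and checking it validates symmetry-preservation, $\sim$-receptivity and thinness, plus handling freshly-opened copies in the receptivity argument — none of which requires ideas beyond~\cite{cg2,cg3}.
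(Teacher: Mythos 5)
Your construction is exactly the one the paper relies on: the paper's own ``proof'' is the single remark that the result is direct from the definition of $\oc\sigma$ in \cite{cg3}, and your explicit unfolding (copy-wise ess, diagonal display map, strategy axioms checked per copy with the fresh-copy case of receptivity handled separately, and the decomposition $x = \parallel_{i\in I}x^i$ giving the order-iso and the display formula) is that same argument written out. The statement only asks for a strategy, an order-iso on $+$-covered configurations, and the display formula, and on all of these your proof is correct.

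One side-claim is wrong, though it is not needed for the statement: you assert that ``the same decomposition applied to $+$-covered symmetries yields the stated bijection for $\tildep{\oc\sigma}$.'' No such bijection is stated, and the naive analogue $\mathsf{Fam}(\tildepn{\sigma})\bij\tildep{\oc\sigma}$ is false: a symmetry of $\oc\sigma$ is given by a permutation $\pi$ of copy indices together with a family $\theta_i : x^i \sym_\sigma y^{\pi(i)}$, so it need not restrict to an endo-family of symmetries within each copy. This is precisely why the paper, in the paragraph following Proposition~\ref{prop:char_pcov_bang}, declines to give a componentwise description of $\tildep{\oc\sigma}$, noting that unlike for tensor and pairing, symmetries of $\oc\sigma$ ``span components as they may freely exchange copy indices.'' Drop that sentence and the proof stands.
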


The proof is direct from the definition of $\oc \sigma$ (see
\cite{cg3}). Though the symmetries of $\oc \sigma$ may also be described
in a similar style (as for earlier operations), we refrain from doing so
as it is slightly more heavy notationally: indeed, whereas for tensor
and pairing the symmetries are constructed exactly as for
configurations, in $\oc \sigma$ symmetries span components as they may
freely exchange copy indices as stated in Definition \ref{def:bang}.  

\changed{The $\sim$-comonad structure of $\oc$ is given by natural transformations $\der_A : \oc A \to A$ and $\dig_A : \oc A \to \oc \oc A$ whose components are, as usual, relabeled copycat strategies, characterized by the shape of their $+$-covered configurations:}
\[
\begin{array}{rclcl}
\left(\parallel_{\tuple{i,j} \in \mathbb{N}} x_A^{i,j}\right)
&\parallel& \left( \parallel_{i\in \mathbb{N}} \parallel_{j\in
\mathbb{N}} x_A^{i,j}\right) 
&\in& \confp{\dig_A}\\
\{0\} \times x_A &\parallel& x_A &\in&\confp{\der_A}
\end{array}
\]
whenever these sets are finite, $x_A, x^{i, j}_A \in \conf{A}$, and using a fixed bijection $\tuple{-, -} : \mathbb{N}^2 \bij \mathbb{N}$.

It is routine that
these are exhaustive -- for $\der_A$, the case
of the empty set uses condition \emph{initialized}. We must also prove
that the functorial operation preserves exhaustivity:

\begin{prop}
Consider $A, B$ arenas, and $\sigma : A \vdash B$ a strategy.

If $\sigma$ is exhaustive, then so is $\oc \sigma : \oc A \vdash \oc B$.
\end{prop}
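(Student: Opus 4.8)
The plan is to reduce the statement to a finite case analysis on the payoff operations $\otimes,\parr$ of Figure~\ref{fig:op_payoff}, using the characterisation of $+$-covered configurations in Proposition~\ref{prop:char_pcov_bang}. Such a configuration of $\oc\sigma$ is $[(x^i)_{i\in I}]$ for a finite $I \subseteq \mathbb{N}$ and each $x^i \in \confpn{\sigma}$, with display $(\parallel_{i\in I} x^i_A) \parallel (\parallel_{i\in I} x^i_B)$ where $\pr_\sigma(x^i) = x^i_A \parallel x^i_B$; since $\oc A \vdash \oc B$ is $(\oc A)^\perp \parr \oc B$ and $\kappa_{(\oc A)^\perp} = -\kappa_{\oc A}$, it suffices to show $(-\kappa_{\oc A}(\parallel_{i\in I} x^i_A)) \parr \kappa_{\oc B}(\parallel_{i\in I} x^i_B) \geq 0$.

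The first step is an auxiliary observation: $x^i_B \neq \emptyset$ for every $i\in I$. Indeed a non-empty configuration of $\sigma$ contains an event $s$ minimal in $\sigma$, which is negative (the \emph{negative} condition on strategies); the image of the configuration $\{s\}$ is a configuration of $A^\perp\parr B$, hence down-closed, so $\pr_\sigma(s)$ is a minimal — and still negative — event of $A^\perp\parr B$; but the minimal events of $A^\perp$ are positive because $A$ is a $-$-game, so $\pr_\sigma(s)$ lies in the $B$ component. With this, the payoff clause for $\oc B$ gives $\kappa_{\oc B}(\parallel_{i\in I} x^i_B) = \bigotimes_{i\in I}\kappa_B(x^i_B)$ with no missing terms, whereas on the $A$ side one only gets $\kappa_{\oc A}(\parallel_{i\in I} x^i_A) = \bigotimes_{i\in I_A}\kappa_A(x^i_A)$ where $I_A = \{i\in I \mid x^i_A\neq\emptyset\}$.

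The core of the argument is then a short case split fed by exhaustivity of $\sigma$ applied to each $x^i$, which gives $(-\kappa_A(x^i_A)) \parr \kappa_B(x^i_B) \geq 0$. If $\kappa_{\oc A}(\parallel x^i_A) = -1$ or $\kappa_{\oc B}(\parallel x^i_B) = +1$ the total payoff is $+1$, so one assumes $\kappa_{\oc A}(\parallel x^i_A) \geq 0$ — whence $\kappa_A(x^i_A) \geq 0$ for all $i\in I_A$ — and $\kappa_{\oc B}(\parallel x^i_B) \leq 0$. From $\bigotimes_{i\in I}\kappa_B(x^i_B) \leq 0$, either some $\kappa_B(x^{i_1}_B) = -1$ or all $\kappa_B(x^i_B) = 0$; in the first case exhaustivity at $x^{i_1}$ forces $\kappa_A(x^{i_1}_A) = -1$, contradicting either $\kappa_{\oc A}(\parallel x^i_A) \geq 0$ (if $i_1 \in I_A$) or the \emph{initialized} condition on $A$ (if $x^{i_1}_A = \emptyset$, since then $\kappa_A(\emptyset) = -1$). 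So all $\kappa_B(x^i_B) = 0$ and $\kappa_{\oc B}(\parallel x^i_B) = 0$; and if $\kappa_{\oc A}(\parallel x^i_A) = +1$, picking $i_0 \in I_A$ with $\kappa_A(x^{i_0}_A) = +1$, exhaustivity at $x^{i_0}$ forces $\kappa_B(x^{i_0}_B) = +1$, a contradiction. Hence $\kappa_{\oc A}(\parallel x^i_A) = 0$ and the total payoff is $0\parr 0 = 0$; the empty family $I = \emptyset$ is the degenerate instance where each iterated $\otimes$ is the unit $0$.

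I expect the delicate point to be the bookkeeping: the payoff of $\oc X$ is an iterated tensor over only the \emph{non-empty} copies, so copies $i$ with $x^i_A = \emptyset$ silently drop out of $\kappa_{\oc A}(\parallel x^i_A)$, and one must verify this cannot conceal a negative contribution — which is exactly what the auxiliary observation $x^i_B \neq \emptyset$, together with the \emph{initialized} condition on $A$, is there to guarantee.
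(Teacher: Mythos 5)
Your proof is correct and follows essentially the same route as the paper's: the same characterisation of $+$-covered configurations of $\oc\sigma$, followed by the same case analysis on $\kappa_{\oc A}$ and $\kappa_{\oc B}$ driven by component-wise exhaustivity of $\sigma$. You are in fact slightly more careful than the paper on two points it leaves implicit — the justification that each $x^i_B$ is non-empty (via negativity of minimal strategy events and positivity of minimal events of $A^\perp$), and the use of the \emph{initialized} condition to rule out $\kappa_A(x^{i_1}_A)=-1$ when $x^{i_1}_A=\emptyset$, a case the paper's "but then $\kappa_{\oc A}=-1$" step silently skips over.
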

\begin{proof}
Consider $[(x^i)_{i\in I}] \in \confp{\oc \sigma}$, where $I \subseteq_f
\mathbb{N}$ and $x^i \in \confpn{\sigma}$ for all $i \in I$. We have
\[
\pr_{\oc \sigma}\left(\left[(x^i)_{i\in I}\right]\right)
=
(\parallel_{i\in I} x^i_A) \parallel (\parallel_{i \in I} x^i_B)
\]
where $\pr_\sigma(x^i) = x^i_A \parallel x^i_B \in \conf{A \vdash B}$.
If $\kappa_{\oc B}(\parallel_{i\in I} x_B^i) = 1$ then we are done, so
assume $\kappa_{\oc B}(\parallel_{i\in I} x_B^i) \leq 0$. Symmetrically,
assume $\kappa_{\oc A}(\parallel_{i\in I} x_A^i) \geq 0$. If
$\kappa_{\oc B}(\parallel_{i\in I} x_B^i) = -1$, then there must be some
$i_0 \in \mathbb{N}$ such that $\kappa_B(x_B^{i_0}) = -1$. But because
$\sigma$ is exhaustive, this implies that $\kappa_A(x_A^{i_0}) = -1$ as
well; but then $\kappa_{\oc A}(\parallel_{i\in I} x_A^i) = -1$,
contradiction. So, $\kappa_{\oc B}(\parallel_{i\in I} x_B^i) = 0$, which
entails that for all $i \in I$, $\kappa_B(x_B^i) = 0$ (note that as $x^i
\in \confpn{\sigma}$, all $x_B^i$ are non-empty). Now, if $\kappa_{\oc
A}(\parallel_{i\in I} x_A^i) = 1$, then there must be some $i_0 \in I$
such that $\kappa_A(x_A^{i_0}) = 1$. But as $\sigma$ is exhaustive, this
entails that $\kappa_B(x_B^{i_0}) = 1$ as well, contradiction.
\end{proof}

\changed{The functor $\oc$ preserves $\simstrat$ and satisfies all necessary axioms up to $\simstrat$ \cite{cg3}, so that:
\begin{lem}
The functor $\oc : \Strat \to \Strat$ is a $\sim$-comonad, which restricts to a $J$-relative comonad $\Strat_s \to \Strat$, for $J$ the inclusion functor.     
\end{lem}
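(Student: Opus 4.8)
The plan is to obtain the $\sim$-comonad structure from the existing literature and to check only its compatibility with the conditions defining the morphisms of $\Strat$; the passage to a relative comonad is then purely formal. First I would recall from \cite{cg2,cg3} that $\oc$ acts functorially up to $\simstrat$, that the families $\der$ and $\dig$ of relabeled copycat strategies are natural up to $\simstrat$, and that the counit and coassociativity laws hold up to $\simstrat$; these are exactly the $\sim$-comonad axioms, and nothing about them needs re-proving here. What does need checking is that this structure stays \emph{inside} $\Strat$, i.e. that visibility and exhaustivity are preserved throughout. Visibility of $\oc \sigma$ (when $\sigma$ is visible), of $\der_A$, $\dig_A$, and of the structural positive isos, is again covered by \cite{cg3}.

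For exhaustivity I would invoke the Propositions just proved ($\der_A$ and $\dig_A$ exhaustive, and $\oc \sigma$ exhaustive when $\sigma$ is), together with the small observation that exhaustivity is $\simstrat$-invariant: if $\varphi : \sigma \simstrat \tau$ with $\sigma$ exhaustive, then $\varphi$ preserves order and polarity, so $\varphi^{-1}$ maps $\confp{\tau}$ to $\confp{\sigma}$, and $\pr_\tau \circ \varphi \sim^+ \pr_\sigma$ gives $\pr_\sigma(\varphi^{-1}(x)) \sym_A \pr_\tau(x)$, whence $\kappa_A(\pr_\tau(x)) = \kappa_A(\pr_\sigma(\varphi^{-1}(x))) \geq 0$ by \emph{invariant}. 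This shows that the structural isos witnessing the comonad laws are exhaustive as well, and hence that the whole $\sim$-comonad structure lives in $\Strat$, establishing the first claim.

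For the second claim I would first point out that $\oc$ cannot be restricted to an endofunctor of $\Strat_s$, since $\oc S$ is non-strict by design even when $S$ is strict; this is precisely why the relative formulation is needed. Then, appealing to the general theory of \cite{relativemonads}, any $\sim$-comonad restricts along a full sub-$\sim$-category inclusion $J : \Strat_s \hookrightarrow \Strat$ to a $J$-relative comonad: the object assignment is $S \mapsto \oc S$, the counit at $S$ is $\der_S : \oc S \to S$, and the promotion of $f : \oc S \to T$ (for $S, T$ strict) is $f^\dagger := \oc f \odot \dig_S : \oc S \to \oc T$, which is again an exhaustive visible strategy since it is a composite of such. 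The three relative-comonad axioms then reduce to the $\sim$-comonad laws in the standard way: $\der_T \odot f^\dagger \simstrat f$ from naturality of $\der$ and the counit law $\der_{\oc S} \odot \dig_S \simstrat \id$; $(\der_S)^\dagger \simstrat \id_{\oc S}$ from the other counit law $\oc \der_S \odot \dig_S \simstrat \id$; and $(g \odot f^\dagger)^\dagger \simstrat g^\dagger \odot f^\dagger$ from functoriality of $\oc$, coassociativity, and naturality of $\dig$. There is no real obstacle here: the genuinely hard work — functoriality, naturality and coherence up to $\simstrat$ — was carried out in \cite{cg2,cg3}, and what remains is the exhaustivity bookkeeping already dispatched above plus a formal argument; the only point deserving care is the conceptual one, namely that $\oc$ must be treated as a relative comonad $\Strat_s \hookrightarrow \Strat$ rather than an endocomonad, because $\oc S$ is never strict.
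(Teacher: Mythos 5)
Your proposal is correct and follows essentially the same route as the paper: the $\sim$-comonad axioms for $\oc$ are imported from \cite{cg2,cg3}, the exhaustivity (and visibility) of $\der$, $\dig$ and $\oc\sigma$ is supplied by the propositions proved just before the lemma, and the restriction to a $J$-relative comonad $\Strat_s \to \Strat$ is obtained by the general result of \cite{relativemonads}, with promotion given concretely by $f^\dagger = \oc f \odot \dig_S$. Your additional observations (the $\simstrat$-invariance of exhaustivity, and the explicit reduction of the three relative-comonad axioms to the comonad laws) are correct and merely make explicit what the paper leaves to the cited references.
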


Finally we have, for strict $B, C$, a strategy $m_{B, C} : \oc B \tensor \oc C \to \oc(B \with C)$ characterized by 
\[
\begin{array}{rclcl}
\left( \left(\parallel_{i\in \mathbb{N}} x_B^i\right) \parallel
\left(\parallel_{j\in \mathbb{N}} x^j_C\right)\right)
&\parallel&
\left(
\begin{array}{r}
\biguplus_{j\in \mathbb{N}} \{2j+1\} \times (\emptyset \parallel
x_C^j)\\
\uplus~\biguplus_{i\in \mathbb{N}} \{2i\}\times (x_B^i \parallel
\emptyset)
\end{array}\right)
&\in& \confp{m_{B,C}}
\end{array}
\]
for 
$x_B^i \in \conf{B}, x_C^j \in \conf{C}$. Likewise, the strategy $m_0 : \oc \top \to 1$ with only the
empty configuration. The axioms can be verified, and we finally obtain:
\begin{prop}
The $\sim$-category $\Strat$, equipped with the full sub-$\sim$-category $\Strat_s$ of strict arenas, and all components outlined above, is a relative Seely $\sim$-category.  
\end{prop}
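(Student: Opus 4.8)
The plan is to assemble the structure built up piecemeal through this section and then discharge the remaining coherence axioms of Definition~\ref{def:relativeseely}. The symmetric monoidal $\sim$-category $(\Strat, \tensor, 1)$, the finite products of $\Strat_s$ preserved by the inclusion $J : \Strat_s \hookrightarrow \Strat$, and the $J$-relative comonad $\oc : \Strat_s \to \Strat$ are all already in hand from the propositions above. So the work left is threefold: (i) the relative closed structure, namely that for each $B \in \Strat$ the functors $- \tensor B : \Strat \to \Strat$ and $J(B \lin -) : \Strat_s \to \Strat$ form a $J$-relative adjunction, with the data derived from the currying bijection $\Lambda$ and the evaluation strategy $\evm_{B, C}$ of Section~\ref{subsubsec:semiclosed_strat}; (ii) that $\oc$ is symmetric strong monoidal $(\Strat_s, \with, \top) \to (\Strat, \tensor, 1)$, i.e.\ that the strategies $m_0$ and $m_{B, C}$ described above are invertible up to $\simstrat$, natural in their arguments, and satisfy the Seely coherence laws; and (iii) that all the new structural strategies are visible and exhaustive and that the operations in play preserve these properties --- but this last point has already been checked component by component in the preceding propositions.

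The method for (i) and (ii) is uniform, and I would apply it throughout. A strategy between arenas is determined up to isomorphism, hence up to $\simstrat$, by its action on $+$-covered configurations and symmetries, by the uniqueness clauses of Propositions~\ref{prop:char_comp}, \ref{prop:cc_pcov}, \ref{prop:def_tensor}, \ref{prop:def_pairing} and \ref{prop:char_pcov_bang} (ultimately Lemma~4.11 of~\cite{cg3}). Hence every equation between strategies and every coherence diagram reduces to the commutation of the corresponding diagram of order-isomorphisms between sets of $+$-covered configurations --- all given explicitly by the characterizations collected in this section --- and these commutations are routine combinatorial checks. Concretely, I would show $m_{B, C}$ is invertible up to $\simstrat$ by exhibiting the evident inverse bijection on $+$-covered configurations, splitting $\biguplus_{i}\{2i\}\times(x_B^i \parallel \emptyset)\;\uplus\;\biguplus_{j}\{2j+1\}\times(\emptyset \parallel x_C^j)$ back into its two families; I would obtain naturality of $m$ in $\sigma, \tau$ and the Seely axioms relating $m$ to $\der$, $\dig$, the associator and symmetry of $\tensor$ and the product structure of $\with$ by tracking $+$-covered configurations through Proposition~\ref{prop:char_pcov_bang}, Proposition~\ref{prop:def_tensor} and the characterizations of $\der_A$, $\dig_A$ and the structural isos; and for (i) I would note that naturality of $\Lambda$ is immediate from its definition (it only relabels the display map, turning $x^\sigma_B \parallel x^\sigma_C$ into $x^\sigma_B \lin x^\sigma_C$), while the triangle identities and the functoriality of $B \lin -$ on $\Strat_s$ and of $- \tensor B$ on $\Strat$ follow by comparing $+$-covered configurations against the characterization of $\evm_{B, C}$.

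The main obstacle I expect is not any single computation but the bookkeeping of combining many ``unique up to iso'' statements into equalities ``up to $\simstrat$'' along composites: this is exactly where the congruence property (Proposition~\ref{prop:congruence}) and the closure of $\simstrat$-classes under the structural operations are needed, and one must take care that the exhaustivity and visibility side-conditions survive each composite --- which, again, rests on the component-by-component checks already carried out. Modulo this careful assembly, $\Strat$ is a relative Seely $\sim$-category, obtained by collecting the named results and discharging the coherence diagrams at the level of $+$-covered configurations.
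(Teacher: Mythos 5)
Your proposal is correct and follows essentially the same route as the paper, which at this point simply asserts that ``the axioms can be verified'' after having assembled the symmetric monoidal structure, the products in $\Strat_s$, the relative comonad $\oc$, and the maps $m_0$, $m_{B,C}$; the verification technique you describe --- reducing every coherence equation to a commutation of order-isomorphisms between sets of $+$-covered configurations via the uniqueness clauses of Propositions~\ref{prop:char_comp}, \ref{prop:def_tensor}, \ref{prop:def_pairing} and \ref{prop:char_pcov_bang}, with congruence and the exhaustivity/visibility checks handled component-wise --- is exactly the uniform method the paper uses throughout this section. No gap.
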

}

In particular, the Kleisli $\sim$-category $\Strat_\oc$, with objects strict arenas, is cartesian closed. 


\subsection{Interpretation of $\nPCF$} Combinators of the simply-typed
$\lambda$-calculus are handled in the standard way following the
cartesian closed structure of $\Strat_\oc$ \cite{lambekscott}, the only
structure left to finalize the interpretation is the interpretation of
$\nPCF$ primitives, and recursion. 

\subsubsection{Base types and primitives} The arenas for the base types
are given in Section \ref{subsubsec:games_construction}.

For the $\PCF$ primitives, 
\begin{figure}
\[
\xymatrix@R=-6pt@C=-1pt{
\mathsf{if} &\!\!\!\!\!\!:\!\!\!\!\!\!& \oc (\gbool & \with& \gx & \with
&\gx)
&\vdash& \gx\\
&&&&&&&&\qu^-\\
&&\qu^+_{\grey{0}}\\
&&\ttrue^-_{\grey{0}}\\
&&&&\qu^+_{\grey{1}}\\
&&&&v^-_{\grey{1}}\\
&&&&&&&&v^+
}
\qquad
\xymatrix@R=-6pt@C=-1pt{
\mathsf{if} &\!\!\!\!\!\!:\!\!\!\!\!\!& \oc (\gbool &\with & \gx & \with
&\gx)
&\vdash& \gx\\
&&&&&&&&\qu^-\\
&&\qu^+_{\grey{0}}\\
&&\tfalse^-_{\grey{0}}\\
&&&&&&\qu^+_{\grey{1}}\\
&&&&&&v^-_{\grey{1}}\\
&&&&&&&&v^+
}
\qquad
\xymatrix@R=-6pt@C=0pt{
\mathsf{succ} &\!\!\!\!\!\!:\!\!\!\!\!\!&
\oc \gnat &\vdash& \gnat\\
&&&&\qu^-\\
&&\qu^+_{\grey{0}}\\
&&n^-_{\grey{0}}\\
&&&&(n+1)^+
}
\]
\medskip
\[
\xymatrix@R=-6pt@C=-3pt{
\mathsf{iszero} &\!\!\!\!\!\!:\!\!\!\!\!\!&
\oc \gnat &\vdash & \gbool\\
&&&&\qu^-\\
&&\qu^+_{\grey{0}}\\
&&0_{\grey{0}}^-\\
&&&&\ttrue^+
}
\quad
\xymatrix@R=-6pt@C=0pt{
\mathsf{iszero} &\!\!\!\!\!\!:\!\!\!\!\!\!&
\oc \gnat &\vdash &\gbool\\
&&&&\qu^-\\
&&\qu^+_{\grey{0}}\\
&&(n+1)_{\grey{0}}^-\\
&&&&\tfalse^+
}
\quad
\xymatrix@R=-6pt@C=-3pt{
\mathsf{pred} &\!\!\!\!\!\!:\!\!\!\!\!\!&
\oc \gnat &\vdash& \gnat\\
&&&&\qu^-\\
&&\qu^+_{\grey{0}}\\
&&0^-_{\grey{0}}\\
&&&&0^+
}
\quad
\xymatrix@R=-6pt@C=-3pt{
\mathsf{pred} &\!\!\!\!\!\!:\!\!\!\!\!\!&
\oc \gnat &\vdash& \gnat\\
&&&&\qu^-\\
&&\qu^+_{\grey{0}}\\
&&(n+1)^-_{\grey{0}}\\
&&&&n^+
}
\]
\caption{Strategies for basic $\PCF$ combinators}
\label{fig:intr_pcf}
\end{figure}
\begin{figure}
\begin{minipage}{.55\linewidth}
\begin{eqnarray*}
\intr{\ite{M}{N_1}{N_2}} &=& \mathsf{if}\odot_\oc \tuple{\intr{M},
\intr{N_1}, \intr{N_2}}\\
\intr{\tsucc\,M} &=& \mathsf{succ} \odot_\oc \intr{M}\\
\intr{\tpred\,M} &=& \mathsf{pred} \odot_\oc \intr{M}\\
\intr{\iszero\,M} &=& \mathsf{iszero} \odot_\oc \intr{M}
\end{eqnarray*}
\caption{Interpretation of basic combinators}
\label{fig:intr_basic_pcf}
\end{minipage}
\hfill
\begin{minipage}{.44\linewidth}
\[
\xymatrix@R=12pt@C=5pt{
&\gbool\\
&\qu^-
	\ar@{-|>}[dl]
	\ar@{-|>}[dr]\\
\ttrue^+\ar@{.}@/^/[ur]
	\ar@{~}[rr]&&
\tfalse^+
	\ar@{.}@/_/[ul]
}
\]
\caption{Interpretation of $\coin$}
\label{fig:intr_coin}
\end{minipage}
\end{figure}
the constants $\Gamma \vdash \ttrue, \tfalse : \tbool$ and $\Gamma
\vdash n : \tnat$ are interpreted by the obvious strategies which reply
accordingly to the initial move. Conditionals along with the primitives
$\iszero, \tpred$, and $\tsucc$ are interpreted following the clauses of
Figure \ref{fig:intr_basic_pcf}, using the strategies shown in Figure
\ref{fig:intr_pcf} -- note that we omit arrows to avoid clutter.

\subsubsection{Recursion} The fixpoint combinator is defined as the
least upper bound of its finite approximants, leveraging completeness
properties of the following partial order:

\begin{defi}
Consider $A$ a game, and $\sigma, \tau : A$ strategies.

We write $\sigma \cleq \tau$ if $\conf{\sigma} \subseteq \conf{\tau}$ --
so in particular $\ev{\sigma} \subseteq \ev{\tau}$, and additionally:
\[
\begin{array}{rl}
\text{\emph{(1)}} & 
\text{for all $s_1, s_2 \in \ev{\sigma}$, $s_1 \leq_\sigma s_2$ iff
$s_1 \leq_\tau s_2$,}\\
\text{\emph{(2)}} & 
\text{for all $s_1, s_2 \in \ev{\sigma}$, $s_1 \conflict_\sigma s_2$
iff $s_1 \conflict_\tau s_2$,}\\
\text{\emph{(3)}} &
\text{for all $x, y \in \conf{\sigma}$ and bijection $\theta : x \simeq
y$, we have $\theta \in \tilde{\sigma}$ iff $\theta \in
\tilde{\tau}$,}\\
\text{\emph{(4)}} &
\text{for all $s \in \ev{\sigma}$, $\pr_\sigma(s) = \pr_\tau(s)$,}
\end{array}
\]
\emph{i.e.} all components compatible with the inclusion.
\end{defi}

Strategies on $A$, ordered by $\cleq$, form a \emph{directed complete
partial order}, with respect to which all operations on strategies are
easily shown to be continuous. We have:

\begin{prop}\label{prop:sup_pcov}
Consider $A$ a game, and $D$ a directed set of strategies on $A$. Then:
\[
\confp{\vee D} = \bigcup_{\sigma \in \D} \confp{\sigma}\,,
\qquad
\qquad
\tildep{\vee D} = \bigcup_{\sigma \in \D} \tildep{\sigma}\,.
\]

Moreover, if every $\sigma \in D$ is exhaustive, then so is $\vee D : A$.
\end{prop}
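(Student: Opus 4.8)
The plan is to reduce everything to the explicit description of suprema in the dcpo of strategies on $A$: the events, causality, conflict, symmetry and display map of $\vee D$ are the componentwise unions over $\sigma \in D$. First I would record the two consequences of $\sigma \cleq \vee D$ that do the work. (i) For each $\sigma \in D$ we have $\conf{\sigma} \subseteq \conf{\vee D}$ and $\tilde{\sigma} \subseteq \tilde{\vee D}$, with $\leq_{\vee D}$, $\conflict_{\vee D}$ and $\pr_{\vee D}$ (hence $\pol_{\vee D}$) all restricting to the corresponding data of $\sigma$ on $\ev{\sigma}$. (ii) Conversely, every finite $x \in \conf{\vee D}$ lies in $\conf{\sigma}$ for some $\sigma \in D$: since $x$ is finite and each of its events belongs to some member of $D$, directedness produces a single $\sigma \in D$ above all of them, and then down-closure and consistency of $x$ relative to $\sigma$ follow from the agreement of $\leq$ and $\conflict$ on common events. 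The same finiteness-plus-directedness argument gives $\tilde{\vee D} = \bigcup_{\sigma \in D}\tilde{\sigma}$.

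Next I would prove $\confp{\vee D} = \bigcup_{\sigma \in D}\confp{\sigma}$. The inclusion $\supseteq$ is immediate from (i): for $x \in \confp{\sigma}$, the restriction of $\leq_{\vee D}$ to $x \times x$ equals that of $\leq_\sigma$ and polarities agree, so the maximal events of $x$ in $\vee D$ are precisely those in $\sigma$, all positive. For $\subseteq$, given $x \in \confp{\vee D}$, use (ii) to find $\sigma \in D$ with $x \in \conf{\sigma}$ and run the same comparison backwards to get $x \in \confp{\sigma}$. For the symmetry identity $\tildep{\vee D} = \bigcup_{\sigma \in D}\tildep{\sigma}$ I would use that a symmetry is $+$-covered exactly when its domain (equivalently, codomain) is: given $\theta \in \tildep{\vee D}$, directedness yields $\sigma \in D$ with $\theta \in \tilde{\sigma}$ and $\dom(\theta), \cod(\theta) \in \conf{\sigma}$, and since $\dom(\theta)$ is $+$-covered in $\vee D$ it is $+$-covered in $\sigma$ by the configuration identity just proved, so $\theta \in \tildep{\sigma}$; the reverse inclusion is again immediate from (i).

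Finally, for exhaustivity I would take $x \in \confp{\vee D}$, use the first identity to write $x \in \confp{\sigma}$ for some $\sigma \in D$ (which exists as $D$ is nonempty), invoke exhaustivity of $\sigma$ to get $\kappa_A(\pr_\sigma(x)) \geq 0$, and conclude since $\pr_{\vee D}(x) = \pr_\sigma(x)$. I do not expect a serious obstacle: the only point that is not a pure transfer of $+$-coveredness and payoff along $\cleq$ is step (ii), the reverse inclusion $\conf{\vee D} \subseteq \bigcup_{\sigma \in D}\conf{\sigma}$ and its analogue for symmetries, which is precisely where finiteness of configurations and directedness of $D$ are used — and this may simply be cited as part of the construction of the dcpo of strategies if one prefers.
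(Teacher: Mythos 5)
Your proof is correct. The paper states this proposition without proof (it is treated as an easy consequence of the componentwise-union description of suprema for $\cleq$), and your argument is exactly the intended one: transfer of $+$-coveredness, symmetries, and payoff along the agreement of all components guaranteed by $\sigma \cleq \vee D$, with finiteness of configurations plus directedness of $D$ supplying the only nontrivial inclusion $\confp{\vee D} \subseteq \bigcup_{\sigma \in D} \confp{\sigma}$ (and its analogue for $\tildep{\vee D}$, where you correctly route through condition \emph{(3)} of the definition of $\cleq$ after locating both $\dom(\theta)$ and $\cod(\theta)$ in a single $\sigma$). No gaps.
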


The supremum also preserves visibility; so for any
$A, B$ the homset $\Strat(A, B)$ is a dcpo.
Before defining the recursion operator via the usual fixpoint formula,
we must deal with the minor inconvenience that the dcpo of
strategies on $A$ does not have a least element: strategies minimal for
$\cleq$ still have -- by receptivity -- events matching the
minimal events of $A$, but they are free to \emph{name} those
arbitrarily. We solve this as in \cite{cg2}: we choose one minimal
$\bot_A : A$. For any $\sigma : A$, we pick an isomorphic
$\sigma \iso \sigma^\flat : A$ s.t. $\bot_A \cleq \sigma^\flat$,
obtained by renaming minimal events. We write $\D_A$ for the pointed
dcpo of strategies above $\bot_A$. 

As all operations on strategies examined so far are continuous, for any
arena $A$ we have
\[
\begin{array}{rcrcl}
F &:& \D_{\oc \top \vdash (A \to A) \to A} &\to& \D_{\oc \top \vdash (A \to A) \to A}\\
&& \sigma &\mapsto& (\lambda f^{A \to A}.\,f\,(\sigma\,f))^\flat\,,
\end{array}
\]
written in $\lambda$-calculus
syntax relying on the constructions on strategies corresponding to the
cartesian closed structure of $\Strat_\oc$, continuous. By Kleene's
fixpoint theorem
\[
\Y_A = \bigvee_{n \in \mathbb{N}} F^n(\bot) \in \Strat_\oc(\top, (A \to
A) \to A)
\]
is a least fixpoint of $F$. Finally, in the presence of a context
$\Gamma$, we set $\Y_{\Gamma, A} = \Y_A \odot_\oc e_{\Gamma}$ where
$e_{\Gamma} \in \Strat_\oc(\Gamma, \top)$ is the terminal morphism,
yielding $\Y_{\Gamma, A} \in \Strat_\oc(\Gamma, \oc(\oc A \lin A) \lin
A)$.

%

This concludes the interpretation of $\nPCF$ in $\Strat$. From
\cite{cg3}, it is adequate with respect to may-convergence -- however
this statement will play no role here. Most of the rest of the
paper will study the collapse of $\Strat$ (and its
interpretation of $\nPCF$) onto $\Rel{\N}$. 

\subsection{Collapsing $\Strat$ to
$\Rel{\N}$}\label{subsec:intro_collapse}
Back to the question asked in Section \ref{subsubsec:question}: what
does the $\N$-weighted relational model count, on arbitrary higher-order
types? 

We shall answer this by providing a \emph{collapse}
interpretation-preserving functor, written 
\[
\coll(-) : \Strat \to \Rel{\N}\,.
\]

At first this seems simple: recall from Lemma \ref{lem:web_wconf} that for
any type $A$, there is 
$s^{\Ty}_A : \rintr{A} \bij \wconf{\intr{A}}$
a bijection through which we regard $\rintr{A}$ -- the
\textbf{web} of $A$, as it is usually called -- as the set of symmetry
classes of null payoff of $\intr{A}$. Given $\sigma
: A$ and $\x_A \in \wconf{\intr{A}}$, we must associate a
\emph{weight}
$\left(\coll(\sigma)\right)_{\x_A} \in \mathbb{N} \cup \{+\infty\}$.
It seems natural for this weight to be
\[
(\coll(\sigma))_{\x_A} = \sharp \left(\wit_\sigma(\x_A)\right)
\]
the cardinality of an adequately chosen set of \emph{witnesses} of
$\x_A$. On ground types this seems rather clear: for instance,
considering $\vdash M : \tbool$ defined as $M =
\ite{\coin}{\ttrue}{\ttrue}$, we have
\[
\intr{M} = 
\raisebox{10pt}{$
\xymatrix@C=5pt@R=10pt{
&\qu^-	\ar@{-|>}[dl]
	\ar@{-|>}[dr]\\
\ttrue^+\ar@{.}@/^/[ur]
	\ar@{~}[rr]&&
\ttrue^+\ar@{.}@/_/[ul]
}$}
\qquad
\qquad
\qquad
\begin{array}{rcl}
\rintr{M}_\ttrue &=& 2\\
\rintr{M}_\tfalse &=& 0
\end{array}
\]
so one may guess that $\wit(\x_A) = \{x^\sigma \in \conf{\sigma} \mid
\pr_\sigma(x^\sigma) \in \x_A\}$ -- however, this is inadequate beyond
ground types: if $x^\sigma \in \conf{\sigma}$
includes any copyable Opponent move, then $x^\sigma$
automatically has countably many symmetric copies prompted by copies of
that Opponent move, displayed to the same configuration of the game up to
symmetry. So this notion of witnesses is useless save in the
most simple cases, as it makes no account of symmetry.

So what is the right set of witnesses up to symmetry? \emph{How to count
configurations of a strategy up to symmetry}? This question, the crux of
the paper, is investigated next.

\section{Witnesses and Composition}
\label{sec:wit_comp}

To set up our collapse, the most challenging proof obligation -- by far
-- is preservation of composition. Here we present the right definition of
collapse, along with the proof that it preserves composition. But rather
than arriving there directly, we take a more indirect route, showing
some of the subtleties constraining the solution.

\subsection{Finding the Right Witnesses}

A fitting starting point for this discussion is to recall 
\begin{eqnarray}
(\beta \circ \alpha)_{x, z} &=& \sum_{y \in Y} \alpha_{x, y} \cdot
\beta_{y, z}\label{eq1}
\end{eqnarray}
the composition of $\alpha \in \Rel{\N}(X, Y)$ and $\beta \in
\Rel{\N}(Y, Z)$, which we must relate to the composition of strategies.
As explained above, for $\sigma \in \Strat(A, B)$ and $\x_A \in
\wconf{A}, \x_B \in \wconf{B}$, we expect $(\coll(\sigma))_{\x_A, \x_B}$
to be the cardinality of a well-chosen set $\wit_\sigma(\x_A, \x_B)$ of
\emph{witnesses} for $\x_A \parallel \x_B \in \wconf{A \vdash B}$. 
Thus, \eqref{eq1} strongly hints at a bijection
\begin{eqnarray}
\wit_{\tau \odot \sigma}(\x_A, \x_C) &\bij& \sum_{\x_B \in \wconf{B}}
\wit_\sigma(\x_A, \x_B) \times \wit_\tau(\x_B, \x_C)\label{eq2}
\end{eqnarray}
for all $\sigma \in \Strat(A, B)$, $\tau \in \Strat(B, C)$, $\x_A \in
\wconf{A}$ and $\x_C \in \wconf{C}$.

\subsubsection{Witnesses without symmetry} \label{subsubsec:wit_w_sym}
Let us start by ignoring symmetry,
%
%
%
and first count witnesses for plain complete configurations $x_A \in
\nconf{A}$. It seems natural to define: 
\[
\wit_\sigma(x_A, x_B) = \{x^\sigma \in \confp{\sigma} \mid
\pr_\sigma(x^\sigma) = x_A \parallel x_B\}\,.
\]
for all $\sigma \in
\Strat(A,B)$, $x_A \in \nconf{A}$ and $x_B \in \nconf{B}$.

Working with $+$-covered configurations seems reasonable, as
Proposition \ref{prop:char_comp} entails:
\[
(- \odot -) : 
\{(x^\tau, x^\sigma) \in \confp{\tau} \times \confp{\sigma} \mid
\text{$x^\sigma$ and $x^\tau$ causally compatible}\} 
\simeq
\confp{\tau \odot \sigma}
\]
preserving the display maps for all $\sigma : A\vdash B$ and $\tau : B
\vdash C$, which seems close to \eqref{eq2}. The two obstacles to
\eqref{eq2} are that: \emph{(1)} this bijection has an additional
requirement that synchronized $x^\sigma \in \confp{\sigma}$ and $x^\tau
\in \confp{\tau}$ should be \emph{causally compatible} -- a constraint
which does not appear in \eqref{eq2}; and \emph{(2)} we must be sure
that for all $x^\sigma \in \confp{\sigma}$ and $x^\tau \in \confp{\tau}$
synchronizable, if $x_A \in \nconf{A}$ and $\x_C \in
\nconf{C}$, then $x^\sigma_B = x^\tau_B \in \nconf{B}$.

For \emph{(1)}, we invoke the ``\emph{deadlock-free lemma}'' from \cite{cg3}:

\begin{lem}\label{lem:deadlock_free}
Consider $A, B, C$ arenas, $\sigma : A \vdash B$ and $\tau : B
\vdash C$ visible strategies, $x^\sigma \in \conf{\sigma}$ and
$x^\tau \in \conf{\tau}$ with a symmetry $\theta : x^\sigma_B \sym_B
x^\tau_B$.
Then, the composite bijection
\[
\varphi
\quad
:
\quad
x^\sigma \parallel x^\tau_C 
\quad
\stackrel{\pr_\sigma \parallel x^\tau_C}{\simeq} 
\quad
x^\sigma_A \parallel x^\sigma_B \parallel x^\tau_C 
\quad
\stackrel{x^\sigma_A \parallel \theta \parallel x^\tau_C}{\simeq}
\quad
x^\sigma_A \parallel x^\tau_B \parallel x^\tau_C
\quad
\stackrel{x^\sigma_A \parallel \pr_\tau^{-1}}{\simeq}
\quad
x^\sigma_A \parallel x^\tau\,,
\]
is secured, in the sense that the relation $\vartriangleleft$, defined
on the graph of $\varphi$ with
\[
(l, r) \vartriangleleft (l', r')
\]
whenever $l \mathrel{(<_\sigma \parallel <_C)} l'$ or $r \mathrel{(<_A
\parallel <_\tau)} r'$, is acyclic\footnote{For $\theta$ an identity,
this exactly means that $x^\sigma$ and $x^\tau$ satisfy the
\emph{secured} condition of Definition \ref{def:caus_comp}.}. 
\end{lem}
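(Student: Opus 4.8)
The plan is to prove the contrapositive-free statement directly, by contradiction, reconstructing the standard deadlock-freeness argument for visible strategies on arenas; in the paper this lemma is the \emph{deadlock-free lemma} of \cite{cg3}, and the sketch below follows that proof. Suppose $\vartriangleleft$ has a cycle on (the graph of) $\varphi$, and pick one of minimal length, $p_0 \vartriangleleft p_1 \vartriangleleft \dots \vartriangleleft p_n = p_0$. Each step is either a \emph{left step}, witnessed by $<_{\sigma \parallel C}$, or a \emph{right step}, witnessed by $<_{A \parallel \tau}$. Since each of $<_{\sigma \parallel C}$ and $<_{A \parallel \tau}$ is a strict partial order, two consecutive steps of the same kind would compose into one, contradicting minimality; hence left and right steps strictly alternate along the cycle, and $n$ is even.

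Next I would colour each $p_i$ as an $A$-, $B$-, or $C$-event, according to which component of $A \vdash B$ (seen through $\pr_\sigma$) or of $B \vdash C$ (seen through $\pr_\tau$, identifying the $B$-parts of $x^\sigma$ and $x^\tau$ via $\theta$) it lives in. A left step relates two events in a common component of $x^\sigma \parallel x^\tau_C$, so it is either a $<_\sigma$-step between two $\sigma$-events (i.e. $A$- or $B$-events) or a $<_C$-step between two $C$-events; symmetrically, a right step is a $<_A$-step between $A$-events or a $<_\tau$-step between $\tau$-events (i.e. $B$- or $C$-events). In particular, any passage between the ``$\sigma$-side'' ($A,B$) and the ``$\tau$-side'' ($B,C$) of the colours occurs at a $B$-event.

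The technical heart is to show that, up to shortening, the cycle may be assumed to turn only at $B$-events: a maximal run of vertices staying on the $A$-side can be absorbed, and symmetrically for the $C$-side. This is where visibility and the arena hypothesis enter. At a vertex $p_i$ that is an $A$-event, the incident right step is a $<_A$-step of the \emph{game}; since $A$ is an arena, $<_A$ is generated by immediate causality along unique justifiers and $x^\sigma_A$ is down-closed, so this game dependency can be lifted, thread by thread, into $\sigma$ — and here visibility guarantees that each grounded causal chain of $\sigma$ one follows displays to a configuration of $A \vdash B$, so the lift is consistent (no conflict is created). The $<_A$-turn at $p_i$ is thereby replaced by $<_\sigma$-comparisons which merge, by transitivity of $<_\sigma$, with the adjacent left steps, strictly shortening the cycle; the dual argument with $\tau$ handles turns at $C$-events. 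The only surviving minimal cycles then consist of $B$-events alternating $<_\sigma$- and $<_\tau$-steps; projecting such a cycle through $\pr_\sigma$, $\pr_\tau$ and the order-isomorphism $\theta$, and using courtesy, receptivity and visibility to pin down the orientation of the corresponding causal links inside the arena $B$, one obtains an impossible configuration (a cycle among the causal/polarity constraints of a genuine interaction in $B$). Hence $\vartriangleleft$ is acyclic, i.e. $\varphi$ is secured.

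I expect the main obstacle to be precisely this turn-elimination step: lifting a game-level causal step \emph{at the interface} back into the strategy, and verifying that the lift is compatible with the conflict structure, is the delicate point. It is exactly what fails for non-visible strategies, or for games that are not arenas (lacking unique immediate predecessors), and it is the technical core of the argument in \cite{cg3}; the remaining bookkeeping (alternation, colouring, the final contradiction in $B$) is routine by comparison.
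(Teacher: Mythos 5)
First, note that the paper does not actually prove this lemma: it explicitly defers to \cite{cg3} (``the quite subtle proof is out of the scope of the paper''), so your proposal must be measured against the argument there rather than against anything in this text. Your outline gets the bookkeeping right (minimal cycle, strict alternation of left and right steps, colouring by component, crossings between the $\sigma$-side and the $\tau$-side happening only at $B$-events), but it misplaces the difficulty. Absorbing the $<_A$- and $<_C$-steps needs neither visibility nor the arena hypothesis: for any $l, l' \in x^\sigma$ with $\pr_\sigma(l) <_{A \vdash B} \pr_\sigma(l')$, the down-closure $[l']_\sigma$ is a configuration of $\sigma$ (it is contained in the consistent set $x^\sigma$), so by \emph{rule-abiding} its display is down-closed in the game and contains $\pr_\sigma(l)$; by \emph{local injectivity} on $x^\sigma$ the witnessing event is $l$ itself, whence $l <_\sigma l'$. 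So game causality is always reflected into strategy causality inside a configuration, and the turn-elimination you call ``the technical heart'' is in fact routine -- your appeal to lifting ``thread by thread'' and to conflict-consistency is beside the point, since everything happens inside the single configuration $x^\sigma$.

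The genuine gap is the step you dispatch in one sentence: ruling out a cycle $b_0 <_\sigma b_1 <_\tau b_2 <_\sigma \cdots <_\tau b_0$ consisting purely of $B$-events alternating the two strategy causalities. This is exactly the shape of a real deadlock -- $\sigma$ awaiting Opponent's $b_1$ before playing $b_2$ while $\tau$ awaits $b_2$ before playing $b_1$ -- and such cycles genuinely occur for non-visible strategies, which also satisfy courtesy, receptivity and all the display-map axioms. Hence no argument that merely ``projects through $\pr_\sigma$, $\pr_\tau$ and $\theta$ and uses courtesy and receptivity to pin down orientations'' can succeed: it would apply equally to deadlocking strategies. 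The proof in \cite{cg3} at this point is a substantial induction: each immediate link $b_i \imc_\sigma b_{i+1}$ (resp.\ $\imc_\tau$) is extended to a gcc of $\sigma$ (resp.\ $\tau$); visibility is used to deduce that the display of each such gcc is a configuration of the game, hence that each event's unique immediate game-predecessor occurs earlier along that gcc; and the contradiction is then extracted by a minimality argument on the positions of these justifiers in the forest $B$. None of this is present or even sketched in your proposal, so the proof has a hole precisely where the hypotheses of the lemma (visibility, arenas) are actually needed.
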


The quite subtle proof is out of the scope of the paper, the interested
reader is referred to \cite{cg3}. It is one of the main properties of 
\emph{visibility} that composition
never deadlocks, so that when dealing with visible strategies, the
\emph{causally compatible} requirement of Proposition
\ref{prop:char_comp} is redundant -- the lemma above also covers the
case of synchronization through a symmetry $\theta$, which will be
necessary later on in the paper.

Next, for \emph{(2)}, we prove the following property:

\begin{lem}\label{lem:sync_zero}
Consider games $A, B$ and $C$; and exhaustive $\sigma : A
\vdash B$ and $\tau : B \vdash C$.

For all $x^\sigma \in \confp{\sigma}$ and $x^\tau \in \confp{\tau}$, if
$\kappa_A(x^\sigma_A) = 0$ and $\kappa_C(x^\tau_C) = 0$, then $\kappa_B(x_B)
= 0$.
\end{lem}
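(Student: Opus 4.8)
The plan is to handle the two strategies independently, exploiting exhaustivity of each together with the fact that $0$ is a neutral element for the operation $\parr$ on payoffs. Recall that $\sigma : A \vdash B$ is a strategy on the game $A^\perp \parr B$ and $\tau : B \vdash C$ a strategy on $B^\perp \parr C$, and that by the dual construction $\kappa_{A^\perp} = -\kappa_A$. Here $x_B$ denotes the common value $x^\sigma_B = x^\tau_B$ — the matching hypothesis implicit in the statement, consistent with the intended use in item~\emph{(2)} above, where $x^\sigma$ and $x^\tau$ are synchronizable — so that writing $\kappa_B(x_B)$ is unambiguous.

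First I would apply exhaustivity to $\sigma$. Since $x^\sigma \in \confp{\sigma}$, exhaustivity gives $\kappa_{A^\perp \parr B}(\pr_\sigma(x^\sigma)) \geq 0$. Unfolding the definitions of $(-)^\perp$ and $\parr$ (Definition~\ref{def:games_tensor}), $\kappa_{A^\perp \parr B}(x^\sigma_A \parallel x_B) = (-\kappa_A(x^\sigma_A)) \parr \kappa_B(x_B)$, which by the hypothesis $\kappa_A(x^\sigma_A) = 0$ and the table for $\parr$ (Figure~\ref{fig:op_payoff}) equals $0 \parr \kappa_B(x_B) = \kappa_B(x_B)$; hence $\kappa_B(x_B) \geq 0$. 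Symmetrically, applying exhaustivity to $\tau$ and using $\kappa_C(x^\tau_C) = 0$ yields $\kappa_{B^\perp \parr C}(x_B \parallel x^\tau_C) = (-\kappa_B(x_B)) \parr 0 = -\kappa_B(x_B) \geq 0$, i.e. $\kappa_B(x_B) \leq 0$. Combining the two inequalities gives $\kappa_B(x_B) = 0$, as required.

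There is essentially no obstacle here beyond bookkeeping: the only points to watch are the sign flip coming from the dual, $\kappa_{A^\perp} = -\kappa_A$, and the fact that the $\parr$ payoff formula of Definition~\ref{def:games_tensor} applies uniformly, with no special case for the empty configuration (unlike the linear function space $\lin$, whose payoff on $\emptyset$ is set to $1$). Everything else is reading off the relevant entries of the $\parr$ table, the crucial feature being that $0$ is neutral for $\parr$. In effect this is a streamlined version of the reasoning used to prove that exhaustive strategies are stable under composition.
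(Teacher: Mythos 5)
Your proof is correct and follows essentially the same route as the paper's: apply exhaustivity to $\sigma$ and to $\tau$ separately and read off the $\parr$ table, using that $\kappa_{A^\perp} = -\kappa_A$. The only difference is presentational — you derive the two inequalities $\kappa_B(x_B) \geq 0$ and $\kappa_B(x_B) \leq 0$ directly via neutrality of $0$ for $\parr$, whereas the paper rules out the cases $\kappa_B(x_B) = \pm 1$ by contradiction.
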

\begin{proof}
Seeking a contradiction, assume $\kappa_B(x_B) = 1$. Then
$\kappa_{B\vdash C}(x_B \parallel x^\tau_C) = -1$, contradicting that
$\tau$ is exhaustive. Symmetrically, $\kappa_B(x_B) = -1$ contradicts
that $\sigma$ is exhaustive. 
\end{proof}

From these two statements, we may now deduce as claimed:

\begin{cor}
Consider $A, B, C$ arenas, $\sigma \in \Strat(A, B), \tau \in \Strat(B,
C)$. Then, we have
\begin{eqnarray}
\wit_{\tau \odot \sigma}(x_A, x_C) &\bij& \sum_{x_B \in \nconf{B}}
\wit_\sigma(x_A, x_B) \times \wit_\tau(x_B, x_C)\label{eq3}
\end{eqnarray}
for all $x_A \in \nconf{A}$ and $x_C \in \nconf{B}$.
\end{cor}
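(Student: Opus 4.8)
The plan is to obtain \eqref{eq3} by chaining together the three results that precede it: Proposition~\ref{prop:char_comp}, the deadlock-free Lemma~\ref{lem:deadlock_free}, and Lemma~\ref{lem:sync_zero}, in that order. Fix $x_A \in \nconf{A}$ and $x_C \in \nconf{C}$. First I would unfold the definition of $\wit_{\tau\odot\sigma}(x_A, x_C)$ as the set of $+$-covered configurations of $\tau\odot\sigma$ displaying to $x_A \parallel x_C$, and apply the bijection $(-\odot-)$ of Proposition~\ref{prop:char_comp}, which (since it commutes with display maps) identifies this set with the set of \emph{causally compatible} pairs $(x^\tau, x^\sigma) \in \confp{\tau}\times\confp{\sigma}$ such that $\pr_\sigma(x^\sigma) = x_A \parallel x^\sigma_B$ and $\pr_\tau(x^\tau) = x^\tau_B \parallel x_C$.

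The key move is then to discard the \emph{causally compatible} constraint, which is exactly where the arena and visibility hypotheses enter — and exactly what makes the relational composition \eqref{eq1}, which carries no such constraint, match the game-theoretic one. By Lemma~\ref{lem:deadlock_free} applied with $\theta = \id_{x^\sigma_B}$, any \emph{matching} pair (that is, one with $x^\sigma_B = x^\tau_B$) is automatically causally compatible, since $\sigma$ and $\tau$ are visible strategies on arenas. So $\wit_{\tau\odot\sigma}(x_A, x_C)$ is in bijection with the set of matching pairs $(x^\tau, x^\sigma)$ with the prescribed $A$- and $C$-components; I would then split this set according to the common value $x_B := x^\sigma_B = x^\tau_B \in \conf{B}$, obtaining a bijection with $\sum_{x_B \in \conf{B}} W_\sigma(x_A, x_B) \times W_\tau(x_B, x_C)$, where $W_\sigma(x_A,x_B) = \{x^\sigma \in \confp{\sigma} \mid \pr_\sigma(x^\sigma) = x_A \parallel x_B\}$ and likewise for $W_\tau$ — the same formula as $\wit_\sigma, \wit_\tau$, now read for arbitrary $x_B \in \conf{B}$.

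Finally I would cut the sum down to complete $x_B$ via Lemma~\ref{lem:sync_zero}: if a term with $x_B \notin \nconf{B}$ were nonempty, any pair $(x^\tau, x^\sigma)$ witnessing it, together with $\kappa_A(x_A) = 0$ and $\kappa_C(x_C) = 0$, would force $\kappa_B(x_B) = 0$ by Lemma~\ref{lem:sync_zero}, a contradiction. Hence the sum over $\conf{B}$ coincides with the sum over $\nconf{B}$, and for $x_B \in \nconf{B}$ the factors $W_\sigma(x_A,x_B)$, $W_\tau(x_B,x_C)$ are by definition $\wit_\sigma(x_A, x_B)$, $\wit_\tau(x_B, x_C)$; swapping the two components of each pair yields \eqref{eq3}. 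I do not expect a genuine obstacle: the content is entirely in the three cited results. The only point that needs care is to resist the temptation to treat causal compatibility as vacuous — it is not, in general, which is why the appeal to Lemma~\ref{lem:deadlock_free} (and hence to visibility and the arena hypothesis) must be made explicitly.
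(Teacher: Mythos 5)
Your proposal is correct and follows essentially the same route as the paper: both directions of the bijection rest on Proposition~\ref{prop:char_comp}, with Lemma~\ref{lem:deadlock_free} (instantiated at the identity symmetry) supplying causal compatibility of matching pairs and Lemma~\ref{lem:sync_zero} forcing the middle configuration to be complete. Your presentation as a chain of bijections, first summing over all of $\conf{B}$ and then cutting down to $\nconf{B}$, is only a cosmetic reorganisation of the paper's two explicit maps.
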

\begin{proof}
We construct the bijection by building functions in both directions.

Take $y \in \wit_{\tau \odot \sigma}(x_A, x_C)$, \emph{i.e.} $y \in
\confp{\tau \odot \sigma}$ s.t. $\pr_{\tau \odot \sigma}(y) = x_A
\parallel x_C$. By Proposition \ref{prop:char_comp}, $y = x^\tau \odot
x^\sigma$ s.t. $x^\sigma \in \confp{\sigma}$ and $x^\tau \in
\confp{\tau}$ are causally compatible, and with 
\[
\pr_\sigma(x^\sigma) = x_A \parallel x_B\,
\qquad
\qquad
\pr_\tau(x^\tau) = x_B \parallel x_C
\]
for some $x_B \in \conf{B}$. But by Lemma \ref{lem:sync_zero} we have
$\kappa_B(x_B) = 0$, so we return $(x_B, x^\sigma, x^\tau)$.

Now, take $x_B \in \nconf{B}$, $x^\sigma \in \wit_\sigma(x_A, x_B)$ and
$x^\tau \in \wit_\tau(x_B, x_C)$. By Lemma \ref{lem:deadlock_free},
$x^\sigma$ and $x^\tau$ are causally compatible, so 
$x^\tau \odot x^\sigma \in \confp{\tau \odot \sigma}$ s.t.
$\pr_{\tau \odot \sigma}(x^\tau \odot x^\sigma) = x_A \parallel x_C$.

That these constructions are inverses of one another follows from
Proposition \ref{prop:char_comp}.
\end{proof}

This is a good starting point, which we must now extend to deal
with symmetry.

\subsubsection{Witnesses as symmetry classes} How shall we extend
\eqref{eq3} to account for symmetry?

Since points of the web correspond to symmetry classes of configurations
of the game, it would seem sensible for witnesses to be symmetry classes
of configurations of the strategy:
\[
\swit_\sigma(\x_A, \x_B) = 
\{\x^\sigma \in \spconf{\sigma} \mid \pr_\sigma(\x^\sigma) = \x^\sigma_A
\parallel \x^\sigma_B\}
\]
for $\sigma : A \vdash B$, $\x_A \in \wconf{A}$ and $\x_B \in
\wconf{B}$; where $\spconf{\sigma}$ is the set of symmetry classes
of $+$-covered configurations of $\sigma$; and where
$\pr_\sigma(\x^\sigma) \in \tilde{A\vdash B}$ is well-defined since
$\pr_\sigma$ preserves symmetry. With this definition, are we able to
extend the bijection \eqref{eq3}?

Consider first $\y \in \swit_{\tau \odot \sigma}(\x_A, \x_C)$. Take any
representative $x^\tau \odot x^\sigma \in \y$. The configurations
$x^\sigma$ and $x^\tau$ synchronize in $x^\sigma_B = x^\tau_B = x_B$,
with symmetry class $\x_B \in \wconf{B}$. So taking $\x^\sigma$ and
$\x^\tau$ the respective symmetry classes of $x^\sigma$ and $x^\tau$, we obtain
\[
\x_B \in \wconf{B}\,,
\qquad
\x^\sigma \in \swit_\sigma(\x_A, \x_B)\,,
\qquad
\x^\tau \in \swit_\tau(\x_B, \x_C)
\]
as required. From the characterization of the symmetries of composition
in Proposition \ref{prop:char_comp}, one deduces
that this does not depend on the choice of the representative $x^\tau
\odot x^\sigma \in \y$.

What about the other direction? Consider now $\x^\sigma \in
\swit_\sigma(\x_A, \x_B)$ and $\x^\tau \in \swit_\tau(\x_B, \x_C)$ for
some $\x_B \in \wconf{B}$, and pick representatives $x^\sigma \in
\x^\sigma$ and $x^\tau \in \x^\tau$. Displaying them as
\[
\pr_\sigma(x^\sigma) = x^\sigma_A \parallel x^\sigma_B
\qquad
\qquad
\pr_\tau(x^\tau) = x^\tau_B \parallel x^\tau_C\,,
\]
we must compute their synchronization. But we may not have $x^\sigma_B =
x^\tau_B$, indeed we only know that $x^\sigma_B, x^\tau_B \in \x_B$ so
that there must be some $\theta : x^\sigma_B \sym_B x^\tau_B$.
Fortunately, thin concurrent games come with tools to compute such
synchronizations up to symmetry -- in particular, at the heart of the
proof of Proposition \ref{prop:congruence} is the following property:

\begin{restatable}{prop}{syncsym}\label{prop:sync_sym}
Consider $\sigma : A \vdash B$ and $\tau : B \vdash C$ two strategies.

For $x^\sigma \in \confp{\sigma}, x^\tau \in \confp{\tau}$ and
$\theta : x^\sigma_B \sym_B x^\tau_B$ s.t. the composite
bijection is \emph{secured}:
\[
x^\sigma \parallel x^\tau_C 
\quad
\stackrel{\pr_{\sigma}\parallel C}{\simeq}
\quad
x^\sigma_A \parallel x^\sigma_B \parallel x^\tau_C 
\quad
\stackrel{A \parallel \theta \parallel C}{\sym}
\quad
x^\sigma_A \parallel x^\tau_B \parallel x^\tau_C
\quad
\stackrel{A \parallel \pr_{\tau}^{-1}}{\simeq}
\quad
x^\sigma_A \parallel x^\tau\,,
\]
then there are (necessarily unique) $y^\sigma \in \confp{\sigma}$ and
$y^\tau \in \confp{\tau}$ causally compatible, and
\[
\varphi^\sigma : x^\sigma \sym_\sigma y^\sigma\,,
\qquad
\qquad
\varphi^\tau : x^\tau \sym_\tau y^\tau\,,
\]
such that
we have $\varphi^\sigma_A \in \ntilde{A}$, $\varphi^\tau_C \in
\ptilde{C}$, and $\varphi^\tau_B \circ \theta = \varphi^\sigma_B$. 
\end{restatable}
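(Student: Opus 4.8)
The plan is to build $y^\sigma$, $y^\tau$, $\varphi^\sigma$ and $\varphi^\tau$ incrementally, following a \emph{secured enumeration} of the twisted interaction of $x^\sigma$ and $x^\tau$. Write $\varphi$ for the secured composite bijection of the statement; its graph $w$ is finite and equipped with the acyclic relation $\vartriangleleft$ (which imports $<_{\sigma\parallel C}$ on the left and $<_{A\parallel\tau}$ on the right). Since $\vartriangleleft$ is acyclic on a finite set, there is a covering chain $\emptyset = w_0 \cov w_1 \cov \dots \cov w_n = w$ of $\vartriangleleft$-down-closed subsets. Each $w_k$ projects to a pair $(x^\sigma_k, x^\tau_k) \in \conf\sigma \times \conf\tau$ with $x^\sigma_k \subseteq x^\sigma$, $x^\tau_k \subseteq x^\tau$, which is \emph{matching through $\theta$}, meaning $x^\tau_{k,B} = \theta(x^\sigma_{k,B})$ (write $\theta_k$ for the corresponding restriction of $\theta$), and whose own composite bijection has graph exactly $w_k$ --- secured, being a $\vartriangleleft$-down-closed subset of an acyclic relation.

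I would then show by induction on $k$ that there exist symmetries $\varphi^\sigma_k : x^\sigma_k \sym_\sigma y^\sigma_k$ and $\varphi^\tau_k : x^\tau_k \sym_\tau y^\tau_k$ with: $y^\sigma_k$ and $y^\tau_k$ causally compatible; $(\varphi^\sigma_k)_A \in \ntilde A$; $(\varphi^\tau_k)_C \in \ptilde C$; and $(\varphi^\tau_k)_B \circ \theta_k = (\varphi^\sigma_k)_B$. The base case is trivial. For the step, $w_{k+1}$ adds one pair, which sits either (i) at an event $s$ of $x^\sigma$ over $A$, (ii) at an event $t$ of $x^\tau$ over $C$, or (iii) at a synchronised pair $(s,t)$ of events over $B$ with $\pr_\tau(t) = \theta(\pr_\sigma(s))$. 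In each case I split by polarity. If the added move is \emph{positive} for the relevant strategy --- i.e.\ $\pol_A(\pr_\sigma(s)) = -$ in case~(i), $\pol_C(\pr_\tau(t)) = +$ in case~(ii), or in case~(iii) for whichever of $\sigma,\tau$ the $B$-move is positive --- then \emph{thin}ness of that strategy produces a \emph{unique} extension of the relevant symmetry; its game-projection extends $(\varphi^\sigma_k)_A$ (resp.\ $(\varphi^\tau_k)_C$) by a pair of the suitable polarity, staying in $\ntilde A$ (resp.\ $\ptilde C$) by $-$-receptivity (resp.\ $+$-receptivity) of the game, the membership in $\tilde A$ (resp.\ $\tilde C$) being automatic as it is a restriction of a symmetry of $A^\perp\parr B$ (resp.\ $B^\perp\parr C$). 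If the added move is \emph{negative} for the relevant strategy, I prescribe its game-level target --- the same event in cases~(i),(ii), and in case~(iii) the event forced by $(\varphi^\tau_k)_B\circ\theta_k = (\varphi^\sigma_k)_B$ --- and invoke \emph{$\sim$-receptivity} to obtain the unique extension. In case~(iii) the positive side goes first (choosing by thinness a canonical target $b'$), then the negative side follows with prescribed target $b'$; the invariant $(\varphi^\tau_k)_B = (\varphi^\sigma_k)_B\circ\theta_k^{-1}$ is what makes this prescription a legal symmetry extension.

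Causal compatibility of $(y^\sigma_{k+1}, y^\tau_{k+1})$ is preserved by transport: since $\varphi^\sigma_{k+1}$ and $\varphi^\tau_{k+1}$ are order-isomorphisms and commute with the display maps (through the identifications used to define $\varphi$), they carry the composite bijection of $(x^\sigma_{k+1}, x^\tau_{k+1})$, whose graph is the secured $w_{k+1}$, isomorphically onto the composite bijection of $(y^\sigma_{k+1}, y^\tau_{k+1})$, whose $\vartriangleleft$ is therefore acyclic. At $k = n$ this yields $y^\sigma := y^\sigma_n$, $y^\tau := y^\tau_n$, $\varphi^\sigma := \varphi^\sigma_n$, $\varphi^\tau := \varphi^\tau_n$; these $y^\sigma, y^\tau$ are $+$-covered because $x^\sigma, x^\tau$ are and symmetries preserve polarity. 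Uniqueness holds because every step was forced: thinness on positive moves, $\sim$-receptivity on negative moves, and the negative targets themselves pinned down by the required invariants.

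The main obstacle is case~(iii) together with keeping the two game-projection conditions. The delicate point is that, once the positive side has committed to a target $b'$ by thinness, the negative side must be \emph{able} to extend by a negative move with game-level target $b'$; this is exactly where the invariant $(\varphi^\tau_k)_B = (\varphi^\sigma_k)_B\circ\theta_k^{-1}$ does the work, since it exhibits $(\varphi^\tau_k)_B$ as a composite of symmetries of $B$ and thereby makes the prescribed extension of $\pr_\tau(\varphi^\tau_k)$ a genuine element of $\tilde{B^\perp\parr C}$, unlocking $\sim$-receptivity. A secondary subtlety, which dictates how carefully the invariant must be phrased, is cases~(i),(ii) with a strategy-negative move over $A$ or $C$: one must check that ``fixing the game event'' is consistent with the symmetry built so far, i.e.\ that the already-processed justifier of that event was itself fixed --- this relies on $A$ and $C$ being arenas (unique justifiers, alternation) so that $\ntilde A$ and $\ptilde C$ fix precisely the events in question.
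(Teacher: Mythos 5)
Your overall strategy --- building $\varphi^\sigma$ and $\varphi^\tau$ step by step along a secured enumeration of the interaction, using \emph{thin} for strategy-positive moves and \emph{$\sim$-receptivity} for strategy-negative ones --- is the right one; it is essentially an inlining of the result the paper actually invokes (Lemma 3.23 of \cite{cg2}, applied to the dual pre-$\sim$-strategies $\sigma \parallel C_+$ and $A_- \parallel \tau$, where $A_-$ and $C_+$ denote $A$ and $C$ with symmetry cut down to $\ntilde{A}$ and $\ptilde{C}$). Your treatment of the synchronised case~(iii) is correct, including the role of the invariant $(\varphi^\tau_k)_B \circ \theta_k = (\varphi^\sigma_k)_B$ in making the prescribed target a legal $\sim$-receptive extension on the negative side.

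The gap is in cases~(i) and~(ii) for strategy-\emph{negative} moves, where you prescribe ``the same event'' as the game-level target and justify this by claiming that $\ntilde{A}$ and $\ptilde{C}$ ``fix precisely the events in question.'' This is false. A symmetry in $\ntilde{A}$ is precisely one that is \emph{allowed} to reindex the negative moves of $A$ (in $\oc \gbool$, say, it permutes the copy indices of the negative questions) and it then drags their positive successors along. So if the $\sigma$-positive (hence $A$-negative) justifier $a_0$ of a $\sigma$-negative ($A$-positive) move $a$ has already been sent to some $a_0' \neq a_0$ by $(\varphi^\sigma_k)_A$, the pair $(a,a)$ cannot be adjoined at all --- the result would not even be an order-isomorphism, let alone an element of $\tilde{A}$ --- so $\sim$-receptivity is not applicable with that target; dually for $C$-negative moves whose $C$-positive justifiers were reindexed by $(\varphi^\tau_k)_C \in \ptilde{C}$. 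What the step actually requires is that $(\varphi^\sigma_k)_A \in \ntilde{A}$ admits a \emph{unique} extension by a pair $(a, a')$ that remains in $\ntilde{A}$, for a target $a'$ the construction does not get to choose. Existence and uniqueness of this $a'$ is a non-trivial property of thin concurrent games --- it says that $A$ equipped with $\ntilde{A}$ is itself receptive and thin when regarded as the counter-strategy over $A$ --- and packaging it is exactly the job of the auxiliary objects $A_-$ and $C_+$ in the paper's proof. Without this lemma your induction step (and your ``every step was forced'' uniqueness argument, which depends on the same uniqueness of extension) does not go through; with it, your argument becomes a correct direct proof.
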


See Appendix \ref{app:syncsym}.
Intuitively, we play $\tilde{\sigma}$ and $\tilde{\tau}$
against each other. By \emph{$\sim$-receptivity} and \emph{extension}
they adjust their copy indices interactively until they reach an
agreement, \emph{i.e.} pairs of configurations matching on the nose.

In the situation at hand the securedness assumption is automatic by
Lemma \ref{lem:deadlock_free}, so that we get indeed $x^\sigma
\sym_\sigma y^\sigma$ and $x^\tau \sym_\tau y^\tau$ matching on the
nose, synchronizing to
\[
y^\tau \in y^\sigma \in \confp{\tau \odot \sigma}
\]
whose symmetry class yields $\y \in \swit_{\tau \odot \sigma}(\x_A,
\x_C)$ as needed.

Having given constructions in both directions, it might seem that we are
essentially done, with a version of \eqref{eq2} essentially following
the case without symmetry. But remember that above we started from
witnesses $\x^\sigma \in \swit_\sigma(\x_A, \x_B)$ and $\x^\tau \in
\swit_\tau(\x_B, \x_C)$, for which we first chose representatives
$x^\sigma \in \confp{\sigma}, x^\tau \in \confp{\tau}$ and a symmetry
$\theta : x^\sigma_B \sym_B x^\tau_B$. So one should not overlook the
proof obligation that the construction is invariant under the choice of
these representatives.  Unfortunately, the symmetry class $\y \in
\swit_{\tau \odot \sigma}(\x_A, \x_C)$ \emph{does} depend on the choice
of $x^\sigma, x^\tau$, and $\theta$  -- indeed, this was the error made
in \cite{DBLP:conf/lics/CastellanCPW18}. 

In fact, \eqref{eq2} does \emph{not} hold for this notion of witness.

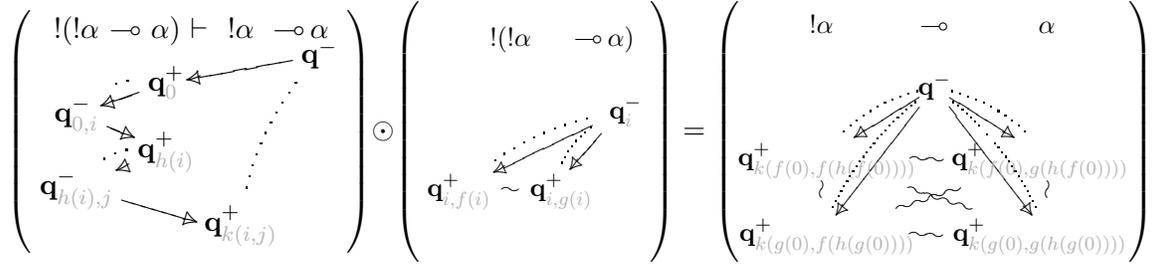
\begin{figure}
\[
\left(
\raisebox{38pt}{$
\xymatrix@R=-8pt@C=-8pt{
\oc (\oc \alpha &\lin& \alpha)& \vdash &\oc \alpha &\lin &\alpha\\
&&&&&&\qu^-
	\ar@{-|>}[dllll]\\
&&\qu^+_{\grey{0}}
	\ar@{-|>}[dll]\\
\qu^-_{\grey{0,i}}
	\ar@{.}@/^/[urr]
	\ar@{-|>}[drr]\\
&&\qu^+_{\grey{h(i)}}
	\ar@{-|>}[dll]\\
\qu^-_{\grey{h(i),j}}
	\ar@{.}@/^/[urr]
	\ar@{-|>}[drrr]\\
&&&&\qu^+_{\grey{k(i,j)}}
	\ar@{.}@/^/[uuuuurr]
}$}
\right)
\odot 
\left(
\raisebox{35pt}{$
\scalebox{.9}{$
\xymatrix@R=12pt@C=-7pt{
&\oc(\oc \alpha &&\lin &&\alpha)\\
&&&&&\,\,\qu^-_{\grey{i}}
	\ar@{-|>}[dlllll]
	\ar@{-|>}[dlll]\\
\qu^+_{\grey{i,f(i)}}
	\ar@{.}@/^/[urrrrr]
	\ar@{~}[rr]&&
\qu^+_{\grey{i,g(i)}}\!\!\!\!
	\ar@{.}@/^/[urrr]\\~
}$}$}
\right)
=
\left(
\raisebox{40pt}{$
\scalebox{.9}{$
\xymatrix@R=10pt@C=0pt{
\oc \alpha &&\!\!\lin\!\! &&\alpha\\
&&\!\!\qu^-\!\!
	\ar@{-|>}[dll]
	\ar@{-|>}[drr]
	\ar@{-|>}[ddll]
	\ar@{-|>}[ddrr]\\
\!\!\qu^+_{\grey{k(f(0), f(h(f(0))))}}\!\!\!\!\!
	\ar@{~}[rrr]
	\ar@{~}[drrrr]
	\ar@{~}[d]
	\ar@{.}@/^/[urr]&&&&
\!\!\!\!\!\qu^+_{\grey{k(f(0), g(h(f(0))))}}\!\!
	\ar@{~}[d]
	\ar@{.}@/_/[ull]\\
\!\!\qu^+_{\grey{k(g(0), f(h(g(0))))}}\!\!\!\!\!
	\ar@{~}[rrr]
	\ar@{~}[urrrr]
	\ar@{.}@/^/[uurr]&&&&
\!\!\!\!\!\qu^+_{\grey{k(g(0), g(h(g(0))))}}\!\!
	\ar@{.}@/_/[uull]
}$}$}
\right)
\]
\caption{Composition of $\sigma$ and $\tau$}
\label{fig:ex_composition}
\end{figure}
\begin{exa} Consider the arena $\alpha$ with one
event $\qu^-$, $\kappa_\alpha(\emptyset) = 1$ and
$\kappa_\alpha(\{\qu^-\}) = 0$, and
\[
\sigma : \oc (\oc \alpha \lin \alpha) \vdash \oc \alpha \lin \alpha\,,
\qquad
\qquad
\tau : \oc (\oc \alpha \lin \alpha)
\]
two strategies as pictured in Figure \ref{fig:ex_composition}. Their
assignment of copy indices uses functions
\[
f : \mathbb{N} \to \mathbb{N}\,,
\qquad
g : \mathbb{N} \to \mathbb{N}\,,
\qquad
h : \mathbb{N} \to \mathbb{N}\,,
\qquad
k : \mathbb{N}^2 \to \mathbb{N}\,,
\]
whose precise identity has no impact on the discussion. We are
interested in their composition, which unfolds as in Figure
\ref{fig:ex_composition}, yielding four pairwise conflicting,
non-symmetric positive moves. 
Enriching $\nPCF$ with a type $\alpha$ with no
constant, the substitution
\[
\left(\lambda
x^\alpha.\,f\,(f\,x)\right)
\,[(\lambda y^\alpha.\,\ite{\coin}{y}{y})/f]
\]
gives a perfect syntactic counterpart to the composition in Figure
\ref{fig:ex_composition}. Because there are two calls to the
non-deterministic choice, this reduces to $\lambda x^\alpha.\,x$, but in
four different ways. Observe that in the copy indices of each positive
move of $\tau \odot \sigma$, one can read back the way the two
non-deterministic choices were resolved: the upper row corresponds to
the first call yielding $\qu^+_{\grey{i,f(i)}}$, the leftmost column to the
second call yielding $\qu^+_{\grey{i,f(i)}}$, and so on.

From the composition, $\tau \odot \sigma$ has four
witnesses for
$\left(\raisebox{10pt}{$
\scalebox{.8}{$
\xymatrix@R=5pt@C=0pt{
\qu^-	\ar@{.}[d]\\
\qu^+
}$}$}\right) \in \wconf{\oc \alpha \lin \alpha}$, while
\[
\sharp \swit_\sigma\left(\raisebox{10pt}{$
\scalebox{.8}{$
\xymatrix@R=5pt@C=0pt{
\qu^-   \ar@{.}[d]&\qu^-\ar@{.}[d]\\
\qu^+&\qu^+
}$}$}, 
\raisebox{10pt}{$
\scalebox{.8}{$
\xymatrix@R=5pt@C=0pt{
\qu^-   \ar@{.}[d]\\
\qu^+
}$}$}
\right) = 1\,,
\qquad
\qquad
\sharp\swit_\tau\left(\raisebox{10pt}{$
\scalebox{.8}{$
\xymatrix@R=5pt@C=0pt{
\qu^-   \ar@{.}[d]&\qu^-\ar@{.}[d]\\
\qu^+&\qu^+
}$}$}\right) = 3\,,
\]
as $\sigma$ and $\tau$ cannot synchronize on any other symmetry class,
this contradicts \eqref{eq2}.

Indeed, up to symmetry, there are exactly \emph{three} configurations of
$\tau$ corresponding to two calls: \emph{(1)} both choices may be
resolved with $\qu_{\grey{i,f(i)}}$; \emph{(2)} both choices may be
resolved with $\qu_{\grey{i,g(i)}}$; and \emph{(3)} we may have one of
each. The point is that there is a symmetry
\begin{eqnarray}
\left(\raisebox{10pt}{$
\scalebox{.8}{$
\xymatrix@R=5pt@C=0pt{
\qu^-_{\grey{i}}   \ar@{.}[d]&\qu^-_{\grey{j}}\ar@{.}[d]\\
\qu^+_{\grey{i,f(i)}}&\qu^+_{\grey{j, g(j)}}
}$}$}\right)
&\sym_{\oc (\oc \alpha \lin \alpha)}&
\left(\raisebox{10pt}{$
\scalebox{.8}{$
\xymatrix@R=5pt@C=0pt{
\qu^-_{\grey{j}}   \ar@{.}[d]&\qu^-_{\grey{i}}\ar@{.}[d]\\
\qu^+_{\grey{j,f(j)}}&\qu^+_{\grey{i,g(i)}}
}$}$}\right)\label{eq4}
\end{eqnarray}
swapping the two calls, even though they do give rise to separate
configurations in $\tau \odot \sigma$.
\end{exa}

So, symmetry classes of $+$-covered configurations are not the right
witness: they count only once symmetry classes that should intuitively
weight more, as they admit endo-symmetries that may affect the result.
Indeed one can correct this accounting by appropriately weighting groups
of endo-symmetries of symmetry classes -- see Appendix
\ref{app:weights}. This suggests links with generalized species of
structures \cite{fiore2008cartesian}; 
but our original question regarding what \emph{concrete objects} the
weighted relational model \emph{counts} remains open.

\subsubsection{Concrete witnesses} We now provide an alternative, more
concrete notion of witness.

Intuitively, we must find a refinement of symmetry still letting us
consider strategies up to their specific choice of copy indices, but
nevertheless keeping the two configurations of \eqref{eq4} separate.
Concretely, one may observe that the \emph{swap} of \eqref{eq4} is only
possible if Opponent changes their copy indices, exchanging $\grey{i}$
and $\grey{j}$. Here we use a fundamental property of our setting:
being \emph{thin concurrent games} (Definition \ref{def:tcg}), arenas
have sets of \emph{positive} and \emph{negative} symmetries
$\ptilde{A}$ and $\ntilde{A}$; so we may consider configurations of
strategies up to \emph{positive} symmetry only. This has a very strong
consequence:

\begin{lem}\label{lem:pos_symm}
Consider $A$ a game, $\sigma : A$ a strategy on $A$, and $\theta \in
\tilde{\sigma}$.

If $\pr_\sigma(\theta) \in \ptilde{A}$, then, $x = y$ and $\theta =
\id_x$.
\end{lem}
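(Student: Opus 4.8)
The plan is to induct on the cardinality of $x = \dom(\theta)$, where I write $\theta : x \sym_\sigma y$. The base case $|x| = 0$ is trivial: the only symmetry on $\emptyset$ is the empty bijection. For the inductive step, I would choose an event $s_1$ maximal in $x$ for $\leq_\sigma$, put $x' = x \setminus \{s_1\}$ (a configuration, as $s_1$ is maximal), and use the \emph{restriction} axiom for the isomorphism family $\tilde{\sigma}$ to extract $\theta' : x' \sym_\sigma y'$ with $\theta' \subseteq \theta$, where necessarily $y' = y \setminus \{\theta(s_1)\}$. The first thing to verify is that $\pr_\sigma(\theta')$ is again \emph{positive}, i.e.\ lies in $\ptilde{A}$: since $\ptilde{A}$ is itself an isomorphism family, $\pr_\sigma(\theta) \in \ptilde{A}$ admits a restriction to domain $\pr_\sigma(x')$ inside $\ptilde{A}$; but restrictions inside $\tilde{A}$ are unique, so this restriction coincides with $\pr_\sigma(\theta')$, hence $\pr_\sigma(\theta') \in \ptilde{A}$. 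The induction hypothesis then gives $\theta' = \id_{x'}$, so $y' = x'$ and it remains only to prove $\theta(s_1) = s_1$.

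Here I would split on the polarity of $s_1$ (which agrees with that of $\theta(s_1)$, since symmetries preserve polarities). If $s_1$ is \textbf{positive}, then $s_1$ and $\theta(s_1)$ are two positive extensions of $x'$, and both $\id_{x'} \cup \{(s_1, s_1)\} = \id_x$ and $\id_{x'} \cup \{(s_1, \theta(s_1))\} = \theta$ belong to $\tilde{\sigma}$; by the uniqueness clause of the \emph{thin} condition, applied to $\id_{x'} : x' \sym_\sigma x'$ and the extension $x' \vdash_\sigma s_1^+$, these coincide, so $\theta(s_1) = s_1$. If $s_1$ is \textbf{negative}, then $\pr_\sigma(\theta) = \id_{\pr_\sigma(x')} \cup \{(\pr_\sigma(s_1), \pr_\sigma(\theta(s_1)))\}$ is obtained from the identity $\id_{\pr_\sigma(x')} \in \ntilde{A}$ by adding a single pair of \emph{negative} events, so \emph{$-$-receptivity} of $A$ forces $\pr_\sigma(\theta) \in \ntilde{A}$; together with $\pr_\sigma(\theta) \in \ptilde{A}$, \emph{orthogonality} of $A$ yields $\pr_\sigma(\theta) = \id_{\pr_\sigma(x)}$, hence $\pr_\sigma(\theta(s_1)) = \pr_\sigma(s_1)$. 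Finally, $s_1$ and $\theta(s_1)$ are two negative extensions of $x' = y'$ with the same image in $A$, so the uniqueness part of \emph{receptivity} of $\sigma$ gives $\theta(s_1) = s_1$. In all cases $\theta = \id_x$ and $x = y$, closing the induction.

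The only genuinely substantive point is the negative case: in general a symmetry $\theta \in \tilde{\sigma}$ \emph{can} reindex negative events, and it is precisely the hypothesis that $\pr_\sigma(\theta)$ is positive, channelled through the $-$-receptivity-and-orthogonality argument above, that rules this out — this is also the conceptual reason why \emph{positive} (rather than arbitrary) symmetry is the right notion for defining witnesses in the sequel. The positive case, by contrast, is an immediate application of \emph{thin}. The remaining care is bookkeeping around isomorphism families (uniqueness of restrictions, preservation of symmetry and polarity by $\pr_\sigma$), which is routine.
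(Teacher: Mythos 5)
Your proof is correct, and it supplies in full the argument that the paper only delegates to Lemma 3.28 of \cite{cg2}: the induction peeling off a maximal event, with \emph{thin} handling positive extensions and the combination of $-$-receptivity, \emph{orthogonality} and \emph{receptivity} handling negative ones, is exactly the standard route. All the bookkeeping you flag as routine (uniqueness of restrictions in $\tilde{A}$, preservation of polarity, $x'$ and $y'$ being configurations) does indeed go through as stated, so there is nothing to add.
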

\begin{proof}
A quite direct consequence of \emph{thin}, which
prevents Player from imposing symmetries not prompted by a prior
Opponent exchange. See Lemma 3.28 in \cite{cg2}.
\end{proof}

In other words, sub-groupoids of $\tilde{\sigma}$ mapping to positive
symmetries of the game are all reduced to identities. Consider
now fixed, for any arena $A$, the choice of a representative $\rep{\x}_A
\in \x_A$ for any $\x_A \in \wconf{A}$. This invites the definition of
\textbf{positive witnesses}:
\begin{eqnarray}
\wit_\sigma^+(\x_A, \x_B) &=& \{x^\sigma \in \confp{\sigma} \mid
x^\sigma_A \sym_A^- \rep{\x}_A ~\&~x^\sigma_B \sym_B^+
\rep{\x}_B\}\,,\label{eq:defwitp}
\end{eqnarray}
for $\x_A \in \wconf{A}$ and $\x_B \in \wconf{B}$, which will indeed
turn out to be the right one. Notice the pleasant fact that this ranges
over actual configurations of $\sigma$ rather than symmetry
classes\footnote{An early sign that $\wit^+$ is better behaved is that
unlike $\swit$,
it does not depend on the choice of the symmetry for $\sigma$ -- recall
from Section A.1.2 in
\cite{cg2} that the symmetry is \emph{not} unique.}.

\subsection{Representability} We introduce our last technical
ingredient, \emph{representability}.

\subsubsection{Canonical representatives} The definition of $\wit^+$ in
\eqref{eq:defwitp} includes a significant subtlety: the dependency
on the choice of the representatives $\rep{\x}_A \in \x_A$, $\rep{\x}_B
\in \x_B$. Unfortunately, not only the set $\wit_\sigma^+(\x_A, \x_B)$
depends on the choice of $\rep{\x}_A$ and $\rep{\x}_B$, but even its
\emph{cardinality}:

\begin{figure}
\begin{minipage}{.24\linewidth}
\begin{eqnarray*}
\rep{\x}_A &=& 
\raisebox{10pt}{$
\scalebox{.8}{$
\xymatrix@R=5pt@C=5pt{
\qu^-_{\grey{0}}
	\ar@{.}[d]&
\qu^-_{\grey{1}}
	\ar@{.}[d]\\
\qu^+_{\grey{0, 0}}&
\qu^+_{\grey{1, 1}}
}$}$}\\
\rep{\x}_B &=& 
\raisebox{10pt}{$
\scalebox{.8}{$
\xymatrix@R=5pt@C=5pt{
\qu^-
        \ar@{.}[d]\\
\qu^+_{\grey{0}}
}$}$}
\end{eqnarray*}
\end{minipage}
\hfill
\begin{minipage}{.75\linewidth}
\[
\wit_\sigma^+(\x_A, \x_B) = 
\left\{
\raisebox{30pt}{$
\scalebox{.8}{$
\xymatrix@R=-8pt@C=-8pt{
\oc (\oc \alpha &\lin& \alpha)& \vdash &\oc \alpha &\lin &\alpha\\
&&&&&&\qu^-
	\ar@{-|>}[dllll]\\
&&\qu^+_{\grey{0}}
	\ar@{-|>}[dll]\\
\qu^-_{\grey{0,0}}
	\ar@{.}@/^/[urr]
	\ar@{-|>}[drr]\\
&&\qu^+_{\grey{h(0)}}
	\ar@{-|>}[dll]\\
\qu^-_{\grey{h(0),1}}
	\ar@{.}@/^/[urr]
	\ar@{-|>}[drrr]\\
&&&&\qu^+_{\grey{k(0,1)}}
	\ar@{.}@/^/[uuuuurr]
}$}$}
,
\raisebox{30pt}{$
\scalebox{.8}{$
\xymatrix@R=-8pt@C=-8pt{
\oc (\oc \alpha &\lin& \alpha)& \vdash &\oc \alpha &\lin &\alpha\\
&&&&&&\qu^-
	\ar@{-|>}[dllll]\\
&&\qu^+_{\grey{0}}
	\ar@{-|>}[dll]\\
\qu^-_{\grey{0,1}}
	\ar@{.}@/^/[urr]
	\ar@{-|>}[drr]\\
&&\qu^+_{\grey{h(1)}}
	\ar@{-|>}[dll]\\
\qu^-_{\grey{h(1),0}}
	\ar@{.}@/^/[urr]
	\ar@{-|>}[drrr]\\
&&&&\qu^+_{\grey{k(1,0)}}
	\ar@{.}@/^/[uuuuurr]
}$}$}
\right\}
\]
\end{minipage}

\caption{Witnesses for non-canonical representatives}
\label{fig:wit_noncan_rep}
\end{figure}

\begin{figure}
\begin{minipage}{.24\linewidth}
\begin{eqnarray*}
\rep{\x}_A &=& 
\raisebox{10pt}{$
\scalebox{.8}{$
\xymatrix@R=5pt@C=5pt{
\qu^-_{\grey{0}}
	\ar@{.}[d]&
\qu^-_{\grey{1}}
	\ar@{.}[d]\\
\qu^+_{\grey{0, 0}}&
\qu^+_{\grey{1, 0}}
}$}$}\\
\rep{\x}_B &=& 
\raisebox{10pt}{$
\scalebox{.8}{$
\xymatrix@R=5pt@C=5pt{
\qu^-
        \ar@{.}[d]\\
\qu^+_{\grey{0}}
}$}$}
\end{eqnarray*}
\end{minipage}
\hfill
\begin{minipage}{.75\linewidth}
\[
\wit_\sigma^+(\x_A, \x_B) = 
\left\{
\raisebox{30pt}{$
\scalebox{.8}{$
\xymatrix@R=-8pt@C=-8pt{
\oc (\oc \alpha &\lin& \alpha)& \vdash &\oc \alpha &\lin &\alpha\\
&&&&&&\qu^-
	\ar@{-|>}[dllll]\\
&&\qu^+_{\grey{0}}
	\ar@{-|>}[dll]\\
\qu^-_{\grey{0,0}}
	\ar@{.}@/^/[urr]
	\ar@{-|>}[drr]\\
&&\qu^+_{\grey{h(0)}}
	\ar@{-|>}[dll]\\
\qu^-_{\grey{h(0),0}}
	\ar@{.}@/^/[urr]
	\ar@{-|>}[drrr]\\
&&&&\qu^+_{\grey{k(0,0)}}
	\ar@{.}@/^/[uuuuurr]
}$}$}
\right\}
\]
\end{minipage}

\caption{Witnesses for canonical representatives}
\label{fig:wit_can_rep}
\end{figure}
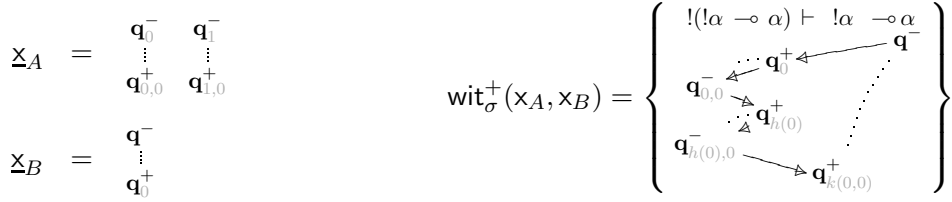

\begin{exa}
Consider $A = \oc (\oc \alpha \lin \alpha)$, $B = \oc \alpha \lin
\alpha$, $\x_A =
\left(
\raisebox{10pt}{$\scalebox{.8}{$\xymatrix@R=5pt@C=5pt{\qu^-\ar@{.}[d]&\qu^-\ar@{.}[d]\\\qu^+&\qu^+}$}$}\right)$
and $\x_B = \left(
\raisebox{10pt}{$\scalebox{.8}{$\xymatrix@R=5pt@C=5pt{\qu^-\ar@{.}[d]\\\qu^+}$}$}\right)$.

We show in Figures \ref{fig:wit_noncan_rep} and \ref{fig:wit_can_rep}
the sets $\wit^+_\sigma(\x_A, \x_B)$ for $\sigma : A$ the left hand side
strategy in Figure \ref{fig:ex_composition}, with the choice of
representatives as shown. Up to positive symmetry, Player is free to
associate $\qu^+_{\grey{0}}$ to either minimal event of $\rep{\x}_A$.
But as the symmetry is positive, the copy index of the subsequent
Opponent move is forced by $\rep{\x}_A$ and this association. For Figure
\ref{fig:wit_can_rep} the two choices make no difference as the bottom
events of $\rep{\x}_A$ have the same index $\grey{0}$. In contrast,
in Figure \ref{fig:wit_noncan_rep} these indices differ, and so yield
distinct concrete witnesses.   
\end{exa}

To explain this mismatch, it is helpful to explicitly factor in the
positive symmetries by
\begin{eqnarray}
\pswit_\sigma(\x_A, \x_B) &=& \{(\theta_A, x^\sigma, \theta_B) \mid 
x^\sigma \in \confp{\sigma},~\theta_A : \rep{\x}_A {\sym_A^-}
x^\sigma_A,~\theta_B : x^\sigma_B \sym_B^+
\rep{\x}_B\}\label{eq:witsym}
\end{eqnarray}
the set of \textbf{$\sim^+$-witnesses}, allowing us to prove the following property:

\begin{restatable}{prop}{cardinv}
Consider $A, B$ arenas, $\sigma : A \vdash B$, and $\x_A \in
\wconf{A}, \x_B \in \wconf{B}$. 

Then, the cardinality of $\pswit_\sigma^+(\x_A, \x_B)$ does not depend
on $\rep{\x}_A \in \x_A$ and $\rep{\x}_B \in \x_B$.
\end{restatable}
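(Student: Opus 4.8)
The plan is to repackage the data so that the positive symmetries on $A$ and $B$ are fused into a single positive symmetry of the game $A^\perp\parr B$. Writing $\pr_\sigma(x^\sigma)=x^\sigma_A\parallel x^\sigma_B$ as usual, the assignment $(\theta_A,x^\sigma,\theta_B)\mapsto(x^\sigma,\,\theta_A^{-1}\parallel\theta_B)$ is a bijection from $\pswit_\sigma(\x_A,\x_B)$ onto
\[
W_\sigma(\rep{\x}) \;=\; \{(x^\sigma,\Psi)\mid x^\sigma\in\confp{\sigma},\ \Psi:\pr_\sigma(x^\sigma)\sym_{A^\perp\parr B}^+\rep{\x}\},\qquad \rep{\x}:=\rep{\x}_A\parallel\rep{\x}_B,
\]
because $\theta_A\in\ntilde{A}$ iff $\theta_A^{-1}\in\ptilde{A^\perp}$, and $\ptilde{A^\perp\parr B}$ consists exactly of the $\vartheta_A\parallel\vartheta_B$ with $\vartheta_A\in\ptilde{A^\perp}$ and $\vartheta_B\in\ptilde{B}$. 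Moreover every configuration in the class $\x_A\parallel\x_B$ of $A^\perp\parr B$ decomposes as $\rep{\x}_A\parallel\rep{\x}_B$ with $\rep{\x}_A\in\x_A$ and $\rep{\x}_B\in\x_B$, so choosing the two representatives is the same as choosing a representative of $\x:=\x_A\parallel\x_B$. Writing $G=A^\perp\parr B$, it therefore suffices to prove that $\sharp W_\sigma(\rep{\x})$ does not depend on $\rep{\x}\in\x$.

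Fix $\rep{\x},\rep{\x}'\in\x$ and a symmetry $\beta:\rep{\x}\sym_G\rep{\x}'$. I would use that every symmetry of $G$ factors, uniquely, as $\Psi'\circ\chi$ with $\chi$ negative and $\Psi'$ positive: uniqueness is immediate from \emph{orthogonality}, since $\Psi'_1\circ\chi_1=\Psi'_2\circ\chi_2$ forces $\chi_1\circ\chi_2^{-1}=(\Psi'_1)^{-1}\circ\Psi'_2\in\ntilde{G}\cap\ptilde{G}$, hence an identity; existence holds for the games here, which are built from the ground games by $\oc$, $\lin$, $\with$, $\parr$ and duality. Given $(x^\sigma,\Psi)\in W_\sigma(\rep{\x})$, factor $\beta\circ\Psi:\pr_\sigma(x^\sigma)\sym_G\rep{\x}'$ as $\Psi'\circ\chi$ with $\chi:\pr_\sigma(x^\sigma)\sym_G^-z$ and $\Psi':z\sym_G^+\rep{\x}'$, lift $\chi$ through $\sigma$ to obtain $y^\sigma\in\confp{\sigma}$ with $\pr_\sigma(y^\sigma)=z$ and $\tilde\chi:x^\sigma\sym_\sigma y^\sigma$ with $\pr_\sigma(\tilde\chi)=\chi$ (the lemma below), and set the image of $(x^\sigma,\Psi)$ to be $(y^\sigma,\Psi')$.

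This is well-defined: the factorisation is unique as just noted, and the lift $\tilde\chi$ — hence also $y^\sigma$ — is unique by Lemma~\ref{lem:pos_symm}, since two lifts $\tilde\chi_i:x^\sigma\sym_\sigma y^\sigma_i$ would differ by $\tilde\chi_2\circ\tilde\chi_1^{-1}$, which projects to $\id_z\in\ptilde{G}$ and is therefore the identity. The inverse map sends $(y^\sigma,\Psi')\in W_\sigma(\rep{\x}')$ to the pair obtained by factoring $\beta^{-1}\circ\Psi'$ and lifting its negative part; since $\beta^{-1}\circ\Psi'=\Psi\circ\chi^{-1}$ is already of the form (positive) $\circ$ (negative), uniqueness of factorisations and of lifts shows this recovers $(x^\sigma,\Psi)$, so the two maps are mutually inverse and $\sharp W_\sigma(\rep{\x})=\sharp W_\sigma(\rep{\x}')$, as required.

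The main obstacle is the lifting lemma: for $x^\sigma\in\confp{\sigma}$ and a \emph{negative} symmetry $\chi:\pr_\sigma(x^\sigma)\sym_G^-z$, there is $y^\sigma\in\confp{\sigma}$ and $\tilde\chi:x^\sigma\sym_\sigma y^\sigma$ over $\chi$. I would build $\tilde\chi$ incrementally along a covering chain $\emptyset=x_0\cov x_1\cov\cdots\cov x_n=x^\sigma$: at a negative event, extend using \emph{$\sim$-receptivity} of $\sigma$, which forces the new game-component to agree with $\chi$; at a positive event, extend by \emph{thinness}, the delicate point being to check that the uniquely forced extension still projects correctly. This is exactly where negativity of $\chi$ and the forest-with-alternation shape of $G$ are used: the $\chi$-image of a positive event is determined by the $\chi$-images of the negative events below it, which have already been matched. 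This is the same mechanism underlying Proposition~\ref{prop:sync_sym} and the congruence property (Proposition~\ref{prop:congruence}); I would either invoke it directly or replay that argument in this packaging. It is the genuine technical content of the statement, everything else being bookkeeping with orthogonality and Lemma~\ref{lem:pos_symm}.
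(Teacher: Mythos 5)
Your reduction to positive symmetries of the single game $G=A^\perp\parr B$, the factorisation of $\beta\circ\Psi$ into a negative part followed by a positive part (Lemma~\ref{lem:factor}), and the transport-by-lifting architecture are all sound, and this is essentially the shape of the paper's own argument in Appendix~\ref{app:invwit}. The genuine gap is your lifting lemma, which is false as stated: for a negative symmetry $\chi:\pr_\sigma(x^\sigma)\sym_G^- z$ there is in general \emph{no} $y^\sigma\in\confp{\sigma}$ with $\pr_\sigma(y^\sigma)=z$. The step of your incremental argument that fails is the positive one: \emph{thin} does force a unique extension of the lifted configuration in $\sigma$, but the display of that extension is whatever the \emph{strategy} chooses to play after the reindexed Opponent moves, and nothing constrains this to agree with $\chi$'s image of the original positive event (which, for a negative $\chi$, keeps Player's copy indices unchanged). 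Concretely, take the left-hand strategy $\sigma$ of Figure~\ref{fig:ex_composition} and the configuration displaying to $\{\qu^-,\qu^+_{0},\qu^-_{0,i},\qu^+_{h(i)}\}$. The negative symmetry $\chi$ sending $\qu^-_{0,i}$ to $\qu^-_{0,i'}$ must fix $\qu^+_{h(i)}$ (the outer left indices are Player-reindexable, hence fixed by every negative symmetry of $G$), yet the unique thin extension of $\sigma$ above $\qu^-_{0,i'}$ plays $\qu^+_{h(i')}$; so when $h(i)\neq h(i')$ no configuration of $\sigma$ displays to $z=\chi(\pr_\sigma(x^\sigma))$.

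The correct statement is Lemma~\ref{lem:b4}: the lift exists only up to a positive symmetry on the display, i.e.\ there are unique $\varphi:x^\sigma\sym_\sigma y^\sigma$ and $\theta^+:z\sym_G^+\pr_\sigma(y^\sigma)$ with $\pr_\sigma(\varphi)=\theta^+\circ\chi$. Your construction is then repaired locally by absorbing the residue into the positive component of the witness, sending $(x^\sigma,\Psi)$ to $(y^\sigma,\Psi'\circ(\theta^+)^{-1})$; uniqueness still follows from Lemma~\ref{lem:pos_symm}, since two such lifts differ by a symmetry of $\sigma$ displaying to a positive symmetry of $G$, and the roundtrip argument for the inverse map is unaffected. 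With that substitution your bijection goes through and coincides with the one constructed in the paper's appendix.
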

\begin{proof}
See Appendix \ref{app:invwit}.
\end{proof}

For instance, in Figure \ref{fig:wit_noncan_rep}, each positive witness
admits exactly \emph{one} pair of symmetries making it a
$\sim^+$-witness, whereas for Figure \ref{fig:wit_can_rep}, the unique
witness yields \emph{two} $\sim^+$-witnesses. 

So which of Figure \ref{fig:wit_noncan_rep} and \ref{fig:wit_can_rep} is
right, if any? Letting the weighted
relational model be the judge, Figure \ref{fig:wit_can_rep} is better:
indeed $\sigma$ is the strategy of a pure $\lambda$-term to which the
weighted relational model associates only weights $0$ and $1$. Tracking
down the pathological behaviour in Figure \ref{fig:wit_noncan_rep}, the
crux of the issue is the inability, in the representative $\rep{\x}_A$
from Figure \ref{fig:wit_noncan_rep}, to exchange the minimal
moves while staying in $\rep{\x}_A$. The only negative symmetry
\[
\raisebox{10pt}{$
\scalebox{.8}{$
\xymatrix@R=5pt@C=5pt{
\qu^-_{\grey{0}}
        \ar@{.}[d]&
\qu^-_{\grey{1}}
        \ar@{.}[d]\\
\qu^+_{\grey{0, 0}}&
\qu^+_{\grey{1, 1}}
}$}$}
\sym_A^-
\raisebox{10pt}{$
\scalebox{.8}{$
\xymatrix@R=5pt@C=5pt{
\qu^-_{\grey{1}}
        \ar@{.}[d]&
\qu^-_{\grey{0}}
        \ar@{.}[d]\\
\qu^+_{\grey{1, 0}}&
\qu^+_{\grey{0, 1}}
}$}$}
\]
(signified here by moves having the same position in
the diagram) 
with domain $\rep{\x}_A$ exchanging $\qu^-_{\grey{0}}$ and
$\qu^-_{\grey{1}}$ 
has codomain distinct from $\rep{\x}_A$, intuitively causing the extra
witness. In contrast, the analogous swap is an
endosymmetry of the representative $\rep{\x}_A$ for Figure
\ref{fig:wit_can_rep}. 

The next definition aims to capture the
representatives for which $\wit^+$ is well-behaved:

\begin{defi}\label{def:canonical}
Consider $A$ a game, and $x \in \conf{A}$.

We say that $x$ is \textbf{canonical} iff any $\theta : x \sym_A x$
factors uniquely as
\[ 
x \stackrel{\theta^-}{\sym_A^-} x \stackrel{\theta^+}{\sym_A^+} x\,,
\]
with in particular $x$ in the middle.
\end{defi}

This definition extends a basic property of thin concurrent games:

\begin{lem}\label{lem:factor}
Consider $A$ a thin concurrent game, and $\theta : x \sym_A y$ any
symmetry.

Then, there exist unique $z\in \conf{A}$, $\theta^- : x \sym_A z$ and
$\theta^+ : z \sym_A y$ such that $\theta = \theta^+ \circ \theta^-$.
\end{lem}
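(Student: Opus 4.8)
The plan is to prove the factorization with $\theta^-$ \emph{negative} (i.e. $\theta^- \in \ntilde{A}$) and $\theta^+$ \emph{positive} ($\theta^+ \in \ptilde{A}$); this is the only reading that makes ``unique'' meaningful, since otherwise $z = x$ or $z = y$ would both work. \emph{Uniqueness} is then the easy half, relying only on \emph{orthogonality} (Definition \ref{def:tcg}) and on the fact that $\ntilde{A}$ and $\ptilde{A}$ are isomorphism families, hence sub-groupoids of $\tilde{A}$ (closed under composition and inverse): given two factorizations $(z_1, \theta_1^-, \theta_1^+)$ and $(z_2, \theta_2^-, \theta_2^+)$ of $\theta$, the symmetry $\varphi := (\theta_2^+)^{-1}\circ\theta_1^+ = \theta_2^-\circ(\theta_1^-)^{-1} : z_1 \sym_A z_2$ lies in $\ptilde{A}\cap\ntilde{A}$, so $\varphi = \id_{z_1}$, whence $z_1 = z_2$ and $\theta_1^\pm = \theta_2^\pm$. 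It therefore remains to construct one factorization.

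\emph{Existence} I would prove by induction on $|x|$ (equivalently $|y|$, as symmetries are order-isos). The case $x = \emptyset$ is trivial. Otherwise pick a maximal $a \in x$; then $x' := x\setminus\{a\} \in \conf{A}$, and setting $b := \theta(a)$ and $y' := y\setminus\{b\}$ (with $b$ maximal in $y$), the \emph{restriction} axiom (Definition \ref{def:isofam}) gives $\theta' := \theta|_{x'} : x'\sym_A y'$, to which the induction hypothesis applies, yielding $z' \in \conf{A}$ with $\theta'^- : x'\sym_A^- z'$, $\theta'^+ : z'\sym_A^+ y'$ and $\theta' = \theta'^+\circ\theta'^-$. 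We now re-attach $a$, treating the two polarity cases (note $\pol_A(a) = \pol_A(b)$) \emph{asymmetrically}. If $\pol_A(a) = -$: since $\ptilde{A}$ is an isomorphism family, \emph{extension} applied to $(\theta'^+)^{-1}\in\ptilde{A}$ (whose domain is $y'$) along $y'\subseteq y$ produces $\chi \in \ptilde{A}$ extending $(\theta'^+)^{-1}$ with $\dom(\chi) = y$ and $\cod(\chi) = z := z'\cup\{c\}$, $c := \chi(b)$ necessarily negative; put $\theta^+ := \chi^{-1} \in \ptilde{A}$ and $\theta^- := (\theta^+)^{-1}\circ\theta : x\sym_A z$. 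A direct computation using $\theta' = \theta'^+\circ\theta'^-$ shows $\theta^-|_{x'} = \theta'^-$ and $\theta^-(a) = c$, so $\theta'^-\subseteq^-\theta^-$ and \emph{$-$-receptivity} gives $\theta^- \in \ntilde{A}$. If $\pol_A(a) = +$, dually: extend $\theta'^-\in\ntilde{A}$ along $x'\subseteq x$ (using that $\ntilde{A}$ is an isomorphism family) to $\theta^- : x\sym_A^- z$ with $z = z'\cup\{c\}$ and $c = \theta^-(a)$ positive, and set $\theta^+ := \theta\circ(\theta^-)^{-1}$; one checks $\theta^+|_{z'} = \theta'^+$ and $\theta^+(c) = b$, so $\theta'^+\subseteq^+\theta^+$ and \emph{$+$-receptivity} gives $\theta^+\in\ptilde{A}$. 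In both cases $\theta = \theta^+\circ\theta^-$ holds by construction, completing the induction.

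The main obstacle is precisely this asymmetric bookkeeping: the extension axiom on its own only keeps us inside the sub-family being extended, so one must \emph{directly} extend whichever of $\theta'^+, \theta'^-$ has the same polarity as the new event and then \emph{reconstruct} the other component, handing it to $\mp$-receptivity to certify its membership in $\ntilde{A}$ resp. $\ptilde{A}$. A secondary point to spell out — but not a difficulty — is that the event $c$ delivered by \emph{extension} is not canonical and need not be unique; this is harmless, since the uniqueness argument above already pins down $z$ (and hence $\theta^\pm$) independently of such choices. Finally I would note in passing that $x'$ and $y'$ are configurations and $b$ is maximal in $y$, all immediate from $\theta$ being an order-iso.
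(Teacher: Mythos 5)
Your proof is correct. The paper gives no argument of its own here, deferring to Lemma~3.19 of \cite{cg2}, and your induction on $|x|$ is essentially the standard proof from that reference: uniqueness via \emph{orthogonality} together with the groupoid closure of $\ptilde{A}$ and $\ntilde{A}$, and existence by extending, at each step, whichever of the two components matches the polarity of the newly attached event (legitimate because $\ptilde{A}$ and $\ntilde{A}$ are themselves isomorphism families) and then recovering the other component by composition and certifying it with $-$- or $+$-receptivity. You correctly identify the two points that need care — the asymmetric use of \emph{extension} versus \emph{receptivity}, and the fact that the non-canonicity of the extension is harmless because uniqueness is established independently — so the argument stands as a self-contained replacement for the citation.
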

\begin{proof}
See Lemma 3.19 in \cite{cg2}.
\end{proof}

Likewise, any $\theta : x \sym_A y$ factors uniquely as
$\theta^- \circ \theta^+$ for some $\theta^- \in \ntilde{A}$,
$\theta^+ \in \ptilde{A}$. 

The representative $\rep{\x}_A$ of Figure \ref{fig:wit_can_rep} is
canonical, while that of Figure \ref{fig:wit_noncan_rep} is not. In
our collapse, we will need to ensure that we compute witnesses only on
canonical representatives.

\subsubsection{Representability}\label{subsubsec:representability}
This asks two questions: \emph{(1)}
does there always exist a canonical representative for any symmetry
class?; and \emph{(2)} is the cardinality of witnesses now invariant
under the choice of a canonical representative?

For \emph{(1)}, for tcgs as in Definition \ref{def:tcg} or games as in
Definition \ref{def:game}, the answer is no -- see Appendix
\ref{app:nonrep}.
For arenas as in Definition \ref{def:arena}, we do not know. Likewise,
though it is not hard to prove \emph{(2)} for games arising from $\PCF$
types, we do not have an answer in general.  So we must instead ask
games to carry an explicit choice of canonical representatives:

\begin{defi}
A game $A$ is \textbf{representable} when it comes equipped with a
function
\[
\rep{(-)}_A : \wconf{A} \to \nconf{A}
\]
such that for all $\x_A \in \wconf{A}$, $\rep{\x}_A \in \x_A$ is
canonical.
\end{defi}

This provides the condition left missing in Definition \ref{def:game};
but leaves us with the proof obligation of constructing the
representation function for all game constructions. For basic games
$1, \top, \gbool, \gnat, \alpha$ which have trivial symmetry,
the representation function is obvious. 
\begin{figure}
\begin{minipage}{.3\linewidth}
\begin{eqnarray*}
\rep{\x}_{A^\perp} &=& \rep{\x}_A\\
\rep{\x \parallel \y}_{A\tensor B} &=& \rep{\x}_A \parallel \rep{\x}_B
\end{eqnarray*}
\end{minipage}
\hfill
\begin{minipage}{.3\linewidth}
\begin{eqnarray*}
\rep{(1, \x)}_{A\with B} &=& (1, \rep{\x}_{A})\\
\rep{(2, \x)}_{A\with B} &=& (2, \rep{\x}_{B})
\end{eqnarray*}
\end{minipage}
\hfill
\begin{minipage}{.3\linewidth}
\begin{eqnarray*}
\rep{\x \parallel \y}_{A\parr B} &=& \rep{\x}_A \parr \rep{\x}_B\\
\rep{\x \lin \y}_{A\lin B} &=& \rep{\x}_A \lin \rep{\y}_B
\end{eqnarray*}
\end{minipage}
\caption{Representation functions for symmetry-free game constructions}
\label{fig:rep_basic}
\end{figure}
For the game constructions only propagating symmetry (\emph{i.e.} dual,
tensor, par, with, and linear arrow), we set the representation
function as specified in Figure \ref{fig:rep_basic} -- it is direct that
it preserves canonicity. Most importantly, for the bang construction, we
set
\[
\rep{[\x^1, \dots, \x^n]}_{\oc A} = \parallel_{1\leq i \leq n}
\rep{\x}^i_{A} \in \nconf{\oc A}
\]
relying on Lemma \ref{lem:r_bang}, assuming chosen a
sequential writing $[x_1, \dots, x_n]$ for every multiset.

This definition indeed always yield a canonical representative:

\begin{lem}
Consider $A$ a representable $-$-game.

Then $\oc A$, equipped with the function above, is representable.
\end{lem}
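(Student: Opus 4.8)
The plan is to check the two conditions required of a representation function on $\oc A$: that $\rep{\x}_{\oc A}$ lies in $\x$ and in $\nconf{\oc A}$, and that it is canonical in the sense of Definition~\ref{def:canonical}. Fix $\x\in\wconf{\oc A}$; by Lemma~\ref{lem:r_bang} it corresponds via $s^\oc_A$ to a finite multiset $[\x^1,\dots,\x^n]\in\M_f(\wconfn{A})$ of non-empty complete symmetry classes of $A$, and with the chosen sequential writing $\rep{\x}_{\oc A}$ is the configuration $x:=\parallel_{1\le i\le n}\rep{\x}^i_A$, placing in each copy $i\in I:=\{1,\dots,n\}$ the canonical representative of $\x^i$ supplied by representability of $A$. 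Applying $s^\oc_A$ to $x$ returns $[\x^1,\dots,\x^n]$, so $x\in\x$; and $\kappa_{\oc A}(x)=\kappa_{\oc A}(\x)=0$ since $\x$ is complete, so $x\in\nconf{\oc A}$. The substance is canonicity.

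First I would describe the self-symmetries of $x$. By Definition~\ref{def:bang}, any $\theta:x\sym_{\oc A}x$ has the form $\theta(i,a)=(\pi(i),\theta_i(a))$ for a bijection $\pi:\mathbb{N}\sym\mathbb{N}$ and symmetries $\theta_i\in\tilde A$; since $\dom(\theta)=\cod(\theta)=x$ and $\pi$ is globally a bijection, $\pi$ restricts to a permutation of $I$ and $\theta_i:\rep{\x}^i_A\sym_A\rep{\x}^{\pi(i)}_A$ for each $i\in I$. The key observation -- the crux of the argument -- is that $\rep{\x}^i_A\sym_A\rep{\x}^{\pi(i)}_A$ forces $\x^i=\x^{\pi(i)}$, whence $\rep{\x}^i_A=\rep{\x}^{\pi(i)}_A$ because representatives are assigned per symmetry class; so each $\theta_i$ is actually a \emph{self}-symmetry of the canonical configuration $\rep{\x}^i_A$, and by canonicity it factors uniquely as $\theta_i=\theta_i^+\circ\theta_i^-$ with $\theta_i^-:\rep{\x}^i_A\sym_A\rep{\x}^i_A$ negative and $\theta_i^+:\rep{\x}^i_A\sym_A\rep{\x}^i_A$ positive.

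It then remains to reassemble these componentwise factorizations. For existence I would let $\theta^-$ be the symmetry of $\oc A$ with reindexing $\pi$ and components $(\theta_i^-)$ -- this is negative by Definition~\ref{def:bang} and a self-symmetry of $x$ since $\theta_i^-$ fixes $\rep{\x}^i_A=\rep{\x}^{\pi(i)}_A$ -- and $\theta^+$ the symmetry with identity reindexing and components $\rho_j=\theta_{\pi^{-1}(j)}^+$, which is positive and fixes $x$; a one-line computation gives $\theta^+\circ\theta^-=\theta$ with $x$ in the middle. For uniqueness I would take another factorization $\theta=\psi^+\circ\psi^-$ into a negative then positive self-symmetry of $x$, read off its reindexing and components as above, observe that they exhibit a negative-then-positive factorization (with $\rep{\x}^i_A$ in the middle) of each self-symmetry $\theta_i$ of the canonical $\rep{\x}^i_A$, and invoke canonicity of $\rep{\x}^i_A$ to conclude these components agree with those of $\theta^\pm$; since a symmetry of $\oc A$ on domain $x$ is determined by its action on $x$, this gives $\psi^\pm=\theta^\pm$, so $x$ is canonical and $\oc A$ is representable.

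The main obstacle is the crux observation: that a self-symmetry of $\parallel_{i\in I}\rep{\x}^i_A$ can only permute copies carrying \emph{equal} symmetry classes, which -- precisely because representatives are chosen per class -- collapses each component $\theta_i$ to a self-symmetry of a single canonical configuration, making componentwise application of canonicity possible. Everything else is routine bookkeeping with Definition~\ref{def:bang} and Lemma~\ref{lem:factor}: checking that the assembled negative and positive parts indeed land in $\ntilde{\oc A}$ and $\ptilde{\oc A}$, and that the reindexing $\pi$ is entirely absorbed into the negative part -- consistent with the fact that copy indices of $\oc A$ are reindexed only by negative symmetries.
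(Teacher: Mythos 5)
Your proposal is correct and follows essentially the same route as the paper: decompose an endo-symmetry of $\parallel_i \rep{\x}^i_A$ into a permutation $\pi$ and components $\theta_i$, observe that $\rep{\x}^i_A \sym_A \rep{\x}^{\pi(i)}_A$ forces $\x^i = \x^{\pi(i)}$ and hence $\rep{\x}^i_A = \rep{\x}^{\pi(i)}_A$ so each $\theta_i$ is an endo-symmetry of a canonical configuration, then reassemble the componentwise factorizations into $\theta^-(i,a) = (\pi(i), \theta_i^-(a))$ and $\theta^+(i,a) = (i, \theta^+_{\pi^{-1}(i)}(a))$, exactly as in the paper. Your extra explicit uniqueness argument is fine but not strictly needed, since uniqueness of the negative-then-positive factorization is already guaranteed by Lemma~\ref{lem:factor}; the only content of canonicity is that the middle configuration is $\rep{\x}_{\oc A}$ itself.
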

\begin{proof}
We must show that for all $\x = [\x^1, \dots, \x^n] \in \wconf{\oc A}$,
$\rep{\x}_{\oc A}$ is canonical. Consider
\[
\theta\,\,:\,\, \parallel_{1\leq i \leq n} \rep{\x}^i_{A}\,\,\sym_{\oc A}\,\,
\parallel_{1\leq i \leq n} \rep{\x}^i_{A}
\]
any symmetry. By definition, there is $\pi : \mathbb{N} \bij \mathbb{N}$
a permutation, and a family $(\theta_i)_{i \in \mathbb{N}} \in
\tilde{A}^{\mathbb{N}}$ s.t. for all $(i, a) \in \rep{\x}_{\oc A}$,
we have $\theta(i, a) = (\pi(i), \theta_i(a))$. But then, we have
$\theta_i : \rep{\x}^i_A \sym_A \rep{\x}^{\pi(i)}_A$
which means that $\x^i = \x^{\pi(i)}$, so that $\rep{\x}^i_A =
\rep{\x}^{\pi(i)}_A$ -- hence $\theta_i$ is an endo-symmetry.

Now, we use that $\rep{\x}^i_A$ is canonical, which entails that
$\theta_i$ factors as
\[
\rep{\x}^i_A\,\,\stackrel{\theta_i^-}{\sym_A^-}\,\,\rep{\x}^i_A
\,\,\stackrel{\theta_i^+}{\sym_A^+}\,\,\rep{\x}^i_A
\]
using which we may finally factor $\theta$ as $\theta^+ \circ \theta^-$
where $\theta^+(i, a) = (i, \theta^+_{\pi^{-1}(i)}(a))$ and $\theta^-(i, a) =
(\pi(i), \theta^-_i(a))$ -- it is direct by definition that $\theta^+ :
\rep{\x}_{\oc A} \sym_{\oc A}^+ \rep{\x}_{\oc A}$ and $\theta^- :
\rep{\x}_{\oc A} \sym_{\oc A}^- \rep{\x}_{\oc A}$.
\end{proof}

From now on, all games come equipped with a representation. 
As stated above, not every tcg admits a representation 
(see Appendix \ref{app:nonrep} for a counter-example). However,
non-representable games seem to lie outside of the interpretation of any
reasonable type.  

\subsection{Preservation of Composition}\label{subsec:pres_comp_main}
For $A, B$ arenas and $\sigma :
A \vdash B$, we may finally set:
\begin{eqnarray}
(\coll(\sigma))_{\x_A, \x_B} &=& \sharp \wit_\sigma^+(\x_A, \x_B)\,,
\label{eq:coll}
\end{eqnarray}
for any $\x_A \in \wconf{A}$, $\x_B \in \wconf{B}$, and where
$\wit_\sigma^+(\x_A, \x_B)$ is defined as in \eqref{eq:defwitp} using
the canonical representatives. 
Our aim is to prove the following equality:
\[
(\coll(\tau \odot \sigma))_{\x_A, \x_C} = \sum_{\x_B \in \wconf{B}}
(\coll(\sigma))_{\x_A, \x_B} \times (\coll(\tau))_{\x_B, \x_C}\,,
\]
and the natural route seems to be by setting up a bijection
\begin{eqnarray}
\wit_{\tau \odot \sigma}^+(\x_A, \x_C) &\bij& \sum_{\x_B \in \wconf{B}}
\wit_\sigma^+(\x_A, \x_B) \times \wit_\tau^+(\x_B, \x_C)\,.\label{eq6}
\end{eqnarray}

However, while it must hold, there does not seem to be any simple way of
constructing this bijection explicitly. We use an
indirect route, considering witnesses with symmetry.

\subsubsection{Interaction witnesses with symmetry} In \eqref{eq6} above,
going from right to left is problematic as we get triples $(\x_B,
x^\sigma, x^\tau)$ where, in general, there is no reason to have $x^\sigma_B =
x^\tau_B$. We do have $x^\sigma_B \sym_B^+ \rep{\x}_B$ and
$\rep{\x}_B \sym_B^- x^\tau_B$, but with no specified symmetries.

So to approach \eqref{eq6}, we shall start by studying synchronizable
pairs of 
\[
(\theta_A^-, x^\sigma, \theta_B^+) \in \pswit_\sigma(\x_A, \x_B)\,,
\qquad
(\Omega_B^-, x^\tau, \Omega_C^+) \in \pswit_\tau(\x_B, \x_C)\,,
\]
witnesses with symmetry as in \eqref{eq:witsym}. So we
have $x^\sigma \in \confp{\sigma}$, $x^\tau \in \confp{\tau}$ and
symmetries
\[
\xymatrix{
\rep{\x}_A&
x^\sigma_A
	\ar[l]_{\theta_A^-}&
x^\sigma_B
	\ar[r]^{\theta_B^+}&
\rep{\x}_B&
x^\tau_B\ar[l]_{\Omega_B^-}&
x^\tau_C\ar[r]^{\Omega_C^+}&
\rep{\x}_C
}
\]
allowing us to synchronize $x^\sigma$ and $x^\tau$ using Proposition
\ref{prop:sync_sym}.

We shall compare those with witnesses in the composition, starting by
defining:

\begin{defi}
Consider $A, B$ and $C$ arenas; $\sigma \in \Strat(A,B)$ and $\tau \in
\Strat(B, C)$ strategies; and $\x_A \in \wconf{A}$, $\x_B \in \wconf{B}$
and $\x_C \in \wconf{C}$.

The \textbf{interaction witnesses} on $\x_A, \x_B, \x_C$ is the set
of all $x^\tau \odot x^\sigma \in \confp{\tau \odot \sigma}$ s.t.
\[
x^\sigma_A \sym_A^- \rep{\x}_A\,,
\qquad
\qquad
x^\sigma_B = x^\tau_B \in \x_B\,,
\qquad
\qquad
x^\tau_C \sym_C^+ \rep{\x}_B\,,
\]
we write $\wit^+_{\sigma, \tau}(\x_A, \x_B, \x_C)$ for this set.
Likewise, the \textbf{$\sim^+$-interaction witnesses} comprise
\[
\theta_A^- : x^\sigma_A \,{\sym_A^-}\, \rep{\x}_A\,,
\qquad
x^\tau \odot x^\sigma \in \confp{\tau \odot \sigma}\,,
\qquad
\theta_C^+ : x^\tau_C \,{\sym_C^+}\, \rep{\x}_C\,,
\]
with $x^\sigma_B = x^\tau_B \in \x_B$. We write $\pswit_{\sigma,
\tau}(\x_A, \x_B, \x_C)$ for this set.
\end{defi}

Note that we obviously have, for any $\x_A \in \wconf{A}$ and $\x_C \in
\wconf{C}$
\begin{eqnarray}
\wit^+_{\tau \odot \sigma}(\x_A, \x_C) &\bij& \sum_{\x_B \in \wconf{B}}
\wit^+_{\sigma, \tau}(\x_A, \x_B, \x_C)\label{eq7}
\end{eqnarray}
as interaction witnesses are exactly witnesses of the composition with a
specified symmetry class $\x_B \in \wconf{B}$ in the middle. To
establish \eqref{eq6}, we shall now study a connection between
$\sim^+$-interaction witnesses and synchronizable pairs of
$\sim^+$-witnesses. 

\subsubsection{Synchronization up to symmetry} The property we shall
prove is a quantitative elaboration on Proposition \ref{prop:sync_sym},
so we start with an explicit reformulation of Proposition \ref{prop:sync_sym}.

\begin{lem}\label{lem:qbip1}
Consider $A, B, C$ arenas, $\sigma \in \Strat(A, B)$ and $\tau \in
\Strat(B, C)$. 

Then, for any $\x_A \in \wconf{A}, \x_B \in \wconf{B}$ and $\x_C \in
\wconf{C}$, for any pair of $\sim^+$-witnesses
\[
(\theta_A^-, x^\sigma, \theta_B^+) \in \pswit_\sigma^+(\x_A, \x_B)\,,
\qquad
(\Omega_B^-, x^\tau, \Omega_C^+) \in \pswit_\tau^+(\x_B, \x_C)\,,
\]
there are unique $\omega^\sigma : x^\sigma \sym_\sigma
y^\sigma, \nu^\tau : x^\tau \sym_\tau y^\tau$,
$\Theta_B : \rep{\x}_B \sym_B y_B$ and $\sim^+$-interaction witness
\[
(\psi_A^-, y^\tau \odot y^\sigma, \psi_C^+) \in \pswit_{\sigma, \tau}(\x_A,
\x_B, \x_C)
\]
with $y^\sigma_B = y^\tau_B = y_B$, such that the following diagrams commute:
\[
\xymatrix@R=4pt{
&x^\sigma_A  
	\ar[dl]_{\theta_A^-}
        \ar[dd]^{\omega_A^\sigma}&
x^\sigma_B   
	\ar[r]^{\theta_B^+}
        \ar[dd]_{\omega^\sigma_B}&
\rep{\x}_B      
        \ar[dd]^{\Theta_B}&
x^\tau_B   
	\ar[dd]^{\nu^\tau_B}
        \ar[l]_{\Omega_B^-}&
x^\tau_C   
	\ar[dr]^{\Omega_C^+}
        \ar[dd]_{\nu^\tau_C}\\
\rep{\x}_A&&&&&&\rep{\x}_C\\
&y^\sigma_A  
	\ar[ul]^{\psi_A^-}&
y^\sigma_B   \ar@{=}[r]&
y_B&
y^\tau_B   \ar@{=}[l]&
y^\tau_C   \ar[ur]_{\psi_C^+}
}
\]
\end{lem}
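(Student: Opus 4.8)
The plan is to reduce Lemma~\ref{lem:qbip1} to Proposition~\ref{prop:sync_sym}, of which it is really a quantitative bookkeeping reformulation: the essential idea is to manufacture the right ``bridge symmetry'' on $B$ out of the two given $\sim^+$-witnesses, feed it to Proposition~\ref{prop:sync_sym}, and then recover all remaining data and commutations by elementary diagram-chasing in the groupoids $\tilde A$, $\tilde B$, $\tilde C$.

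First I would glue the witnesses along the shared representative $\rep{\x}_B$. From $(\theta_A^-, x^\sigma, \theta_B^+) \in \pswit_\sigma^+(\x_A, \x_B)$ and $(\Omega_B^-, x^\tau, \Omega_C^+) \in \pswit_\tau^+(\x_B, \x_C)$ we have $\theta_B^+ : x^\sigma_B \sym_B^+ \rep{\x}_B$ and $\Omega_B^- : x^\tau_B \sym_B^- \rep{\x}_B$, and I set $\theta := (\Omega_B^-)^{-1} \circ \theta_B^+ : x^\sigma_B \sym_B x^\tau_B$; this $\theta$ is neither positive nor negative in general, which is harmless since Proposition~\ref{prop:sync_sym} accepts an arbitrary symmetry. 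Since $\sigma$ and $\tau$ are visible strategies on arenas, Lemma~\ref{lem:deadlock_free} guarantees the composite bijection attached to $\theta$ is secured — this is the one place where visibility and arena-hood are used — so Proposition~\ref{prop:sync_sym} applies and yields unique causally compatible $y^\sigma \in \confp{\sigma}$, $y^\tau \in \confp{\tau}$ and symmetries $\omega^\sigma : x^\sigma \sym_\sigma y^\sigma$, $\nu^\tau : x^\tau \sym_\tau y^\tau$ with $\omega^\sigma_A \in \ntilde A$, $\nu^\tau_C \in \ptilde C$ and $\nu^\tau_B \circ \theta = \omega^\sigma_B$. These are the $\omega^\sigma, \nu^\tau$ of the statement; causal compatibility forces $y^\sigma_B = y^\tau_B =: y_B$, and Proposition~\ref{prop:char_comp} then produces $y^\tau \odot y^\sigma \in \confp{\tau\odot\sigma}$.

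The remaining data are then determined. I would set $\Theta_B := \omega^\sigma_B \circ (\theta_B^+)^{-1} : \rep{\x}_B \sym_B y_B$ (in particular this exhibits $y_B \in \x_B$), $\psi_A^- := \theta_A^- \circ (\omega^\sigma_A)^{-1}$ and $\psi_C^+ := \Omega_C^+ \circ (\nu^\tau_C)^{-1}$; closure of $\ntilde A$ and $\ptilde C$ under composition and inverse gives $\psi_A^- \in \ntilde A$, $\psi_C^+ \in \ptilde C$, and the domains/codomains work out so that $(\psi_A^-, y^\tau\odot y^\sigma, \psi_C^+) \in \pswit_{\sigma,\tau}(\x_A, \x_B, \x_C)$. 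The two outer squares commute by the very definitions of $\psi_A^-$ and $\psi_C^+$; the two middle squares, after substituting $\theta = (\Omega_B^-)^{-1}\circ\theta_B^+$, collapse to the single identity $\nu^\tau_B \circ \theta = \omega^\sigma_B$ already supplied by Proposition~\ref{prop:sync_sym}.

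For uniqueness I would run this backwards: any data making the diagrams commute forces $\omega^\sigma_A = (\psi_A^-)^{-1}\circ\theta_A^- \in \ntilde A$, $\nu^\tau_C = (\psi_C^+)^{-1}\circ\Omega_C^+ \in \ptilde C$, and, from the two middle squares, $\nu^\tau_B \circ \theta = \Theta_B\circ\Omega_B^-\circ(\Omega_B^-)^{-1}\circ\theta_B^+ = \Theta_B\circ\theta_B^+ = \omega^\sigma_B$, so that $(\omega^\sigma, \nu^\tau)$ meets exactly the defining conditions of Proposition~\ref{prop:sync_sym} for the symmetry $\theta$; its uniqueness clause then pins down $\omega^\sigma, \nu^\tau, y^\sigma, y^\tau$, after which $\Theta_B, \psi_A^-, \psi_C^+$ are forced by the commuting squares. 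The only genuinely delicate point is this reassembly of the middle squares into the single hypothesis of Proposition~\ref{prop:sync_sym} while keeping careful track of which symmetries are positive, negative, or unrestricted; once the bridge symmetry $\theta$ is chosen correctly there is no real obstacle, and the value of the lemma is mainly that it isolates this bookkeeping cleanly for the counting argument to follow.
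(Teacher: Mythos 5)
Your proof is correct and follows essentially the same route as the paper's: form the bridge symmetry $(\Omega_B^-)^{-1}\circ\theta_B^+$, invoke Lemma \ref{lem:deadlock_free} for securedness, apply Proposition \ref{prop:sync_sym}, and define $\psi_A^-$, $\psi_C^+$, $\Theta_B$ by exactly the composites the paper uses, with uniqueness reduced to the uniqueness clause of Proposition \ref{prop:sync_sym}. No gaps.
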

\begin{proof}
\emph{Existence.} First, by Lemma \ref{lem:deadlock_free}, the bijection
induced by $(\Omega_B^-)^{-1} \circ \theta_B^+ : x_B^\sigma \sym_B
x^\tau_B$ is secured. Thus we can apply Proposition \ref{prop:sync_sym},
yielding $y^\tau \odot y^\sigma \in \confp{\tau \odot \sigma}$ along
with 
\[
\omega^\sigma : y^\sigma \sym_\sigma x^\sigma\,,
\qquad
\qquad
\nu^\tau : y^\tau \sym_\tau y^\tau\,,
\]
such that $\nu^\tau_B \circ ((\Omega_B^-)^{-1} \circ \theta_B^+) =
\omega^\sigma_B$. Furthermore we may set $\psi_A^- = \theta_A^- \circ
(\omega^\sigma_A)^{-1}$ and $\psi_C^+ = \Omega_C^+ \circ
(\nu^\tau_C)^{-1}$ -- overall, the following diagrams commute
\[
\xymatrix@R=4pt{
&x^\sigma_A  
	\ar[dl]_{\theta_A^-}
        \ar[dd]^{\omega^\sigma_A}&
x^\sigma_B   
	\ar[r]^{\theta_B^+}
        \ar[dd]_{\omega^\sigma_B}&
\rep{\x}_B&
x^\tau_B   
	\ar[dd]^{\nu^\tau_B}
        \ar@{<-}[l]_{(\Omega_B^-)^{-1}}&
x^\tau_C   
	\ar[dr]^{\Omega_C^+}
        \ar[dd]_{\nu^\tau_C}\\
\rep{\x}_A&&&&&&\rep{\x}_C\\
&y^\sigma_A  
	\ar[ul]^{\psi_A^-}&
y^\sigma_B   
	\ar@{=}[r]&
y_B     \ar@{=}[r]&
y^\tau_B &
y^\tau_C   \ar[ur]_{\psi_C^+}
}
\]
which we complete by setting $\Theta_B : \rep{\x}_B \to y_B$ as either
path around the center diagram. 

\emph{Uniqueness.} First, $y^\tau \odot y^\sigma \in \confp{\tau \odot
\sigma}$, $\omega^\sigma, \nu^\tau$ are unique by the uniqueness clause
in Proposition \ref{prop:sync_sym}. It follows that $\psi_A^-$ and
$\psi_C^+$ and $\Theta_B$ are determined by the diagram.
\end{proof}

In particular, to $(\theta_A^-, x^\sigma, \theta_B^+) \in
\pswit_\sigma^+(\x_A, \x_B)$ and $(\Omega_B^-, x^\tau, \Omega_C^+) \in
\pswit_\tau^+(\x_B, \x_C)$ we have associated a $\sim^+$-interaction
witness $(\psi_A^-, y^\tau \odot y^\sigma, \psi_C^+) \in
\pswit_{\sigma, \tau}(\x_A, \x_B, \x_C)$. Counting-wise, it seems
$\sim^+$-interaction witnesses have fewer degrees of liberty than pairs
of $\sim^+$-witnesses as the latter have no symmetry on $B$; in the
lemma above this is mitigated by the fact that the construction also
extracts $\Theta_B$. 
The next step is to reverse this: from 
\[
(\psi_A^-, y^\tau \odot y^\sigma, \psi_C^+) \in \pswit_{\sigma,
\tau}(\x_A, \x_B, \x_C) 
\]
and $\Theta_B : \rep{\x}_B \sym_B y_B$, we must make $\Theta_B$ act on
$y^\sigma$ and $y^\tau$ to recover the original $x^\sigma$ and $x^\tau$.

\subsubsection{Negative symmetries acting on strategies} 
In thin concurrent games, strategies can always adjust their copy
indices to match a change in Opponent's copy indices:
%

\begin{lem}\label{lem:b4}
Consider $A$ a game, $\sigma : A$ a strategy, $x^\sigma \in \conf{\sigma}$ and
$\theta^- : x^\sigma_A \sym_A^- y_A$. 

Then, there are \emph{unique} $\varphi : x^\sigma \sym_\sigma y^\sigma$
and $\theta^+ : y_A \sym_A^+ y^\sigma_A$ such that
\[
\pr_\sigma \varphi = \theta^+ \circ \theta^- : x^\sigma_A \sym_A
y^\sigma_A\,.
\]
\end{lem}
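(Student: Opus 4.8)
The plan is to prove the statement by induction on $|x^\sigma|$, for fixed $A$ and $\sigma$. In the base case $x^\sigma = \emptyset$ we have $x^\sigma_A = \emptyset$, so $\theta^-$ and $\theta^+$ are forced to be $\id_\emptyset$ and $\varphi = \id_\emptyset$, which is clearly the only solution. For the inductive step, pick $s$ maximal in $x^\sigma$, so that $x_0^\sigma \cov x^\sigma$ with $x^\sigma = x_0^\sigma \uplus \{s\}$, and set $a = \pr_\sigma(s)$. By local injectivity $\pr_\sigma(x_0^\sigma) = x^\sigma_A \setminus \{a\} =: x^\sigma_{0,A}$, and by the restriction axiom for $\ntilde{A}$ the symmetry $\theta^-$ restricts uniquely to a negative $\theta_0^- : x^\sigma_{0,A} \sym_A^- y_{0,A}$ with $y_{0,A} = y_A \setminus \{b\}$ and $b = \theta^-(a)$. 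Applying the induction hypothesis to $x_0^\sigma$ and $\theta_0^-$ yields unique $\varphi_0 : x_0^\sigma \sym_\sigma y_0^\sigma$ and $\theta_0^+ : y_{0,A} \sym_A^+ y^\sigma_{0,A}$ with $\pr_\sigma \varphi_0 = \theta_0^+ \circ \theta_0^-$. It remains to extend $\varphi_0$ and $\theta_0^+$ by the event $s$, which I would do by case analysis on $\pol_\sigma(s)$.

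If $s$ is negative, then so is $b$, and since positive symmetries fix negative events the sought $\theta^+$ must send $b$ to $b$; I would set $\theta^+ = \theta_0^+ \cup \{(b,b)\}$, which lies in $\ptilde{A}$ by the extension axiom for $\ptilde{A}$ (the codomain $y^\sigma_{0,A} \cup \{b\}$ being automatically a configuration). From the fact that $\theta^+ \circ \theta^-$ extends $\theta_0^+ \circ \theta_0^- = \pr_\sigma\varphi_0$ one gets $\pr_\sigma\varphi_0 \vdash_{\tilde{A}} (a,b)$, so $\sim$-receptivity applied to $\varphi_0$, the enabling $x_0^\sigma \vdash_\sigma s$ and the pair $(a,b)$ produces a unique $s'$ with $\varphi_0 \vdash_{\tilde{\sigma}} (s,s')$ and $\pr_\sigma(s') = b$; then $\varphi = \varphi_0 \cup \{(s,s')\}$ satisfies $\pr_\sigma\varphi = \theta^+ \circ \theta^-$. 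If $s$ is positive, then $b = \theta^-(a) = a$; I would apply \emph{thin} to $\varphi_0$ and the enabling $x_0^\sigma \vdash_\sigma s$ to obtain the unique $s'$ with $\varphi_0 \vdash_{\tilde{\sigma}} (s,s')$, set $\varphi = \varphi_0 \cup \{(s,s')\}$ and $a' = \pr_\sigma(s')$ (positive, since symmetries preserve polarity). Then $\pr_\sigma\varphi = (\theta_0^+ \cup \{(a,a')\}) \circ \theta^-$, so $\theta_0^+ \cup \{(a,a')\} = \pr_\sigma\varphi \circ (\theta^-)^{-1} \in \tilde{A}$, and since this extends $\theta_0^+$ by a pair of positive events, $+$-receptivity gives $\theta_0^+ \cup \{(a,a')\} \in \ptilde{A}$; take $\theta^+ = \theta_0^+ \cup \{(a,a')\}$.

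For uniqueness (which in fact does not need the induction), suppose $(\varphi,\theta^+)$ and $(\varphi',\theta'^+)$ both work. Then $\varphi' \circ \varphi^{-1}$ is a symmetry of $\sigma$ whose display is $\theta'^+ \circ (\theta^+)^{-1} \in \ptilde{A}$, so by Lemma~\ref{lem:pos_symm} it is an identity; hence $\varphi = \varphi'$, and then $\theta^+ = \theta'^+$ since $\theta^-$ is invertible. The main obstacle I anticipate is the routine but fiddly bookkeeping at each inductive step: verifying that the assembled game-side symmetry really factors as negative-then-positive with the correct domains and codomains, and in particular that extending $\theta_0^+$ by the new move keeps it in $\ptilde{A}$ — this is exactly where the $\pm$-receptivity axioms of thin concurrent games do the work.
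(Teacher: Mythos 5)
Your overall architecture is the right one, and is essentially how this kind of ``symmetry lifting'' lemma is proved in the literature (the paper itself only cites Lemma~B.4 of \cite{cg2} here, whose proof follows the same one-event-at-a-time pattern): restrict, apply the induction hypothesis, extend by the new event using $\sim$-receptivity for negative events and \emph{thin} plus $+$-receptivity for positive ones, and derive uniqueness from Lemma~\ref{lem:pos_symm}. The uniqueness paragraph is correct as written. However, the negative case of your inductive step rests on a false premise: ``positive symmetries fix negative events'' is not an axiom of thin concurrent games and fails for the games of this paper. For instance, in $\oc \gbool \lin \gbool$ a positive symmetry restricts on the left to an element of $\ptilde{(\oc\gbool)^\perp} = \ntilde{\oc\gbool}$, hence may permute copy indices, sending the \emph{negative} events $\ttrue_i$ to $\ttrue_{\pi(i)}$ with $\pi(i) \neq i$. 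Consequently your candidate $\theta^+ = \theta_0^+ \cup \{(b,b)\}$ need not lie in $\tilde{A}$ at all: if $\theta_0^+$ already contains $(\qu_i, \qu_{\pi(i)})$ then every symmetry extending it is forced to send $\ttrue_i$ to $\ttrue_{\pi(i)} \neq \ttrue_i$. The extension axiom you invoke only guarantees that \emph{some} extension of $\theta_0^+$ in $\ptilde{A}$ with domain $y_A$ exists, not the particular one you wrote down.

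The repair is local and preserves your structure: in the negative case take $\theta^+ = \theta_0^+ \cup \{(b,c)\} \in \ptilde{A}$ to be \emph{any} extension provided by the extension axiom for $\ptilde{A}$. Then $\theta^+ \circ \theta^-$ extends $\theta_0^+ \circ \theta_0^- = \pr_\sigma \varphi_0$ inside $\tilde{A}$ (groupoid closure under composition), so $\pr_\sigma \varphi_0 \vdash_{\tilde{A}} (a, c)$, and $\sim$-receptivity yields the unique $s'$ with $\pr_\sigma(s') = c$; the arbitrariness of $c$ is then absorbed by your global uniqueness argument. Symmetrically, in the positive case the claim $b = \theta^-(a) = a$ is false for the same reason (negative symmetries can move positive events), but there the error is only notational: replace the pair $(a, a')$ by $(b, a')$ throughout, and the identity $\pr_\sigma \varphi = (\theta_0^+ \cup \{(b,a')\}) \circ \theta^-$ together with the appeal to $+$-receptivity goes through unchanged.
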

\begin{proof}
See Lemma B.4 in \cite{cg2}.
\end{proof}

Opponent changes their copy
indices by applying the negative symmetry $\theta^- : x^\sigma_A
\sym_A^- y_A$, and Player adapts by applying the unique $\varphi :
x^\sigma \sym_\sigma y^\sigma$. The resulting configuration $y^\sigma_A$
might not be equal to $y_A$, but it is \emph{positively symmetric},
reflecting Player's adjusted indices.

\subsubsection{Symmetries acting on $\sim^+$-interaction witnesses}
We use this to reverse Lemma \ref{lem:qbip1}.

\begin{lem}\label{lem:qbip2}
Consider $A, B, C$ arenas, $\sigma \in \Strat(A, B)$ and $\tau \in
\Strat(B, C)$.

Then, for any $\x_A \in \wconf{A}, \x_B \in \wconf{B}$ and $\x_C \in
\wconf{C}$, $\sim^+$-interaction witness
\[
(\psi_A^-, y^\tau \odot y^\sigma, \psi_C^+) \in \pswit_{\sigma, \tau}(\x_A, \x_B,
\x_C) 
\]
with $y^\sigma_B = y^\tau_B = y_B$, any $\Theta_B :
\rep{\x}_B \sym_B y_B$,  there are unique 
$\omega^\sigma : x^\sigma \sym_\sigma y^\sigma$,
$\nu^\tau : x^\tau \sym_\tau y^\tau$ and
\[
(\theta_A^-, x^\sigma, \theta_B^+) \in \pswit_\sigma^+(\x_A, \x_B)\,,
\qquad
(\Omega_B^-, x^\tau, \Omega_C^+) \in \pswit_\tau^+(\x_B, \x_C)\,,
\]
a pair of $\sim^+$-witnesses, such that the following diagram commutes:
\[
\xymatrix@R=7pt{
&x^\sigma_A  
	\ar[dl]_{\theta_A^-}
        \ar[dd]^{\omega^\sigma_A}&
x^\sigma_B   
	\ar[r]^{\theta_B^+}
        \ar[dd]_{\omega^\sigma_B}&
\rep{\x}_B      
        \ar[dd]^{\Theta_B}&
x^\tau_B\ar[dd]^{\nu^\tau_B}
        \ar[l]_{\Omega_B^-}&
x^\tau_C\ar[dr]^{\Omega_C^+}
        \ar[dd]_{\nu^\tau_C}\\
\rep{\x}_A&&&&&&\rep{\x}_C\\
&y^\sigma_A  
	\ar[ul]^{\psi_A^-}&
y^\sigma_B   
	\ar@{=}[r]&
y_B&
y^\tau_B\ar@{=}[l]&
y^\tau_C\ar[ur]_{\psi_C^+}
}
\]  
\end{lem}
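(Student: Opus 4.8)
The plan is to reverse the construction of Lemma~\ref{lem:qbip1}, using Lemma~\ref{lem:b4} as the engine: instead of playing $\tilde\sigma$ and $\tilde\tau$ against each other until they agree on $B$, we start from an agreement $y^\sigma_B = y^\tau_B = y_B$ together with the symmetry $\Theta_B : \rep{\x}_B \sym_B y_B$, and make $\Theta_B$ \emph{act} on each side, pulling $y^\sigma$ and $y^\tau$ back to configurations $x^\sigma, x^\tau$ whose $B$-projections are only \emph{positively} (resp.\ \emph{negatively}) symmetric to $\rep{\x}_B$. The structural observation driving this is that $\sigma$ reads the shared game $B$ with its native polarity while $\tau$ reads it dualised, so the factorisation $\Theta_B = \Theta_B^{(n)} \circ \Theta_B^{(p)}$ of Lemma~\ref{lem:factor}, with $\Theta_B^{(p)} : \rep{\x}_B \sym_B^+ z_B$ and $\Theta_B^{(n)} : z_B \sym_B^- y_B$, splits $\Theta_B$ precisely into a half receivable by $\sigma$ and a half receivable by $\tau$.

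For existence I would apply Lemma~\ref{lem:b4} to $y^\sigma$ with the negative symmetry $\id_{y^\sigma_A} \parallel (\Theta_B^{(n)})^{-1}$ of $\sigma$'s game $A^\perp \parr B$; this yields a unique $\varphi : y^\sigma \sym_\sigma x^\sigma$ and a positive adjustment $\theta^+$, and I set $\omega^\sigma := \varphi^{-1} : x^\sigma \sym_\sigma y^\sigma$. Then $\theta_B^+ := \Theta_B^{-1} \circ \omega^\sigma_B : x^\sigma_B \sym_B^+ \rep{\x}_B$ is positive-in-$B$ (a composite of positive-in-$B$ symmetries, using the factorisations of $\Theta_B$ and of $\theta^+$), while $\omega^\sigma_A = (\theta^+_A)^{-1}$ is negative in $A$ (Player's adjustment is positive for $\sigma$'s game, hence in $\ptilde{A^\perp} = \ntilde{A}$ on the $A$-component), so that $\theta_A^- := \psi_A^- \circ \omega^\sigma_A$ lands in $\ntilde{A}$. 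Since $y^\sigma \in \confp{\sigma}$ and $\omega^\sigma \in \tilde\sigma$, we get $x^\sigma \in \confp{\sigma}$, and $x^\sigma_A \in \x_A$, $x^\sigma_B \in \x_B$ from the symmetries just built, so $(\theta_A^-, x^\sigma, \theta_B^+) \in \pswit_\sigma^+(\x_A, \x_B)$. A mirror-image argument applying Lemma~\ref{lem:b4} to $y^\tau$ with the negative symmetry of $\tau$'s game $B^\perp \parr C$ built from $\Theta_B^{(p)}$ (which is negative for $\tau$ precisely because it is positive in $B$) and $\id_{y^\tau_C}$ produces $\nu^\tau : x^\tau \sym_\tau y^\tau$, a negative-in-$B$ symmetry $\Omega_B^- : x^\tau_B \sym_B^- \rep{\x}_B$, and $\Omega_C^+ := \psi_C^+ \circ \nu^\tau_C \in \ptilde{C}$, giving $(\Omega_B^-, x^\tau, \Omega_C^+) \in \pswit_\tau^+(\x_B, \x_C)$. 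That these data sit in the stated diagram is then a routine chase through the identity $\Theta_B = \Theta_B^{(n)} \circ \Theta_B^{(p)}$ and the definitions above.

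For uniqueness, note that in any solution $\omega^\sigma_A = (\psi_A^-)^{-1} \circ \theta_A^-$ is a composite of two symmetries in $\ntilde{A}$, hence lies in $\ntilde{A} = \ptilde{A^\perp}$, i.e.\ is positive for $\sigma$'s game on the $A$-component; therefore the negative part of $\pr_\sigma((\omega^\sigma)^{-1})$ (in the pos$\circ$neg factorisation used by Lemma~\ref{lem:b4}) is the identity on the $A$-component, while on the $B$-component the relation $\omega^\sigma_B = \Theta_B \circ \theta_B^+$ with $\theta_B^+$ positive-in-$B$ forces this negative part to be $(\Theta_B^{(n)})^{-1}$ by uniqueness of factorisation. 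Thus the negative game symmetry whose Lemma~\ref{lem:b4}-lift is $(\omega^\sigma)^{-1}$ is $\id_{y^\sigma_A} \parallel (\Theta_B^{(n)})^{-1}$, determined by $\Theta_B$ and $y^\sigma$ alone; so $(\omega^\sigma)^{-1}$, hence $x^\sigma$, $\omega^\sigma$, and then $\theta_A^- = \psi_A^- \circ \omega^\sigma_A$ and $\theta_B^+ = \Theta_B^{-1} \circ \omega^\sigma_B$, are all forced. The same reasoning, reading off the $C$-component of the diagram, determines $\nu^\tau$, $x^\tau$, $\Omega_B^-$ and $\Omega_C^+$.

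The main obstacle is the polarity bookkeeping: correctly identifying which half of $\Theta_B$ is receivable by $\sigma$ and which by $\tau$, tracking the directions of all symmetries through the dualities $\ptilde{A^\perp} = \ntilde{A}$, and --- the genuinely delicate point, used in both halves --- checking that the adjustment Lemma~\ref{lem:b4} forces on the $A$- and $C$-sides is exactly the one absorbed by $\psi_A^-$ and $\psi_C^+$, so that $\theta_A^- \in \ntilde{A}$ and $\Omega_C^+ \in \ptilde{C}$ and the negative part on the $A$-component is trivial. In contrast with Lemma~\ref{lem:qbip1}, no re-synchronisation is needed here, so Proposition~\ref{prop:sync_sym} plays no role; the whole content is two applications of Lemma~\ref{lem:b4} together with the uniqueness of negative/positive factorisations.
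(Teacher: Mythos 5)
Your construction is, in substance, the paper's own: both existence and uniqueness rest on applications of Lemma~\ref{lem:b4} together with the unique positive/negative factorisations of Lemma~\ref{lem:factor}, and Proposition~\ref{prop:sync_sym} indeed plays no role here. The $\sigma$-half is correct as written: applying Lemma~\ref{lem:b4} to $y^\sigma$ along $\id_{y^\sigma_A} \parallel (\Theta_B^{(n)})^{-1}$, reading polarities through $\ptilde{A^\perp} = \ntilde{A}$, and recovering $\theta_B^+ = \Theta_B^{-1}\circ\omega^\sigma_B$ as a composite of positives after cancellation is exactly what the paper does (with $\Theta_B^{(p)} = (\Phi_B^+)^{-1}$ and $\Theta_B^{(n)} = (\Phi_B^-)^{-1}$ in its notation). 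Your uniqueness argument --- the negative part of the pos$\circ$neg factorisation of $\pr_\sigma((\omega^\sigma)^{-1})$ is forced to be $\id_{y^\sigma_A} \parallel (\Theta_B^{(n)})^{-1}$, so the Lemma~\ref{lem:b4} lift and hence all the data are forced --- is a slightly more direct phrasing of the paper's diagram chase and is sound.

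The one concrete problem is the ``mirror-image'' step. You factor $\Theta_B$ only once, as $\Theta_B^{(n)}\circ\Theta_B^{(p)}$ with $\Theta_B^{(p)} : \rep{\x}_B \sym_B^+ z_B$, and propose to feed a symmetry built from $\Theta_B^{(p)}$ and $\id_{y^\tau_C}$ into Lemma~\ref{lem:b4} on the $\tau$ side. But $\Theta_B^{(p)} \parallel \id_{y^\tau_C}$ has domain $\rep{\x}_B \parallel y^\tau_C$, whereas Lemma~\ref{lem:b4} applied to $y^\tau$ requires a negative symmetry of $B^\perp \parr C$ with domain $\pr_\tau(y^\tau) = y_B \parallel y^\tau_C$: you need a positive-in-$B$ symmetry \emph{out of $y_B$}, and no piece of your single factorisation supplies one ($(\Theta_B^{(n)})^{-1}$ starts at $y_B$ but is negative in $B$; $\Theta_B^{(p)}$ and its inverse start elsewhere). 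The fix is to also factor $\Theta_B$ the other way round, negative-then-positive, and use the positive part $\Psi_B^+ : y_B \sym_B^+ z_B^2$ of that second factorisation; note that the intermediate configuration $z_B^2$ differs in general from your $z_B$. This is precisely why the paper's proof begins by factoring $\Theta_B^{-1}$ in \emph{two} ways. With that correction the remainder of your argument goes through.
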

\begin{proof}
The first step is to factor $\Theta_B^{-1}$ in two ways, as in the
diagram
\[
\xymatrix@R=7pt{
&&&z_B^1
        \ar[r]^{\Phi_B^+}&
\rep{\x}_B&
z_B^2   \ar[l]_{\Psi_B^-}\\
\rep{\x}_A&&&&&&&&\rep{\x}_C\\
&y^\sigma_A  
	\ar[ul]^{\psi_A^-}&
y^\sigma_B   
	\ar@{=}[rr]&&
y_B     \ar[uul]^{\Phi_B^-}
        \ar[uu]_{\Theta_B^{-1}}
        \ar[uur]_{\Psi_B^+}
        \ar@{=}[rr]&&
y^\tau_B&
y^\tau_C   
	\ar[ur]_{\psi_C^+}
}
\] 
following Lemma \ref{lem:factor}. By Lemma \ref{lem:b4} we can make
$\Phi_B^-$ act on $x^\sigma$. This yields
\[
\lambda_A^- : x^\sigma_A \sym_A^- y^\sigma_A\,,
\qquad
\omega^\sigma : x^\sigma \sym_\sigma y^\sigma\,,
\qquad
\Delta_B^+ : x^\sigma_B \sym_B^+ z^1_B\,,
\]
unique such that the following diagram commutes:
\[
\xymatrix@R=7pt{
&&x^\sigma_A 
	\ar[dl]_{\lambda_A^-}
        \ar[dd]^{\omega^\sigma_A}&
x^\sigma_B   
	\ar[dd]^{\omega^\sigma_B}
        \ar[r]^{\Delta_B^+}&
z^1_B   \ar@[grey][r]^{\grey{\Phi_B^+}}&
\grey{\rep{\x}_B}&
\grey{z^2_B}
        \ar@[grey][l]_{\grey{\Psi_B^-}}\\
\rep{\x}_A&
y^\sigma_A   \ar[l]^{\psi_A^-}
        \ar@{=}[dr]&&&&&&&&\grey{\rep{\x}_C}\\
&&y^\sigma_A&y^\sigma_B
        \ar@{=}[rr]&&
y_B     \ar[uul]^{\Phi_B^-}
        \ar@[grey][uu]_{\grey{\Theta_B^{-1}}}
        \ar@[grey][uur]_{\grey{\Psi_B^+}}
        \ar@[grey]@{=}[rr]&&
\grey{y^\tau_B}&
\grey{y^\tau_C}
        \ar@[grey][ur]_{\grey{\psi_C^+}}
}
\]
leaving in grey the irrelevant parts of the full diagram for context.
Setting
$\theta_A^- = \psi_A^- \circ \lambda_A^-$ and $\theta_B^+ = \Phi_B^+
\circ \Delta_B^+$, we have found data making the following diagram commute:
\[
\xymatrix@R=7pt@C=15pt{
&x^\sigma_A  
	\ar[dl]_{\theta_A^-}
        \ar[dd]^{\omega^\sigma_A}&
x^\sigma_B   
	\ar[dd]^{\omega^\sigma_B}
        \ar[rr]^{\theta_B^+}&&
\rep{\x}_B
        \ar[dd]_{\Theta_B}&
\grey{z^2_B}
        \ar@[grey][l]_{\grey{\Psi_B^-}}\\
\rep{\x}_A&&&&&&&&\grey{\rep{\x}_C}\\
&y^\sigma_A  
	\ar[ul]^{\psi_A^-}&
y^\sigma_B   
	\ar@{=}[rr]&&
y_B     \ar@[grey]@{=}[rr]
        \ar@[grey][uur]_{\grey{\Psi_B^+}}&&
\grey{y^\tau_C}&
\grey{y^\tau_C}
        \ar@[grey][ur]_{\grey{\psi_C^+}}
}
\]

We shall now prove uniqueness of this data. Assume that we have other
symmetries
$\gamma_A^- : u^\sigma_A \sym_A^- \rep{\x}_A$, 
$\varpi^\sigma : u^\sigma \sym_S y^\sigma$ and
$\gamma_B^+ : u^\sigma_B \sym_B^+ \rep{\x}_B$
making the following diagram commute:
\[
\xymatrix@R=4pt@C=15pt{
&u^\sigma_A  
	\ar[dl]_{\gamma_A^-}
        \ar[dd]^{\varpi^\sigma_A}&
u^\sigma_B   
	\ar[dd]^{\varpi^\sigma_B}
        \ar[rr]^{\gamma_B^+}&&
\rep{\x}_B
        \ar[dd]_{\Theta_B}&
\grey{z^2_B}
        \ar@[grey][l]_{\grey{\Psi_B^-}}\\
\rep{\x}_A&&&&&&&&\grey{\rep{\x}_C}\\
&y^\sigma_A  
	\ar[ul]^{\psi_A^-}&
y^\sigma_B   
	\ar@{=}[rr]&&
y_B     \ar@[grey]@{=}[rr]
        \ar@[grey][uur]_{\grey{\Psi_B^+}}&&
\grey{y^\tau_C}&
\grey{y^\tau_C}
        \ar@[grey][ur]_{\grey{\psi_C^+}}
}
\]

Then, it follows that the following diagram also commutes:
\[
\xymatrix@R=4pt{
&&u^\sigma_A 
	\ar[dl]_{(\psi_A^-)^{-1} \circ \gamma_A^-}
        \ar[dd]^{\varpi^\sigma_A}&
u^\sigma_B   
	\ar[dd]^{\varpi^\sigma_B}
        \ar[rr]^{(\Phi_B^+)^{-1} \circ \gamma_B^+}&&
z^1_B   \ar@[grey][r]^{\grey{\Phi_B^+}}&
\grey{\rep{\x}_B}&
\grey{z^2_B}
        \ar@[grey][l]_{\grey{\Psi_B^-}}\\
\rep{\x}_A&
y^\sigma_A   
	\ar[l]^{\psi_A^-}
        \ar@{=}[dr]&&&&&&&&&\grey{\rep{\x}_C}\\
&&y^\sigma_A&y^\sigma_B
        \ar@{=}[rrr]&&&
y_B     \ar[uul]^{\Phi_B^-}
        \ar@[grey][uu]_{\grey{\Theta_B^{-1}}}
        \ar@[grey][uur]_{\grey{\Psi_B^+}}
        \ar@[grey]@{=}[rr]&&
\grey{y^\tau_B}&
\grey{y^\tau_C}
        \ar@[grey][ur]_{\grey{\psi_C^+}}
}
\]

By uniqueness for Lemma \ref{lem:b4}, it follows that $u^\sigma =
x^\sigma$, $\omega^\sigma = \varpi^\sigma$,  $\lambda_A^- =
(\psi_A^-)^{-1} \circ \gamma_A^-$ so $\gamma_A^- = \theta_A^-$, and
$(\Phi_B^+)^{-1} \circ \gamma_B^+ = \Delta_B^+$ so $\gamma_B^+ =
\theta_B^+$. Altogether, we have proved that there are 
\[
\theta_A^- : x^\sigma_A \sym_\sigma \rep{\x}_A\,,
\qquad
\omega^\sigma : x^\sigma \sym_\sigma y^\sigma\,,
\qquad
\theta_B^+ : x^\sigma_B \sym_B^+ \rep{\x}_B\,,
\]
unique making the following diagram commute:
\[
\xymatrix@R=4pt@C=15pt{
&x^\sigma_A  
	\ar[dl]_{\theta_A^-}
        \ar[dd]^{\omega^\sigma_A}&
x^\sigma_B   
	\ar[dd]^{\omega^\sigma_B}
        \ar[rr]^{\theta_B^+}&&
\rep{\x}_B
        \ar[dd]_{\Theta_B}&
\grey{z^2_B}
        \ar@[grey][l]_{\grey{\Psi_B^-}}\\
\rep{\x}_A&&&&&&&&\grey{\rep{\x}_C}\\
&y^\sigma_A  \ar[ul]^{\psi_A^-}&
y^\sigma_B   \ar@{=}[rr]&&
y_B     \ar@[grey]@{=}[rr]
        \ar@[grey][uur]_{\grey{\Psi_B^+}}&&
\grey{y^\tau_C}&
\grey{y^\tau_C}
        \ar@[grey][ur]_{\grey{\psi_C^+}}
}
\]

The lemma follows by performing the exact same reasoning on the right
hand side.
\end{proof}

\subsubsection{The interaction bijection} 
If $B$ is a game and $\x_B \in
\wconf{B}$, let us write $\tilde{\x_B}$ for the set of endosymmetries on
$\rep{\x}_B$, \emph{i.e.} symmetries $\theta_B : \rep{\x}_B \sym_B
\rep{\x}_B$. We shall use the fact that for any $x, y \in \x_B$, there
are exactly as many symmetries $x \sym_B y$ as in $\tilde{\x_B}$.
Indeed let us fix, for any $x \in \x_B$, a symmetry $\kappa_x : x
\sym_B \rep{\x}_B$. For any $x, y \in \x_B$, we then have
\[
\begin{array}{rcrcl}
(-)[x, y] &:& \tilde{\x_B} &\to& \tilde{B}(x, y)\\
&&\theta &\mapsto& \kappa_y^{-1} \circ \theta \circ \kappa_x
\end{array}
\]
writing $\tilde{B}(x, y)$ for the set of all $\theta : x \sym_B y$. It
is elementary that this is a bijection.

Using this, we finally have, for any $A, B, C$ games and $\sigma \in
\Strat(A, B)$, $\tau \in \Strat(B, C)$:

\begin{cor}\label{cor:main}
Fix $\x_A \in \wconf{A}, \x_B \in \wconf{B}$ and $\x_C \in
\wconf{C}$. 
Then, there is a bijection
\[
\Upsilon
\quad:\quad
\pswit_\sigma(\x_A, \x_B) \times \pswit_\tau(\x_B, \x_C) 
\quad\bij\quad
\pswit_{\sigma, \tau}(\x_A, \x_B, \x_C) \times \tilde{\x_B}
\]
such that for any 
$\Upsilon(x^\sigma, x^\tau) = (y^\tau \odot y^\sigma, \Theta)$, we have
$x^\sigma \sym_\sigma y^\sigma$ and $x^\tau \sym_\tau y^\tau$.
\end{cor}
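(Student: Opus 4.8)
The plan is to construct $\Upsilon$ by gluing together Lemmas \ref{lem:qbip1} and \ref{lem:qbip2}, which are essentially inverse to one another up to a choice of endosymmetry on $\rep{\x}_B$. First I would define $\Upsilon$ in the forward direction: given a pair of $\sim^+$-witnesses
\[
(\theta_A^-, x^\sigma, \theta_B^+) \in \pswit_\sigma(\x_A, \x_B)\,,
\qquad
(\Omega_B^-, x^\tau, \Omega_C^+) \in \pswit_\tau(\x_B, \x_C)\,,
\]
apply Lemma \ref{lem:qbip1} to obtain a $\sim^+$-interaction witness $(\psi_A^-, y^\tau \odot y^\sigma, \psi_C^+) \in \pswit_{\sigma,\tau}(\x_A, \x_B, \x_C)$ together with the extracted symmetry $\Theta_B : \rep{\x}_B \sym_B y_B$; the endosymmetry component is then $\Theta := (-)[\rep{\x}_B, y_B]^{-1}(\Theta_B) = \kappa_{y_B}^{-1} \circ \Theta_B \in \tilde{\x_B}$, using the bijection $(-)[x,y]$ introduced just before the corollary. (Here I tacitly fix $\kappa_{\rep{\x}_B} = \id$.) So $\Upsilon$ sends the pair to $(y^\tau \odot y^\sigma, \Theta)$, and the property $x^\sigma \sym_\sigma y^\sigma$, $x^\tau \sym_\tau y^\tau$ holds by construction via $\omega^\sigma$, $\nu^\tau$.

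For the inverse direction, given $(y^\tau \odot y^\sigma, \Theta) \in \pswit_{\sigma,\tau}(\x_A, \x_B, \x_C) \times \tilde{\x_B}$, I would first recover the concrete symmetry $\Theta_B := \kappa_{y_B} \circ \Theta : \rep{\x}_B \sym_B y_B$, then feed $(\psi_A^-, y^\tau \odot y^\sigma, \psi_C^+)$ together with $\Theta_B$ into Lemma \ref{lem:qbip2}, obtaining the pair of $\sim^+$-witnesses $(\theta_A^-, x^\sigma, \theta_B^+) \in \pswit_\sigma^+(\x_A, \x_B)$ and $(\Omega_B^-, x^\tau, \Omega_C^+) \in \pswit_\tau^+(\x_B, \x_C)$. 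The two lemmas are designed around the same commuting diagram, so it remains only to check that these two assignments are mutually inverse. That both composites are the identity is exactly the content of the \emph{uniqueness} clauses: starting from a pair of $\sim^+$-witnesses, applying Lemma \ref{lem:qbip1} and then Lemma \ref{lem:qbip2} produces data fitting the same diagram as the original, and since Lemma \ref{lem:qbip2} guarantees uniqueness of such data, we recover the original pair; symmetrically in the other order, using uniqueness in Lemma \ref{lem:qbip1} (the triple $y^\tau \odot y^\sigma$, $\omega^\sigma$, $\nu^\tau$ and hence $\psi_A^-$, $\psi_C^+$, $\Theta_B$ are uniquely determined), so we recover $(y^\tau \odot y^\sigma, \Theta)$.

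The one genuinely fiddly point — and the step I expect to be the main obstacle — is bookkeeping the passage between the "abstract" endosymmetry $\Theta \in \tilde{\x_B}$ appearing in the statement and the "concrete" symmetry $\Theta_B : \rep{\x}_B \sym_B y_B$ appearing in Lemmas \ref{lem:qbip1}–\ref{lem:qbip2}, making sure the chosen family $(\kappa_x)_{x \in \x_B}$ is used consistently in both directions so that the round trips really cancel. Since $(-)[\rep{\x}_B, y_B]$ is a bijection $\tilde{\x_B} \bij \tilde{B}(\rep{\x}_B, y_B)$, this is a matter of transporting along it and does not affect cardinalities; but it must be stated carefully to make "mutually inverse" literally true rather than merely true up to relabelling. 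Once this is pinned down, the corollary follows, and combined with \eqref{eq7} it will immediately yield the desired bijection \eqref{eq6} after summing over $\x_B$ and quotienting out the positive symmetries (using the invariance of $\sharp\,\pswit$ under the choice of representatives).
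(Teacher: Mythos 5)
Your proposal is correct and follows essentially the same route as the paper's proof: $\Upsilon$ is defined via Lemma \ref{lem:qbip1}, its inverse via Lemma \ref{lem:qbip2}, the two are mutually inverse by the uniqueness clauses of those lemmas, and the endosymmetry component is obtained by transporting $\Theta_B$ along the fixed family $(\kappa_x)_{x \in \x_B}$ exactly as you describe. The only quibble is that with the paper's convention $\kappa_x : x \sym_B \rep{\x}_B$ the transport is $\Theta_B \mapsto \kappa_{y_B} \circ \Theta_B$ rather than $\kappa_{y_B}^{-1} \circ \Theta_B$ (your explicit formulas put the inverse on the wrong side, or equivalently tacitly assume $\kappa_x$ points the other way), but this bookkeeping slip is immaterial to the argument.
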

\begin{proof}
Given $(\theta_A^-, x^\sigma, \theta_B^+) \in \pswit_\sigma(\x_A,
\x_B)$ and $(\Omega_B^-, x^\tau, \Omega_C^+) \in \pswit_\tau(\x_B,
\x_C)$, we get
\[
(\psi_A^-, y^\tau \odot y^\sigma, \psi_C^+) \in \pswit_{\sigma,
\tau}(\x_A,
\x_B, \x_C)
\]
and $\Theta_B : \rep{\x}_B \sym_B y_B$ by Lemma \ref{lem:qbip1}; so we
set $\Upsilon((\theta_A^-, x^\sigma, \theta_B^+), (\Omega_B^-, x^\tau,
\Omega_C^+))$ as:
\[
((\psi_A^-, y^\tau \odot y^\sigma, \psi_C^+), (\kappa_{y_B} \circ
\Theta_B))\,.
\]

Reciprocally, given $(\psi_A^-, y^\tau \odot y^\sigma, \psi_C^+) \in
\pswit_{\sigma, \tau}(\x_A, \x_B, \x_C)$ and $\Xi_B \in \tilde{\x_B}$,
we set $\Theta_B = \kappa_{y_B}^{-1} \circ \Xi_B$ and apply Lemma
\ref{lem:qbip2} to get back two $\sim^+$-witnesses:
\begin{eqnarray*}
(\theta_A^-, x^\sigma, \theta_B^+) &\in& \pswit_\sigma(\x_A, \x_B)\\
(\Omega_B^-, x^\tau, \Omega_C^+) &\in& \pswit_\tau(\x_B, \x_C)
\end{eqnarray*}

That these constructions are inverses of each other is an immediate
consequence from the uniquess properties in Lemmas \ref{lem:qbip1} and
\ref{lem:qbip2}.
\end{proof}

\subsubsection{Preservation of composition} 
Extending earlier notations, for any $\x_A \in \wconf{A}$, we write
$\ptilde{\x_A}$ for the group of positive endo-symmetries on
$\rep{\x}_A$, and likewise for $\ntilde{\x_A}$. As for general
symmetries, if $x, y \in \x_A$ s.t. $x \sym_A^+ \rep{\x}_A$ and $y
\sym_A^+ \rep{\x}_A$, then there is a bijection
\[
\begin{array}{rcrcl}
(-)[x, y] &:& \ptilde{\x_A} &\bij & \ptilde{A}(x, y)\\
&& \theta^+ &\mapsto& (\kappa_y^+)^{-1} \circ \theta^+ \circ \kappa_x^+
\end{array}
\]
for $\ptilde{A}(x, y)$ the set of $\theta^+ : x \sym_A^+ y$; and having
chosen a positive symmetry $\kappa_x^+ : x \sym_A^+ \rep{\x}_A$ for all
$x$ positively symmetric to $\rep{\x}_A$ -- the same hold for negative
symmetries. Thus: 

\begin{lem}\label{lem:pswit_wit}
Consider strategies $\sigma \in \Strat(A, B)$, $\tau \in \Strat(B, C)$,
and symmetry classes $\x_A \in \wconf{A}, \x_B \in \wconf{B}$ and $\x_C
\in \wconf{C}$. Then we have bijections
\[
\begin{array}{rcrcl}
\Psi &:& \pswit_\sigma(\x_A, \x_B) &\bij& \ntilde{\x_A} \times 
\ptilde{\x_B} \times \wit^+_\sigma(\x_A, \x_B)\\
\Xi &:& \pswit_{\sigma, \tau}(\x_A, \x_B, \x_C) &\bij& \ntilde{\x_A} \times 
\ptilde{\x_C} \times \wit^+_{\sigma,\tau}(\x_A, \x_B,\x_C)
\end{array}
\]
s.t. for all $(\theta_A^-, x^\sigma, \theta_B^+) \in
\pswit_\sigma(\x_A, \x_B)$, writing $\Psi(\theta_A^-, x^\sigma,
\theta_B^+) = (\psi_A^-, \psi_B^+, y^\sigma)$, $x^\sigma =
y^\sigma$; and likewise, writing $\Xi(\theta_A^-, x^\tau\odot x^\sigma,
\theta_C^+) =
(\psi_A^-, \psi_C^+, y^\tau \odot y^\sigma)$, then $x^\sigma = y^\sigma$ and
$x^\tau = y^\tau$.
\end{lem}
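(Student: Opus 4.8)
The plan is to realise both $\Psi$ and $\Xi$ as fibrewise trivialisations of torsors. In each case there is an obvious forgetful map, discarding the two boundary symmetries and keeping only the underlying (interaction) witness; its non-empty fibres are torsors under the relevant groups of endo-symmetries, and the chosen symmetries $\kappa$ recalled just before the statement provide the trivialisations.

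For $\Psi$, note first that $(\theta_A^-, x^\sigma, \theta_B^+) \mapsto x^\sigma$ is a well-defined, surjective map $\pswit_\sigma(\x_A, \x_B) \to \wit^+_\sigma(\x_A, \x_B)$: the existence of $\theta_A^- : \rep{\x}_A \sym_A^- x^\sigma_A$ and $\theta_B^+ : x^\sigma_B \sym_B^+ \rep{\x}_B$ is exactly what certifies the two conditions defining $\wit^+_\sigma(\x_A,\x_B)$ in \eqref{eq:defwitp} (using that $\ntilde{A}$, $\ptilde{B}$ are closed under inverse). The fibre over a fixed $x^\sigma$ is the product of $\ntilde{A}(\rep{\x}_A, x^\sigma_A)$ and $\ptilde{B}(x^\sigma_B, \rep{\x}_B)$, both non-empty since $x^\sigma \in \wit^+_\sigma(\x_A, \x_B)$, and the bijections $(-)[\rep{\x}_A, x^\sigma_A]$ and $(-)[x^\sigma_B, \rep{\x}_B]$ recalled before the statement identify these factors with $\ntilde{\x_A}$ and $\ptilde{\x_B}$ respectively. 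Taking the disjoint union of these bijections over $x^\sigma \in \wit^+_\sigma(\x_A, \x_B)$ gives $\Psi$, and the third component is $x^\sigma$ itself by construction, so the stated equality $y^\sigma = x^\sigma$ is immediate.

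The bijection $\Xi$ is obtained in exactly the same way from the forgetful map $(\theta_A^-, x^\tau \odot x^\sigma, \theta_C^+) \mapsto x^\tau \odot x^\sigma$ onto $\wit^+_{\sigma, \tau}(\x_A, \x_B, \x_C)$. Here the $B$-matching condition $x^\sigma_B = x^\tau_B \in \x_B$ is carried by the datum $x^\tau \odot x^\sigma$ and so needs no separate treatment, and since $x^\tau \odot x^\sigma$ determines the causally compatible pair $(x^\tau, x^\sigma)$ by Proposition \ref{prop:char_comp}, the fibre over $x^\tau \odot x^\sigma$ is $\ntilde{A}(x^\sigma_A, \rep{\x}_A) \times \ptilde{C}(x^\tau_C, \rep{\x}_C)$, identified with $\ntilde{\x_A} \times \ptilde{\x_C}$ via $(-)[x^\sigma_A, \rep{\x}_A]$ and $(-)[x^\tau_C, \rep{\x}_C]$. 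Assembling over all interaction witnesses gives $\Xi$, which again leaves $x^\tau \odot x^\sigma$, hence $x^\sigma$ and $x^\tau$, untouched.

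I do not anticipate any genuine difficulty: the content is pure torsor bookkeeping. The two points to keep in mind are that the isomorphism families $\ntilde{A}, \ptilde{A}$ (and their analogues on $B$ and $C$) are closed under composition and inverse, so that the composites appearing in the maps $(-)[x,y]$ are well-typed, and that the conditions defining $\wit^+_\sigma$ and $\wit^+_{\sigma,\tau}$ coincide with the existence statements already witnessed by the boundary symmetries, which is what makes the forgetful maps well-defined and surjective.
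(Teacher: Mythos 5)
Your proof is correct and is essentially the paper's own argument: the paper defines $\Psi$ by the explicit formula $(\theta_A^-, x^\sigma, \theta_B^+) \mapsto (\theta_A^- \circ (\kappa^-_{x^\sigma_A})^{-1}, \theta_B^+ \circ (\kappa^+_{x^\sigma_B})^{-1}, x^\sigma)$, which is exactly your fibrewise trivialisation via the maps $(-)[x,y]$, and treats $\Xi$ identically. The torsor phrasing is a clean way to organise the same one-line verification.
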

\begin{proof}
To $(\theta_A^-, x^\sigma, \theta_B^+) \in \pswit_\sigma(\x_A, \x_B)$,
simply associate 
\[
\Psi(\theta_A^-, x^\sigma, \theta_B^+) = 
(\theta_A^- \circ (\kappa_{x^\sigma_A}^-)^{-1},
\theta_B^+ \circ (\kappa_{x^\sigma_B}^+)^{-1},
 x^\sigma) \in \ntilde{\x_A} \times 
\ptilde{\x_B} \times \wit^+_\sigma(\x_A, \x_B)\,,
\]
it is straightforward that this is a bijection. The proof for $\Xi$ is
the same.
\end{proof}

We now compose these bijections, to obtain:

\begin{lem}\label{lem:mainbij}
Consider strategies $\sigma \in \Strat(A, B)$, $\tau \in \Strat(B, C)$,
and symmetry classes $\x_A \in \wconf{A}, \x_B \in \wconf{B}$ and $\x_C
\in \wconf{C}$. Then we have a bijection:
\[
\begin{array}{rcl}
\Phi &:& \ntilde{\x_A} \times \tilde{\x_B} \times \ptilde{\x_C} \times
\wit^+_{\sigma, \tau}(\x_A, \x_B, \x_C)\\
&& \qquad \bij \quad
\ntilde{\x_A} \times \tilde{\x_B} \times \ptilde{\x_C} \times
\wit^+_{\sigma}(\x_A, \x_B) \times \wit^+_\tau(\x_B, \x_C)
\end{array}
\]
such that writing $\Phi(\theta_A^-, \theta_B, \theta_C^+, x^\tau \odot
x^\sigma) = (\varphi_A^-, \varphi_B, \varphi_C^+, y^\sigma, y^\tau)$, 
$x^\sigma \sym_\sigma y^\sigma$
and $x^\tau \sym_\tau y^\tau$.
\end{lem}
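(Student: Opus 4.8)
The plan is to build $\Phi$ by composing the bijections already established — Corollary~\ref{cor:main} and the two bijections of Lemma~\ref{lem:pswit_wit} — after first recording one elementary auxiliary bijection coming from canonicity, and then rearranging factors. Concretely, the ingredients are: $\Upsilon : \pswit_\sigma(\x_A, \x_B) \times \pswit_\tau(\x_B, \x_C) \bij \pswit_{\sigma, \tau}(\x_A, \x_B, \x_C) \times \tilde{\x_B}$ from Corollary~\ref{cor:main}; the bijections $\Psi$ and $\Xi$ of Lemma~\ref{lem:pswit_wit}, together with the evident analogue $\Psi'$ of $\Psi$ obtained by substituting $(\tau, B, C)$ for $(\sigma, A, B)$, namely $\pswit_\tau(\x_B, \x_C) \bij \ntilde{\x_B} \times \ptilde{\x_C} \times \wit^+_\tau(\x_B, \x_C)$; and a further bijection $c : \ntilde{\x_B} \times \ptilde{\x_B} \bij \tilde{\x_B}$ defined below.

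First I would observe that, since $\rep{\x}_B \in \x_B$ is canonical (Definition~\ref{def:canonical}, which holds because all games are assumed representable), every endo-symmetry $\theta_B : \rep{\x}_B \sym_B \rep{\x}_B$ factors uniquely as $\theta_B = \theta_B^+ \circ \theta_B^-$ with $\theta_B^- \in \ntilde{\x_B}$ and $\theta_B^+ \in \ptilde{\x_B}$; this gives the bijection $c : \ntilde{\x_B} \times \ptilde{\x_B} \bij \tilde{\x_B}$, $(\theta_B^-, \theta_B^+) \mapsto \theta_B^+ \circ \theta_B^-$. Then $\Phi$ is the composite of the following chain, in which at each step only the displayed factors are altered. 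Starting from $\ntilde{\x_A} \times \tilde{\x_B} \times \ptilde{\x_C} \times \wit^+_{\sigma, \tau}(\x_A, \x_B, \x_C)$, apply $\Xi^{-1}$ to the factors $\ntilde{\x_A} \times \ptilde{\x_C} \times \wit^+_{\sigma, \tau}(\x_A, \x_B, \x_C)$ to reach $\tilde{\x_B} \times \pswit_{\sigma, \tau}(\x_A, \x_B, \x_C)$; reorder and apply $\Upsilon^{-1}$ to reach $\pswit_\sigma(\x_A, \x_B) \times \pswit_\tau(\x_B, \x_C)$; apply $\Psi$ and $\Psi'$ to reach $\ntilde{\x_A} \times \ptilde{\x_B} \times \wit^+_\sigma(\x_A, \x_B) \times \ntilde{\x_B} \times \ptilde{\x_C} \times \wit^+_\tau(\x_B, \x_C)$; and finally reorder and apply $c$ to the pair $\ntilde{\x_B} \times \ptilde{\x_B}$, landing on $\ntilde{\x_A} \times \tilde{\x_B} \times \ptilde{\x_C} \times \wit^+_\sigma(\x_A, \x_B) \times \wit^+_\tau(\x_B, \x_C)$, which is the right-hand side of the statement.

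The symmetry compatibility claim is then automatic: the only step touching the underlying configurations $x^\sigma, x^\tau$ up to symmetry is $\Upsilon^{-1}$, which by Corollary~\ref{cor:main} satisfies $x^\sigma \sym_\sigma y^\sigma$ and $x^\tau \sym_\tau y^\tau$; the bijections $\Psi$, $\Psi'$, $\Xi$ preserve the underlying configuration \emph{on the nose} by Lemma~\ref{lem:pswit_wit}, and the reorderings together with $c$ do not involve configurations at all. Composing these, the configuration components of $\Phi(\theta_A^-, \theta_B, \theta_C^+, x^\tau \odot x^\sigma)$ are symmetric to the originals, as required.

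The only genuine content is the bookkeeping of which groupoids appear where, and I expect no real obstacle beyond checking that each intermediate bijection indeed affects only the claimed factors, which is immediate from the statements of Corollary~\ref{cor:main} and Lemma~\ref{lem:pswit_wit}. The single point that deserves emphasis is the use of canonicity in defining $c$: without $\rep{\x}_B$ canonical, endo-symmetries of $\rep{\x}_B$ would not factor uniquely into a negative and a positive part, so the count $\#\tilde{\x_B} = \#\ntilde{\x_B} \cdot \#\ptilde{\x_B}$ and hence the whole argument would break — this is precisely where representability is indispensable.
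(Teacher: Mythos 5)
Your proof is correct and follows essentially the same route as the paper's: compose $\Xi^{-1}$, then $\Upsilon^{-1}$, then $\Psi$ for both $\sigma$ and $\tau$, and finally absorb $\ntilde{\x_B} \times \ptilde{\x_B}$ into $\tilde{\x_B}$ via the unique factorization granted by canonicity of $\rep{\x}_B$. Your explicit tracking of which step moves the configurations (only $\Upsilon^{-1}$, up to symmetry; $\Psi$, $\Xi$ fix them on the nose) is a welcome elaboration of the paper's ``immediate verification.''
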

\begin{proof}
The bijection is obtained through the following composition:
\begin{eqnarray*}
&& \ntilde{\x_A} \times \tilde{\x_B} \times \ptilde{\x_C} \times
\wit^+_{\sigma, \tau}(\x_A, \x_B, \x_C)\\
&\bij& \tilde{\x_B} \times \pswit_{\sigma, \tau}(\x_A, \x_B, \x_C)\\
&\bij& \pswit_\sigma(\x_A, \x_B) \times \pswit_\tau(\x_B, \x_C)\\
&\bij& \ntilde{\x_A} \times \wit^+_\sigma(\x_A, \x_B) \times
\ptilde{\x_B} \times \ntilde{\x_B} \times \wit^+_\tau(\x_B, \x_C) \times
\ptilde{\x_C}\\
&\bij& \ntilde{\x_B} \times \tilde{\x_B} \times \ptilde{\x_C} \times
\wit^+_\sigma(\x_A, \x_B) \times \wit^+_\tau(\x_B, \x_C)\,.
\end{eqnarray*}
using $\Xi$ in Lemma \ref{lem:pswit_wit}, then $\Upsilon^{-1}$ in
Corollary \ref{cor:main}, then $\Psi$ from Lemma \ref{lem:pswit_wit} for
$\sigma$ and $\tau$, and finally the bijection $\tilde{\x_B} \bij
\ptilde{\x_B} \times \ntilde{\x_B}$ coming from the canonicity of
$\rep{\x}_B$.
That $\Phi$ preserves symmetry classes in $\sigma$ and $\tau$
is an immediate verification.
\end{proof}

From $\Phi$, it immediately follows for $\x_A \in \wconf{A}$, $\x_B \in
\wconf{B}$ and $\x_C \in \wconf{C}$ we have
\begin{eqnarray}
\sharp \wit^+_{\sigma, \tau}(\x_A, \x_B, \x_C) = \sharp
\wit^+_{\sigma}(\x_A, \x_B) \times \sharp \wit^+_\tau(\x_B,
\x_C)\,,\label{eq:sumup}
\end{eqnarray}
however there is no clear way to realize the corresponding bijection
directly, without invoking symmetries. Of course the bijection must
exist for cardinality reasons, but then it may not preserve symmetry
classes in $\sigma$ and $\tau$ -- which is necessary for the
quantitative generalization in Section \ref{sec:coll_rw}. Nevertheless,
\eqref{eq:sumup} allows us to conclude the core result of the paper:

\begin{cor}\label{cor:functor}
Consider $\sigma \in \Strat(A, B)$ and $\tau \in \Strat(B, C)$. Then,
\[
\coll(\tau \odot \sigma)_{\x_A, \x_C} = \sum_{\x_B \in \wconf{B}}
\coll(\sigma)_{\x_A, \x_B} \times \coll(\tau)_{\x_B, \x_C}
\]
for all $\x_A \in \wconf{A}$ and $\x_C \in \wconf{C}$.
\end{cor}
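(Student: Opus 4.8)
The plan is to combine the key combinatorial equality \eqref{eq:sumup}, just established via the bijection $\Phi$ of Lemma \ref{lem:mainbij}, with the decomposition of composition-witnesses into interaction witnesses indexed by the middle symmetry class, namely \eqref{eq7}. First I would recall that by definition \eqref{eq:coll},
\[
\coll(\tau \odot \sigma)_{\x_A, \x_C} = \sharp \wit^+_{\tau \odot \sigma}(\x_A, \x_C)\,,
\]
and then use the bijection \eqref{eq7}, $\wit^+_{\tau \odot \sigma}(\x_A, \x_C) \bij \sum_{\x_B \in \wconf{B}} \wit^+_{\sigma, \tau}(\x_A, \x_B, \x_C)$, so that
\[
\sharp \wit^+_{\tau \odot \sigma}(\x_A, \x_C) = \sum_{\x_B \in \wconf{B}} \sharp \wit^+_{\sigma, \tau}(\x_A, \x_B, \x_C)\,.
\]
Here the sum on the right is a (possibly infinite) sum in $\N$; one should note it is computed as a supremum of finite partial sums as in \eqref{eq:inf_sum}, and that the bijection \eqref{eq7} is an honest bijection of sets, so the equality of cardinalities holds in $\N \cup \{+\infty\}$ regardless of finiteness.

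Next I would apply \eqref{eq:sumup} termwise: for each fixed $\x_B \in \wconf{B}$,
\[
\sharp \wit^+_{\sigma, \tau}(\x_A, \x_B, \x_C) = \sharp \wit^+_\sigma(\x_A, \x_B) \times \sharp \wit^+_\tau(\x_B, \x_C) = \coll(\sigma)_{\x_A, \x_B} \times \coll(\tau)_{\x_B, \x_C}\,,
\]
again using \eqref{eq:coll}. Substituting into the previous display yields
\[
\coll(\tau \odot \sigma)_{\x_A, \x_C} = \sum_{\x_B \in \wconf{B}} \coll(\sigma)_{\x_A, \x_B} \times \coll(\tau)_{\x_B, \x_C}\,,
\]
which is exactly the claim. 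This matches the composition formula \eqref{eq:relsum} in $\Rel{\N}$ under the identification $\wconf{B} \bij \rintr{B}$ of Lemma \ref{lem:web_wconf}.

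The only genuinely delicate point in writing this up — and really the point where all the work of Section \ref{sec:wit_comp} has been spent — is establishing \eqref{eq:sumup} itself, i.e. that the cardinality of interaction witnesses factors as the product of the cardinalities of the two one-sided positive-witness sets. This is not a direct bijection between $\wit^+_{\sigma, \tau}(\x_A, \x_B, \x_C)$ and $\wit^+_\sigma(\x_A, \x_B) \times \wit^+_\tau(\x_B, \x_C)$: the honest bijection is only available after padding both sides with the groups $\ntilde{\x_A}$, $\tilde{\x_B}$, $\ptilde{\x_C}$, which is what $\Phi$ does; cancelling the common factors to get \eqref{eq:sumup} requires that these groups be finite, which is guaranteed because $\rep{\x}_A, \rep{\x}_B, \rep{\x}_C$ are configurations — hence finite sets — so they have only finitely many automorphisms in the respective isomorphism families. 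I would therefore insert a short remark that $\sharp\ntilde{\x_A}$, $\sharp\tilde{\x_B}$, $\sharp\ptilde{\x_C}$ are finite nonzero natural numbers, so that the identity $n \cdot a = n \cdot b \Rightarrow a = b$ (which holds in $\N$, and in general is exactly the \emph{integer division} hypothesis on $\R$) lets us pass from $\Phi$ to \eqref{eq:sumup}. With that in hand, the corollary above is a two-line chain of equalities and poses no further obstacle.
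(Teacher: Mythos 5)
Your proof is correct and follows exactly the paper's own route: decompose $\wit^+_{\tau\odot\sigma}(\x_A,\x_C)$ via \eqref{eq7} into a sum of interaction witnesses over $\x_B$, then factor each term by \eqref{eq:sumup}. Your added remark that deriving \eqref{eq:sumup} from the bijection $\Phi$ of Lemma \ref{lem:mainbij} requires cancelling the finite, nonzero factors $\sharp\ntilde{\x_A}$, $\sharp\tilde{\x_B}$, $\sharp\ptilde{\x_C}$ is a correct observation that the paper leaves implicit here (and only makes explicit in the $\R$-weighted setting via integer division).
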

\begin{proof}
We perform the following direct computation, using \eqref{eq7} and \eqref{eq:sumup}.
\begin{eqnarray*}
\sharp \wit^+_{\tau \odot \sigma}(\x_A, \x_C)&=& \sum_{\x_B \in
\wconf{B}} \sharp \wit^+_{\sigma, \tau}(\x_A, \x_B, \x_B)\\
&=& \sum_{\x_B \in \wconf{B}} \sharp \wit^+_\sigma(\x_A, \x_B) \times
\wit^+_\tau(\x_B, \x_C)\,.\qedhere
\end{eqnarray*}
\end{proof}

\section{Preservation of the Interpretation}
\label{sec:pres_intr}

Now that preservation of composition is clear, we deal with the rest of the interpretation.

\subsection{Structure-preserving functors} We first set up the
categorical machinery.

\subsubsection{Cartesian closed functors} 
We start with cartesian closed functors, the 
appropriate notion of morphisms between cartesian closed categories, 
preserving the interpretation of the simply-typed $\lambda$-calculus.
This can be straightforwardly adaptated to $\sim$-categories.

\begin{defi}
Let $\C, \D$ be cartesian closed $\sim$-categories. A $\sim$-functor 
\[
F : \C \to \D
\]
is \textbf{cartesian closed} if it comes equipped with 
for any $A, B \in \C_0$, maps
\[
\begin{array}{rcrcl}
k^\top &:& \top &\to& F\top\\
k_{A, B}^\with &:& FA \with FB &\to& F(A\with B)\\
k_{A, B}^\tto &:& FA \tto FB &\to& F(A\tto B)
\end{array}
\]
invertible up to $\sim$; and such that 
the following diagrams commute up to $\sim$:
\[
\scalebox{.8}{$
\xymatrix{
&FA\\
FA \with FB
	\ar[rr]^{k_{A, B}^\with}
	\ar[dr]_{\pi_2}
	\ar[ur]^{\pi_1}&&
F(A \with B)
	\ar[dl]^{F(\pi_2)}
	\ar[ul]_{F(\pi_1)}\\
&FB
}$}
\qquad
\qquad
\raisebox{-15pt}{$
\scalebox{.8}{$
\xymatrix@C=50pt{
F(A\tto B) \with FA
	\ar[r]^{k^\with_{A\tto B, A}}&
F((A\tto B)\with A)
	\ar[d]^{F(\evm_{A, B})}\\
(FA \tto FB)\with FA
	\ar[u]^{k^\tto_{A, B} \with FA}
	\ar[r]_{\evm_{FA, FB}}&
FB
}
$}$}
\]
\end{defi}

We use notions of cartesian closed categories with
explicit structure, which must be preserved up to isomorphism. It is not
necessary to require that $t^\with$ and $t^\tto$ are natural (up to
$\sim$); this automatically follows. Likewise, preservation
of projections and evaluation suffice to ensure that pairing and
currying are also preserved.

In fact, cartesian closed $\sim$-functors ensure
preservation up to isomorphism of the interpretation of the simply-typed
$\lambda$-calculus, in the following sense: assume chosen 
\[
k^\alpha : \intr{\alpha}_\D \to F(\intr{\alpha}_\C)
\]
an isomorphism for any base type $\alpha$. Then, by induction on types
one can form isos
\[
\begin{array}{rcrcl}
k^\Ty_A &:& \intr{A}_\D &\to& F(\intr{A}_\C)\\
k^\Ctx_\Gamma &:& \intr{\Gamma}_\D &\to& F(\intr{\Gamma}_\D)
\end{array}
\]
in $\D$ for every type $A$ and context $\Gamma$; it is then a lengthy 
exercise to prove that 
\[
\xymatrix{
\intr{\Gamma}_\D
	\ar[r]^{\intr{M}_\D}
	\ar[d]_{k^\Ctx_\Gamma}
	\ar@{}[dr]|\sim&
\intr{A}_\C
	\ar[d]^{k^\Ty_A}\\
F(\intr{\Gamma}_\C)
	\ar[r]_{F(\intr{M}_\C)}&
F(\intr{A}_\C)
}
\]
for every simply-typed $\lambda$-term $\Gamma \vdash M : A$; \emph{i.e.}
$F$ preserves the interpretation up to iso.

\subsubsection{Relative Seely $\sim$-functors} 

\begin{defi}\label{def:relseelyfunctors}
\changed{
Let $\C, \D$ be relative Seely $\sim$-categories. A \textbf{relative Seely ($\sim$-)functor} $\C \to \D$ is a functor $F : \C \to \D$ which restricts to $F : \C_s \to \D_s$, equipped with:
\begin{itemize}
\item 
For every $A, B\in \C$, morphisms
\[
\begin{array}{rcrcl}
t^\tensor_{A,B} &:& FA \tensor FB &\to& F(A\tensor B)\\
t^1 &:& 1 &\to& F1;
\end{array}
\]
making $(F, t^\tensor, t^1)$ a symmetric monoidal functor $(\C, \tensor, 1) \to (\D, \tensor, 1)$;
\item for every $S, T\in \C_s$, morphisms
\[
\begin{array}{rcrcl}
t^\with_{S,T} &:& FS \with FT &\to& F(S\with T)\\
t^\top &:& \top &\to& F\top;
\end{array}
\]
\item For every $A \in \C$ and $S \in \C_s$, a morphism
\[
\begin{array}{rcrcl}
t^\lin_{A,S} &:& FA \lin FS &\to& F(A\lin S);
\end{array}
\]
\item For every $S \in \C_s$, a morphism
\[
\begin{array}{rcrcl}
t^\oc_S &:& \oc FS &\to& F\oc S;
\end{array}
\]
\end{itemize}
all of which are invertible up to $\sim$ and satisfy the coherence axioms of Figure~\ref{fig:seely-functors} up to $\sim$, such that for every $S, T \in \C_s$ and $f : \oc S \to T$, the diagram }
\begin{equation}
\label{eq:preservationpromotion}
\begin{tikzcd}[column sep=3em]
\oc F S \arrow[swap]{d}{t^\oc_S} \arrow{r}{(Ff \circ t^\oc_S)^\dagger} & \oc F T \arrow{d}{t^\oc_T} \\ 
F \oc S \arrow{r}{F(f^\dagger)} & F(\oc T)  
\end{tikzcd}
\end{equation}
commutes up to $\sim$.
\end{defi}

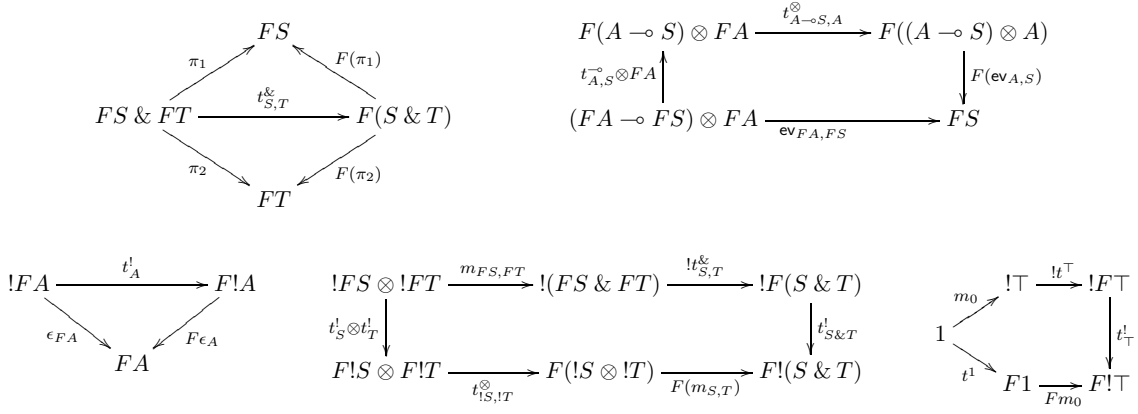
\begin{figure}
\begin{mathpar}
\scalebox{.8}{$
\xymatrix{
&FS\\
FS \with FT
        \ar[rr]^{t_{S, T}^\with}
        \ar[dr]_{\pi_2}
        \ar[ur]^{\pi_1}&&
F(S \with T)
        \ar[dl]^{F(\pi_2)}
        \ar[ul]_{F(\pi_1)}\\
&FT
}$}
\and
\scalebox{.8}{$
\xymatrix@C=50pt{
F(A\lin S) \tensor FA
        \ar[r]^{t^\tensor_{A\lin S, A}}&
F((A\lin S)\tensor A)
        \ar[d]^{F(\evm_{A, S})}\\
(FA \lin FS)\tensor FA
        \ar[u]^{t^\lin_{A, S} \tensor FA}
        \ar[r]_{\evm_{FA, FS}}&
FS
}
$}
\and
\scalebox{.8}{$
\xymatrix{
\oc F A	\ar[rr]^{t^\oc_A}
	\ar[dr]_{\der_{FA}}&&
F\oc A	\ar[dl]^{F\der_A}\\
&FA
}
$}
\and
\scalebox{.8}{$
\xymatrix@C=40pt{
\oc FS \tensor \oc FT
	\ar[r]^{m_{FS, FT}}
	\ar[d]_{t^\oc_S \tensor t^\oc_T}&
\oc (FS \with FT)
	\ar[r]^{\oc t^\with_{S, T}}&
\oc F(S\with T)
	\ar[d]^{t^\oc_{S \with T}}\\
F\oc S \tensor F \oc T
	\ar[r]_{t^\tensor_{\oc S, \oc T}}&
F(\oc S\tensor \oc T)
	\ar[r]_{F(m_{S, T})}&
F\oc (S\with T)
}
$}
\and
\scalebox{.8}{$
\xymatrix@C=20pt@R=10pt{
&\oc \top
	\ar[r]^{\oc t^\top}&
\oc F\top
	\ar[dd]^{t^\oc_\top}\\
1	\ar[ur]^{m_0}
	\ar[dr]_{t^1}\\
&F1	\ar[r]_{Fm_0}&
F\oc \top
}
$}
\end{mathpar}
\caption{Coherence diagrams for relative Seely functors}
\label{fig:seely-functors}
\end{figure}

In this paper, we only use the following property of relative
Seely functors:
\begin{prop}\label{prop:seely_functor_kleisli}
\changed{ A relative Seely $\sim$-functor $F : \C \to \D$ induces a cartesian closed $\sim$-functor $F_\oc : \C_\oc \to \D_\oc$ defined by 
 $F_\oc(S) = F(S)$ for all
$S \in \C_\oc$, 
\[
F_\oc(f) = (F f) \circ t^\oc_S \in \D_\oc(FS, FT)
\]
for all $f \in \C_\oc(S, T)$, and equipped with the following structural isomorphims:
\[
\begin{array}{rclcl}
k^\top &=& t^\top \circ \der_\top &\in& \D_\oc(\top, F\top)\\
k^\with_{S, T} &=& t^\with_{S, T} \circ \der_{FS\with FT} &\in&
\D_\oc(FS \with FT, F(S\with T))\\
k^\tto_{S, T} &=& t^\lin_{FS, T} \circ ((t^\oc_S)^{-1} \lin FT) \circ
\der_{FS\tto FT} &\in& \D_\oc(FS \tto FT, F(S\tto T)).
\end{array}
\]
}
\end{prop}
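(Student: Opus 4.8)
The plan is to verify, in turn, that $F_\oc$ is a well-defined $\sim$-functor and that it is cartesian closed with the stated structural isomorphisms. Throughout, recall that $\C_\oc(S, T) = \C(\oc S, T)$, that the identity on $S$ in $\C_\oc$ is $\der_S$, and that Kleisli composition of $f : \oc S \to T$ with $g : \oc T \to U$ is $g \circ f^\dagger$; the finite products of $\C_\oc$ are those of $\C_s$ and the function space is $S \tto T = \oc S \lin T$, with the $\C_\oc$-projections and $\C_\oc$-evaluation built out of $\der$, the $m_{S,T}$ and $\evm$ exactly as for Seely categories (cf.\ Appendix~\ref{app:relative}). All coherence conditions below only need to hold up to $\simstrat$; since $F$ preserves $\simstrat$ and every component $t^\tensor, t^1, t^\with, t^\top, t^\lin, t^\oc$ is invertible up to $\simstrat$, this adds only bookkeeping, and I would argue as though the relevant equations held on the nose.

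First, $F_\oc$ preserves $\simstrat$ because $F$ does and $t^\oc_S$ is a fixed morphism. It preserves identities: $F_\oc(\der_S) = F(\der_S) \circ t^\oc_S = \der_{FS}$ by the dereliction triangle of Figure~\ref{fig:seely-functors}, and $\der_{FS}$ is the identity of $FS$ in $\D_\oc$. For composition, given $f : \oc S \to T$ and $g : \oc T \to U$, one compares $F_\oc(g \circ f^\dagger) = F(g) \circ F(f^\dagger) \circ t^\oc_S$ with $F_\oc(g) \circ (F_\oc f)^\dagger = F(g) \circ t^\oc_T \circ (F(f) \circ t^\oc_S)^\dagger$; these agree precisely by the square~\eqref{eq:preservationpromotion} relating $F$ to promotion. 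Hence $F_\oc$ is a $\sim$-functor.

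Next, cartesian closedness. The maps $k^\top$, $k^\with_{S,T}$, $k^\tto_{S,T}$ are invertible up to $\simstrat$, being Kleisli composites of $\der$ (an identity in $\D_\oc$) with the morphisms $t^\top$, $t^\with_{S,T}$, $t^\lin$, and $(t^\oc_S)^{-1} \lin FT$, all invertible up to $\simstrat$ (the last since $- \lin FT$ preserves isos up to $\simstrat$). It then remains to check the two coherence diagrams. Preservation of the projections follows by unfolding the $\C_\oc$- and $\D_\oc$-projections (each a $\C_s$-projection precomposed with a dereliction) and applying the $\with$-triangle of Figure~\ref{fig:seely-functors} together with the $\der$-triangle. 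Preservation of evaluation is the substantial part: one unfolds $\evm^{\oc}_{S,T} : \oc\bigl((S\tto T)\with S\bigr) \to T$ as $\evm_{\oc S, T} \circ (\der_{\oc S \lin T} \tensor \oc S) \circ m_{S\tto T, S}^{-1}$, expands $F_\oc$, $k^\with$, $k^\tto$ and the Kleisli composites in the evaluation square, and reduces the resulting identity in $\D$ to the coherence axioms of Figure~\ref{fig:seely-functors} — the $\evm$/$t^\lin$ square, the pentagon relating $t^\oc$, $m$, $t^\tensor$ and $t^\with$, and the symmetric monoidal functor axioms for $(F, t^\tensor, t^1)$ — together with naturality of $m$, $\der$, and the relative comonad laws governing $(-)^\dagger$.

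The main obstacle is this last diagram chase for preservation of evaluation: it is purely formal but lengthy, since it must thread the definition of Kleisli evaluation through the whole package of Seely-functor coherence data, and the relative aspect (evaluation and products live only over strict objects) plus the $\sim$-categorical aspect (equalities up to $\simstrat$, congruence) multiply the number of steps without introducing a real difficulty. In fact the argument is exactly the standard one showing that a Seely functor induces a cartesian closed functor between the associated Kleisli categories, specialised to the relative $\sim$-categorical setting \cite{panorama}, and the full verification can be carried out as in Appendix~\ref{app:relative}.
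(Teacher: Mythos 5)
Your proposal is correct and follows exactly the route the paper intends: the paper's own proof is just the remark that this is ``a lengthy but direct diagram chase,'' and your outline (identity via the dereliction triangle, composition via the promotion square \eqref{eq:preservationpromotion}, invertibility of the $k$'s from invertibility of the $t$'s, and reduction of the projection and evaluation diagrams to the coherence axioms of Figure~\ref{fig:seely-functors}) is precisely that chase made explicit.
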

\begin{proof}
A lengthy but direct diagram chase.
\end{proof}

This sets most of the proof obligations for proving soundness of the
collapse from $\Strat$ to $\Rel{\N}$: we must show that $\coll(-)$
yields a \changed{relative Seely} $\sim$-functor from $\Strat$ to $\Rel{\N}$.

\subsection{A symmetric monoidal $\sim$-functor} \changed{We define the functor $\coll(-)$ and equip it with relative Seely structure, starting with symmetric monoidal structure.} 

\subsubsection{A $\sim$-functor} As expected, on
arenas we set $\coll(A) = \wconf{A}$. From Corollary \ref{cor:functor},
we already have an operation preserving composition
\[
\coll(-) : \Strat \to \Rel{\N}.
\]
To get a $\sim$-functor, it remains to check
that $\coll(-)$ preserves identities and $\sim$.

\begin{prop}\label{prop:cc_collapse}
Consider $A$ an arena, and $\x_A, \y_A \in \wconf{A}$. Then,
$\coll(\cc_A)_{\x_A, \y_A} = \delta_{\x_A, \y_A}$.
\end{prop}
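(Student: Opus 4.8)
The plan is to compute the set $\wit^+_{\cc_A}(\x_A,\y_A)$ directly from the characterisation of the $+$-covered configurations of copycat. By Proposition~\ref{prop:cc_pcov} we have $\confp{\cc_A} = \{x_A \parallel x_A \mid x_A \in \conf{A}\}$, and since $\pr_{\cc_A}$ is the identity on events (Definition~\ref{def:copycat}) such a configuration displays to $x_A \parallel x_A \in \conf{A \vdash A}$; in particular its two projections to the domain and codomain copies of $A$ coincide and both equal $x_A$. Reading off the definition~\eqref{eq:defwitp} of positive witnesses, and identifying $x_A \parallel x_A$ with $x_A$, this gives
\[
\wit^+_{\cc_A}(\x_A,\y_A) \;=\; \{\, x_A \in \conf{A} \mid x_A \sym_A^- \rep{\x}_A \ \text{and}\ x_A \sym_A^+ \rep{\y}_A \,\}\,.
\]
First I would dispatch the case $\x_A \neq \y_A$: any member $x_A$ of this set yields $\rep{\x}_A \sym_A x_A \sym_A \rep{\y}_A$, hence $\x_A = \y_A$, a contradiction; so the set is empty and $\coll(\cc_A)_{\x_A,\y_A} = 0 = \delta_{\x_A,\y_A}$.

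It then remains to treat $\x_A = \y_A$, where I claim $\wit^+_{\cc_A}(\x_A,\x_A) = \{\rep{\x}_A\}$. Membership of $\rep{\x}_A$ is immediate since $\id_{\rep{\x}_A}$ lies in every isomorphism family, in particular in $\ptilde{A}$ and $\ntilde{A}$. For the converse, take $x_A$ in the set, with $\theta^- : \rep{\x}_A \sym_A^- x_A$ (using closure of $\ntilde{A}$ under inverse) and $\theta^+ : x_A \sym_A^+ \rep{\x}_A$. Then $\theta^+ \circ \theta^- : \rep{\x}_A \sym_A \rep{\x}_A$ is an endo-symmetry, presented as a negative symmetry followed by a positive one with $x_A$ as intermediate configuration. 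By canonicity of $\rep{\x}_A$ (Definition~\ref{def:canonical}) this endo-symmetry also factors in that shape through $\rep{\x}_A$ itself, and by the uniqueness of the intermediate object in Lemma~\ref{lem:factor} the two factorisations must coincide, so $x_A = \rep{\x}_A$. Hence $\sharp\,\wit^+_{\cc_A}(\x_A,\x_A) = 1 = \delta_{\x_A,\x_A}$, which completes the argument.

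The only point requiring genuine care is the polarity bookkeeping: a $+$-covered configuration of $\cc_A$ is balanced in the strong sense that its two sides are literally the same configuration of $A$, but the witness condition is asymmetric, stated with a negative symmetry on the domain side and a positive symmetry on the codomain side, and it is precisely this asymmetry — combined with canonicity and Lemma~\ref{lem:factor} — that collapses the potential extra witnesses. Everything else is routine, and this serves as the identity base case for the functoriality established in Corollary~\ref{cor:functor}.
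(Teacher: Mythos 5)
Your proof is correct and follows essentially the same route as the paper's: both cases ($\x_A \neq \y_A$ via composing the two witness symmetries, and $\x_A = \y_A$ via canonicity of $\rep{\x}_A$ forcing the unique witness to be $\rep{\x}_A \parallel \rep{\x}_A$) match the paper's argument, with your version merely spelling out more explicitly how canonicity combines with the uniqueness of the factorisation in Lemma~\ref{lem:factor}. The polarity bookkeeping you flag is handled identically in the paper.
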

\begin{proof}
First, assume $\x_A \neq \y_A$ and, seeking a contradiction,
consider
$z_A \parallel z_A \in \wit^+_{\cc_A}(\x_A, \y_A)$,
relying on Proposition \ref{prop:cc_pcov} for the shape of $+$-covered
configurations. So there are
\[
\theta_A^- : z_A \sym_A^- \rep{\x}_A\,,
\qquad
\qquad
\theta_A^+ : z_A \sym_A^+ \rep{\y}_A\,,
\]
and so $\rep{\x}_A \sym_A \rep{\y}_A$ by composition, contradiction.

Next we must show that for all $\x_A \in \wconf{A}$,
$\wit_{\cc_A}^+(\x_A, \x_A)$ has exactly one element. First, it is
immediate by Proposition \ref{prop:cc_pcov} that $\rep{\x}_A \parallel
\rep{\x}_A \in \wit_{\cc_A}^+(\x_A, \x_A)$. For uniqueness, consider
$x_A \parallel x_A \in \wit_{\cc_A}^+(\x_A, \x_A)$. By definition of
$\sim^+$-witnesses, there are symmetries
\[
\theta_A^- : x_A \sym_A^- \rep{\x}_A\,,
\qquad
\qquad
\theta_A^+ : x_A \sym_A^+ \rep{\x}_A\,,
\]
but because $\rep{\x}_A$ is canonical this entails that $x_A =
\rep{\x}_A$.
\end{proof}

Next we prove that $\coll(-)$ preserves $\sim$. As in the target
category the equivalence relation $\sim$ is the identity, this amounts
to $\coll(-)$ being invariant under $\simstrat$.

\begin{prop}\label{prop:coll_pres_sim}
Consider $A, B$ arenas, and $\sigma, \tau \in \Strat(A, B)$ such that
$\sigma \simstrat \tau$.

Then, for all $\x_A \in \wconf{A}$ and $\x_B \in \wconf{B}$,
$\coll(\sigma)_{\x_A, \x_B} = \coll(\tau)_{\x_A, \x_B}$.
\end{prop}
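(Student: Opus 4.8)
The plan is to turn a positive isomorphism $\varphi : \sigma \simstrat \tau$ directly into a bijection, for each $\x_A \in \wconf{A}$ and $\x_B \in \wconf{B}$, between $\wit_\sigma^+(\x_A, \x_B)$ and $\wit_\tau^+(\x_A, \x_B)$. Since $\coll(\sigma)_{\x_A, \x_B} = \sharp \wit_\sigma^+(\x_A, \x_B)$ by \eqref{eq:coll}, equality of these cardinalities gives the statement.

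First I would check that $\varphi$ maps $\confp{\sigma}$ into $\confp{\tau}$. As an isomorphism of ess, $\varphi$ is an order-iso, hence preserves maximality of events within a configuration; and since $\pr_\tau \circ \varphi \sim^+ \pr_\sigma$, for any $x^\sigma \in \conf{\sigma}$ the family $\theta_{x^\sigma} = \{(\pr_\sigma(s), \pr_\tau(\varphi(s))) \mid s \in x^\sigma\}$ is a positive symmetry of $A \vdash B = A^\perp \parr B$; positive symmetries preserve polarities, so $\pol_\tau(\varphi(s)) = \pol_\sigma(s)$ for all $s$, and $+$-coveredness is preserved.

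Next, fix $x^\sigma \in \wit_\sigma^+(\x_A, \x_B)$ and write $\pr_\sigma(x^\sigma) = x^\sigma_A \parallel x^\sigma_B$ and $\pr_\tau(\varphi(x^\sigma)) = y_A \parallel y_B$. The symmetry $\theta_{x^\sigma} \in \ptilde{A^\perp \parr B}$ decomposes as $\theta_A \parallel \theta_B$ with $\theta_A \in \ptilde{A^\perp} = \ntilde{A}$ and $\theta_B \in \ptilde{B}$, so $\theta_A : x^\sigma_A \sym_A^- y_A$ and $\theta_B : x^\sigma_B \sym_B^+ y_B$. Composing with the witness conditions $x^\sigma_A \sym_A^- \rep{\x}_A$ and $x^\sigma_B \sym_B^+ \rep{\x}_B$, and using that $\ntilde{A}$ and $\ptilde{B}$ are closed under composition and inverse, we get $y_A \sym_A^- \rep{\x}_A$ and $y_B \sym_B^+ \rep{\x}_B$; that is, $\varphi(x^\sigma) \in \wit_\tau^+(\x_A, \x_B)$.

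Finally, since $\ptilde{A^\perp \parr B}$ is a groupoid, $\varphi^{-1} : \tau \simstrat \sigma$ is again a positive isomorphism, so the same argument shows $\varphi^{-1}$ restricts to a map $\wit_\tau^+(\x_A, \x_B) \to \wit_\sigma^+(\x_A, \x_B)$. As $\varphi$ and $\varphi^{-1}$ are mutually inverse bijections on configurations, these restrictions are mutually inverse, yielding the required bijection and hence $\coll(\sigma)_{\x_A, \x_B} = \coll(\tau)_{\x_A, \x_B}$. There is no real obstacle here; the only point needing care is the polarity inversion on the $A$-component, i.e. that positive symmetries of $A \vdash B$ restrict to \emph{negative} symmetries on the $A$-side, which is precisely what makes the two witness conditions match under $\varphi$.
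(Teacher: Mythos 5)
Your proof is correct and follows essentially the same route as the paper's: both turn the positive isomorphism $\varphi$ into a bijection $\wit^+_\sigma(\x_A,\x_B) \bij \wit^+_\tau(\x_A,\x_B)$ by decomposing the induced positive symmetry of $A \vdash B$ into a negative component on $A$ and a positive component on $B$, and composing with the witness symmetries. The only difference is presentational; your explicit remark that the polarity flip on the $A$-side is what makes the witness conditions match is exactly the point the paper encodes in writing $\psi^x_A : x^\sigma_A \sym_A^- y^\tau_A$ and $\psi^x_B : x^\sigma_B \sym_B^+ y^\tau_B$.
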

\begin{proof}
By Definition \ref{def:pos_iso}, there is a positive isomorphism
$\varphi : \sigma \simstrat \tau$. Recall that this means
\[
\{(\pr_\sigma(s), \pr_\tau\circ \varphi(s)) \mid s \in x\} \in \ptilde{A\vdash B}
\]
for all $x \in \conf{\sigma}$, with $\varphi$ an isomorphism of ess --
we write $\psi_A^x \parallel \psi_B^x$ for this symmetry, satisfying
by construction $\psi_A^x : x^\sigma_A \sym_A^- y^\tau_A$ and
$\psi_B^x : x^\sigma_B \sym_B^+ y^\tau_B$ writing $y^\tau = \varphi
x^\sigma\in \confp{\tau}$.

Now, for $\x_A \in \wconf{A}$ and $\x_B \in \wconf{B}$, we construct a
bijection
\[
\begin{array}{rcrcl}
\varphi &:& \wit^+_\sigma(\x_A, \x_B) &\bij& \wit^+_\tau(\x_A, \x_B)\\
&& x^\sigma &\mapsto& \varphi(x^\sigma)\,.
\end{array}
\]

Indeed, consider $x^\sigma \in \wit^+_\sigma(\x_A, \x_B)$. By
definition, there are $\theta_A^- : x^\sigma_A \sym_A^- \rep{\x}_A$ and
$\theta_B^+ : x^\sigma_B \sym_B^+ \rep{\x}_B$. Now, $\varphi(x^\sigma)
\in \confp{\tau}$ as $\varphi$ is an order-isomorphism preserving
polarities. Furthermore, we have $\theta_A^- \circ (\psi_A^x)^{-1} :
y^\tau_A \sym_A^- \rep{\x}_A$ and $\theta_B^+ \circ (\psi_B^x) :
y^\tau_B \sym_B^+ \rep{\x}_B$; which entails $y^\tau \in
\wit^+_\tau(\x_A, \x_B)$ as required. By the symmetrical reasoning
$\varphi^{-1}$ sends $\wit^+_\tau(\x_A, \x_B)$ to $\wit^+_\sigma(\x_A,
\x_B)$ and they are clearly mutual inverses, which concludes the proof.
\end{proof}

\subsubsection{Preservation of monoidal
structure}\label{subsubsec:pres_mon_str}
Next, $\coll(-)$ is a symmetric monoidal $\sim$-functor.

For $A, B$ any arenas, we provide the components:
\[
\begin{array}{rcrcl}
t^\tensor_{A, B} &:& \coll(A) \times \coll(B)
&\stackrel{\Rel{\N}}{\longto}& \coll(A\tensor B)\\
t^1 &:& 1 & \stackrel{\Rel{\N}}{\longto} &\coll(1)
\end{array}
\]
defined by $(t^\tensor_{A, B})_{(\x_A, \x_B), \y} = \delta_{\y, \x_A
\parallel \x_B}$ for every $\x_A \in \wconf{A}$ and $\x_B \in
\wconf{B}$; and $(t^1)_{\bullet, \emptyset} = 1$.

\begin{prop}\label{prop:smsf}
We have $(\coll(-), t^\tensor, t^1)$ a symmetric monoidal
$\sim$-functor.
\end{prop}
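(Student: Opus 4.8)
The plan is to check, in order, the three things required of a symmetric monoidal $\sim$-functor: that the structural components $t^\tensor$ and $t^1$ are invertible (up to $\sim$), that $t^\tensor$ is natural, and that the coherence diagrams for $\alpha$, $\lambda$, $\rho$ and $s$ commute. Since the equivalence relation $\sim$ on $\Rel{\N}$ is plain equality, ``up to $\sim$'' throughout means strictly; and we may freely use that $\coll(-) : \Strat \to \Rel{\N}$ is already a $\sim$-functor by Corollary~\ref{cor:functor} together with Propositions~\ref{prop:cc_collapse} and~\ref{prop:coll_pres_sim}. Note also that although $\sigma \tensor \tau$ is only defined up to iso (Proposition~\ref{prop:def_tensor}), it is iso-invariant, hence $\simstrat$-invariant, hence $\coll(\sigma\tensor\tau)$ is well-defined by Proposition~\ref{prop:coll_pres_sim}.

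Invertibility is immediate. The relation $t^1$ is the graph of a bijection $1 = \{\bullet\} \bij \{\emptyset\} = \wconf{1} = \coll(1)$, hence an iso. Each $t^\tensor_{A,B}$ is, by Lemma~\ref{lem:r_tensor}, precisely the graph of the bijection $s^\tensor_{A,B} : \wconf{A}\times\wconf{B} \bij \wconf{A\tensor B}$, $(\x_A,\x_B)\mapsto \x_A\parallel\x_B$, and is therefore an iso in $\Rel{\N}$.

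For naturality, fix $\sigma : A \vdash A'$ and $\tau : B \vdash B'$; the required equality $\coll(\sigma\tensor\tau) \circ t^\tensor_{A,B} = t^\tensor_{A',B'} \circ (\coll(\sigma)\tensor\coll(\tau))$ of $\R$-relations $\coll(A)\times\coll(B)\relto\coll(A'\tensor B')$ unfolds, via~\eqref{eq:relsum} and the fact that $t^\tensor$ is a (total) bijection-as-relation, to the purely combinatorial identity
\[
\sharp\wit^+_{\sigma\tensor\tau}(\x_A\parallel\x_B,\ \y_{A'}\parallel\y_{B'}) \;=\; \sharp\wit^+_\sigma(\x_A,\y_{A'})\cdot\sharp\wit^+_\tau(\x_B,\y_{B'})
\]
for all $\x_A\in\wconf{A}$, $\x_B\in\wconf{B}$, $\y_{A'}\in\wconf{A'}$, $\y_{B'}\in\wconf{B'}$. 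I would prove this by exhibiting a bijection $\wit^+_{\sigma\tensor\tau}(\x_A\parallel\x_B,\y_{A'}\parallel\y_{B'}) \bij \wit^+_\sigma(\x_A,\y_{A'})\times\wit^+_\tau(\x_B,\y_{B'})$. By Proposition~\ref{prop:def_tensor}, every $+$-covered configuration of $\sigma\tensor\tau$ is uniquely $x^\sigma\tensor x^\tau$ with $x^\sigma\in\confp{\sigma}$, $x^\tau\in\confp{\tau}$, and its display map is $(x^\sigma_A\parallel x^\tau_B)\parallel(x^\sigma_{A'}\parallel x^\tau_{B'})$. Using the representation convention $\rep{\x_A\parallel\x_B}_{A\tensor B} = \rep{\x_A}_A\parallel\rep{\x_B}_B$ of Figure~\ref{fig:rep_basic}, and the fact that positive and negative symmetries of a tensor decompose componentwise (Definition~\ref{def:games_tensor}), the two membership conditions defining $\wit^+_{\sigma\tensor\tau}$ split exactly into the conditions defining $\wit^+_\sigma$ on the $x^\sigma$-component and $\wit^+_\tau$ on the $x^\tau$-component. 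There are no symmetry-counting subtleties here, precisely because $\tensor$ is built from parallel composition of event structures rather than from a synchronization.

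Finally, for the coherence diagrams: the structural isos $\alpha_{A,B,C}$, $\lambda_A$, $\rho_A$, $s_{A,B}$ of $\Strat$ are copycat-like strategies whose $+$-covered configurations were described just after Proposition~\ref{prop:def_tensor}. A short computation exactly in the style of the proof of Proposition~\ref{prop:cc_collapse} — using canonicity of the chosen representatives $\rep{(-)}$ and Lemma~\ref{lem:factor} to force each positive witness to be unique and to match ``on the nose'' — shows that $\coll(\alpha_{A,B,C})$, $\coll(\lambda_A)$, $\coll(\rho_A)$, $\coll(s_{A,B})$ are, under the identifications provided by $t^\tensor$ and $t^1$, precisely the corresponding structural isos $\alpha$, $\lambda$, $\rho$, $s$ of the (genuine) symmetric monoidal category $\Rel{\N}$. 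Since $\coll(-)$ preserves composition (Corollary~\ref{cor:functor}), each coherence square for $(\coll(-),t^\tensor,t^1)$ then reduces to the corresponding coherence square in $\Rel{\N}$, which holds because $\Rel{\N}$ is symmetric monoidal. The main (though mild) obstacle in the whole argument is the naturality step: one must keep careful track of how the defining conditions of $\wit^+$ interact with the parallel decomposition of configurations and with the componentwise decomposition of symmetries in a tensor of games; once this bookkeeping is set up, the cardinality identity, and hence the proposition, follows.
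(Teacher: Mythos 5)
Your proposal is correct and follows essentially the same route as the paper: the crux in both is the bijection $\wit^+_{\sigma\tensor\tau}(\x_A\parallel\x_B,\y_{A'}\parallel\y_{B'}) \bij \wit^+_\sigma(\x_A,\y_{A'})\times\wit^+_\tau(\x_B,\y_{B'})$ obtained by restricting the order-iso of Proposition~\ref{prop:def_tensor}, using the componentwise decomposition of representatives and of positive/negative symmetries of a tensor, with the coherence diagrams then checked via the characterization of the $+$-covered configurations of the structural copycat-like strategies. The extra detail you give on invertibility and on the bookkeeping for the witness conditions is consistent with, and slightly more explicit than, the paper's argument.
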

\begin{proof}
The crux is the naturality of $t^\tensor$, corresponding to the fact
that the tensor operation on morphisms for $\Strat$ and $\Rel{\N}$
agree, \emph{i.e.} the following diagram commutes in $\Rel{\N}$
\[
\xymatrix{
\coll(A) \times \coll(B) 
	\ar[r]^{t^\tensor_{A, B}}
	\ar[d]_{\coll(\sigma) \tensor \coll(\tau)}&
\coll(A\tensor B)
	\ar[d]^{\coll(\sigma \tensor \tau)}\\
\coll(A') \times \coll(B')
	\ar[r]_{t^\tensor_{A', B'}}&
\coll(A' \tensor B')
}
\]
for all $A, B, A', B'$ arenas, and $\sigma \in \Strat(A, A'), \tau \in
\Strat(B, B')$. To prove this, we invoke the characterizing property of
the tensor of strategies in Proposition \ref{prop:def_tensor} -- we have
\[  
\begin{array}{rcrcl}
(- \tensor -) &:& \confp{\sigma} \times \confp{\tau} &\simeq&
\confp{\sigma \tensor \tau}\\
\end{array}
\]
such that $\pr_{\sigma\tensor \tau}(x^\sigma \tensor x^\tau) =
(x^\sigma_A \parallel x^\tau_B) \parallel (x^\sigma_{A'} \parallel
x^\tau_{B'})$ for all $x^\sigma \in \confp{\sigma}$ and $x^\tau \in
\confp{\tau}$. 
For all $\x_A \in \wconf{A}, \x_B \in \wconf{B}, \y_{A'}
\in \wconf{A'}$ and $\y_{B'} \in \wconf{B'}$, this immediately restricts to
\[
(- \tensor -) : \wit^+_\sigma(\x_A, \y_{A'}) \times \wit^+_\tau(\x_B,
\y_{B'}) \to \wit^+_{\sigma \tensor \tau}(\x_A \parallel \x_{B}, \y_{A'}
\parallel \y_{B'})\,.
\]

Via this bijection, both paths around the diagram compute to the quantity:
\[
\sharp (\wit^+_\sigma(\x_A, \y_{A'}) \times \wit^+_\tau(\x_B, \y_{B'}))
\]
for all $(\x_A, \x_B) \in \coll(A) \times \coll(B)$ and $\y_{A'}
\parallel \y_{B'} \in \coll(A\tensor B)$; as required.

The further coherence conditions, expressing that the associators,
unitors and symmetries agree in both categories, are all immediate
verifications relying on the characterization of the $+$-covered
configurations of the corresponding strategies.
\end{proof}

\subsection{A relative Seely $\sim$-functor} \changed{Next we study the
preservation of the modality $\oc(-)$, which is the most challenging.
Then we will deal with $\lin$ and $\with$.} 

\subsubsection{Preservation of the action of $\oc$ on morphisms.} 
\changed{
Inspecting the requirements for relative Seely functors
(Definition~\ref{def:relseelyfunctors}) we must first show that
$\coll(-)$ preserves strict objects; this is immediate since every
object is strict in $\Rel{\N}$. We must then exhibit $t^\oc_C : \oc
\coll(C) \to \coll(\oc C)$ for every strict $C$, and show commutation
of 
the diagram
\eqref{eq:preservationpromotion} up to $\sim$. However,
both in $\Strat$ and $\Rel{\R}$, the relative comonad $\oc$ is in fact
a proper comonad; this means that we have a concrete presentation of
promotion: for every $f : \oc C \to D$, $f^\dagger = \oc f \circ
\delta_C$. Thus the diagram \eqref{eq:preservationpromotion} amounts to
the following, for every $f : \oc C \to D$ with $C, D$ strict: 
\begin{equation}
\label{eq:preservationpromotion2}
\scalebox{.8}{$
\xymatrix@C=10pt{
\oc F C
	\ar[rr]^{\delta_C}
	\ar[dr]_{t^\oc_C}&&
\oc \oc FC
	\ar[rr]^{\oc t^\oc_C}&&
\oc F \oc C
	\ar[rr]^{\oc F f}&&
\oc F D
	\ar[dl]^{t^\oc_D}\\
&F\oc C	\ar[rr]_{F\delta_C}&&
F\oc \oc C
	\ar[rr]_{F\oc f}&&
F\oc D
}
$}
\end{equation}
}

\changed{For this, we will need to understand how the functorial action of $\oc$ in $\Strat$ relates to that of $\oc$ in $\Rel{\N}$. We study this now, before giving the definition of the maps $t^\oc_C$.}

The comparison is subtle and it seems a good idea to first recall the definition in $\Rel{\N}$: 
\begin{eqnarray}
(\oc \alpha)_{\mu, [y_1, \dots, y_n]} &=& 
\sum_{\substack{(x_1, \dots, x_n)\,\text{s.t.}\\ \mu = [x_1, \dots,
x_n]}} \prod_{1\leq i \leq n}
\alpha_{x_i, y_i}\label{eq5.1}
\end{eqnarray}
for any weighted relation $\alpha : X \relto Y$.
Something tricky is going on here. It looks like we are
summing over all permutations of $\{1, \dots, n\}$, but no:
permutations that lead to the same tuple are counted only once (and the
rest of the term is invariant under permutations yielding the same
tuple). We must understand how this arises in game
semantics.

We recall the game semantical definition that we must match
against \eqref{eq5.1}. Consider $A, B$ arenas, $\sigma \in \Strat(A,
B)$, $\x_A \in \wconf{\oc A}$ and $\y_B \in \wconf{\oc B}$, respectively with
\[
\rep{\x}_{\oc A} =\,\parallel_{1\leq i \leq p}^{\neq \emptyset}
\rep{\x}_A^i\,,
\qquad
\qquad
\rep{\y}_{\oc B} =\, \parallel_{1\leq i \leq n}^{\neq \emptyset}
\rep{\y}_B^i
\]
where, and from now on, we label these parallel compositions with ``$\neq
\emptyset$'' to emphasize that each component is non-empty. By
definition, $\coll(\oc \sigma)_{\x_{\oc A}, \y_{\oc B}} = \sharp
\wit_{\oc \sigma}^+(\x_{\oc A}, \y_{\oc B})$, where we have
\[
\wit^+_{\oc \sigma}(\x_{\oc A}, \y_{\oc B})
\bij 
\sum_{x \in \Sym_A^-(\rep{\x}_{\oc A})} 
\sum_{y \in \Sym_B^+(\rep{\y}_{\oc B})}
\wit_{\oc \sigma}(x, y)
\]
with $\Sym_A^-(\rep{\x}_{\oc A}) = \{x \in
\conf{\oc A} \mid x \sym_{\oc A}^- \rep{\x}_{\oc A}\}$ and
$\Sym_B^+(\rep{\y}_{\oc B}) = \{y \in \conf{\oc B} \mid y \sym_{\oc B}^+
\rep{\y}_{\oc B}\}$.

Our task is to link this sum to \eqref{eq5.1}, which will require us to
gradually decompose further its elements. First the sum over all $y \in
\Sym_{\oc B}^+(\rep{\y}_{\oc B})$ may be described very simply:

\begin{lem}
There is a bijection:
\[
\begin{array}{rcrcl}
\Sym_{\oc B}^+(\rep{\y}_{\oc B}) &\bij& \prod_{1 \leq i \leq n}
\Sym_B^+(\rep{\y}_B^i)\\
\parallel_{1\leq i \leq n}^{\neq \emptyset} y_B^i &\mapsto& (y_B^1, \dots, y_B^n)
\end{array}
\]
\end{lem}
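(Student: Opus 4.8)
The plan is to exploit the defining feature of \emph{positive} symmetries of $\oc B$: by Definition~\ref{def:bang}, any $\theta \in \ptilde{\oc B}$ acts as $\theta(i,b) = (i,\theta_i(b))$ for a family $(\theta_n)_{n\in\mathbb{N}} \in \ptilde{B}^{\mathbb{N}}$, i.e.\ it never permutes copy indices. Recalling that $\rep{\y}_{\oc B} = \biguplus_{1\le i\le n}\{i\}\times\rep{\y}_B^i$, it follows that any $y \in \conf{\oc B}$ with $y \sym_{\oc B}^+ \rep{\y}_{\oc B}$ uses exactly the copy indices $\{1,\dots,n\}$, with each component nonempty (symmetries are order-isos, hence preserve nonemptiness of components); so $y$ decomposes uniquely as $y = \parallel_{1\le i\le n}^{\neq\emptyset} y_B^i$. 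Thus the forward map $y \mapsto (y_B^1,\dots,y_B^n)$ is well-defined on $\Sym_{\oc B}^+(\rep{\y}_{\oc B})$, depending only on $y$ and not on the choice of witnessing symmetry.

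Next I would check that each $y_B^i$ indeed lies in $\Sym_B^+(\rep{\y}_B^i)$. Fixing some $\theta : \rep{\y}_{\oc B} \sym_{\oc B}^+ y$ with associated family $(\theta_n)_n \in \ptilde{B}^{\mathbb{N}}$, for $1 \le i \le n$ the restriction of $\theta_i$ to $\rep{\y}_B^i$ belongs to $\ptilde{B}$ (isomorphism families, and in particular $\ptilde{B}$, are closed under restriction by Definition~\ref{def:isofam}) and is a bijection $\rep{\y}_B^i \simeq y_B^i$; so $\rep{\y}_B^i \sym_B^+ y_B^i$, hence $y_B^i \in \Sym_B^+(\rep{\y}_B^i)$ by the groupoid property. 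Conversely, given $(y_B^1,\dots,y_B^n) \in \prod_{1\le i\le n}\Sym_B^+(\rep{\y}_B^i)$, I would pick for each $i$ a positive symmetry $\theta_i : \rep{\y}_B^i \sym_B^+ y_B^i$, pad the family with identities for indices $>n$, and verify from the characterization of $\ptilde{\oc B}$ in Definition~\ref{def:bang} that the induced $\theta(i,b)=(i,\theta_i(b))$ is a positive symmetry $\rep{\y}_{\oc B} \sym_{\oc B}^+ \parallel_{1\le i\le n}^{\neq\emptyset} y_B^i$. This gives the backward map $(y_B^1,\dots,y_B^n) \mapsto \parallel_{1\le i\le n}^{\neq\emptyset} y_B^i$ into $\Sym_{\oc B}^+(\rep{\y}_{\oc B})$.

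Finally, the two maps are mutually inverse: composing forward-then-backward reassembles $y$ from its own unique decomposition, and backward-then-forward recovers $(y_B^1,\dots,y_B^n)$ by uniqueness of the decomposition of a configuration of $\oc B$ into nonempty components indexed by copy indices (the analogue, for $\oc$, of the unique decomposition of configurations of a parallel composition). I do not expect any real obstacle here; the only points that require genuine care are that the codomain's index set is \emph{forced} to be $\{1,\dots,n\}$ — which is precisely where preservation of copy indices by positive symmetries is used — and that componentwise restrictions of positive symmetries of $\oc B$ remain in $\ptilde{B}$.
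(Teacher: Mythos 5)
Your proof is correct and follows exactly the route the paper intends: the paper's own proof is the one-liner ``Obvious by definition of positive symmetries of $\oc B$ in Definition \ref{def:bang},'' and your argument simply spells out the details that make it obvious (positive symmetries of $\oc B$ fix copy indices, hence force the index set $\{1,\dots,n\}$ and decompose componentwise into $\ptilde{B}$). No gaps; the points you flag as needing care are precisely the right ones.
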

\begin{proof}
Obvious by definition of positive symmetries of $\oc B$ in Definition
\ref{def:bang}.
\end{proof}

In contrast, the set $\Sym_{\oc A}^-(\rep{\x}_{\oc A})$ is much wilder, as
negative symmetries on $A$ are free to change copy indices at will. Yet,
the data of some $x \sym_{\oc A}^- \rep{\x}_{\oc A}$ may be witnessed by
distinct symmetries; in fact even the action of the symmetry on copy
indices is not uniquely defined.

To help reason on $\Sym_{\oc A}^-(\rep{\x}_{\oc A})$ we need more
structure. For $x \in \Sym_{\oc A}^-(\rep{\x}_{\oc A})$, we write
\[
x =\,\parallel_{k \in K_x}^{\neq \emptyset} x_A^k
\]
where $x_A^k \in \conf{A}$ for $k \in K_x$. We choose, for each
$x \in \Sym_{\oc A}^-(\rep{\x}_{\oc A})$ a bijection $\pi_x : K_x \bij
\{1, \dots, p\}$ such that for all $k \in K_x$, $x_A^k \sym_A^-
\rep{\x}_A^{\pi_x(k)}$. If $\pi$ is a permutation on $\{1, \dots, p\}$,
we say it is an \textbf{isotropy} of $\x_{\oc A}$ if for all $1\leq i
\leq p$ we have $\rep{\x}_A^i \sym_A \rep{\x}_A^{\pi(i)}$, \emph{i.e.}
$\rep{\x}_A^i = \rep{\x}_A^{\pi(i)}$. Isotropies of
$\x_{\oc A}$ form a group $\m(\x_{\oc A})$, the \textbf{isotropy group} of
$\x_{\oc A}$. Now, we prove:

\begin{lem}\label{lem:isotrop1}
We have the following bijection:
\[
\begin{array}{rcl}
\m(\x_{\oc A}) \times \Sym_{\oc A}^-(\rep{\x}_{\oc A})
&\bij&
\sum_{K \subseteq_f \mathbb{N}}
\sum_{\pi : K \bij \{1,\dots,p\}} 
\prod_{k\in K}
\Sym_A^-(\rep{\x}_A^{\pi(k)})\\
(\vartheta, x) &\mapsto& (K_x, \vartheta\circ \pi_x, (x_A^k)_{k \in
K_x})
\end{array}
\]
\end{lem}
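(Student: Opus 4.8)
The plan is to construct the map in the stated direction and exhibit an explicit inverse. Given $(\vartheta, x) \in \m(\x_{\oc A}) \times \Sym_{\oc A}^-(\rep{\x}_{\oc A})$, write $x = \parallel_{k\in K_x}^{\neq \emptyset} x_A^k$ as specified, and recall that we have fixed once and for all a bijection $\pi_x : K_x \bij \{1,\dots,p\}$ with $x_A^k \sym_A^- \rep{\x}_A^{\pi_x(k)}$ for each $k$. Since $\vartheta$ is an isotropy, $\rep{\x}_A^i = \rep{\x}_A^{\vartheta(i)}$ for all $i$, hence $x_A^k \sym_A^- \rep{\x}_A^{\vartheta(\pi_x(k))}$ as well; thus $(K_x, \vartheta \circ \pi_x, (x_A^k)_{k\in K_x})$ is a well-formed element of the right-hand set. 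I would check injectivity by noting that $(K_x, x_A^k)$ determines $x$, and once $x$ is recovered so is $\pi_x$, so $\vartheta$ is recovered from $\vartheta \circ \pi_x$.

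For surjectivity and the inverse, start from $(K, \pi, (y_A^k)_{k\in K})$ with $\pi : K \bij \{1,\dots,p\}$ and $y_A^k \in \Sym_A^-(\rep{\x}_A^{\pi(k)})$. Set $x = \parallel_{k\in K}^{\neq \emptyset} y_A^k$, which lies in $\conf{\oc A}$ (taking $K \subseteq_f \mathbb{N}$); I claim $x \sym_{\oc A}^- \rep{\x}_{\oc A}$. Indeed, for each $k$ pick a negative symmetry $\vartheta_k : y_A^k \sym_A^- \rep{\x}_A^{\pi(k)}$; assembling these via the permutation $\pi$ (composed with the fixed sequential writing of indices) gives a symmetry of $\oc A$ which, by Definition~\ref{def:bang}, is negative since all components are. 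So $x \in \Sym_{\oc A}^-(\rep{\x}_{\oc A})$, and by definition $K_x = K$. Now $x$ comes equipped with its own fixed choice $\pi_x : K_x \bij \{1,\dots,p\}$, which need not coincide with $\pi$; but since both satisfy $y_A^k \sym_A^- \rep{\x}_A^{\pi(k)}$ and $y_A^k \sym_A^- \rep{\x}_A^{\pi_x(k)}$, we get $\rep{\x}_A^{\pi(k)} \sym_A \rep{\x}_A^{\pi_x(k)}$, i.e. $\rep{\x}_A^{\pi(k)} = \rep{\x}_A^{\pi_x(k)}$, so $\vartheta := \pi \circ \pi_x^{-1}$ is an isotropy of $\x_{\oc A}$. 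Then $\vartheta \circ \pi_x = \pi$, and $(\vartheta, x)$ maps to $(K,\pi,(y_A^k)_k)$ as desired. Checking that this assignment is inverse to the forward map in both directions is then a direct unwinding of the definitions.

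The main subtlety — and the step I would be most careful with — is the bookkeeping around the two choice functions $x \mapsto \pi_x$ (fixed for elements of $\Sym_{\oc A}^-(\rep{\x}_{\oc A})$) and the component writings $x = \parallel_{k\in K_x}^{\neq \emptyset} x_A^k$. The bijection is \emph{not} canonical in an obvious way; it genuinely depends on these choices, and the role of the isotropy group factor $\m(\x_{\oc A})$ on the left is precisely to absorb the ambiguity of which $\pi : K \bij \{1,\dots,p\}$ realizes a given $x$. Concretely, one must verify that two data $(K,\pi,(y_A^k)_k)$ and $(K,\pi',(y_A^k)_k)$ with the \emph{same} underlying family but $\pi \neq \pi'$ arise from genuinely different pairs $(\vartheta, x)$ and $(\vartheta', x)$ with the same $x$ but $\vartheta \neq \vartheta'$ — which is exactly what makes the product with $\m(\x_{\oc A})$ appear rather than a plain bijection with $\Sym_{\oc A}^-(\rep{\x}_{\oc A})$. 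I would make this explicit to avoid an off-by-a-group-action error. Everything else is a routine verification using Definition~\ref{def:bang} and the elementary fact that negative symmetries of $\oc A$ decompose into a permutation of copies together with componentwise negative symmetries.
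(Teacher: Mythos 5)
Your proof is correct and follows essentially the same route as the paper's: checking well-definedness via the isotropy property and constructing the explicit inverse $(\pi \circ \pi_x^{-1}, x)$ with $x = \parallel_{k\in K}^{\neq\emptyset} y_A^k$. You supply a few verifications the paper leaves implicit (that $x$ indeed lies in $\Sym_{\oc A}^-(\rep{\x}_{\oc A})$, that $\pi\circ\pi_x^{-1}$ is an isotropy, and the role of the fixed choices $\pi_x$), which is welcome but does not change the argument.
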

\begin{proof}
We first check that this map is well-defined. Consider $\vartheta \in
\m(\x_{\oc A})$ and $x \sym_A^- \rep{\x}_{\oc A}$. We must show that for
all $k \in K_x$, we have $x_A^k \sym_A^- \rep{\x}_A^{\vartheta\circ
\pi_x(k)}$. We know
that $x_A^k \sym_A^- \rep{\x}_A^{\pi_x(k)}$. Moreover, by definition of
$\m(\x_{\oc A})$, we have $\rep{\x}_A^{\pi_x(k)} =
\rep{\x}_A^{\vartheta \circ \pi_x(k)}$; so $x_A^k \sym_A^-
\rep{\x}_A^{\vartheta\circ \pi_x(k)}$.

We define its inverse. To $K \subseteq_f \mathbb{N}$, $\pi : K
\bij \{1, \dots, p\}$, and $(x_A^k)_{k\in K}$, we associate
\[
(\pi \circ \pi_x^{-1}, x) \in \m(\x_{\oc A}) \times \Sym_{\oc
A}^-(\rep{\x}_{\oc A})
\]
where $x =\,\parallel_{k\in K}^{\neq \emptyset} x_A^k \sym_A^-
\rep{\x}_{\oc A}$ as required. It is clear that the two are inverses.
\end{proof}

Relying on this bijection, we may start the following computation:
\begin{eqnarray*}
\m(\x_{\oc A}) \times \wit_{\oc \sigma}^+(\x_{\oc A}, \y_{\oc
B})
&\bij& \sum_{\vartheta \in \m(\x_{\oc A})} 
\sum_{\left(\parallel_{k \in K}^{\neq \emptyset} x_A^k\right) \in \Sym_{\oc
A}^-(\rep{\x}_{\oc A})}
\sum_{y \in \Sym_{\oc B}^+} \wit_{\oc \sigma}(\parallel_{k \in K}^{\neq
\emptyset} x_A^k, y)\\
&\bij&
\sum_{K\subseteq_f \mathbb{N}}
\sum_{\pi : K \bij \{1, \dots, p\}}
\sum_{(x_A^k)_{k\in K}}
\sum_{(y_B^i)_{1\leq i \leq n}}
\wit_{\oc \sigma}\left(\parallel_{k\in K}^{\neq \emptyset} x_A^k, 
\parallel_{1\leq i \leq n}^{\neq \emptyset} y_B^i\right)
\end{eqnarray*}
where $(x_A^k)_{k\in K}$ ranges over $\Pi_{k\in K}
\Sym_A^-(\rep{\x}_A^{\pi(k)})$ and $(y_B^i)_{1\leq
i \leq n}$ over $\Pi_{1\leq i \leq n}
\Sym_B^+(\rep{\y}_B^i)$.

Now, let us recall that Proposition \ref{prop:char_pcov_bang}
gives us an order-iso
\[
\,[-] : \mathsf{Fam}\left(\confpn{\sigma}\right)  \bij
\confp{\oc \sigma}
\]
with $\mathsf{Fam}(X)$ the set of families of elements of $X$ indexed by
finite subsets of $\mathbb{N}$, such that
\[
\pr_{\oc \sigma}\left(\left[(x^i)_{i\in I}\right]\right)
=
(\parallel_{i\in I} x^i_A) \parallel (\parallel_{i \in I} x^i_B)
\]
where for all $i\in I$, $\pr_\sigma(x^i) = x^i_A \parallel x^i_B$. In
particular, this entails that the set above can be non-empty only if $K
\subseteq \{1, \dots, n\}$; so it is in bijection with
\begin{eqnarray}
&\bij&
\sum_{K\subseteq \{1, \dots, n\}}
\sum_{\pi : K \bij \{1, \dots, p\}}
\sum_{(x_A^k)_{k\in K}}
\sum_{(y_B^i)_{1\leq i \leq n}}
\wit_{\oc \sigma}\left(\parallel_{k\in K}^{\neq \emptyset} x_A^k, 
\parallel_{1\leq i \leq n}^{\neq \emptyset} y_B^i\right)
\label{eq5.2}
\end{eqnarray}

To simplify the sum further we shall need the next lemma. It is a
variant of Lemma \ref{lem:isotrop1}, but also dealing with the fact that
we might have fewer non-empty configurations on $A$ than on $B$, and
introducing a sum over sequences of symmetry classes akin to
\eqref{eq5.1}.

\begin{lem}
We have the following bijection:
\[
\sum_{K\subseteq \{1, \dots, n\}}
\sum_{\pi : K \bij \{1, \dots, p\}}
\prod_{k \in K} \Sym_A^-(\rep{\x}_A^{\pi(k)})
\bij
\sum_{\stackrel{(\z_A^1, \dots, \z_A^n)\,\text{s.t.}}{\rep{\x}_{\oc A}
\sym_A\,\parallel_{1\leq i \leq n} \rep{\z}_A^i}} 
\sum_{\vartheta\in \m(\x_{\oc A})}
\prod_{1\leq i \leq n}
\Sym_A^-(\rep{\z}_A^i)
\]
\end{lem}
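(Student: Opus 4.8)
The plan is to realise the claimed bijection by a pair of explicit, mutually inverse maps, after first normalising the index set on the right-hand side. First I would observe that, since $\rep{\x}_{\oc A} = \parallel_{1\leq j\leq p}^{\neq\emptyset}\rep{\x}_A^j$ has exactly $p$ non-empty components, the description of symmetries on $\oc A$ in Definition~\ref{def:bang} forces any tuple $(\z_A^1,\dots,\z_A^n)$ with $\rep{\x}_{\oc A}\sym_{\oc A}\parallel_{1\leq i\leq n}\rep{\z}_A^i$ to have exactly $p$ non-empty entries, say on the support $K_{\z}$, and the multiset $[\z_A^i\mid i\in K_{\z}]$ then equals $\x_{\oc A}$ as a multiset of symmetry classes of $A$. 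Since the empty entries have $\rep{\z}_A^i=\emptyset$ and $\Sym_A^-(\emptyset)=\{\emptyset\}$, the factors of the right-hand product indexed outside $K_{\z}$ are singletons, so $\prod_{1\leq i\leq n}\Sym_A^-(\rep{\z}_A^i)\cong\prod_{i\in K_{\z}}\Sym_A^-(\rep{\z}_A^i)$. Finally, for each such $\z$ I would fix once and for all a reference bijection $\pi_{\z}:K_{\z}\bij\{1,\dots,p\}$ with $\z_A^i=\x^{\pi_{\z}(i)}$ for all $i\in K_{\z}$; this exists precisely because the two multisets agree.

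Next I would define the forward map. Given $(K,\pi,(x_A^k)_{k\in K})$ on the left (necessarily $|K|=p$, else the sum over $\pi$ is empty), set $\z_A^i$ to be the symmetry class of $x_A^i$ for $i\in K$ and $\z_A^i=\emptyset$ otherwise. From $x_A^k\sym_A^-\rep{\x}_A^{\pi(k)}$ one reads $\z_A^k=\x^{\pi(k)}$, hence $K_{\z}=K$, $\rep{\z}_A^k=\rep{\x}_A^{\pi(k)}$, and $\rep{\x}_{\oc A}\sym_{\oc A}\parallel_i\rep{\z}_A^i$ via the symmetry of $\oc A$ extending $\pi^{-1}:\{1,\dots,p\}\to K$ to a permutation of $\mathbb{N}$ with the identity on each component. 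Put $\vartheta=\pi\circ\pi_{\z}^{-1}$; this lies in $\m(\x_{\oc A})$, since for $j=\pi_{\z}(i)$ we get $\x^{\vartheta(j)}=\x^{\pi(i)}=\z_A^i=\x^{\pi_{\z}(i)}=\x^j$. Keep the family: ${x'}^i_A=x_A^i\in\Sym_A^-(\rep{\x}_A^{\pi(i)})=\Sym_A^-(\rep{\z}_A^i)$ for $i\in K$, and ${x'}^i_A=\emptyset$ for $i\notin K$. Conversely, from $(\z,\vartheta,({x'}^i_A)_i)$ I would return $(K_{\z},\,\vartheta\circ\pi_{\z},\,({x'}^k_A)_{k\in K_{\z}})$: writing $\pi=\vartheta\circ\pi_{\z}$, the isotropy condition gives both $\x^{\pi(i)}=\x^{\pi_{\z}(i)}=\z_A^i$ and $\rep{\x}_A^{\pi(k)}=\rep{\x}_A^{\pi_{\z}(k)}=\rep{\z}_A^k$, so ${x'}^k_A\in\Sym_A^-(\rep{\x}_A^{\pi(k)})$, as required of a left-hand summand.

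To finish I would verify the two maps are mutually inverse: the support/index data and the families are carried along verbatim, $\z$ is reconstructed from the family because ${x'}^k_A\in\Sym_A^-(\rep{\z}_A^k)$ forces its symmetry class to be $\z_A^k$, and $\vartheta,\pi$ determine each other through $\vartheta=\pi\circ\pi_{\z}^{-1}$. I expect the genuine content — rather than bookkeeping — to be the role of $\m(\x_{\oc A})$: the assignment $(K,\pi)\mapsto\z$ is $|\m(\x_{\oc A})|$-to-one, the fibre over $\z$ being a torsor under post-composition by $\m(\x_{\oc A})$, and one must see that the new coordinate $\vartheta$ records exactly the position in this fibre relative to $\pi_{\z}$, while the isotropy identities $\rep{\x}_A^j=\rep{\x}_A^{\vartheta(j)}$ are precisely what makes reindexing a family along $\vartheta$ leave every factor $\Sym_A^-(-)$ unchanged. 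Keeping this interplay straight — and, because the $\rep{\x}_A^j$ need not be pairwise distinct, being careful that the matching bijections $\pi$ genuinely form such a torsor rather than a single element — is the crux; the remaining checks are routine.
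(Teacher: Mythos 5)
Your proposal is correct and follows essentially the same route as the paper's proof: you fix a reference matching $\pi_{\z}$ for each admissible tuple $\z$ (the inverse of the paper's reference injection $\kappa_{\overrightarrow{z}}$), transport $\pi$ to the isotropy $\vartheta = \pi \circ \pi_{\z}^{-1}$ and back via $\pi = \vartheta \circ \pi_{\z}$, and use the isotropy identities $\rep{\x}_A^{\vartheta(j)} = \rep{\x}_A^j$ to see that the factors $\Sym_A^-(-)$ are unchanged. Your additional remarks on the torsor structure of the fibres and on the singleton factors $\Sym_A^-(\emptyset)$ are accurate and only make explicit what the paper leaves implicit.
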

\begin{proof}
As for \eqref{eq5.1}, the sum on the right hand side ranges over all
tuples. Fix in advance, for all
$\overrightarrow{z} = (\z_A^1, \dots, \z_A^n)$ such that $\rep{\x}_{\oc
A} \sym_A\,\parallel_{1\leq i \leq n} \rep{\z}_A^i$, an injection
$\kappa_{\overrightarrow{z}} : \{1, \dots, p\} \to \{1, \dots, n\}$
s.t. for all $1\leq i \leq p$, $\rep{\x}_A^i \sym_A
\rep{\z}_A^{\kappa_{\overrightarrow{z}}(i)}$. Necessarily, $\z_A^i$ is
empty for all $i \not \in \cod(\kappa_{\overrightarrow{\z}})$.

Given $K \subseteq \{1, \dots, n\}, \pi : K \bij \{1, \dots, p\}$ and 
$(x_A^k)_{k\in K}$, we set $\overrightarrow{z} = (\z_A^1,
\dots, \z_A^n)$ with
\[
\z_A^i = 
\left\{
\begin{array}{ll}
\x_A^{\pi(i)}	&\text{if $i \in K$}\\
\emptyset	&\text{otherwise}
\end{array}\right.\,,
\]
by construction we have $\rep{\x}_{\oc A} \sym_A\,\parallel_{1\leq i \leq n}
\rep{\z}_A^i$. We set $\vartheta = \pi \circ
\kappa_{\overrightarrow{z}} \in \m(\x_{\oc A})$. Finally, we set
\[
x_A^i = 
\left\{
\begin{array}{ll}
x_A^{i}   	&\text{if $i \in K$}\\
\emptyset       &\text{otherwise}
\end{array}\right.\,;
\]
if $i \not \in K$, $x_A^i = \rep{\z}_A^i = \emptyset$;
if $i \in K$, $x_A^i \sym_A^- \rep{\x}_A^{\pi(i)}$ by hypothesis
and $\rep{\z}_A^i = \rep{\x}_A^{\pi(i)}$ by construction.

Reciprocally, consider $\overrightarrow{\z} = (\z_A^1, \dots, \z_A^n)$,
$\vartheta \in \m(\x_{\oc A})$ and $(x_A^i)_{1\leq i \leq n}$. We set
$K$ as the subset of all $k \in \{1, \dots, n\}$ such that $\z_A^k$ is
non-empty. We set the bijection
\[
\begin{array}{rcrcl}
\pi &:& K &\bij & \{1, \dots, p\}\\
&& k &\mapsto& \vartheta \circ \kappa_{\overrightarrow{\z}}^{-1}
\end{array}
\]
which is well-defined as $K$ is exactly the codomain of
$\kappa_{\overrightarrow{z}}$. For every $k \in K$, we set
$(x_A^k)_{k\in K}$ simply as the restriction of the family
$(x_A^i)_{1\leq i \leq n}$ to $K$ -- and we do indeed have
\[
x_K^k \sym_A^- \rep{\z}_A^k \sym_A^-
\rep{\x}_A^{\kappa_{\overrightarrow{\z}}^{-1}(k)}
\sym_A^- \rep{\x}_A^{\vartheta \circ \kappa_{\overrightarrow{\z}}^{-1}(k)} =
\rep{\x}_A^{\pi(k)}\,.
\]

Finally, it is a direct verification that these constructions are
inverses.
\end{proof}

We start again computing from \eqref{eq5.2}. Substituting the
bijection of the lemma above:
\begin{eqnarray*}
&\bij&
\sum_{K\subseteq \{1, \dots, n\}}
\sum_{\pi : K \bij \{1, \dots, p\}}
\sum_{(x_A^k)_{k\in K}}
\sum_{(y_B^i)_{1\leq i \leq n}}
\wit_{\oc \sigma}\left(\parallel_{k\in K}^{\neq \emptyset} x_A^k, 
\parallel_{1\leq i \leq n}^{\neq \emptyset} y_B^i\right)\\
&\bij& 
\sum_{\vartheta \in \m(\x_{\oc A})}
\sum_{\stackrel{(\z_A^1, \dots, \z_A^n)\,\text{s.t.}}{\rep{\x}_{\oc A}
\sym_A\,\parallel_{1\leq i \leq n} \rep{\z}_A^i}}
\sum_{(x_A^i)_{1\leq i \leq n}}
\sum_{(y_B^i)_{1\leq i \leq n}} 
\wit_{\oc \sigma}\left(\parallel_{1\leq i \leq n} x_A^i,
\parallel_{1\leq i \leq n}^{\neq \emptyset} y_B^i\right)
\end{eqnarray*}
where now $(x_A^i)_{1\leq i \leq n}$ ranges over $\prod_{1\leq i \leq n}
\Sym_A^-(\rep{\z}_A^i)$ and $(y_B^i)_{1\leq i \leq n}$ over
$\prod_{1\leq i \leq n} \Sym_B^+(\rep{\y}_B^i)$. 

Some of the $x_A^i$ may now be empty, but both parallel compositions range over the
same indices. Thanks to this we may apply Proposition
\ref{prop:char_pcov_bang}, which directly yields:
\[
\bij \sum_{\vartheta \in \m(\x_{\oc A})}
\sum_{\stackrel{(\z_A^1, \dots, \z_A^n)\,\text{s.t.}}{\rep{\x}_{\oc A}
\sym_A\,\parallel_{1\leq i \leq n} \rep{\z}_A^i}}
\sum_{(x_A^i)_{1\leq i \leq n}}
\sum_{(y_B^i)_{1\leq i \leq n}} 
\prod_{1\leq i \leq n}
\wit_{\sigma}(x_A^i,y_B^i)
\]

We may now complete the computation, with:
\begin{eqnarray*}
&\bij& \sum_{\vartheta \in \m(\x_{\oc A})}
\sum_{\stackrel{(\z_A^1, \dots, \z_A^n)\,\text{s.t.}}{\rep{\x}_{\oc A}
\sym_A\,\parallel_{1\leq i \leq n} \rep{\z}_A^i}}
\prod_{1\leq i \leq n}
\sum_{x_A^i \sym_A^- \rep{\z}_A^i}
\sum_{y_B^i \sym_B^+ \rep{\x}_B^i}
\wit_{\sigma}(x_A^i,y_B^i)\\
&\bij& \sum_{\vartheta \in \m(\x_{\oc A})}
\sum_{\stackrel{(\z_A^1, \dots, \z_A^n)\,\text{s.t.}}{\rep{\x}_{\oc A}
\sym_A\,\parallel_{1\leq i \leq n} \rep{\z}_A^i}}
\prod_{1\leq i \leq n}
\wit^+_\sigma(\z_A^i, \y_B^i)\,,
\end{eqnarray*}
which concludes the construction of the following bijection:

\begin{lem}\label{lem:bij_bang}
For $A, B$ arenas, strategy $\sigma \in \Strat(A, B)$, and symmetry
classes $\x_{\oc A} \in \wconf{\oc A}$, $\y_{\oc B} \in \wconf{\oc B}$
with $\y_{\oc B} = [\y_B^1, \dots, \y_B^n]$ with each $\y_B^i$
non-empty, we have a bijection
\[
U : \sum_{\pi \in \m(\x_{\oc A})} \wit^+_{\oc \sigma}(\x_{\oc A}, \y_{\oc B})
\bij
\sum_{\varpi \in \m(\x_{\oc A})} 
\sum_{\stackrel{(\z_A^1, \dots,
\z_A^n)\,\text{s.t.}}{
\x_{\oc A} = [\z_A^i \mid \z_A^i \neq \emptyset]}}
\prod_{1\leq i \leq n}
\wit^+_\sigma(\z_A^i, \y_B^i)
\]
such that for all
$K(\pi, x^{\oc \sigma}) = (\varpi, ((\z_A^1, \dots, \z_A^n),
(x^i)_{1\leq i \leq n}))$, we have
$x^{\oc \sigma} = [(x^i)_{1\leq i \leq n}]$.
\end{lem}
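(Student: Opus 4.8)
The plan is to obtain $U$ directly as the composite of the chain of bijections displayed in the computation immediately preceding the statement, after two small bookkeeping observations. Since $\wit^+_{\oc\sigma}(\x_{\oc A},\y_{\oc B})$ is independent of $\pi$ and $\prod_{1\le i\le n}\wit^+_\sigma(\z_A^i,\y_B^i)$ is independent of $\varpi$, both sides of $U$ are indexed sums over constant families, i.e.\ products $\m(\x_{\oc A})\times(-)$; so it suffices to exhibit a bijection $\m(\x_{\oc A})\times\wit^+_{\oc\sigma}(\x_{\oc A},\y_{\oc B}) \bij \m(\x_{\oc A})\times\bigl(\sum_{(\z_A^1,\dots,\z_A^n)}\prod_{1\le i\le n}\wit^+_\sigma(\z_A^i,\y_B^i)\bigr)$. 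Moreover, by Lemma~\ref{lem:r_bang} a parallel composition $\parallel_{1\le i\le n}\rep{\z}_A^i$ (with possibly empty components) represents $\x_{\oc A}$ exactly when the multiset $[\z_A^i\mid\z_A^i\neq\emptyset]$ equals $\x_{\oc A}$; hence the side condition ``$\rep{\x}_{\oc A}\sym_A\parallel_i\rep{\z}_A^i$'' used in the computation and the side condition ``$\x_{\oc A}=[\z_A^i\mid\z_A^i\neq\emptyset]$'' in the statement select the same tuples, so the right-hand targets coincide on the nose.

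The bijection $U$ is then the composite of the links already established above: unfolding $\wit^+_{\oc\sigma}(\x_{\oc A},\y_{\oc B})$ into the double sum $\sum_{x\sym_{\oc A}^-\rep{\x}_{\oc A}}\sum_{y\sym_{\oc B}^+\rep{\y}_{\oc B}}\wit_{\oc\sigma}(x,y)$; the bijection $\Sym_{\oc B}^+(\rep{\y}_{\oc B})\bij\prod_i\Sym_B^+(\rep{\y}_B^i)$ on the $B$-side together with Lemma~\ref{lem:isotrop1} absorbing a factor $\m(\x_{\oc A})$ on the $A$-side; Proposition~\ref{prop:char_pcov_bang} to cut the index set down to $K\subseteq\{1,\dots,n\}$; the variant of Lemma~\ref{lem:isotrop1} passing to a sum over tuples $(\z_A^1,\dots,\z_A^n)$ and reintroducing a factor $\m(\x_{\oc A})$; Proposition~\ref{prop:char_pcov_bang} again, now splitting $\wit_{\oc\sigma}(\parallel_i x_A^i,\parallel_i y_B^i)$ as $\prod_i\wit_\sigma(x_A^i,y_B^i)$; a Fubini-style re-indexing pushing the sums over $x_A^i\sym_A^-\rep{\z}_A^i$ and $y_B^i\sym_B^+\rep{\y}_B^i$ inside the product; and finally folding each inner double sum back into $\wit^+_\sigma(\z_A^i,\y_B^i)$.

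It remains to verify the tracking property. The only links in this chain that touch the configuration $x^{\oc\sigma}\in\confp{\oc\sigma}$ itself are the two applications of Proposition~\ref{prop:char_pcov_bang}: the first merely restricts the indexing set, while the second is precisely the order-iso $[-]^{-1}$, sending $x^{\oc\sigma}$ with $\pr_{\oc\sigma}(x^{\oc\sigma})=(\parallel_i x_A^i)\parallel(\parallel_i y_B^i)$ to the family $(x^i)_i$ with $\pr_\sigma(x^i)=x_A^i\parallel x_B^i$. Hence in the image of $U$ the underlying $\oc\sigma$-configuration is exactly $[(x^i)_{1\le i\le n}]$, as claimed. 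The only genuinely delicate point is that the two copies of the isotropy group $\m(\x_{\oc A})$ — one introduced by Lemma~\ref{lem:isotrop1} and one by its variant — must cancel, so that precisely one factor of $\m(\x_{\oc A})$ survives on each side of $U$; this is exactly what the pairing of those two isotropy lemmas is designed to deliver, and I expect no further obstacle.
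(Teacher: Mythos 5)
Your proposal is correct and follows essentially the same route as the paper: the paper's own proof simply takes $U$ to be the composite of the chain of bijections in the preceding computation and observes, via Lemma~\ref{lem:r_bang}, that the side condition $\rep{\x}_{\oc A} \sym_A \parallel_i \rep{\z}_A^i$ coincides with $\x_{\oc A} = [\z_A^i \mid \z_A^i \neq \emptyset]$. Your additional verification of the tracking property through the two uses of Proposition~\ref{prop:char_pcov_bang} is sound and in fact makes explicit a point the paper leaves implicit (one small imprecision: Lemma~\ref{lem:isotrop1} \emph{consumes} a factor of $\m(\x_{\oc A})$ while its variant \emph{reintroduces} one, rather than the two "cancelling", but the net bookkeeping you describe is right).
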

\begin{proof}
Note $\rep{\x}_{\oc A} \sym_A\,\parallel_{1\leq i
\leq n} \rep{\z}_A^i$ iff $\x_{\oc A} = [\z_A^i \mid \z_A^i \neq
\emptyset]$ by Lemma \ref{lem:r_bang} -- complete
symmetry classes of $\oc A$ match finite multisets of
\emph{non-empty} complete symmetry classes of $A$. 
\end{proof}

We may finally deduce the desired equality: 

\begin{cor}\label{cor:main_pres_bang}
For $A, B$ arenas, strategy $\sigma \in \Strat(A, B)$, and symmetry
classes $\x_{\oc A} \in \wconf{\oc A}$, $\y_{\oc B} \in \wconf{\oc B}$
with $\y_{\oc B} = [\y_B^1, \dots, \y_B^n]$ and each $\y_B^i$
non-empty, we have:
\begin{eqnarray}
\sharp \wit^+_{\oc \sigma}(\x_{\oc A}, \y_{\oc B})
&=&
\sum_{\stackrel{(\z_A^1, \dots, \z_A^n)\,\text{s.t.}}{
\x_{\oc A} = [\z_A^i \mid \z_A^i \neq \emptyset]}}
\prod_{1\leq i \leq n}
\sharp \wit^+_\sigma(\z_A^i, \y_B^i)\label{eq5.3}
\end{eqnarray}
\end{cor}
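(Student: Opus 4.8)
The plan is to obtain \eqref{eq5.3} from the bijection $U$ of Lemma~\ref{lem:bij_bang} by a straightforward cardinality count, after dividing out the size of the isotropy group. First I would record that $\m(\x_{\oc A})$ is a finite group: by definition it is a subgroup of the group of permutations of $\{1, \dots, p\}$, hence $\sharp \m(\x_{\oc A}) = m$ for some integer $m \geq 1$ (it contains at least the identity permutation).

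Next I would apply $\sharp(-)$ to both sides of the bijection $U$ from Lemma~\ref{lem:bij_bang}. Using that $\sharp$ sends disjoint unions to sums in $\mathbb{N} \cup \{+\infty\}$, the left-hand side evaluates to $m \cdot \sharp \wit^+_{\oc \sigma}(\x_{\oc A}, \y_{\oc B})$, while the right-hand side evaluates to
\[
m \cdot \sum_{\stackrel{(\z_A^1, \dots, \z_A^n)\,\text{s.t.}}{\x_{\oc A} = [\z_A^i \mid \z_A^i \neq \emptyset]}} \prod_{1 \leq i \leq n} \sharp \wit^+_\sigma(\z_A^i, \y_B^i),
\]
using in addition that $\sharp$ commutes with finite products. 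Writing $S$ for this last sum, the bijection $U$ yields $m \cdot \sharp \wit^+_{\oc \sigma}(\x_{\oc A}, \y_{\oc B}) = m \cdot S$ in $\mathbb{N} \cup \{+\infty\}$.

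Finally I would cancel $m$ on both sides. This is legitimate because $m$ is a fixed positive integer: if either side is $+\infty$ then so is the other, and on finite values this is ordinary cancellation in $\mathbb{N}$ — this is precisely the integer division property that $\N$ enjoys. We conclude $\sharp \wit^+_{\oc \sigma}(\x_{\oc A}, \y_{\oc B}) = S$, which is \eqref{eq5.3}. The proof is essentially immediate given Lemma~\ref{lem:bij_bang}; the only point requiring a little care is the bookkeeping of infinite cardinalities in the cancellation step, so that is where I would be most careful.
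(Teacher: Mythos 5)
Your proposal is correct and matches the paper's own proof: both take cardinalities in the bijection $U$ of Lemma~\ref{lem:bij_bang} and then cancel the finite, nonzero factor $\sharp \m(\x_{\oc A})$, which is exactly the integer-division property of $\N$. Your extra care about the $+\infty$ case in the cancellation step is a welcome (if minor) elaboration of what the paper leaves implicit.
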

\begin{proof}
By Lemma \ref{lem:bij_bang}, taking the cardinalities we have the
equality:
\[
\sharp \m(\x_{\oc A}) \times \sharp \wit^+_{\oc \sigma}(\x_{\oc A}, \y_{\oc B})
=
\sharp \m(\x_{\oc A}) \times
\sum_{\stackrel{(\z_A^1, \dots,
\z_A^n)\,\text{s.t.}}
{\x_{\oc A} = [\z_A^i \mid \z_A^i \neq \emptyset]}}
\prod_{1\leq i \leq n}
\sharp \wit^+_\sigma(\z_A^i, \y_B^i)
\]
from which the result follows by dividing by $\sharp \m(\x_{\oc A})$
(which we can do as it is finite).
\end{proof}

As for composition, we must pad the desired
identity with further symmetry groups in order to realize it. 
The equation \eqref{eq5.3} is very much like \eqref{eq5.2}, and
\changed{we will use this result to show that $\coll(-)$ has the
appropriate preservation properties for $\oc$. First we explain why
only the relative comonad structure is preserved, and not the full
comonad structure.} 

\subsubsection{\changed{Non-preservation of the comonad $\oc$.}}
\label{subsubsec:lastglitch}
For $C$ a strict arena, we define 
\[
\begin{array}{rcrcl}
t^\oc_C &:& \oc \coll(C) &\to& \coll(\oc C)
\end{array}
\]
where $(t^\oc_C)_{\mu, \x_{\oc C}} = \delta_{\mu, s^\oc_C(\x_{\oc
C})}$, via the bijection of Lemma \ref{lem:r_bang}.
For strict $C, D$, we deduce from Corollary
\ref{cor:main_pres_bang} that for $\sigma \in \Strat(C, D)$, \changed{the following diagram commutes} 
in $\Rel{\N}$:
\[
\xymatrix@R=10pt{
\oc \coll(C)
        \ar[r]^{t^\oc_C}
        \ar[d]_{\oc \coll(\sigma)}&
\coll(\oc C)
        \ar[d]^{\coll(\oc \sigma)}\\
\oc \coll(D)
        \ar[r]_{t^\oc_D}&
\coll(\oc D)
}       
\]

\changed{
So $t^\oc$ is a natural transformation $\oc \circ \coll \to \coll \circ \oc : \Strat_s \to \Rel{\N}$.}
However, having this for strict $C$ and $D$ is not sufficient,
\changed{because the construction of the Kleisli category relies
crucially on promotion. We must therefore
consider strategies of the form $\sigma : \oc C \to D$, where of course
here $\oc C$ is not strict and the property above does not directly
apply. 
}

The issue is that there
\emph{is} a difference between $\oc \coll(A) = \mathcal{M}_f(\coll(A))$ and $\coll(\oc
A)$ for $A$ non-strict: the latter has only one empty
configuration, whereas the former distinguishes between elements $[\emptyset, \dots,
\emptyset]$ containing $n$ occurrences of $\emptyset$, for every $n$.

Consequently, the naturality square above fails for any reasonable
extension of $t^\oc_A$ to non-strict $A$. For instance, considering
$\sigma  \in \Strat(1, \gbool)$ that immediately answers $\ttrue$, 
\[
(\oc \coll(\sigma))_{\emptyset^n, \ttrue^p} = \delta_{n,p}
\]
where $\emptyset^n, \ttrue^p$ are the obvious multisets. In other words,
the relational model remembers how many times $\sigma$ ``does not call''
its argument. In contrast, we have
$\coll(\oc \sigma)_{\emptyset, \ttrue^p} = 1$
for all $p \in \mathbb{N}$ -- $\wconf{\oc 1}$ is a singleton set.
Fortunately, this mismatch disappears for promotion.

\subsubsection{Preservation of promotion} \changed{We verify the necessary diagram \eqref{eq:preservationpromotion2}.
}
\begin{prop}\label{prop:pres_prom}
Consider $C, D$ strict arenas, and $\sigma \in \Strat(\oc C, D)$.

Then, \emph{promotion is preserved}, \emph{i.e.} the following diagram
commutes in $\Rel{\N}$:
\[
\scalebox{.8}{$
\xymatrix@C=5pt@R=5pt{
\mathcal{M}_f(\wconf{C})
        \ar[rr]^{\delta_C}
        \ar[dr]_{t^\oc_C}&&
\mathcal{M}_f(\mathcal{M}_f(\wconf{C}))
        \ar[rr]^{\oc t^\oc_C}&&
\mathcal{M}_f(\wconf{\oc C})
        \ar[rr]^{\oc \coll(\sigma)}&&
\mathcal{M}_f(\wconf{D})
        \ar[dl]^{t^\oc_D}\\
&\wconf{\oc C} \ar[rr]_{\coll(\delta_C)}&&
\wconf{\oc \oc C}
        \ar[rr]_{\coll(\oc \sigma)}&&
\wconf{\oc D}
}
$}
\]
\end{prop}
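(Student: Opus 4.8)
The plan is to compute both composite relations coefficient-by-coefficient and reduce the statement to Corollary~\ref{cor:main_pres_bang}. Fix $\mu \in \mathcal{M}_f(\wconf{C})$ and $\z_{\oc D} \in \wconf{\oc D}$; write $\mu = [\x_C^1, \dots, \x_C^m]$ and $s^\oc_D(\z_{\oc D}) = [\y_D^1, \dots, \y_D^n]$ with every $\y_D^i$ non-empty (using Lemma~\ref{lem:r_basic} and Lemma~\ref{lem:r_bang} to identify $\wconf{D}$ with the web $\rintr{D}$, resp. $\wconf{\oc D}$ with finite multisets of non-empty complete classes). First I would compute the lower path. Since $t^\oc_D$ sends $\nu$ to the class with $s^\oc_D$-image $\nu$, composing with $\coll(\oc \sigma) = \coll(\oc \sigma \odot \oc \delta_C)$ and unfolding the definition of relational composition \eqref{eq:relsum} together with the functoriality already established in Corollary~\ref{cor:functor}, the lower path at $(\mu, \z_{\oc D})$ equals $\coll(\oc \sigma)_{\x_{\oc \oc C}, \z_{\oc D}}$ summed appropriately over symmetry classes $\x_{\oc C} \in \wconf{\oc C}$ matching $\mu$ and $\x_{\oc \oc C} \in \wconf{\oc \oc C}$ reached by $\coll(\delta_C)$ from $\x_{\oc C}$; but $\delta_C$ is a relabelled copycat, so $\coll(\delta_C)_{\x_{\oc C}, \x_{\oc \oc C}} = \delta_{\mu', \nu_1 + \cdots}$ in the usual digging pattern (the $\N$-analogue of $(\dig_X)_{\mu,[\nu_1,\dots]} = \delta_{\mu, \nu_1+\cdots+\nu_n}$), collapsing the sum to a single term indexed by how $\mu$ is partitioned into $n$ blocks.

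Next I would compute the upper path. Here $t^\oc_C$ sends $\mu$ to the class $\x_{\oc C}$ with $s^\oc_C(\x_{\oc C}) = \mu$ (this requires $\mu$ to have no empty components — if some $\x_C^j = \emptyset$, then strictness of $C$ is what saves us: the representation via $s^\oc_C$ in Lemma~\ref{lem:r_bang} then genuinely uses $\mu \in \mathcal{M}_f(\wconf{C})$ with all components possibly including $\emptyset$ since $C$ strict means $\wconf{C}$ has no empty class, so $\mathcal{M}_f(\wconf{C}) \bij \wconf{\oc C}$ exactly). Then $\delta_C$ in $\mathcal{M}_f$ sends $\mu$ to its partitions $[\nu_1, \dots, \nu_k]$ with $\nu_1 + \cdots + \nu_k = \mu$; applying $\oc t^\oc_C$ reindexes each block $\nu_i$ to the class in $\wconf{\oc C}$ it names; and finally $\oc \coll(\sigma)$ is computed via \eqref{eq5.1}, i.e. as a sum over tuples $(\z_{\oc C}^1, \dots, \z_{\oc C}^n)$ of classes in $\wconf{\oc C}$ whose multiset equals the image multiset, of the product $\prod_i \coll(\sigma)_{\z_{\oc C}^i, \y_D^i}$. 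Composing with $t^\oc_D$ then selects the target $\z_{\oc D}$. The upshot is that the upper path at $(\mu, \z_{\oc D})$ is a sum, over ways of writing $\mu = \nu_1 + \cdots + \nu_n$ as a sum of $n$ (possibly empty) multisets, of $\prod_{1\le i \le n} \coll(\sigma)_{\z_{\oc C}^i, \y_D^i}$, where $\z_{\oc C}^i$ is the class of $\oc C$ with $s^\oc_C(\z_{\oc C}^i) = \nu_i$.

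Finally I would match the two. The lower path, by the reasoning above, equals $\coll(\oc \sigma)_{\x_{\oc \oc C}, \z_{\oc D}}$ where $\x_{\oc \oc C}$ is the digging of $\x_{\oc C}$, i.e. (after identifying $\wconf{\oc \oc C}$ with $\mathcal{M}_f(\wconf{\oc C})$ via Lemma~\ref{lem:r_bang} applied to the strict arena $\oc C$) the sum ranges over multisets $[\z_{\oc C}^1, \dots, \z_{\oc C}^n]$ of non-empty classes of $\oc C$ whose total is the relevant partition; then Corollary~\ref{cor:main_pres_bang} applied to $\sigma : \oc C \to D$, with $\x_{\oc \oc C} \in \wconf{\oc \oc C}$ and $\z_{\oc D} = [\y_D^1, \dots, \y_D^n]$, gives exactly $\sharp \wit^+_{\oc \sigma}(\x_{\oc \oc C}, \z_{\oc D}) = \sum_{(\w^1,\dots,\w^n)} \prod_i \sharp\wit^+_\sigma(\w^i, \y_D^i)$ over tuples of classes of $\oc C$ with $\x_{\oc \oc C} = [\w^i \mid \w^i \ne \emptyset]$. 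The remaining arithmetic is to check that these index sets coincide: ways of writing $\mu = \nu_1 + \cdots + \nu_n$ (with $\nu_i$ possibly empty) correspond bijectively to tuples $(\z_{\oc C}^1, \dots, \z_{\oc C}^n)$ of classes of $\oc C$ (with the empty class allowed) via $\nu_i \mapsto (s^\oc_C)^{-1}(\nu_i)$, precisely because $C$ is strict so $s^\oc_C : \wconf{\oc C} \bij \mathcal{M}_f(\wconf{C})$ is a bijection \emph{including} the empty multiset $\leftrightarrow$ empty class. The main obstacle I expect is this last bookkeeping: being careful about empty components on the $\oc C$ side versus the $\oc \oc C$ side (the "last glitch" of Section~\ref{subsubsec:lastglitch}), and making sure the digging map $\coll(\delta_C)$ is correctly identified with the relational $\dig$ so that the index sets of the two sums genuinely match — the content of why this works only after promotion, not for the bare comonad, is exactly that $\delta_C$ produces \emph{partitions} of $\mu$ (which can have empty blocks freely) rather than forcing a rigid correspondence. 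Everything else is routine unfolding of the definitions of $t^\oc$, $\der$, $\dig$ and relational composition.
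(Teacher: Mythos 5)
Your plan is correct and follows essentially the same route as the paper's proof: compute the upper path via the relational definition of $\oc$ (a sum over $n$-tuples of classes summing to $\mu$), compute the lower path by identifying $\coll(\delta_C)$ with the Kronecker-delta digging pattern and invoking Corollary~\ref{cor:main_pres_bang}, then reconcile the two index sets using the fact that the empty symmetry class corresponds to the empty multiset. The only blemishes are presentational (the first and last legs of the two paths get swapped between your two paragraphs, and the equation $\coll(\oc\sigma) = \coll(\oc\sigma \odot \oc\delta_C)$ is not what you mean), but they do not affect the substance of the argument.
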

\begin{proof}
For $\mu \in \mathcal{M}_f(\wconf{C})$ and $\y_{\oc D} \in \wconf{\oc
D}$, the upper-right path evaluates to:
\begin{eqnarray}
\sum_{\stackrel{(\x_{\oc C}^1, \dots, \x_{\oc C}^n)\,\text{s.t.}}
{\x_{\oc C}^1 + \dots + \x_{\oc C}^n = \mu}}
\prod_{1\leq i \leq n} \coll(\sigma)_{\x_{\oc C}^i, \y_D^i}\label{sum1}
\end{eqnarray}
writing $\y_{\oc D} = [\y_D^1, \dots, \y_D^n]$, inlining $s^\oc_{D}$ and
$s^\oc_C$. This is by \eqref{eq5.1} and direct computation.  

For the other path, first note that for any $\x_{\oc C} \in
\wconf{\oc C}$ and $\y_{\oc \oc C} \in \wconf{\oc \oc C}$ we have
\[
\coll(\delta_C)_{\x_{\oc C}, \y_{\oc \oc C}} = \delta_{\x_{\oc C},
\y_{\oc C}^1 + \dots + \y_{\oc C}^n}
\]
where $\y_{\oc \oc C} = [\y_{\oc C}^1, \dots, \y_{\oc C}^n]$ with each
$\y_{\oc C}^i$ non-empty -- this is proved by a direct elaboration of
Proposition \ref{prop:cc_collapse}. Relying on this and Corollary
\ref{cor:main_pres_bang}, the bottom-left path evaluates to
\[
\sum_{\stackrel{(\x_{\oc C}^1, \dots, \x_{\oc C}^n)\text{s.t.}}
{\sum [\x_{\oc C}^i \mid \x_{\oc C}^i \neq \emptyset] = \mu}}
\prod_{1\leq i \leq n}
\coll(\sigma)_{\x_{\oc C}^i, \y_D^i}
\]
which is almost \eqref{sum1}, except for the side-condition.
But fortunately,
$\sum [\x_{\oc C}^i \mid 1\leq i \leq n] = 
\sum [\x_{\oc C}^i \mid \x_{\oc C}^i \neq \emptyset]$
as the empty symmetry class corresponds to the empty multiset.
\end{proof}

We see that the mismatch causing the failure of naturality disappears
with the promotion, as the junk enumeration of multisets in the
relational model is erased by the sum.

\subsubsection{A relative Seely $\sim$-functor}
\label{subsubsec:rigid_seely}
To wrap up, we introduce the missing components
\[
\begin{array}{rcrcl}
t^\top &:& \emptyset &\to& \coll(\top)\\
t^\with_{C, D} &:& \coll(C) + \coll(D) &\to& \coll(C\with D)\\
t^\lin_{A,C} &:& \coll(A) \times \coll(C) &\to& \coll(A\lin C)
\end{array}
\]
for $A, B, C, D$ arenas with $C, D$ strict, defined by $t^\top$ with 
empty domain, and
\[
(t^\with_{C, D})_{x, \x} = \delta_{x, s_{C, D}^\with(\x)}
\qquad
\qquad
(t^\lin_{A, C})_{x, \x} = \delta_{x, s_{A, C}^\lin(\x)}\,.
\]

The missing \changed{five} coherence diagrams of Figure \ref{fig:seely-functors}
are direct, from an analysis of the symmetry classes reached by the
component strategies. As for copycat in Proposition
\ref{prop:cc_collapse}, this follows from the description of the
$+$-covered configurations of projections in Section
\ref{subsubsec:cart_id}, dereliction and monoidality in Section
\ref{subsubsec:exp_strat}, and evaluation in Section
\ref{subsubsec:semiclosed_strat}. 
Altogether:

\begin{cor}\label{cor:seely_trans}
We have a \changed{relative} Seely $\sim$-functor $\coll(-) : \Strat \to \Rel{\N}$.
\end{cor}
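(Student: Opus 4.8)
The plan is to assemble the relative Seely $\sim$-functor by collecting the pieces already established and then verifying the remaining coherence conditions of Definition~\ref{def:relseelyfunctors}. First I would recall that $\coll(-) : \Strat \to \Rel{\N}$ is already known to be a $\sim$-functor: it preserves composition by Corollary~\ref{cor:functor}, preserves identities by Proposition~\ref{prop:cc_collapse}, and preserves $\simstrat$ by Proposition~\ref{prop:coll_pres_sim}. It restricts to a $\sim$-functor $\Strat_s \to \Rel{\N}_s = \Rel{\N}$ trivially, since every object of $\Rel{\N}$ is strict. The symmetric monoidal structure $(t^\tensor, t^1)$ is provided by Proposition~\ref{prop:smsf}. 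The remaining structure maps $t^\with_{C,D}$, $t^\top$, $t^\lin_{A,C}$, $t^\oc_C$ were all defined above, built as Kronecker deltas along the bijections $s^\with_{C,D}$, $s^\lin_{A,C}$, $s^\oc_C$ of Lemmas~\ref{lem:conf_with}, \ref{lem:bij_lin_strict}, and~\ref{lem:r_bang}; each of these is invertible (the inverse being the Kronecker delta along the inverse bijection), hence invertible up to $\sim$ as required.

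The substantive content left to check is: (1) naturality of $t^\oc$ as a transformation $\oc \circ \coll \to \coll \circ \oc : \Strat_s \to \Rel{\N}$, and more importantly (2) preservation of promotion, \emph{i.e.} commutation of diagram~\eqref{eq:preservationpromotion2}; and (3) the five coherence diagrams of Figure~\ref{fig:seely-functors}. For (1) and (2), these are precisely Corollary~\ref{cor:main_pres_bang} and Proposition~\ref{prop:pres_prom}, so I would simply invoke them. For (3), the argument in each case is the same routine: the relevant structural strategies ($\der_A$, $m_{B,C}$, $m_0$, the projections $\pi_A$, and evaluation $\evm_{A,B}$) are relabelled copycat strategies whose $+$-covered configurations were characterised explicitly in Sections~\ref{subsubsec:cart_id}, \ref{subsubsec:exp_strat}, and~\ref{subsubsec:semiclosed_strat}; composing these characterisations with the bijections $s^\oc$, $s^\with$, $s^\lin$ shows that both sides of each coherence diagram compute to the same Kronecker delta in $\Rel{\N}$, exactly as in the proof of Proposition~\ref{prop:cc_collapse}. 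I would spell out one representative case (say the dereliction triangle $\oc FA \to F\oc A \to FA$, which uses that $\wit^+_{\der_A}$ is a singleton whenever the relevant symmetry classes match, via condition \emph{initialized} for the empty case) and remark that the other four are analogous.

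The main obstacle is not in this final assembly — it is entirely bookkeeping once Corollaries~\ref{cor:functor} and~\ref{cor:main_pres_bang} and Proposition~\ref{prop:pres_prom} are in hand — but rather that one must be careful that $\oc$ is treated as a \emph{relative} comonad $\Strat_s \hookrightarrow \Strat$ rather than a plain comonad, precisely because of the mismatch discussed in Section~\ref{subsubsec:lastglitch} between $\mathcal{M}_f(\coll(A))$ and $\coll(\oc A)$ for non-strict $A$. Thus diagram~\eqref{eq:preservationpromotion} must be verified in its promotion form~\eqref{eq:preservationpromotion2}, valid for $\sigma : \oc C \to D$ with $C, D$ strict, which is exactly what Proposition~\ref{prop:pres_prom} delivers; one cannot shortcut this by asking naturality of $t^\oc$ on all objects. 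With that caveat respected, the proof concludes by citing Proposition~\ref{prop:seely_functor_kleisli}, which yields that the induced $\coll(-)_\oc : \Strat_\oc \to \Rel{\N}_\oc$ is cartesian closed, completing the statement.

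\begin{proof}
By Corollary~\ref{cor:functor}, $\coll(-)$ preserves composition; by
Proposition~\ref{prop:cc_collapse}, it preserves identities; and by
Proposition~\ref{prop:coll_pres_sim}, it preserves $\simstrat$, hence
it is a $\sim$-functor $\Strat \to \Rel{\N}$. Since every object of
$\Rel{\N}$ is strict, it restricts trivially to a $\sim$-functor
$\Strat_s \to \Rel{\N}$. Proposition~\ref{prop:smsf} provides
$(t^\tensor, t^1)$ making $(\coll(-), t^\tensor, t^1)$ a symmetric
monoidal $\sim$-functor. The remaining structure maps $t^\top$,
$t^\with_{C, D}$, $t^\lin_{A, C}$ and $t^\oc_C$ were defined above as
Kronecker deltas along the bijections $s^\with_{C,D}$, $s^\lin_{A,C}$
and $s^\oc_C$ of Lemmas~\ref{lem:conf_with}, \ref{lem:bij_lin_strict}
and~\ref{lem:r_bang}; their inverses are the Kronecker deltas along the
inverse bijections, so all four are invertible, in particular
invertible up to $\sim$.

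It remains to check the axioms. Corollary~\ref{cor:main_pres_bang}
establishes that for all strict arenas $C, D$ and $\sigma \in
\Strat(C, D)$ the square
\[
\xymatrix@R=10pt{
\oc \coll(C)
        \ar[r]^{t^\oc_C}
        \ar[d]_{\oc \coll(\sigma)}&
\coll(\oc C)
        \ar[d]^{\coll(\oc \sigma)}\\
\oc \coll(D)
        \ar[r]_{t^\oc_D}&
\coll(\oc D)
}
\]
commutes in $\Rel{\N}$, so $t^\oc$ is a natural transformation $\oc
\circ \coll \to \coll \circ \oc$ on $\Strat_s$. Moreover, since in both
$\Strat$ and $\Rel{\N}$ the relative comonad $\oc$ arises from a proper
comonad, the preservation of promotion condition
\eqref{eq:preservationpromotion} of
Definition~\ref{def:relseelyfunctors} is equivalent to commutation of
\eqref{eq:preservationpromotion2} for every $\sigma : \oc C \to D$ with
$C, D$ strict; this is exactly Proposition~\ref{prop:pres_prom}.

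Finally, the five coherence diagrams of
Figure~\ref{fig:seely-functors} all follow the same pattern as
Proposition~\ref{prop:cc_collapse}: the structural strategies involved
($\der_A$, $m_{B, C}$, $m_0$, the projections $\pi_A$, $\pi_B$, and the
evaluation $\evm_{A, B}$) are relabelled copycat strategies whose
$+$-covered configurations are characterised explicitly in
Sections~\ref{subsubsec:cart_id}, \ref{subsubsec:exp_strat}
and~\ref{subsubsec:semiclosed_strat}. Composing these characterisations
with the bijections $s^\oc$, $s^\with$, $s^\lin$, $s^\tensor$, one
checks that both composites around each diagram evaluate to the same
Kronecker delta in $\Rel{\N}$. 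For instance, for the dereliction
triangle, $\wit^+_{\der_A}(\x_{\oc A}, \y_A)$ is a singleton exactly
when $\x_{\oc A}$ corresponds under $s^\oc_A$ to the singleton multiset
$[\y_A]$ (the empty case being handled by condition \emph{initialized}
on $\kappa_A(\emptyset)$), so both sides of the triangle reduce to the
Kronecker delta witnessing this correspondence; the other four
diagrams are entirely analogous. This verifies all the axioms of
Definition~\ref{def:relseelyfunctors}, so $\coll(-) : \Strat \to
\Rel{\N}$ is a relative Seely $\sim$-functor.
\end{proof}

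In particular, by Proposition~\ref{prop:seely_functor_kleisli}, the
induced $\sim$-functor $\coll(-)_\oc : \Strat_\oc \to \Rel{\N}_\oc$ is
cartesian closed.
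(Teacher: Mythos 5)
Your proposal is correct and follows essentially the same route as the paper: it assembles the $\sim$-functor from Corollary~\ref{cor:functor}, Propositions~\ref{prop:cc_collapse} and~\ref{prop:coll_pres_sim}, takes the monoidal structure from Proposition~\ref{prop:smsf}, handles $\oc$ via the naturality square of Corollary~\ref{cor:main_pres_bang} on strict arenas together with Proposition~\ref{prop:pres_prom} for promotion, and disposes of the five coherence diagrams of Figure~\ref{fig:seely-functors} by the same Kronecker-delta analysis of $+$-covered configurations used for copycat. Your explicit caveat about why $\oc$ must be treated as a relative comonad (the mismatch of Section~\ref{subsubsec:lastglitch}) is exactly the point the paper makes, so nothing is missing.
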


So by Proposition \ref{prop:seely_functor_kleisli} we 
have a cartesian closed $\sim$-functor
$\coll_\oc(-) : \Strat_\oc \to \Rel{\N}_\oc$.

\subsection{Preservation of the Interpretation} The above covers the
simply-typed $\lambda$-calculus;
it remains to address constants and primitives, and recursion.

\subsubsection{Mediating isomorphisms}
By Proposition \ref{prop:seely_functor_kleisli}, we
have isos in $\Rel{\N}_\oc$ for $A, B$ strict
\[
\begin{array}{rclcl}
k^\top &:& \top &\to & \coll(\top)\\
k^\with_{A, B} &:& \coll(A)\with \coll(B) &\to& \coll(A\with B)\\
k^\tto_{A, B} &:& \oc \coll(A) \lin \coll(B) &\to& \coll(A\tto B)
\end{array}
\]
with $A \tto B = \oc A \lin B$. To these, for ground $\tx$ we add $k^\tx
: \rintr{\tx} \to \coll(\intr{\tx})$ defined as $t^\tx \circ
\epsilon_{\rintr{\tx}}$, with $t^\tx \in \Rel{\N}(\rintr{\tx},
\coll(\intr{\tx}))$ set as $(t^\tx)_{x, \x} = \delta_{x, s^\tx(\x)}$
with $s^\tx$ from Lemma \ref{lem:r_basic}.

We generalize these mediating isos to all types, by defining
isomorphisms $k^\Ctx_\Gamma : \rintr{\Gamma} \to \coll(\intr{\Gamma})$
and $k^\Ty_A : \rintr{A} \to \coll(\intr{A})$ in $\Rel{\N}_\oc$
inductively, as follows:
\[
\begin{array}{rclcrcl}
k^{\Ctx}_{[]} &=& k^\top &:& \top &\to& \coll(\top)\\
k^{\Ctx}_{\Gamma, x:A} &=& k^\with_{\intr{\Gamma}, \intr{A}} \odot_\oc
(k^\Ctx_\Gamma \with_\oc k^\Ty_A) &:& \rintr{\Gamma, x:A} &\to&
\coll(\intr{\Gamma, x:A})\\
k^\Ty_{A\to B} &=& k^\tto_{\intr{A}, \intr{B}} \odot_\oc ((k^\Ty_A)^{-1}
\tto k^\Ty_B) &:& \rintr{A\to B} &\to& \coll(\intr{A\to B})
\end{array}
\]
where $\with_\oc$ is the functorial action of the cartesian product in
$\Rel{\N}_\oc$. 

These isos may be described more directly in the linear category
$\Rel{\N}$. First we set:
\[
\begin{array}{rclcrcl}
t^\Ctx_{[]} &=& t^\top &:& \top &\to& \coll(\top)\\
t^\Ctx_{\Gamma, x:A} &=& t^\with_{\Gamma, A} \odot (t^\Ctx_\Gamma \with
t^\Ty_A) &:& \rintr{\Gamma, x : A} &\to& \coll(\intr{\Gamma, x : A})\\
t^{\oc \Ctx}_\Gamma &=& t^\oc_{\intr{\Gamma}} \odot \oc(t^\Ctx_\Gamma) 
&:& \oc \rintr{\Gamma} &\to& \coll(\oc \intr{\Gamma})\\
t^\Ty_{A\to B} &=& t^\lin_{\oc \intr{A}, \intr{B}} \odot ((t^{\oc
\Ty}_A)^{-1} \lin t^\Ty_B) 
&:& \intr{A\to B} &\to& \coll(\intr{A\to B})\\
t^{\oc \Ty}_A &=& t^\oc_{\intr{A}} \odot \oc(t^\Ty_A)
&:& \oc \rintr{A} &\to& \coll(\oc \intr{A})
\end{array}
\]
and then we may prove the following lemma:

\begin{lem}\label{lem:link_k_t}
For any context $\Gamma$ and type $A$, we have
$k^\Ctx_\Gamma = t^\Ctx_\Gamma \circ \epsilon_{\intr{\Gamma}}$
and $k^\Ty_A = t^\Ty_A \circ \epsilon_{\intr{A}}$.
\end{lem}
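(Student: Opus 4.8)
The statement to prove is Lemma~\ref{lem:link_k_t}, relating the Kleisli-level mediating isomorphisms $k^\Ctx_\Gamma, k^\Ty_A$ to their linear-category counterparts $t^\Ctx_\Gamma, t^\Ty_A$ via precomposition with dereliction $\epsilon_{\intr{\Gamma}}$ (resp. $\epsilon_{\intr{A}}$). The plan is to proceed by mutual induction on the structure of the type $A$ and the context $\Gamma$, following exactly the inductive definitions of $k^\Ctx$, $k^\Ty$ on the one hand and $t^\Ctx$, $t^{\oc\Ctx}$, $t^\Ty$, $t^{\oc\Ty}$ on the other. The key bookkeeping fact throughout is the standard interplay between Kleisli composition $\odot_\oc$ in $\Rel{\N}_\oc$ and linear composition $\odot$ in $\Rel{\N}$: for $f \in \Rel{\N}_\oc(X,Y)$ and $g \in \Rel{\N}_\oc(Y,Z)$, we have $g \odot_\oc f = g \odot \oc f \odot \dig$, together with the comonad identities $\epsilon \odot \dig = \id$ and $\oc f \odot \epsilon = \epsilon \odot f$ (naturality of dereliction), and the fact that the functorial actions $\with_\oc$ and $\tto$ on Kleisli morphisms unfold into the linear operations used in the definitions of $t^\Ctx$ and $t^\Ty$.

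\textbf{Key steps.} First I would establish the two base cases: for $\Gamma = []$, both $k^\Ctx_{[]}$ and $t^\Ctx_{[]}$ are literally $t^\top$, and $t^\top \circ \epsilon_\top = t^\top$ holds because $\intr{[]} = \top$ is a strict arena with trivial (singleton) symmetry, so $\epsilon_\top$ is the identity relation on the one-point web; for a ground type $\tx$, this is the very definition $k^\tx = t^\tx \circ \epsilon_{\rintr{\tx}}$. For the inductive step on contexts $\Gamma, x : A$, I would start from $k^\Ctx_{\Gamma,x:A} = k^\with_{\intr\Gamma,\intr A} \odot_\oc (k^\Ctx_\Gamma \with_\oc k^\Ty_A)$, expand the Kleisli composition and the Kleisli pairing/product action into linear-category operations, substitute the inductive hypotheses $k^\Ctx_\Gamma = t^\Ctx_\Gamma \circ \epsilon$ and $k^\Ty_A = t^\Ty_A \circ \epsilon$, and then push all the derelictions leftward using naturality of $\epsilon$ and the comonad triangle $\epsilon \odot \dig = \id$, until the expression collapses to $t^\with_{\Gamma,A} \odot (t^\Ctx_\Gamma \with t^\Ty_A) \odot \epsilon = t^\Ctx_{\Gamma,x:A} \circ \epsilon$. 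The arrow case $A \to B = \oc A \lin B$ is analogous but one level more involved: unfolding $k^\Ty_{A\to B} = k^\tto_{\intr A,\intr B} \odot_\oc ((k^\Ty_A)^{-1} \tto k^\Ty_B)$ via the formula for $k^\tto$ from Proposition~\ref{prop:seely_functor_kleisli} brings in $t^\lin$, $t^\oc$ and an inverse $(t^\oc_{\intr A})^{-1}$; the auxiliary morphisms $t^{\oc\Ctx}_\Gamma = t^\oc_{\intr\Gamma} \odot \oc(t^\Ctx_\Gamma)$ and $t^{\oc\Ty}_A = t^\oc_{\intr A} \odot \oc(t^\Ty_A)$ are precisely designed so that the $\oc$-ed inductive hypothesis $\oc k^\Ty_A = \oc t^\Ty_A \odot \oc \epsilon$ combines with $t^\oc$ to realise $k^\Ty_A$-at-the-Kleisli-level as $t^{\oc\Ty}_A$-at-the-linear-level after absorbing one $\dig$; iterating this and again migrating derelictions to the left yields $k^\Ty_{A\to B} = t^\Ty_{A\to B} \circ \epsilon_{\intr{A\to B}}$.

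\textbf{Main obstacle.} The conceptually routine but technically delicate part — and the step where I expect the real work to lie — is the arrow case: correctly matching the Kleisli-level currying action $(-)\tto(-)$ and the formula $k^\tto_{S,T} = t^\lin_{FS,T} \circ ((t^\oc_S)^{-1} \lin FT) \circ \epsilon$ against the linear-level definition of $t^\Ty_{A\to B}$ in terms of $t^\lin$ and $t^{\oc\Ty}_A$, while tracking every occurrence of $\oc$, $\dig$, $\epsilon$ and the inverse $(t^\oc_{\intr A})^{-1}$. One must use, in addition to the comonad laws, the coherence between $t^\oc$ and $\dig$ (essentially the content of Proposition~\ref{prop:pres_prom}/diagram~\eqref{eq:preservationpromotion2}) to commute $t^\oc_{\intr A}$ past $\dig_{\intr A}$, and one must be careful that all objects involved ($\intr A$, $\intr B$ and hence $S = \intr A$) are strict so that the relative-comonad subtleties from Section~\ref{subsubsec:lastglitch} do not bite. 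Once the arrow case is handled, the context case is a straightforward simplification of the same manipulations (no inverse, no nesting of $\oc$ beyond the $\with$-product action), and the lemma follows by the mutual induction. This lemma then feeds into the subsequent verification that $\coll(-)$ preserves the interpretation of $\nPCF$-primitives and recursion, which is why the precise comparison with the linear-category description of the mediating isos is needed.
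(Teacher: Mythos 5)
Your plan is correct and matches the paper's proof, which is stated only as ``a direct diagram chase'': since $k^\Ctx$ and $k^\Ty$ are defined by recursion on contexts and types, the chase is necessarily the mutual induction you describe, with each step unfolding Kleisli composition and the $\with_\oc$/$\tto$ actions into the linear category and discharging the derelictions via the comonad laws and the coherence of $t^\oc$. The only quibble is the base case for $\Gamma = []$: it is immediate from the definition $k^\top = t^\top \circ \der_\top$ in Proposition~\ref{prop:seely_functor_kleisli}, with no need to argue about $\epsilon_\top$ being an identity (it is not, but this does not matter).
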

\begin{proof}
A direct diagram chase.
\end{proof}

Finally, we also give the following concrete characterization of the
linear mediating isos:

\begin{lem}\label{lem:link_t_s}
For $\Gamma$ a context, $A$ a type, $\gamma \in
\M_f(\rintr{\Gamma}), \x_{\oc \Gamma} \in \wconf{\oc \Gamma}, a \in
\rintr{A}, \x_A \in \wconf{A}$,
\[
(t^{\oc \Ctx}_\Gamma)_{\gamma, \x_{\oc \Gamma}} =
\delta_{s^\Ctx_\Gamma(\gamma), \x_{\oc \Gamma}}
\qquad
\qquad
(t^{\Ty}_A)_{a, \x_A} = \delta_{s^\Ty_A(a), \x_A}\,.
\]
\end{lem}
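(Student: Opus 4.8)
The plan is to prove both equalities by induction on the context $\Gamma$ and the type $A$ respectively, unfolding the recursive definitions of $t^{\oc\Ctx}_\Gamma$ and $t^\Ty_A$ and checking at each step that the claimed Kronecker-symbol description is preserved by the relevant composition in $\Rel{\N}$. All the needed ingredients are already in place: the description of $t^\oc_C$, $t^\with_{C,D}$, $t^\lin_{A,C}$, $t^\top$ as Kronecker symbols on the structural bijections $s^\oc_C$, $s^\with_{C,D}$, $s^\lin_{A,C}$, and the inductive characterization of $t^{\Ctx}_\Gamma$, $t^{\oc\Ctx}_\Gamma$, $t^\Ty_{A\to B}$, $t^{\oc\Ty}_A$ given just above. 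The key computational fact I would use repeatedly is that composing two Kronecker-symbol matrices $\alpha_{x,y}=\delta_{y,f(x)}$ and $\beta_{y,z}=\delta_{z,g(y)}$ yields $(\beta\circ\alpha)_{x,z}=\delta_{z,g(f(x))}$ (immediate from \eqref{eq:relsum}), so that composites of structural bijections behave as expected; and that $\oc$ of a Kronecker-symbol matrix built from a bijection $f$ is again a Kronecker-symbol matrix, built from $\M_f(f)$, using the formula for $\oc\alpha$ (the sum collapses to a single term).

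First I would set up a simultaneous induction. In the base case $\Gamma = []$ we have $t^{\Ctx}_{[]} = t^\top$ with empty domain, and by Lemma~\ref{lem:r_bang} $\wconf{\oc\top}$ is a singleton, so $t^{\oc\Ctx}_{[]} = t^\oc_\top \circ \oc(t^\top)$ is the unique $1\times 1$ matrix with value $1$; this matches $\delta_{s^\Ctx_{[]}(\gamma),\x_{\oc\top}}$ since $\gamma = []$ and $s^\Ctx_{[]}([]) = \emptyset$. For ground types $\tx$, $t^\Ty_\tx = t^\tx$ is by definition $(t^\tx)_{x,\x} = \delta_{x,s^\tx(\x)}$, which is $\delta_{s^\Ty_\tx(a),\x_A}$ once one notes $s^\Ty_\tx = (s^\tx)^{-1}$ via Lemma~\ref{lem:r_basic}. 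For the inductive step on contexts, I compute $t^{\Ctx}_{\Gamma,x:A} = t^\with_{\Gamma,A} \circ (t^\Ctx_\Gamma \with t^\Ty_A)$: by induction both $t^\Ctx_\Gamma$ and $t^\Ty_A$ are Kronecker symbols on $s^\Ctx_\Gamma$ and $s^\Ty_A$, their pairing/coproduct in $\Rel{\N}$ is the Kronecker symbol on $s^\Ctx_\Gamma + s^\Ty_A$, and post-composing with $t^\with_{\Gamma,A} = \delta_{-,s^\with_{\Gamma,A}(-)}$ gives the Kronecker symbol on $s^\with_{\Gamma,A}\circ(s^\Ctx_\Gamma+s^\Ty_A) = s^\Ctx_{\Gamma,x:A}$ by Lemma~\ref{lem:conf_with}. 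Then $t^{\oc\Ctx}_\Gamma = t^\oc_{\intr\Gamma}\circ\oc(t^\Ctx_\Gamma)$: by the inductive description $\oc(t^\Ctx_\Gamma)$ is the Kronecker symbol on $\M_f(s^\Ctx_\Gamma)$, and composing with $t^\oc_{\intr\Gamma} = \delta_{-,s^\oc_{\intr\Gamma}(-)}$ yields the Kronecker symbol on $s^\oc_{\intr\Gamma}\circ\M_f(s^\Ctx_\Gamma) = s^\Ctx_\Gamma$ (the latter identification is Lemma~\ref{lem:web_wconf}, i.e.\ the very definition of $s^\Ctx_\Gamma$ from $s^\Ty$, $s^\oc$, $s^\with$). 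The case $t^\Ty_{A\to B} = t^\lin_{\oc\intr A,\intr B}\circ((t^{\oc\Ty}_A)^{-1}\lin t^\Ty_B)$ is analogous, using that the inverse of a Kronecker-symbol bijection matrix is the Kronecker symbol on the inverse bijection, that $-\lin-$ on such matrices is the Kronecker symbol on $s^{-1}\times s'$, and Lemma~\ref{lem:bij_lin_strict}.

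I expect this proof to be entirely routine once the right bookkeeping lemma is isolated. The main obstacle — such as it is — is purely organizational: one must be careful that the structural bijections $s^\Ty_A$, $s^\oc_A$, $s^\with$, $s^\lin$ compose in exactly the pattern that matches the recursive definitions of $t^\Ty$ and $t^{\oc\Ty}$, i.e.\ that Lemma~\ref{lem:web_wconf}'s bijections $s^\Ty_A$ and $s^\Ctx_\Gamma$ are \emph{defined} as the same nested composites of $s^\oc$, $s^\with$, $s^\lin$, $s^\tx$ that appear in the $t$-definitions. Since the paper constructs $s^\Ty$ and $s^\Ctx$ precisely "putting together" those lemmas, this is a matter of checking that the two recursions are the same recursion, which I would state as the induction hypothesis and verify clause by clause. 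Hence the proof reduces to: "a direct diagram chase / induction on types and contexts, using that composition, $\oc$, $\with$ and $\lin$ of Kronecker-symbol matrices on bijections are again Kronecker-symbol matrices on the composite bijections," and the two displayed identities follow.

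\begin{proof}
By induction on $\Gamma$ and $A$, using that in $\Rel{\N}$ the composite, coproduct, tensor, $\lin$ and $\oc$ of Kronecker-symbol matrices associated to bijections are again Kronecker-symbol matrices associated to the evident composite bijections, together with the fact (established in Lemma~\ref{lem:web_wconf}) that $s^\Ty_A$ and $s^\Ctx_\Gamma$ are defined as exactly the nested composites of $s^\tx$ (Lemma~\ref{lem:r_basic}), $s^\with$ (Lemma~\ref{lem:conf_with}), $s^\lin$ (Lemma~\ref{lem:bij_lin_strict}) and $s^\oc$ (Lemma~\ref{lem:r_bang}) that appear in the recursive definitions of $t^\Ty_A$, $t^{\oc\Ty}_A$, $t^\Ctx_\Gamma$, $t^{\oc\Ctx}_\Gamma$ just above. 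The base cases ($\Gamma = []$, using that $\wconf{\oc\top}$ is a singleton; and $A = \tx$ ground, directly from the definition of $t^\tx$) are immediate, and each inductive clause is a direct unfolding: for instance
\[
(t^{\oc\Ctx}_\Gamma)_{\gamma,\x_{\oc\Gamma}}
= \sum_{\mu}\, (\oc(t^\Ctx_\Gamma))_{\gamma,\mu}\cdot (t^\oc_{\intr\Gamma})_{\mu,\x_{\oc\Gamma}}
= \sum_\mu \delta_{\mu,\M_f(s^\Ctx_\Gamma)(\gamma)}\cdot \delta_{\mu,s^\oc_{\intr\Gamma}(\x_{\oc\Gamma})}
= \delta_{s^\Ctx_\Gamma(\gamma),\x_{\oc\Gamma}},
\]
using the inductive hypothesis on $t^\Ctx_\Gamma$ and the formula for $\oc$ of a Kronecker-symbol matrix. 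The remaining clauses ($t^\Ctx_{\Gamma,x:A}$, $t^\Ty_{A\to B}$, $t^{\oc\Ty}_A$) are handled in the same way.
\end{proof}
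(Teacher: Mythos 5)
Your proof is correct and is essentially the paper's own argument: the paper records this lemma as ``a direct computation,'' and the induction on types and contexts you describe --- using that composition, $\with$, $\lin$ and $\oc$ of Kronecker-symbol matrices of bijections are again Kronecker-symbol matrices of the composite bijections, and that $s^\Ty_A$, $s^\Ctx_\Gamma$ are built from the same nested composites of $s^{\tx}$, $s^\with$, $s^\lin$, $s^\oc$ --- is exactly that computation made explicit. One small bookkeeping slip worth fixing: in your displayed calculation, $\M_f(s^\Ctx_\Gamma)$ should be $\M_f$ applied to the bijection $\rintr{\Gamma}\bij\wconf{\intr{\Gamma}}$ underlying $t^\Ctx_\Gamma$ (your strengthened induction hypothesis), whose composite with the bijection of Lemma~\ref{lem:r_bang} is what yields $s^\Ctx_\Gamma$.
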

\begin{proof}
A direct computation.
\end{proof}

We must prove that the two interpretations match up to these
mediating isos. For constants and primitives this is 
the following lemma, which holds by immediate inspection: 

\begin{lem}\label{lem:ground_preservation}
The diagrams of Figure \ref{fig:pres_prim} commute in $\Rel{\N}_\oc$,
for any context $\Gamma$, ground type $\tx$, and with $k :
\rintr{\tbool} \with \rintr{\tx} \with \rintr{\tx} \to \coll(\gbool
\with \gx \with \gx)$ the obvious isomorphism.
\end{lem}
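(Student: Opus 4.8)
The plan is to prove Lemma~\ref{lem:ground_preservation} by a direct verification, combinator by combinator, that each of the relevant diagrams of Figure~\ref{fig:pres_prim} commutes in $\Rel{\N}_\oc$. Since both interpretations ultimately reduce to explicit finite weighted relations, the proof is purely computational: for each $\PCF$ combinator we compare the two paths around the corresponding square by evaluating coefficients on arbitrary web points. The key preliminary observation is that, by Lemma~\ref{lem:link_k_t}, the mediating isos $k^\Ctx_\Gamma$, $k^\Ty_\tx$ are obtained from the linear mediating isos $t^\Ctx_\Gamma$, $t^\Ty_\tx$ by precomposition with dereliction, so it suffices to check the analogous squares in the linear category $\Rel{\N}$, where by Lemma~\ref{lem:link_t_s} the relevant mediating maps are simply (transported) Kronecker deltas coming from the bijections $s^\tx$, $s^\Ctx_\Gamma$ of Lemmas~\ref{lem:r_basic} and~\ref{lem:web_wconf}.

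The main engine of the proof is the formula \eqref{eq:coll} together with the explicit description of the $+$-covered configurations of the strategies interpreting the $\PCF$ combinators, drawn in Figure~\ref{fig:intr_pcf}. First I would handle the constants $\Gamma \vdash \ttrue, \tfalse : \tbool$ and $\Gamma \vdash n : \tnat$: the interpreting strategy has a unique non-empty $+$-covered configuration $\{\qu^-, v^+\}$, so $\wit^+_\sigma(\emptyset, \{\qu,v'\})$ is a singleton if $v = v'$ and empty otherwise, matching $\rintr{\Gamma \vdash v : \tx}_{\gamma, v'} = \delta_{\gamma, []} \cdot \delta_{v, v'}$ once we account for the fact that $\coll$ composes with $s^\Ctx_\Gamma$ sending $[]$ to the empty symmetry class $\emptyset \in \wconf{\oc\intr{\Gamma}}$. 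Then I would treat $\mathsf{succ}$, $\mathsf{pred}$, $\mathsf{iszero}$ uniformly: each interpreting strategy has $+$-covered configurations of the shape $\{\qu^-, \qu^+_{\grey 0}, n^-_{\grey 0}, m^+\}$ with $m$ determined by $n$ (and $n$ ranging over $\mathbb{N}$), and since all the relevant games ($\gnat$, $\gbool$, $\oc\gnat$) have trivial symmetry on these configurations, $\wit^+$ is computed directly by Lemma~\ref{lem:r_bang}: the unique copy of $\qu^-_{\grey 0}$ up to canonicity forces the count to be exactly the value of the matrix in Figure~\ref{fig:pcf_comb}. For $\mathsf{if}$ the two shapes in Figure~\ref{fig:intr_pcf} give the two clauses $\gamma = [(1,\ttrue),(2,v)]$ and $\gamma = [(1,\tfalse),(3,v)]$ of $\mathsf{if}_{\gamma,v}$; here one must also check that the isomorphism $k : \rintr{\tbool}\with\rintr{\tx}\with\rintr{\tx} \to \coll(\gbool\with\gx\with\gx)$ intertwines the tagging correctly, which is immediate from $s^\with$ in Lemma~\ref{lem:conf_with}.

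Finally, $\coin$ requires a little more care, as its interpreting strategy (Figure~\ref{fig:intr_coin}) has two conflicting $+$-covered configurations $\{\qu^-, \ttrue^+\}$ and $\{\qu^-, \tfalse^+\}$, giving $\coll(\intr{\coin})_{\emptyset, \ttrue} = 1$ and $\coll(\intr{\coin})_{\emptyset, \tfalse} = 1$; this matches $\rintr{\coin}_{\gamma, v} = \rintr{\ttrue}_{\gamma,v} + \rintr{\tfalse}_{\gamma,v}$. The overall proof therefore has no genuine obstacle; the only point demanding attention is bookkeeping: one must consistently thread through the coherence isomorphisms $k^\Ctx_\Gamma$ and $k^\Ty_\tx$ relating the web $\rintr{\tx}$ (resp.\ $\M_f(\rintr{\Gamma})$) to $\wconf{\intr{\tx}}$ (resp.\ $\wconf{\oc\intr{\Gamma}}$), and verify that the $\mathsf{if}$-square also commutes on the contextual component $\Gamma$, which follows since all the strategies in question act as copycat on $\Gamma$ and $\coll$ preserves copycat by Proposition~\ref{prop:cc_collapse}. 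Accordingly I would simply remark that each diagram is checked by inspection, giving the count of $+$-covered configurations of the relevant strategy on each web point and comparing with the matrices of Figure~\ref{fig:pcf_comb}, and that no combinator introduces non-trivial symmetry so that $\wit^+$ coincides with the naive count of configurations.
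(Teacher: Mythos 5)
Your proposal is correct and takes exactly the route the paper intends: the paper dispatches this lemma with the phrase ``holds by immediate inspection,'' and your combinator-by-combinator verification --- reducing to the linear category via Lemmas~\ref{lem:link_k_t} and~\ref{lem:link_t_s}, then counting $+$-covered configurations of the strategies in Figures~\ref{fig:intr_pcf} and~\ref{fig:intr_coin} against the matrices of Figure~\ref{fig:pcf_comb} --- is precisely that inspection written out. No gaps; the only substantive observations (trivial symmetry on the relevant configurations, so $\wit^+$ is the naive count, and copycat behaviour on the context handled by Proposition~\ref{prop:cc_collapse}) are the right ones.
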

\begin{figure}
\begin{mathpar}
\scalebox{1}{$
\xymatrix@R=15pt{
\rintr{\Gamma}
        \ar[r]^{\rintr{v}}
        \ar[d]_{k^\Ctx_\Gamma}&
\rintr{\tx}
        \ar[d]^{k^\Ty_\tx}\\
\coll(\intr{\Gamma})
        \ar[r]_{\coll_\oc(\intr{v})}&
\coll(\intr{\tx})
}$}
\and
\scalebox{1}{$
\xymatrix@R=15pt{
\rintr{\tbool} \with \rintr{\tx} \with \rintr{\tx}
	\ar[r]^{\mathsf{if}}
	\ar[d]_{k}&
\rintr{\tx}
	\ar[d]^{k^\tx}\\
\coll(\gbool \with \gx \with \gx)
	\ar[r]_{\coll_\oc(\mathsf{if})}&
\coll(\intr{\tx})
}$}
\and
\scalebox{1}{$
\xymatrix@R=15pt{
\rintr{\tnat}
	\ar[r]^{\mathsf{succ}}
	\ar[d]_{k^\tnat}&
\rintr{\tnat}
	\ar[d]^{k^\tnat}\\
\coll(\gnat)
	\ar[r]_{\coll_\oc(\mathsf{succ})}&
\coll(\gnat)
}$}
\and
\scalebox{1}{$
\xymatrix@R=15pt{
\rintr{\tnat}
	\ar[r]^{\mathsf{pred}}
	\ar[d]_{k^\tnat}&
\rintr{\tnat}
	\ar[d]^{k^\tnat}\\
\coll(\gnat)
	\ar[r]_{\coll_\oc(\mathsf{pred})}&
\coll(\gnat)
}$}
\and
\scalebox{1}{$
\xymatrix@R=15pt{
\rintr{\tnat}
	\ar[r]^{\mathsf{iszero}}
	\ar[d]_{k^\tnat}&
\rintr{\tbool}
	\ar[d]^{k^\tbool}\\
\coll(\gnat)
	\ar[r]_{\coll_\oc(\mathsf{iszero})}&
\coll(\gbool)
}$}
\and
\scalebox{1}{$
\xymatrix@R=15pt@C=30pt{
\rintr{\Gamma}
        \ar[r]^{\rintr{\coin}}
        \ar[d]_{k^\Ctx_\Gamma}&
\rintr{\tbool}
        \ar[d]^{k^\tbool}\\
\coll(\intr{\Gamma})
        \ar[r]_{\coll_\oc(\intr{\coin})}&
\coll(\gbool)
}$}
\end{mathpar}
\caption{Preservation of basic primitives}
\label{fig:pres_prim}
\end{figure}

\subsubsection{Recursion} Preservation of the recursion combinator boils
down to:

\begin{prop}\label{prop:coll_cont}
Consider $A, B$ arenas. Then, the collapse function is continuous:
\[
\coll(-) : \Strat(A, B) \to \Rel{\N}(\coll(A), \coll(B))
\]
\end{prop}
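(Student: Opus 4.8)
The plan is to unwind the definition of $\coll(-)$ on morphisms and show that it commutes with directed suprema componentwise. Recall from Section~\ref{subsubsec:recursion} that the order on $\Rel{\N}(\coll(A),\coll(B))$ is pointwise, so a directed set $D \subseteq \Strat(A,B)$ maps to a family $\{\coll(\sigma)\}_{\sigma \in D}$, and I must verify two things: \emph{(1)} that $D$ directed in $(\Strat(A,B), \cleq)$ implies $\{\coll(\sigma)\}_{\sigma \in D}$ directed in $\Rel{\N}(\coll(A),\coll(B))$, i.e. that $\coll(-)$ is monotone; and \emph{(2)} that $\coll(\vee D) = \bigvee_{\sigma \in D} \coll(\sigma)$. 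Both reduce to understanding how $\wit^+_\sigma(\x_A, \x_B)$ behaves under the inclusion order on strategies.

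First I would establish monotonicity. If $\sigma \cleq \tau$ then $\conf{\sigma} \subseteq \conf{\tau}$, and the conditions defining $\cleq$ (agreement of causality, conflict, symmetry, and display map on the common events) entail $\confp{\sigma} \subseteq \confp{\tau}$: a configuration that is $+$-covered in $\sigma$ remains $+$-covered in $\tau$, since by condition \emph{(1)} of $\cleq$ the order $\leq_\tau$ restricted to $x$ coincides with $\leq_\sigma$, so the maximal events are the same and still positive. Moreover $\pr_\tau$ extends $\pr_\sigma$, so the display of such an $x$ is unchanged. Hence $\wit^+_\sigma(\x_A, \x_B) \subseteq \wit^+_\tau(\x_A, \x_B)$ for all $\x_A \in \wconf{A}$, $\x_B \in \wconf{B}$ (the canonical representatives $\rep{\x}_A, \rep{\x}_B$ depend only on the games, not the strategies), and therefore $\coll(\sigma)_{\x_A,\x_B} = \sharp\wit^+_\sigma(\x_A,\x_B) \leq \sharp\wit^+_\tau(\x_A,\x_B) = \coll(\tau)_{\x_A,\x_B}$ in $\N$. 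This gives monotonicity and in particular that $\{\coll(\sigma)\}_{\sigma\in D}$ is directed.

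Next, for preservation of the supremum, fix $\x_A \in \wconf{A}$ and $\x_B \in \wconf{B}$. By Proposition~\ref{prop:sup_pcov} we have $\confp{\vee D} = \bigcup_{\sigma \in D}\confp{\sigma}$, and since the display map of $\vee D$ restricts to that of each $\sigma$, it follows directly that $\wit^+_{\vee D}(\x_A,\x_B) = \bigcup_{\sigma\in D} \wit^+_\sigma(\x_A,\x_B)$, a directed union of sets. Taking cardinalities in $\N = \mathbb{N}\cup\{+\infty\}$, the cardinality of a directed union is the supremum of the cardinalities: $\sharp\bigl(\bigcup_{\sigma\in D}\wit^+_\sigma(\x_A,\x_B)\bigr) = \bigvee_{\sigma\in D}\sharp\wit^+_\sigma(\x_A,\x_B)$. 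Concretely, if all the $\wit^+_\sigma(\x_A,\x_B)$ are finite and bounded the union stabilises at a finite value equal to the max; if they grow unboundedly or some is already infinite, both sides are $+\infty$. Hence $\coll(\vee D)_{\x_A,\x_B} = \bigvee_{\sigma\in D}\coll(\sigma)_{\x_A,\x_B}$, and since the order on $\Rel{\N}(\coll(A),\coll(B))$ is pointwise this is exactly $\coll(\vee D) = \bigvee_{\sigma\in D}\coll(\sigma)$.

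The only mild subtlety — and the step I would be most careful about — is checking that $\confp{\sigma} \subseteq \confp{\vee D}$ really does induce an equality of witness sets rather than just an inclusion, i.e. that no new $+$-covered configuration of $\vee D$ displaying to $\x_A \parallel \x_B$ appears that was not already $+$-covered in some $\sigma \in D$; this is immediate from $\confp{\vee D} = \bigcup_{\sigma}\confp{\sigma}$ (Proposition~\ref{prop:sup_pcov}) together with the fact that membership of $x$ in $\wit^+_\sigma$ depends only on $x \in \confp{\sigma}$ and on $\pr_\sigma(x)$, which agrees with $\pr_{\vee D}(x)$. With that observed, the argument is complete; no genuinely hard obstacle arises here, the work having been done in Proposition~\ref{prop:sup_pcov} and in the choice of canonical representatives being strategy-independent.
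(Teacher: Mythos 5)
Your proof is correct and follows essentially the same route as the paper's: both reduce to $\wit^+_{\vee D}(\x_A,\x_B) = \bigcup_{\sigma\in D}\wit^+_\sigma(\x_A,\x_B)$ via Proposition~\ref{prop:sup_pcov} and then argue that cardinality commutes with directed unions, the paper doing this last step by showing $\mathcal{P}_f(\wit^+_{\vee D}(\x_A,\x_B)) = \bigcup_{\sigma\in D}\mathcal{P}_f(\wit^+_\sigma(\x_A,\x_B))$ and invoking the definition of $\sharp$ as a supremum of finite partial sums, where you instead give a direct finite/infinite dichotomy (which does rely on the directedness of the witness sets, i.e.\ on the monotonicity you establish first — so that step is not optional). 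Your explicit monotonicity argument is left implicit in the paper but is correct and harmless.
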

\begin{proof}
Consider directed $D \subseteq \Strat(A, B)$, and $\x_A \in
\wconf{A}, \x_B \in \wconf{B}$. 
%
We have
\[
\wit^+_{\vee D}(\x_A, \x_B) = \bigcup_{\sigma \in D} \wit^+_\sigma(\x_A,
\x_B)
\]
directly by Proposition \ref{prop:sup_pcov}. But additionally, we have
\begin{eqnarray}
\mathcal{P}_f(\wit^+_{\vee D}(\x_A, \x_B)) = \bigcup_{\sigma \in D}
\mathcal{P}_f(\wit^+_\sigma(\x_A, \x_B))\,.\label{eq:better_cont}
\end{eqnarray}

Indeed, if $X \subseteq_f \wit^+_{\vee D}(\x_A, \x_B)$ then there is a
finite $Y \subseteq_f D$ such that 
\[
X \subseteq \bigcup_{\sigma \in Y} \wit^+_\sigma(\x_A, \x_B)\,,
\]
but as $D$ is directed, there is some $\tau \in D$ such that for all
$\sigma \in Y$, we have $\sigma \cleq \tau$. It immediately follows that
$X \subseteq \wit_\tau^+(\x_A, \x_B)$ as well. Reciprocally, if $X
\subseteq_f \wit_\sigma^+(\x_A, \x_B)$ for some $\sigma \in D$, then
clearly $X \subseteq_f \wit_{\vee D}^+(\x_A, \x_B)$ as well since
$\sigma \cleq \vee D$.

From there, we can calculate:
\[
\coll(\vee D)_{\x_A, \x_B} = \bigvee_{X\subseteq_f \wit_{\vee
D}^+(\x_A, \x_B)} \sharp X = \bigvee_{\sigma \in D} \bigvee_{X
\subseteq_f \wit_\sigma^+(\x_A, \x_B)} \sharp X = \bigvee_{\sigma \in D}
\coll(\sigma)_{\x_A, \x_B}
\]
using the definition of $\coll_\sigma(\x_A, \x_B)$ as a limit of finite
sums.
\end{proof}

From there, it is easy to deduce preservation of the recursion
combinator:

\begin{prop}\label{prop:pres_recursion}
Consider $\Gamma$ a context and $A$ a type. Then, the diagram
\[
\xymatrix@R=10pt@C=70pt{
\rintr{\Gamma}
	\ar[r]^{\Y_{\rintr{\Gamma}, \rintr{A}}}
	\ar[d]_{k^\Ctx_\Gamma}&
\rintr{(A \to A) \to A}
	\ar[d]^{k^\Ty_{(A\to A) \to A}}\\
\coll(\intr{\Gamma})
	\ar[r]_{\coll_\oc(\Y_{\intr{\Gamma}, \intr{A}})}&
\coll(\intr{(A\to A)\to A})
}
\]
commutes in $\Rel{\N}_\oc$.
\end{prop}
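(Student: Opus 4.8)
The plan is to reduce to the context-free case and then argue via Kleene approximants, exploiting that $\coll(-)$ is a continuous relative Seely $\sim$-functor (Corollary~\ref{cor:seely_trans}), hence that $\coll_\oc(-) : \Strat_\oc \to \Rel{\N}_\oc$ is a continuous cartesian closed $\sim$-functor. First I would dispose of the context $\Gamma$. Recall that $\Y_{\Gamma, A} = \Y_{\intr A} \odot_\oc e_\Gamma$ in $\Strat_\oc$ and $\Y_{\rintr\Gamma, \rintr A} = \Y_{\rintr A} \circ_\oc e_{\rintr\Gamma}$ in $\Rel{\N}_\oc$, for $e_\Gamma, e_{\rintr\Gamma}$ the unique terminal morphisms. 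Since $\coll(\top) = \wconf{\top} = \emptyset$ is the terminal object of $\Rel{\N}_\oc$, the morphism $\coll_\oc(e_\Gamma)$ is forced to be the terminal morphism out of $\coll(\intr\Gamma)$, so $\coll_\oc(e_\Gamma) \circ_\oc k^\Ctx_\Gamma = e_{\rintr\Gamma}$. Using that $\coll_\oc$ preserves composition (Corollary~\ref{cor:functor}, Proposition~\ref{prop:seely_functor_kleisli}), the square of the statement then reduces to the context-free identity
\[
k^\Ty_{(A\to A)\to A} \circ_\oc \Y_{\rintr A} \;=\; \coll_\oc(\Y_{\intr A})
\]
in $\Rel{\N}_\oc(\top, \coll(\intr{(A\to A)\to A}))$; both sides indeed have this type because $\coll(\top) = \top$.

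Next I would unfold both fixpoints as $\Y_{\intr A} = \bigvee_n F^n_\Strat(\bot)$ and $\Y_{\rintr A} = \bigvee_n F^n_\Rel(\bot)$, where $F_\Strat(\sigma) = (\lambda f.\,f\,(\sigma\,f))^\flat$ and $F_\Rel(\alpha) = \lambda f.\,f\,(\alpha\,f)$ are the two continuous operators from Section~\ref{subsubsec:recursion} and the recursion paragraph for $\Strat$. The crux is an \emph{intertwining lemma}: for every $\sigma \in \Strat_\oc(\top, (A\to A)\to A)$,
\[
\coll_\oc(F_\Strat(\sigma)) \;=\; k^\Ty_{(A\to A)\to A} \circ_\oc F_\Rel\big((k^\Ty_{(A\to A)\to A})^{-1} \circ_\oc \coll_\oc(\sigma)\big).
\]
This is where the $\sim$-functoriality does the work: the $(-)^\flat$ renaming yields a strategy positively isomorphic to $\lambda f.\,f\,(\sigma\,f)$, hence invisible to $\coll_\oc$ by Proposition~\ref{prop:coll_pres_sim}; the term $\lambda f.\,f\,(x\,f)$ with a single free variable $x : (A\to A)\to A$ is a simply-typed $\lambda$-term, so by the general fact that a cartesian closed $\sim$-functor preserves the interpretation of the simply-typed $\lambda$-calculus up to the mediating isomorphisms (applied to the cartesian closed $\sim$-functor of Proposition~\ref{prop:seely_functor_kleisli}), its interpretations in $\Strat_\oc$ and $\Rel{\N}_\oc$ are related through the $k^\Ty$'s; and finally substitution of $\sigma$ for $x$ is composition, which $\coll_\oc$ preserves. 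Composing these observations gives the displayed identity.

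From the intertwining lemma, together with the base case $\coll_\oc(\bot) = \bot$ — which holds because the minimal strategy $\bot_{\oc\top\vdash (A\to A)\to A}$ has no positive events, so its only $+$-covered configuration is $\emptyset$, while $\kappa(\emptyset) \neq 0$ on $\intr{(A\to A)\to A}$, whence $\wit^+_{\bot}(\x) = \emptyset$ for every $\x$ and likewise $k^\Ty_{(A\to A)\to A} \circ_\oc \bot = \bot$ — an easy induction on $n$ gives $\coll_\oc(F^n_\Strat(\bot)) = k^\Ty_{(A\to A)\to A} \circ_\oc F^n_\Rel(\bot)$. Since $\coll(-)$ is continuous (Proposition~\ref{prop:coll_cont}) and composition in $\Rel{\N}$ is continuous, $\coll_\oc$ is continuous on hom-sets, and so is $k^\Ty_{(A\to A)\to A} \circ_\oc (-)$; taking suprema over $n$ yields the context-free identity, and combining with the first paragraph completes the proof.

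The step I expect to be the main obstacle is the intertwining lemma: although none of its ingredients is individually deep, it requires gluing together three things the paper has so far only set up in isolation — preservation of the simply-typed $\lambda$-calculus interpretation by the cartesian closed $\sim$-functor $\coll_\oc$, the $\simstrat$-invariance of the collapse (so that the $(-)^\flat$ renaming can be ignored), and the lifting of continuity from $\coll$ to the Kleisli category. One must also be careful to track the exact objects involved, in particular that $\coll(\top)$ is genuinely the terminal object $\emptyset$ of $\Rel{\N}_\oc$, so that the context-free square really lies between the same two objects.
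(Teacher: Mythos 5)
Your proposal is correct and follows essentially the same route as the paper's proof: an induction on the Kleene approximants $F^n(\bot)$ using preservation of the cartesian closed structure (your ``intertwining lemma'' is exactly what the paper compresses into ``direct application of the preservation of the cartesian closed structure''), followed by passage to the supremum via Proposition~\ref{prop:coll_cont}, with the context $\Gamma$ handled by preservation of the terminal object. The only difference is cosmetic — you dispose of $\Gamma$ first where the paper does it last — and your explicit treatment of the base case $\coll_\oc(\bot)=\bot$ and of the $(-)^\flat$ renaming via $\simstrat$-invariance correctly fills in details the paper leaves implicit.
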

\begin{proof}
First, the following diagram commutes in $\Rel{\N}_\oc$ for all
$n \in \mathbb{N}$:
\[
\xymatrix@R=10pt@C=60pt{
\top
	\ar[r]^{\Y_{\rintr{A}}^n}
	\ar[d]_{k^\top}&
\rintr{(A \to A) \to A}
	\ar[d]^{k^\Ty_{(A\to A) \to A}}\\
\coll(\top)
	\ar[r]_{\coll_\oc(\Y_{\intr{A}}^n)}&
\coll(\intr{(A\to A)\to A})
}
\]
as follows directly by induction on $n$, by direct application of the
preservation of the cartesian closed structure. By Proposition
\ref{prop:coll_cont}, we may take the supremum and get that
\[
\xymatrix@R=10pt@C=60pt{
\top
	\ar[r]^{\Y_{\rintr{A}}}
	\ar[d]_{k^\top}&
\rintr{(A \to A) \to A}
	\ar[d]^{k^\Ty_{(A\to A) \to A}}\\
\coll(\top)
	\ar[r]_{\coll_\oc(\Y_{\intr{A}})}&
\coll(\intr{(A\to A)\to A})
}
\]
commutes in $\Rel{\N}_\oc$. The proposition follows by
preservation of the terminal object.
\end{proof}

\subsubsection{Preservation of the interpretation}
Finally, we may conclude our main theorem.

\begin{thm}\label{thm:main}
Consider $\Gamma \vdash M : A$ any term of $\nPCF$. Then,
\[
\xymatrix@R=10pt{
\rintr{\Gamma}
	\ar[r]^{\rintr{M}}
	\ar[d]_{k^\Ctx_\Gamma}&
\rintr{A}
	\ar[d]^{k^\Ty_A}\\
\coll(\intr{\Gamma})
	\ar[r]_{\coll_\oc(\intr{M})}&
\coll(\intr{A})
}
\]
commutes in $\Rel{\N}_\oc$.
\end{thm}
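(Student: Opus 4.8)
The plan is to prove this by induction on the typing derivation of $\Gamma \vdash M : A$, using the categorical infrastructure assembled in the preceding sections. The proof splits cleanly into three families of cases, corresponding to the three ingredients of the interpretation: the pure simply-typed $\lambda$-calculus fragment, the $\nPCF$ constants and primitives, and recursion.

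First I would dispatch the $\lambda$-calculus fragment (variables, abstraction, application) uniformly. By Corollary~\ref{cor:seely_trans} we have a relative Seely $\sim$-functor $\coll(-) : \Strat \to \Rel{\N}$, hence by Proposition~\ref{prop:seely_functor_kleisli} a cartesian closed $\sim$-functor $\coll_\oc(-) : \Strat_\oc \to \Rel{\N}_\oc$. As recalled in Section~\ref{sec:pres_intr}, a cartesian closed $\sim$-functor equipped with a choice of base-type isomorphisms $k^\tx$ automatically preserves the interpretation of the simply-typed $\lambda$-calculus up to the induced mediating isomorphisms $k^\Ctx_\Gamma$ and $k^\Ty_A$; this is precisely the "lengthy exercise" alluded to there, and I would invoke it wholesale. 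Concretely, the variable, abstraction, and application cases of the induction follow by chasing the naturality squares for $k^\with$, $k^\tto$ and preservation of projections and evaluation, with the $k^\Ctx_\Gamma$ and $k^\Ty_A$ defined exactly as the induction on types in Section~\ref{sec:pres_intr}, and with $k^\tx = t^\tx \circ \epsilon_{\rintr{\tx}}$. The well-definedness of these composites as genuine isomorphisms in $\Rel{\N}_\oc$ is guaranteed since each $t$-component is invertible up to $\sim$, which in $\Rel{\N}$ means on the nose.

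Next, the cases for the $\nPCF$ constants and primitives ($\ttrue$, $\tfalse$, numerals, $\mathbf{if}$, $\tsucc$, $\tpred$, $\iszero$, $\coin$) reduce, via the inductive hypothesis applied to subterms and functoriality of $\coll_\oc(-)$, to the single-step commutations recorded in Lemma~\ref{lem:ground_preservation} (Figure~\ref{fig:pres_prim}). For instance, $\intr{\ite{M}{N_1}{N_2}} = \mathsf{if} \odot_\oc \tuple{\intr{M}, \intr{N_1}, \intr{N_2}}$, so one pastes the induction hypotheses for $M, N_1, N_2$, the preservation of pairing (which comes from $\coll_\oc$ being cartesian), and the $\mathsf{if}$-square of Figure~\ref{fig:pres_prim}; the $\tsucc, \tpred, \iszero$ cases are identical in shape and $\coin$ is handled the same way with the relational weight $\rintr{\ttrue} + \rintr{\tfalse}$ matching the payoff-null symmetry classes of the strategy in Figure~\ref{fig:intr_coin}. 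Finally the recursion case $\Y\,M$ is exactly Proposition~\ref{prop:pres_recursion} composed with the induction hypothesis for $M$ and preservation of the cartesian closed structure (currying/uncurrying $\Y_{\intr{\Gamma},\intr{A}}$ against $\intr{M}$).

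The main obstacle is not in this theorem itself — here everything is a bookkeeping assembly of results already established — but rather it was localized earlier, in Corollary~\ref{cor:functor} (preservation of composition via the interaction bijection of Lemma~\ref{lem:mainbij}) and in Corollary~\ref{cor:main_pres_bang} / Proposition~\ref{prop:pres_prom} (preservation of $\oc$ and of promotion, where the mismatch between $\oc\coll(A)$ and $\coll(\oc A)$ on non-strict $A$ must be shown to wash out). Within the present proof the only genuine care needed is to make sure the mediating isomorphisms used are consistently the ones induced by the relative Seely structure, so that the diagram chases for $\lambda$-abstraction and application actually close; this is the content of Lemma~\ref{lem:link_k_t} and Lemma~\ref{lem:link_t_s}, which I would cite to pass freely between the Kleisli-level isos $k$ and their linear counterparts $t$ whenever a calculation in $\Rel{\N}$ (rather than $\Rel{\N}_\oc$) is more convenient. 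Thus the proof is: induction on $\Gamma \vdash M : A$; $\lambda$-calculus cases by the cartesian closed $\sim$-functor $\coll_\oc(-)$; primitive cases by Lemma~\ref{lem:ground_preservation} pasted with induction hypotheses; recursion by Proposition~\ref{prop:pres_recursion}.
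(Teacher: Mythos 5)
Your proposal matches the paper's proof exactly: both proceed by induction on the typing derivation, dispatch the $\lambda$-calculus fragment via the cartesian closed $\sim$-functor $\coll_\oc(-)$ obtained from Corollary~\ref{cor:seely_trans} and Proposition~\ref{prop:seely_functor_kleisli}, handle constants and primitives by pasting Lemma~\ref{lem:ground_preservation} with the induction hypotheses and preservation of the cartesian structure, and settle recursion by Proposition~\ref{prop:pres_recursion}. The additional remarks on the mediating isomorphisms and Lemmas~\ref{lem:link_k_t} and~\ref{lem:link_t_s} are consistent with (and slightly more explicit than) the paper's one-line argument.
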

\begin{proof}
By induction on the typing derivation $\Gamma \vdash M : A$. The
combinators of the simply-typed $\lambda$-calculus
follow by  
preservation of the cartesian closed structure. For constants, it
follows from Lemma \ref{lem:ground_preservation}. For $\mathbf{if},
\mathbf{pred}, \mathbf{succ}, \mathbf{iszero}$ and $\mathbf{coin}$, it
follows from Lemma \ref{lem:ground_preservation} and the preservation of
the cartesian structure. Recursion is by Proposition
\ref{prop:pres_recursion}.
\end{proof}

A direct reformulation of this theorem is the following:

\begin{thm}\label{thm:main2}
Consider any term $\Gamma \vdash M : A$ of $\nPCF$, and $\gamma \in
\rintr{\Gamma}, a \in \rintr{A}$. Then,
\[
\rintr{M}_{\gamma, a} = \sharp \wit_{\intr{M}}^+(s^\Ctx_\Gamma(\gamma),
s^\Ty_A(a))\,.
\]
\end{thm}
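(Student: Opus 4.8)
The plan is to derive Theorem~\ref{thm:main2} as an immediate consequence of Theorem~\ref{thm:main}, which has just been established. The connecting tissue is the identification, via Lemma~\ref{lem:web_wconf}, of the web $\rintr{A}$ with the set of complete symmetry classes $\wconf{\intr{A}}$ through the bijection $s^\Ty_A$, and similarly $\M_f(\rintr{\Gamma})$ with $\wconf{\oc \intr{\Gamma}}$ through $s^\Ctx_\Gamma$. So the strategy is to read off the coefficient equation from the commuting square in Theorem~\ref{thm:main}, by evaluating both composites on a point $\gamma \in \rintr{\Gamma}$ and extracting the $a$-coefficient for $a \in \rintr{A}$.

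Concretely, first I would recall that by definition $(\coll_\oc(\sigma))_{\x, \y} = (\coll(\sigma))_{\x, \y} = \sharp \wit^+_\sigma(\x, \y)$ for a strategy $\sigma$ on an arena, using \eqref{eq:coll}; and that $\coll_\oc(\intr{M}) = \coll(\intr{M}) \circ t^\oc_{\intr\Gamma}$ by Proposition~\ref{prop:seely_functor_kleisli}. Then I would trace the two paths of the square in $\Rel{\N}_\oc$, i.e.\ as morphisms $\rintr{\Gamma} \relto \rintr{A}$ in $\Rel{\N}$ after Kleisli-unfolding. The upper-right path is $k^\Ty_A \circ_\oc \rintr{M}$; the lower-left path is $\coll_\oc(\intr{M}) \circ_\oc k^\Ctx_\Gamma$. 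Using Lemma~\ref{lem:link_k_t} to rewrite $k^\Ctx_\Gamma = t^\Ctx_\Gamma \circ \epsilon_{\intr\Gamma}$ and $k^\Ty_A = t^\Ty_A \circ \epsilon_{\intr A}$, and the concrete descriptions of $t^{\oc\Ctx}_\Gamma$, $t^\Ty_A$ as Kronecker deltas along $s^\Ctx_\Gamma$, $s^\Ty_A$ from Lemma~\ref{lem:link_t_s}, the mediating isomorphisms collapse to pure relabellings. Composing the commuting square with these deltas on both sides then yields precisely
\[
\rintr{M}_{\gamma, a} = (\coll(\intr{M}))_{s^\Ctx_\Gamma(\gamma), s^\Ty_A(a)} = \sharp\wit^+_{\intr M}(s^\Ctx_\Gamma(\gamma), s^\Ty_A(a)),
\]
where the last equality is \eqref{eq:coll}. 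One should be slightly careful that $\gamma \in \rintr\Gamma$ is a single web point, not a multiset, so the relevant instance of $s^\Ctx_\Gamma$ is its restriction sending the singleton multiset $[\gamma]$ to the corresponding class; this is exactly what the Kleisli unfolding of $k^\Ctx_\Gamma$ (which factors through dereliction $\epsilon_{\intr\Gamma}$) produces.

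I do not expect a genuine obstacle here: the content is entirely in Theorem~\ref{thm:main} and the bookkeeping lemmas~\ref{lem:link_k_t}, \ref{lem:link_t_s}, \ref{lem:web_wconf}. The only mildly delicate point — and the one I would write out with care — is checking that the Kleisli composition $\circ_\oc$ in $\Rel{\N}_\oc$, when evaluated at a single web point $\gamma$ rather than a multiset, reproduces the linear coefficient $\rintr{M}_{\gamma,a}$ on the nose, i.e.\ that the $\oc$-promotion and the dereliction inside $k^\Ctx_\Gamma$ cancel as expected. This is a short diagram chase of the same flavour as Lemma~\ref{lem:link_k_t}. Everything else is a direct substitution, so the proof can be stated in a few lines as a corollary of Theorem~\ref{thm:main}.
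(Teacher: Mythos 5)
Your proof follows exactly the paper's route: Theorem~\ref{thm:main2} is obtained from Theorem~\ref{thm:main} by unfolding the Kleisli composition, rewriting the mediating isos via Lemma~\ref{lem:link_k_t}, and then reading off coefficients through the Kronecker-delta descriptions of Lemma~\ref{lem:link_t_s} together with \eqref{eq:coll}. That is the whole content, and your account of it is right.

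The one point to correct is your resolution of the ``delicate point'' at the end. The index $\gamma$ in $\rintr{M}_{\gamma,a}$ is a finite multiset, not a single web point: $\rintr{M}$ lives in $\Rel{\N}_\oc(\rintr{\Gamma},\rintr{A}) = \Rel{\N}(\M_f(\rintr{\Gamma}),\rintr{A})$, and $s^\Ctx_\Gamma$ is defined on $\M_f(\rintr{\Gamma})$ (Lemma~\ref{lem:web_wconf}); the ``$\gamma\in\rintr{\Gamma}$'' in the statement is an abuse of notation. Consequently there is no restriction to singleton multisets, and your claim that the Kleisli unfolding of $k^\Ctx_\Gamma$ ``produces the singleton multiset $[\gamma]$'' via dereliction is not what happens: unfolding $\coll_\oc(\intr{M})\circ_\oc k^\Ctx_\Gamma$ uses promotion, and $(t^\Ctx_\Gamma\circ \der_{\intr{\Gamma}})^\dagger = \oc(t^\Ctx_\Gamma)$ by the comonad laws, so the left leg of the unfolded square is $\oc(t^\Ctx_\Gamma) : \M_f(\rintr{\Gamma}) \relto \oc\coll(\intr{\Gamma})$ acting on \emph{arbitrary} multisets; composing with $t^\oc_{\intr{\Gamma}}$ gives exactly $t^{\oc\Ctx}_\Gamma$, whose coefficients are $\delta_{s^\Ctx_\Gamma(\gamma),\x}$ for any multiset $\gamma$. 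With that reading the deltas cancel on the nose and the identity holds for all $\gamma\in\M_f(\rintr{\Gamma})$, which is the intended (and stronger) statement.
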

\begin{proof}
By Theorem \ref{thm:main}, we have
\[
\xymatrix@R=10pt{
\rintr{\Gamma}
        \ar[r]^{\rintr{M}}
        \ar[d]_{k^\Ctx_\Gamma}&
\rintr{A}
        \ar[d]^{k^\Ty_A}\\
\coll(\intr{\Gamma})
        \ar[r]_{\coll_\oc(\intr{M})}&
\coll(\intr{A})
}
\]
in $\Rel{\N}_\oc$. Simplifying Kleisli composition and using Lemma
\ref{lem:link_k_t}, we have
\[
\xymatrix@R=15pt{
\M_f(\rintr{\Gamma})
	\ar[rr]^{\rintr{M}}
	\ar[d]_{\oc (t^\Ctx_\Gamma)}&&
\rintr{A}
	\ar[d]^{t^\Ty_A}\\
\oc \coll(\intr{\Gamma})
	\ar[r]_{t^\oc_{\intr{\Gamma}}}&
\coll(\oc \intr{\Gamma})
	\ar[r]_{\coll \intr{M}}&
\coll(\intr{A})
}
\]
in $\Rel{\N}$; but by definition of $t^{\oc \Ctx}_\Gamma$, this amounts
exactly to
\[
\xymatrix@R=15pt{
\M_f(\rintr{\Gamma})
        \ar[r]^{\rintr{M}}
        \ar[d]_{t^{\oc \Ctx}_\Gamma}&
\rintr{A}
        \ar[d]^{t^\Ty_A}\\
\coll(\oc \intr{\Gamma})
        \ar[r]_{\coll \intr{M}}&
\coll(\intr{A})
}
\]
in $\Rel{\N}$. The theorem follows by Lemma \ref{lem:link_t_s} and
direct computation.
\end{proof}

This answers our original question: at higher-order types, the weighted
relational model counts witnesses in the concurrent game semantics, up
to positive symmetry.

\section{Collapse of $\R$-weighted strategies}
\label{sec:coll_rw}

In this last technical section, we show how all the results above generalize to
the collapse of strategies whose configurations are labelled with
elements of a continuous semiring $\R$.

\subsection{$\R$-strategies} We build a relative Seely $\sim$-category 
$\R$-$\Strat$, for any $\R$. 

\subsubsection{Basic definition} As for
$\Strat$, the objects of $\R$-$\Strat$ are all arenas. We define:

\begin{defi}
Consider $A$ a game. An \textbf{$\R$-strategy} on $A$ is a strategy
$\sigma : A$, with
\[
\v_\sigma : \confp{\sigma} \to \R
\]
a \textbf{valuation}, \emph{invariant under symmetry}: for all $x
\sym_\sigma y$, $\v_\sigma(x) = \v_\sigma(y)$. 
\end{defi}

For instance, using $\R = \overline{\mathbb{R}}_+$, we may adjoin to the
strategy $\coin : \gbool$ a valuation $\v :
\confp{\coin} \to \overline{\mathbb{R}}_+$ with $\v(\{\qu^-, \ttrue^+\})
= \v(\{\qu^-, \tfalse^+\}) = \frac{1}{2}$; representing a fair coin
toss.  

For $A$ and $B$ arenas, the homset $\R$-$\Strat(A, B)$ comprises
\emph{visible, exhaustive} $\R$-strategies on $A \vdash B$ -- visibility and
exhaustivity are undisturbed by the presence of the valuation.

\subsubsection{Basic strategies and operations} For any arena $A$ and any $x_A
\parallel x_A \in \confp{\cc_A}$, we set
\[
\v_{\cc_A}(x_A \parallel x_A) = 1\,,
\]
as by Proposition \ref{prop:cc_pcov} all $+$-covered configurations of
copycat have this form.
All copycat strategies involved in the relative Seely $\sim$-category
structure are made into $\R$-strategies similarly, by setting their
valuation to be $1$ everywhere -- this covers associators, unitors,
projections, evaluation, dereliction, digging, and Seely
isomorphisms. All operations on strategies involved in the relative Seely
category structure are lifted to $\R$-strategies with:
\[
\begin{array}{rcl}
\v_{\tau \odot \sigma}(x^\tau \odot x^\sigma) &=& \v_{\sigma}(x^\sigma)
\cdot \v_\tau(x^\tau)\\
\v_{\sigma\tensor \tau}(x^\sigma \tensor x^\tau) &=& \v_\sigma(x^\sigma)
\cdot \v_\tau(x^\tau)\\
\v_{\tuple{\sigma, \tau}}(\inj_\sigma(x^\sigma)) &=&
\v_\sigma(x^\sigma)
\end{array}
\qquad
\begin{array}{rcl}
\v_{\tuple{\sigma, \tau}}(\inj_\tau(x^\tau)) &=& 
\v_\tau(x^\tau)\\
\v_{\oc(\sigma)}([(x^i)_{i\in I}]) &=& \prod_{i\in I}(\v_\sigma(x^i))
\end{array}
\]
leveraging the characterizations of $+$-covered configurations for these
operations, respectively found in Propositions \ref{prop:char_comp},
\ref{prop:def_tensor}, \ref{prop:def_pairing}, and
\ref{prop:char_pcov_bang}. In the last case, $\prod$ denotes the
iterated product $(\cdot)$ of $\R$. One must ensure that the resulting valuation
is invariant under symmetry, which is immediate (in the last case,
using that the product $\cdot$ is commutative).

Likewise, the partial order $\cleq$ is extended to $\R$-strategies by
setting $\sigma \cleq \tau$ if it holds for the underlying strategies,
and $\v_\sigma(x) = \v_\tau(x)$ for all $x \in \confp{\sigma}$. It is
clear that $\cleq$ retains the same completeness properties, and that
all operations on $\R$-strategies are continuous.

In particular, for any strict arenas $\Gamma$ and $A$, we may define the
recursion combinator
\[
\Y_{\Gamma, A} \in \text{$\R$-$\Strat_\oc(\Gamma, \oc (\oc A \lin A)
\lin A)$}
\]
exactly as in Section \ref{subsubsec:recursion} -- the same strategy,
with valuation again set to $1$ everywhere.
\subsubsection{Positive isomorphisms} Finally, we must adapt the
equivalence relation on strategies.

\begin{defi}
Consider $A$ a game, and $\sigma, \tau : A$ two $\R$-strategies.

A \textbf{positive isomorphism} $\varphi : \sigma \simstrat \tau$ is a
positive isomorphism between the underlying strategies, such that for all
$x \in \confp{\sigma}$, we have $\v_\tau(\varphi(x)) = \v_\sigma(x)$.
\end{defi}

We say that $\sigma, \tau : A$ are \textbf{positively isomorphic},
written $\sigma \simstrat \tau$, if there exists a positive isomorphism
$\varphi : \sigma \simstrat \tau$. We must ensure that this
valuation-aware equivalence relation is still preserved under all
operations on strategies -- it is evident for all, save composition.

For composition, we need more information on how positive isos
are propagated:

\begin{restatable}{prop}{horizontal}
Let $\sigma, \sigma' \in
\Strat(A, B)$; $\tau, \tau' \in \Strat(B, C)$; 
$\varphi : \sigma \simstrat \sigma'$, $\psi : \tau \simstrat \tau'$. 

Then, there exists a positive isomorphism $\psi\odot \varphi : \tau
\odot \sigma \simstrat \tau' \odot \sigma'$ such that for all $x^\tau
\odot x^\sigma \in \confp{\tau \odot \sigma}$, writing $y^{\tau'} \odot
y^{\sigma'}  = (\psi\odot \varphi)(x^\tau \odot x^\sigma) \in
\confp{\tau' \odot \sigma'}$, we have  
\[
\varphi(x^\sigma) \sym_{\sigma'} y^{\sigma'}\,,
\qquad
\qquad
\psi(x^\tau) \sym_{\tau'} y^{\tau'}\,.
\]
\end{restatable}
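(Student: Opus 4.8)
The plan is to reduce the statement about $\R$-strategies to the already-known results about plain strategies, checking only that the valuations are transported correctly. Recall that Proposition~\ref{prop:congruence} (and the underlying argument via Proposition~\ref{prop:sync_sym}) already provides a positive isomorphism $\tau \odot \sigma \simstrat \tau' \odot \sigma'$ at the level of plain strategies; moreover, unwinding the proof in \cite{cg2}, this isomorphism acts on $+$-covered configurations in a way compatible with synchronization, which is exactly the content we need recorded here. So the first step is to invoke that construction to obtain a map of event structures $\psi \odot \varphi : \tau \odot \sigma \iso \tau' \odot \sigma'$ which is a positive isomorphism of the underlying strategies, and which on $+$-covered configurations satisfies: if $x^\tau \odot x^\sigma \in \confp{\tau \odot \sigma}$ and $y^{\tau'} \odot y^{\sigma'} = (\psi \odot \varphi)(x^\tau \odot x^\sigma)$, then $\varphi(x^\sigma) \sym_{\sigma'} y^{\sigma'}$ and $\psi(x^\tau) \sym_{\tau'} y^{\tau'}$. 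The precise formulation of this compatibility property is precisely what Proposition~\ref{prop:sync_sym} delivers when one plays the two symmetries against each other: from $x^\sigma$ and $x^\tau$ able to synchronize up to symmetry (here, via the positive isos $\varphi, \psi$ and their induced symmetries on $B$), one extracts $y^{\sigma'} \sym_{\sigma'} \varphi(x^\sigma)$ and $y^{\tau'} \sym_{\tau'} \psi(x^\tau)$ synchronizing on the nose.

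The second and only genuinely new step is to verify that $\psi \odot \varphi$ respects valuations, i.e. that $\v_{\tau' \odot \sigma'}((\psi \odot \varphi)(x^\tau \odot x^\sigma)) = \v_{\tau \odot \sigma}(x^\tau \odot x^\sigma)$ for every $x^\tau \odot x^\sigma \in \confp{\tau \odot \sigma}$. Here I would simply compute both sides using the definition $\v_{\tau \odot \sigma}(x^\tau \odot x^\sigma) = \v_\sigma(x^\sigma) \cdot \v_\tau(x^\tau)$. On the right we get $\v_\sigma(x^\sigma) \cdot \v_\tau(x^\tau)$; on the left, writing $y^{\tau'} \odot y^{\sigma'} = (\psi \odot \varphi)(x^\tau \odot x^\sigma)$, we get $\v_{\sigma'}(y^{\sigma'}) \cdot \v_{\tau'}(y^{\tau'})$. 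Now use the compatibility property: $y^{\sigma'} \sym_{\sigma'} \varphi(x^\sigma)$, so by invariance of $\v_{\sigma'}$ under symmetry, $\v_{\sigma'}(y^{\sigma'}) = \v_{\sigma'}(\varphi(x^\sigma))$; and since $\varphi : \sigma \simstrat \sigma'$ is a positive isomorphism of $\R$-strategies, $\v_{\sigma'}(\varphi(x^\sigma)) = \v_\sigma(x^\sigma)$. The same reasoning on the $\tau$ side gives $\v_{\tau'}(y^{\tau'}) = \v_\tau(x^\tau)$. Multiplying, $\v_{\tau' \odot \sigma'}(y^{\tau'} \odot y^{\sigma'}) = \v_\sigma(x^\sigma) \cdot \v_\tau(x^\tau) = \v_{\tau \odot \sigma}(x^\tau \odot x^\sigma)$, as required.

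Putting these together: $\psi \odot \varphi$ is a positive isomorphism of the underlying plain strategies that additionally preserves the valuation, hence a positive isomorphism of $\R$-strategies $\tau \odot \sigma \simstrat \tau' \odot \sigma'$, and it satisfies the stated compatibility with symmetry classes by construction. I do not expect any serious obstacle here; the only delicate point is making sure one quotes the correct refined form of Proposition~\ref{prop:congruence} — namely the version tracking the action on $+$-covered configurations up to symmetry rather than just the bare existence of a positive iso. That refined form is exactly what is needed to match up valuations, which is why the proposition is stated in this strengthened way; once it is available, the valuation check is a one-line computation using invariance of valuations under symmetry together with the hypothesis that $\varphi$ and $\psi$ are valuation-preserving.
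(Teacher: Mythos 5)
There is a gap here, and it is mostly one of misidentifying what needs to be proved. The statement you were asked to prove \emph{is} the ``refined form of Proposition~\ref{prop:congruence} tracking the action on $+$-covered configurations'' that you propose to quote: Proposition~\ref{prop:congruence} as stated only asserts the bare existence of a positive iso $\tau\odot\sigma \simstrat \tau'\odot\sigma'$, and the whole point of this proposition is to construct one with the additional property $\varphi(x^\sigma)\sym_{\sigma'} y^{\sigma'}$, $\psi(x^\tau)\sym_{\tau'} y^{\tau'}$. You cannot obtain that by citation; it has to be built. You do name the right mechanism --- use the positive isos to produce a mediating symmetry on $B$ and apply Proposition~\ref{prop:sync_sym} to resynchronize on the nose --- and that is indeed how the paper proceeds: from $\pr_{\sigma'}\circ\varphi \sim^+ \pr_\sigma$ and $\pr_{\tau'}\circ\psi\sim^+\pr_\tau$ one extracts $\theta_B^+,\theta_B^-$ and forms $\theta_B = \theta_B^-\circ(\theta_B^+)^{-1} : \varphi(x^\sigma)_B \sym_B \psi(x^\tau)_B$. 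But two essential steps are missing from your sketch. First, Proposition~\ref{prop:sync_sym} requires the induced composite bijection to be \emph{secured}; the paper establishes this by transporting securedness of the interaction of $x^\sigma$ and $x^\tau$ (which holds because $x^\tau\odot x^\sigma$ is a configuration of the composite) along the order-isomorphisms $\varphi,\psi$. Second, and more importantly, you give no argument that the pointwise assignment $x^\tau\odot x^\sigma \mapsto y^{\tau'}\odot y^{\sigma'}$ is well-defined and assembles into an isomorphism of ess: the paper needs the uniqueness clause (via thinness, Lemma~3.28 of \cite{cg2}, applied to a symmetry of $\tau'\odot\sigma'$ displaying to a positive symmetry of $A\vdash C$) to make the map functional, the symmetry analogue of Proposition~\ref{prop:sync_sym} to show it preserves symmetries, the symmetric construction for the inverse, and the fact that an order-iso preserving symmetry is generated by an iso of ess. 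None of this is addressed.

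Separately, the bulk of your write-up --- the verification that valuations are preserved --- does not belong to this statement at all. The proposition is about plain strategies in $\Strat$; the valuation check is the content of the \emph{following} proposition in the paper, whose proof is exactly the one-line computation you give (and which does correctly use invariance of $\v_{\sigma'}$ under $\sym_{\sigma'}$ together with $y^{\sigma'}\sym_{\sigma'}\varphi(x^\sigma)$). So your proposal proves the easy corollary while deferring the actual proposition to a reference that only supplies its weaker, untracked form.
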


See Appendix \ref{app:horizontal} for the proof. From that, we may deduce:

\begin{prop}
Let $\sigma, \sigma' \in
\text{$\R$-$\Strat$}(A, B)$; $\tau, \tau' \in \text{$\R$-$\Strat$}(B, C)$;
$\varphi : \sigma \simstrat \sigma'$, $\psi : \tau \simstrat \tau'$.

Then, $\psi \odot \varphi : \tau \odot \sigma \simstrat \tau' \odot
\sigma'$ is a positive iso between $\R$-strategies.
\end{prop}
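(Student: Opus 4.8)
The plan is to reduce this statement to the previous proposition (the one providing $\psi\odot\varphi:\tau\odot\sigma\simstrat\tau'\odot\sigma'$ for the \emph{underlying} strategies), and then check that this same map respects the valuations. Concretely, I would start by invoking that proposition to obtain the positive isomorphism $\psi\odot\varphi$ of underlying strategies, together with the crucial extra property it supplies: for every $x^\tau\odot x^\sigma\in\confp{\tau\odot\sigma}$, writing $y^{\tau'}\odot y^{\sigma'} = (\psi\odot\varphi)(x^\tau\odot x^\sigma)$, one has $\varphi(x^\sigma)\sym_{\sigma'} y^{\sigma'}$ and $\psi(x^\tau)\sym_{\tau'} y^{\tau'}$. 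The only thing left is then to verify $\v_{\tau'\odot\sigma'}((\psi\odot\varphi)(x^\tau\odot x^\sigma)) = \v_{\tau\odot\sigma}(x^\tau\odot x^\sigma)$ for all such configurations.

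The verification is a short chain of equalities. By definition of the valuation on a composite $\R$-strategy, $\v_{\tau\odot\sigma}(x^\tau\odot x^\sigma) = \v_\sigma(x^\sigma)\cdot\v_\tau(x^\tau)$, and similarly $\v_{\tau'\odot\sigma'}(y^{\tau'}\odot y^{\sigma'}) = \v_{\sigma'}(y^{\sigma'})\cdot\v_{\tau'}(y^{\tau'})$. Now I would use the extra property above: since $\varphi(x^\sigma)\sym_{\sigma'} y^{\sigma'}$ and valuations are invariant under symmetry, $\v_{\sigma'}(y^{\sigma'}) = \v_{\sigma'}(\varphi(x^\sigma))$; and since $\varphi:\sigma\simstrat\sigma'$ is a positive isomorphism of $\R$-strategies, $\v_{\sigma'}(\varphi(x^\sigma)) = \v_\sigma(x^\sigma)$. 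The same reasoning on the $\tau$ side gives $\v_{\tau'}(y^{\tau'}) = \v_\tau(x^\tau)$. Multiplying, $\v_{\tau'\odot\sigma'}(y^{\tau'}\odot y^{\sigma'}) = \v_\sigma(x^\sigma)\cdot\v_\tau(x^\tau) = \v_{\tau\odot\sigma}(x^\tau\odot x^\sigma)$, as required; hence $\psi\odot\varphi$ is a positive isomorphism of $\R$-strategies.

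There is essentially no serious obstacle here: the entire weight of the argument has already been discharged by the preceding proposition, whose statement was carefully engineered to carry exactly the symmetry information ($\varphi(x^\sigma)\sym_{\sigma'} y^{\sigma'}$ rather than mere equality $\varphi(x^\sigma) = y^{\sigma'}$) needed to bridge the gap between the configuration-level action of $\psi\odot\varphi$ and the valuations. The only point that deserves a word is \emph{why} we need the symmetry rather than equality: composition of strategies up to positive isomorphism only determines the synchronized configuration up to symmetry, so $\varphi(x^\sigma)$ and $y^{\sigma'}$ need not coincide on the nose — but they are symmetric, and invariance of valuations under symmetry is precisely what rescues the computation. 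I would present the proof in the two-sentence form of the calculation above, citing the previous proposition and the symmetry-invariance of $\v$.
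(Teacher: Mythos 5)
Your proposal is correct and follows exactly the paper's own proof: invoke the preceding proposition for the positive isomorphism of underlying strategies together with the property $\varphi(x^\sigma)\sym_{\sigma'}y^{\sigma'}$, $\psi(x^\tau)\sym_{\tau'}y^{\tau'}$, then compute $\v_{\tau'\odot\sigma'}(y^{\tau'}\odot y^{\sigma'})=\v_{\sigma'}(y^{\sigma'})\cdot\v_{\tau'}(y^{\tau'})=\v_{\sigma'}(\varphi(x^\sigma))\cdot\v_{\tau'}(\psi(x^\tau))=\v_\sigma(x^\sigma)\cdot\v_\tau(x^\tau)=\v_{\tau\odot\sigma}(x^\tau\odot x^\sigma)$ using symmetry-invariance of valuations. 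Your remark on why symmetry rather than on-the-nose equality is needed matches the paper's design of the auxiliary proposition.
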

\begin{proof}
For $x^\tau \odot x^\sigma \in \confp{\tau \odot \sigma}$, writing
$y^{\tau'} \odot y^{\sigma'} = (\psi \odot \varphi)(x^\tau \odot
x^\sigma)$, we calculate
\begin{eqnarray*}
\v_{\tau' \odot \sigma'}((\psi \odot \varphi)(x^\tau \odot
x^\sigma)) &=& \v_{\sigma'}(y^{\sigma'}) \cdot \v_{\tau'}(y^{\tau'})\\
&=& \v_{\sigma'}(\varphi(x^\sigma)) \cdot \v_{\tau'}(\psi(x^\tau))\\
&=& \v_\sigma(x^\sigma) \cdot \v_\tau(y^\tau)\\
&=& \v_{\tau \odot \sigma}(x^\tau \odot x^\sigma)\,,
\end{eqnarray*}
where $\v_{\sigma'}(y^{\sigma'}) = \v_{\sigma'}(\varphi(x^\sigma))$ as
$y^{\sigma'} \sym_{\sigma'} \varphi(x^\sigma)$ and likewise for $\tau$.
\end{proof}

\begin{cor}
There is a relative Seely $\sim$-category $\R$-$\Strat$.
\end{cor}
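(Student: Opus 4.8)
The plan is to verify each component of Definition~\ref{def:relativeseely} for $\R$-$\Strat$, reusing the corresponding verifications already carried out for $\Strat$. The key observation is that the underlying $\sim$-category structure of $\R$-$\Strat$ is that of $\Strat$ together with valuations, and all the structural laws concerning the event-structure data (associativity and unit of $\odot$, bifunctoriality of $\tensor$, the comonad axioms for $\oc$, the Seely isomorphisms, the $J$-relative adjunction $-\tensor B \dashv_J J(-\lin B)$, finite products in $\Strat_s$) have already been established in the excerpt; it therefore remains only to check that the equations hold \emph{with valuations}, and that the various structural strategies (copycat variants) carry valuations compatibly.

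First I would record that $\R$-$\Strat$ is a $\sim$-category: composition and identities are inherited, the equivalence relation $\simstrat$ is now valuation-aware and was just shown to be a congruence via the preceding \textbf{Prop.}; associativity and unit laws up to $\simstrat$ follow from the corresponding isos $\alpha_{\sigma,\tau,\delta}$, $\lambda_\sigma$, $\rho_\sigma$ of Proposition~\ref{prop:sim_cat}, once one checks these are \emph{valuation-preserving} positive isos --- but this is immediate from the explicit characterisations of their action on $+$-covered configurations given there, since on both sides the valuation of a composite is the product $\v_\sigma(x^\sigma)\cdot\v_\tau(x^\tau)$ (commutativity and associativity of $\cdot$ in $\R$ make the rebracketing harmless). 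The same template handles: the symmetric monoidal structure (the structural isos are copycat-like with valuation $1$, and $\v_{\sigma\tensor\tau}$ is multiplicative, so all coherence diagrams commute up to valuation-preserving $\simstrat$); finite products in $\R$-$\Strat_s$ preserved by $J$ (projections and pairings carry valuation $1$ resp.\ the inherited valuation, and the universal property of Definition~\ref{def:simproduct} is read off from Proposition~\ref{prop:def_pairing}); the $J$-relative comonad $\oc$ with $\der$, digging, and promotion $\sigma^\dagger = \oc\sigma\circ\dig$ (here one uses $\v_{\oc\sigma}([(x^i)_{i\in I}]) = \prod_i \v_\sigma(x^i)$ together with associativity/commutativity of $\cdot$, and the fact that $\der,\dig$ have valuation $1$); and the strong monoidality isos $m_0,m_{S,T}$ (valuation $1$).

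Then I would assemble these into the statement that $\R$-$\Strat$, with sub-$\sim$-category $\R$-$\Strat_s$ of strict arenas, is a relative Seely $\sim$-category, quoting the corresponding \textbf{Prop.} for $\Strat$ for the non-valuation content and citing the above for the valuation content. Concretely the proof body reads: ``The underlying $\sim$-category, its symmetric monoidal structure, relative product, $J$-relative comonad $\oc$, and Seely isomorphisms are those of $\Strat$ (Propositions~\ref{prop:sim_cat}, \ref{prop:def_tensor}, \ref{prop:def_pairing}, \ref{prop:char_pcov_bang} and the results of Sections~\ref{subsubsec:semiclosed_strat}--\ref{subsubsec:exp_strat}), equipped with the valuations defined above. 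That $\simstrat$ is a congruence for $\odot$ is the previous proposition; for $\tensor$, $\tuple{-,=}$ and $\oc$ it is immediate since the valuations are defined pointwise from the components via the multiplication of $\R$, which is commutative. All structural 2-cells are valuation-preserving: the copycat-like strategies have valuation $1$ throughout, and for the composites the equality of valuations on each side follows from associativity and commutativity of $\cdot$ in $\R$. Hence every axiom of Definition~\ref{def:relativeseely} holds up to valuation-preserving positive isomorphism.''

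I do not expect a genuine obstacle here --- the work is bookkeeping --- but the one point requiring a little care is the comonad/promotion axioms: promotion $\sigma^\dagger$ is a composite involving $\oc$ and $\dig$, so checking $\v_{\sigma^\dagger}$ against the three relative-comonad axioms means tracking how $+$-covered configurations of $\oc\sigma$, $\dig$ and further $\oc(\cdot)$ combine, and confirming that the resulting iterated products of $\R$-values agree on both sides of each axiom. This reduces, after unfolding Proposition~\ref{prop:char_pcov_bang}, to an identity of finite products over reindexed families, valid by commutativity and associativity of $\cdot$; I would state it as such rather than expanding the indices.
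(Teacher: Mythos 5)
Your proposal is correct and follows essentially the same route as the paper: the entire structure is inherited from $\Strat$, and the only remaining obligation is that the structural positive isomorphisms (characterised by their action on $+$-covered configurations) preserve valuations, which reduces to associativity and commutativity of $\cdot$ in $\R$ together with the fact that all copycat-like components carry valuation $1$. The paper's proof is just terser — it works out the associator as the illustrative case and declares the others similar — whereas you enumerate the cases, including the promotion axioms, but no new idea is involved.
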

\begin{proof}
It remains to establish the required positive isomorphisms, \emph{i.e.}
to show that the corresponding positive isomorphisms for $\Strat$
preserve valuations. As an illustration, recall from Proposition
\ref{prop:sim_cat} that associativity is realized with the positive
isomorphism
\[
\alpha_{\sigma, \tau, \delta} : (\delta \odot \tau) \odot \sigma \simstrat \delta \odot
(\tau \odot \sigma)
\]
such that $\alpha_{\sigma, \tau, \delta}((x^\delta
\odot x^\tau) \odot x^\sigma) = x^\delta \odot (x^\tau \odot x^\sigma)$
for all $(x^\delta \odot x^\tau) \odot x^\sigma \in \confp{\delta \odot
(\tau \odot \sigma)}$. Clearly, this preserves valuations by
associativity of $\cdot$. Other cases are similar.
\end{proof}

\subsubsection{Interpretation of $\R$-$\PCF$} 
All basic strategies for $\nPCF$ primitives have valuation set to
$1$ everywhere, completing the interpretation of $\nPCF$. But valuations
remain trivial:

\begin{prop}\label{prop:npcf_one}
Consider $\Gamma \vdash M : A$ a term of $\nPCF$.

Then, for all $x \in \confp{\intr{M}}$, we have $\v_{\intr{M}}(x) = 1$.
\end{prop}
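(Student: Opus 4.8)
The plan is to prove Proposition~\ref{prop:npcf_one} by induction on the typing derivation of $\Gamma \vdash M : A$, observing that the interpretation $\intr{M}$ is built from copycat-like strategies (constants, primitives, structural isos, $\der$, $\dig$, Seely isos) whose valuations are identically $1$, combined using the operations $\odot$, $\tensor$, $\tuple{-,=}$, $\oc(-)$, and the recursion combinator $\Y$, all of which were defined above to \emph{multiply} or \emph{copy} valuations. Since $1$ is the unit of $\cdot$ in $\R$, and products of $1$'s are $1$, all these operations preserve the property ``$\v = 1$ everywhere''.

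Concretely, first I would establish the base cases: for the constants $\Gamma \vdash \ttrue, \tfalse : \tbool$, $\Gamma \vdash n : \tnat$, and for $\coin$, the interpreting strategies are defined with valuation $1$ on all $+$-covered configurations (this is explicit for $\nPCF$ primitives, and for $\coin$ we use that its game-semantic interpretation in $\Strat$ rather than $\overline{\mathbb{R}}_+$-$\Strat$ is the plain strategy of Figure~\ref{fig:intr_coin} with trivial valuation). For variables, the interpretation is a copycat strategy, again with valuation $1$. Then for the inductive cases: application is $\odot_\oc$ (Kleisli composition, so a composition $\odot$ with a $\oc(-)$ in between), abstraction is currying $\Lambda(-)$ which only relabels the display map and leaves the ess and hence the valuation untouched, and pairing/projection use $\tuple{-,=}$ and the structural copycats. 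For each, I invoke the corresponding clause in the definition of valuations on composite $\R$-strategies: $\v_{\tau\odot\sigma}(x^\tau \odot x^\sigma) = \v_\sigma(x^\sigma)\cdot\v_\tau(x^\tau)$, $\v_{\oc\sigma}([(x^i)_{i\in I}]) = \prod_{i\in I}\v_\sigma(x^i)$, etc.; if both operands have valuation constantly $1$, so does the result (the empty product being $1$). For the basic $\PCF$ combinators $\mathbf{if}, \mathbf{succ}, \mathbf{pred}, \mathbf{iszero}$, the strategies of Figure~\ref{fig:intr_pcf} are declared to have valuation $1$ everywhere, and they are composed with subterms via $\odot_\oc$, so the multiplicative clause again applies.

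The one case needing slightly more care is recursion, $\intr{\Y\,M} = \Y_{\intr{\Gamma},\intr{A}}\,\intr{M}$, where $\Y_{\Gamma,A}$ is defined as $\bigvee_{n\in\mathbb{N}} F^n(\bot)$. Here I would argue that each finite approximant $F^n(\bot)$ has valuation constantly $1$: $\bot_A$ is the minimal strategy whose $+$-covered configurations (if any) carry valuation $1$ vacuously or by the chosen base point, and $F$ is built from the cartesian closed structure (composition, pairing, currying, $\der$, $\dig$) all of which preserve the ``$\v=1$'' property by the above; the renaming $(-)^\flat$ is an isomorphism of ess and so does not affect valuations. Then, by the definition of $\cleq$ on $\R$-strategies — which requires $\v_\sigma(x) = \v_\tau(x)$ whenever $\sigma \cleq \tau$ — the supremum $\vee D$ of a directed set $D$ of $\R$-strategies all with valuation $1$ also has valuation $1$ on every $+$-covered configuration (using Proposition~\ref{prop:sup_pcov}, $\confp{\vee D} = \bigcup_{\sigma\in D}\confp{\sigma}$, and each such configuration already lies in some $\confp{\sigma}$ where it has valuation $1$). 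Finally $\Y_{\Gamma,A} = \Y_A \odot_\oc e_\Gamma$ with $e_\Gamma$ the terminal morphism (valuation $1$), so the composite still has valuation $1$.

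I do not expect any serious obstacle here; this is a routine structural induction, and the ``main'' step — if one must be singled out — is simply being careful about the recursion case and checking that the dcpo structure on $\R$-strategies genuinely forces the limit's valuation to stay $1$, which follows directly from the fact that $\cleq$ on $\R$-strategies demands equality of valuations on shared configurations. Everything else is an immediate appeal to the multiplicative/copying clauses in the definition of valuations on composite strategies, together with $1\cdot 1 = 1$ and $\prod_{i\in\emptyset} = 1$ in the commutative semiring $\R$.
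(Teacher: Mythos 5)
Your proof is correct and follows exactly the argument the paper gives (the paper merely states it as obvious: all basic strategies have valuation $1$ and the operations on strategies only ever use the product $\cdot$, never the sum). Your more detailed treatment of the recursion case, via preservation of valuations under $\cleq$ and directed suprema, is a sound filling-in of what the paper leaves implicit.
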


This is obvious: all basic strategies have all valuations $1$,
and the operations on strategies only involve the product $\cdot$ of
$\R$, never the sum. To explain that, recall that in $\Rel{\R}$ the sum
serves to aggregate weights for all executions made distinct by
non-deterministic choices. But $\R$-$\Strat$ maintains explicit
branching information, and each witness represents only one individual
execution -- so it makes sense that coefficients should remain $1$.

So as to better illustrate the model of $\R$-strategies, we add a new primitive:
\[
\inferrule
	{\Gamma \vdash M : A}
	{\Gamma \vdash r \cdot M : A}
\]
for all $r \in \R$ -- we refer to $\R$-$\PCF$ for the enriched language.
There is a matching operation:

\begin{defi}
Consider $A$ a game, and $\sigma : A$ a $\R$-strategy. 

We set $r \cdot \sigma : A$ with strategy $\sigma$ and valuation 
$\v_{r\cdot \sigma}(x) = r\cdot \v_{\sigma}(x)$
for all $x \in \confp{\sigma}$.
\end{defi}

Altogether, this yields an interpretation of $\R$-$\PCF$ into
$\R$-$\Strat_\oc$, sending a term $\Gamma \vdash M : A$ to $\intr{M}
\in \text{$\R$-$\Strat_\oc(\intr{\Gamma}, \intr{A})$}$.
We must also set the interpretation of $\R$-$\PCF$ in $\Rel{\R}$, set
with the exact same clauses as for the interpretation in $\Rel{\N}$,
except for: 
\[
\rintr{\Gamma \vdash r \cdot M : A}_{\mu, a} = r\cdot \rintr{\Gamma
\vdash M : A}_{\mu, a}\,.
\]
%

This completes the interpretation of any $\Gamma \vdash M : A$ as
$\rintr{M} \in \Rel{\R}_\oc(\rintr{\Gamma}, \rintr{A})$, which we must
now compare with $\intr{M} \in \text{$\R$-$\Strat_\oc(\intr{\Gamma},
\intr{A})$}$.

\subsection{A relative Seely $\sim$-functor}
Next, we show how Corollary \ref{cor:seely_trans}
extends in the presence of quantitative valuations. With
the earlier developments of this paper this is mostly a formality:
as all earlier compatibility results are realized by explicit
bijections between sets of witnesses, we must only exploit that these
bijections preserve valuations.

First, we define the quantitative collapse as follows. For any $\sigma
\in \text{$\R$-$\Strat(A, B)$}$, we set:
\begin{eqnarray}
\coll(\sigma)_{\x_A, \y_B} &=& \sum_{x \in \wit^+_\sigma(\x_A,
\x_B)} \v_\sigma(x) \label{eq:qcoll}
\end{eqnarray}
for all $\x_A \in \wconf{A}$ and $\x_B \in \wconf{B}$. It is a clear
generalization of \eqref{eq:coll}, with all witnesses weighted according
to their valuation. It is a conservative extension of \eqref{eq:coll}:
for strategies arising from terms without $r \cdot -$, by Proposition
\ref{prop:npcf_one} it is equivalent to \eqref{eq:coll}. 

\subsubsection{Composition} First, we show that \eqref{eq:qcoll} is
compatible with composition.
Fortunately, it suffices to exploit the bijections introduced in Section
\ref{subsec:pres_comp_main}, along with \emph{integer division}:

\begin{prop}
For $\sigma \in \text{$\R$-$\Strat(A, B)$}$, $\tau \in
\text{$\R$-$\Strat(B, C)$}$, $\x_A \in \wconf{A}, \x_C \in
\wconf{C}$:
\[
\coll(\tau \odot \sigma)_{\x_A, \x_C} = \sum_{\x_B \in \wconf{B}}
\coll(\sigma)_{\x_A, \x_B} \cdot \coll(\tau)_{\x_B, \x_C}\,.
\]
\end{prop}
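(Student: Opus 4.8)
The strategy is to reduce the statement, via the summation identity \eqref{eq7}, to a per-$\x_B$ equality and then leverage the symmetry-preserving bijections already established in Section~\ref{subsec:pres_comp_main}. Concretely, by \eqref{eq7} we have a bijection $\wit^+_{\tau\odot\sigma}(\x_A,\x_C) \bij \sum_{\x_B\in\wconf{B}} \wit^+_{\sigma,\tau}(\x_A,\x_B,\x_C)$, which moreover preserves the underlying $+$-covered configuration $x^\tau\odot x^\sigma$ of $\tau\odot\sigma$, hence preserves the valuation $\v_{\tau\odot\sigma}$. So the left-hand side $\coll(\tau\odot\sigma)_{\x_A,\x_C}$ rewrites as $\sum_{\x_B\in\wconf{B}} \sum_{y \in \wit^+_{\sigma,\tau}(\x_A,\x_B,\x_C)} \v_{\tau\odot\sigma}(y)$. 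It then suffices to prove, for each fixed $\x_B$, the identity
\[
\sum_{y \in \wit^+_{\sigma,\tau}(\x_A,\x_B,\x_C)} \v_{\tau\odot\sigma}(y) \;=\; \Big(\sum_{x^\sigma \in \wit^+_\sigma(\x_A,\x_B)} \v_\sigma(x^\sigma)\Big) \cdot \Big(\sum_{x^\tau \in \wit^+_\tau(\x_B,\x_C)} \v_\tau(x^\tau)\Big),
\]
since the right-hand side of the proposition is exactly $\sum_{\x_B} \coll(\sigma)_{\x_A,\x_B}\cdot\coll(\tau)_{\x_B,\x_C}$ after distributing, and commutativity of $\cdot$ allows the product of the two sums to be expanded as $\sum_{(x^\sigma,x^\tau)} \v_\sigma(x^\sigma)\cdot\v_\tau(x^\tau)$.

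The main technical input is the bijection $\Phi$ of Lemma~\ref{lem:mainbij}:
\[
\ntilde{\x_A}\times\tilde{\x_B}\times\ptilde{\x_C}\times\wit^+_{\sigma,\tau}(\x_A,\x_B,\x_C) \;\bij\; \ntilde{\x_A}\times\tilde{\x_B}\times\ptilde{\x_C}\times\wit^+_\sigma(\x_A,\x_B)\times\wit^+_\tau(\x_B,\x_C),
\]
with the crucial property that if $\Phi(\theta_A^-,\theta_B,\theta_C^+, x^\tau\odot x^\sigma) = (\varphi_A^-,\varphi_B,\varphi_C^+, y^\sigma, y^\tau)$ then $x^\sigma \sym_\sigma y^\sigma$ and $x^\tau \sym_\tau y^\tau$. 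Because valuations are invariant under symmetry, we have $\v_\sigma(y^\sigma) = \v_\sigma(x^\sigma)$ and $\v_\tau(y^\tau) = \v_\tau(x^\tau)$; and since $\v_{\tau\odot\sigma}(x^\tau\odot x^\sigma) = \v_\sigma(x^\sigma)\cdot\v_\tau(x^\tau)$ by definition of the composite valuation, $\Phi$ transports the weight $\v_{\tau\odot\sigma}(x^\tau\odot x^\sigma)$ on the left to exactly the weight $\v_\sigma(y^\sigma)\cdot\v_\tau(y^\tau)$ on the right. Summing the identity "weight on the left $=$ weight on the right" over all of $\Phi$'s (finite-on-each-fibre) domain and using the continuous-semiring indexed sum \eqref{eq:inf_sum}, and pulling out the constant finite factor $\sharp(\ntilde{\x_A})\cdot\sharp(\tilde{\x_B})\cdot\sharp(\ptilde{\x_C})$ from both sides, we obtain
\[
N\cdot\!\!\sum_{y\in\wit^+_{\sigma,\tau}(\x_A,\x_B,\x_C)}\!\!\v_{\tau\odot\sigma}(y) \;=\; N\cdot\!\!\sum_{\substack{x^\sigma\in\wit^+_\sigma(\x_A,\x_B)\\ x^\tau\in\wit^+_\tau(\x_B,\x_C)}}\!\!\v_\sigma(x^\sigma)\cdot\v_\tau(x^\tau),
\]
where $N = \sharp(\ntilde{\x_A})\cdot\sharp(\tilde{\x_B})\cdot\sharp(\ptilde{\x_C}) = n$ for some $n\in\mathbb N$, $n\geq 1$ (these groups are finite and non-empty, containing at least the identity). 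Applying \emph{integer division} cancels $N$ and yields the per-$\x_B$ identity; summing over $\x_B$ finishes the proof.

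The step I expect to be the main obstacle is the careful bookkeeping that $\Phi$ genuinely preserves weights \emph{fibrewise} and that the extraction of the constant factor $N$ is legitimate in a continuous semiring. One must check that for each element of the domain of $\Phi$, the valuation data matches as claimed — this rests entirely on invariance of valuations under symmetry, which holds by definition of $\R$-strategy, plus the multiplicativity $\v_{\tau\odot\sigma} = \v_\sigma\cdot\v_\tau$ along the composition bijection of Proposition~\ref{prop:char_comp}. Then, to pull $N$ out of the sum, one uses that multiplication by a fixed element distributes over the (possibly infinite) indexed sum \eqref{eq:inf_sum} — this follows from continuity of $\cdot$ — and finally that $N$ is a non-zero \emph{integer}, so that integer division (which is exactly the cancellation hypothesis $n*x = n*y \Rightarrow x = y$) applies. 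A minor subsidiary point is confirming $\tilde{\x_B}$, $\ntilde{\x_A}$, $\ptilde{\x_C}$ are all finite: this is because $\rep{\x}_A$, $\rep{\x}_B$, $\rep{\x}_C$ are finite configurations of arenas, and a symmetry between finite configurations is determined by its finitely many components. Everything else is a direct transcription of the $\mathbb N$-weighted argument of Corollary~\ref{cor:functor}, now carrying the valuation along.
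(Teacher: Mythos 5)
Your proposal is correct and follows essentially the same route as the paper's proof: fix $\x_B$, multiply both sides by the (finite, non-zero) cardinalities of $\ntilde{\x_A}$, $\tilde{\x_B}$, $\ptilde{\x_C}$, transport weights through the bijection $\Phi$ of Lemma~\ref{lem:mainbij} using symmetry-invariance of valuations and multiplicativity of $\v_{\tau\odot\sigma}$, cancel the integer factor by integer division, and finally sum over $\x_B$ via \eqref{eq7}. The only cosmetic difference is that you apply \eqref{eq7} first and the paper applies it last; the mathematical content is identical.
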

\begin{proof}
Let us first fix some $\x_B \in \wconf{B}$. We then perform the
computation in $\R$:
\begin{eqnarray*}
&& (\sharp \ntilde{\x_A}) * (\sharp \tilde{\x_B}) * (\sharp
\ptilde{\x_C}) * \sum_{x^\tau \odot x^\sigma \in \wit^+_{\sigma,
\tau}(\x_A, \x_B, \x_C)} \v_{\tau \odot \sigma}(x^\tau \odot x^\sigma)\\
&=& \sum_{\theta_A^- \in \ntilde{\x_A}}
\sum_{\theta_B \in \tilde{\x_B}}
\sum_{\theta_C^+ \in \ptilde{\x_C}}
\sum_{x^\tau \odot x^\sigma \in \wit^+_{\sigma, \tau}(\x_A, \x_B, \x_C)}
\v_\sigma(x^\sigma) \cdot \v_\tau(x^\tau)\\
&=& \sum_{\varphi_A^- \in \ntilde{\x_A}}
\sum_{\varphi_B \in \tilde{\x_B}}
\sum_{\varphi_C^+ \in \ptilde{\x_C}}
\sum_{y^\sigma \in \wit^+_\sigma(\x_A, \x_B)}
\sum_{y^\tau \in \wit^+_\tau(\x_B, \x_C)}
\v_\sigma(y^\sigma) \cdot \v_\tau(y^\tau)\\
&=& (\sharp \ntilde{\x_A}) * (\sharp \tilde{\x_B}) * (\sharp
\ptilde{\x_C}) *
\left[
\left(\sum_{y^\sigma \in \wit^+_\sigma(\x_A, \x_B)}
\v_\sigma(y^\sigma)\right) \cdot
\left(\sum_{y^\tau \in \wit^+_\tau(\x_B, \x_C)}
\v_\tau(y^\tau)\right)
\right]\\
&=& (\sharp \ntilde{\x_A}) * (\sharp
\tilde{\x_B}) * (\sharp \ptilde{\x_C}) * (\coll(\sigma)_{\x_A, \x_B} \cdot
\coll(\tau)_{\x_B, \x_C})
\end{eqnarray*}
using the definition of integer multiplication in $\R$ and of the
valuation of $\tau \odot \sigma$; then substituting by
$\Phi$ of Lemma \ref{lem:mainbij} and using that the valuation is
invariant under symmetry; and using distributivity of $\cdot$
over $+$ in $\R$ and the definition of integer multiplication. 

Now, since $\R$ satisfies \emph{integer division}, we may deduce the
equality in $\R$:
\[
\sum_{x^\tau \odot x^\sigma \in \wit^+_{\sigma, \tau}(\x_A, \x_B, \x_C)}
\v_{\tau \odot \sigma}(x^\tau \odot x^\sigma)
= \coll(\sigma)_{\x_A, \x_B} \cdot \coll(\tau)_{\x_B, \x_C}\,,
\]
by dividing each side by $(\sharp \ntilde{\x_A})$, $(\sharp
\tilde{\x_B})$, $(\sharp \ptilde{\x_C})$.
By \eqref{eq7}, summing both sides over all $\x_B \in \wconf{B}$
concludes the proof of the desired equation.
\end{proof}

We do not know if integer division is really needed. One could avoid it
by extracting from the bijection $\Phi$ in Lemma
\ref{lem:mainbij} a direct bijection preserving symmetry classes
\[
\wit^+_{\sigma, \tau}(\x_A, \x_B, \x_C) 
\bij \wit^+_\sigma(\x_A, \x_B) \times \wit^+_\tau(\x_B, \x_C)\,,
\]
but it is not immediately clear how to do that. Assuming integer
division does not remove any interesting example of continuous semiring;
so we did not push this. We obtain: 

\begin{prop}
The operation $\coll(-) : \text{$\R$-$\Strat$} \to \Rel{\R}$ is a
$\sim$-functor.
\end{prop}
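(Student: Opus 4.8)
The plan is to verify that $\coll(-) : \text{$\R$-$\Strat$} \to \Rel{\R}$ is a $\sim$-functor by checking the three defining conditions: preservation of $\sim$, preservation of identities, and preservation of composition. Composition has just been established by the preceding proposition, so the bulk of the work is the other two, both of which are light adaptations of results already proven for the plain collapse $\coll(-) : \Strat \to \Rel{\N}$, now incorporating valuations.

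First I would handle preservation of $\simstrat$. Given $\R$-strategies $\sigma \simstrat \tau$ in $\text{$\R$-$\Strat$}(A, B)$ witnessed by a positive isomorphism $\varphi : \sigma \simstrat \tau$, recall from the proof of Proposition~\ref{prop:coll_pres_sim} that $\varphi$ restricts to a bijection $\varphi : \wit^+_\sigma(\x_A, \x_B) \bij \wit^+_\tau(\x_A, \x_B)$ for all $\x_A \in \wconf{A}$, $\x_B \in \wconf{B}$. Since in the quantitative setting a positive isomorphism of $\R$-strategies additionally satisfies $\v_\tau(\varphi(x)) = \v_\sigma(x)$ for all $x \in \confp{\sigma}$, this bijection preserves valuations, and hence
\[
\coll(\sigma)_{\x_A, \x_B} = \sum_{x \in \wit^+_\sigma(\x_A, \x_B)} \v_\sigma(x) = \sum_{x \in \wit^+_\tau(\x_A, \x_B)} \v_\tau(x) = \coll(\tau)_{\x_A, \x_B}
\]
as required. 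Next, for preservation of identities, recall from Proposition~\ref{prop:cc_collapse} that $\wit^+_{\cc_A}(\x_A, \y_A)$ is empty when $\x_A \neq \y_A$ and is the singleton $\{\rep{\x}_A \parallel \rep{\x}_A\}$ when $\x_A = \y_A$. Since the valuation of $\cc_A$ as an $\R$-strategy is $1$ everywhere, we get $\coll(\cc_A)_{\x_A, \y_A} = \delta_{\x_A, \y_A}$, which is exactly the identity $\R$-relation on $\wconf{A}$.

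I do not expect any genuine obstacle here: all the combinatorial work was done in Sections~\ref{sec:wit_comp}, \ref{sec:pres_intr}, and in the immediately preceding proposition. The only mild subtlety worth flagging is that the preservation-of-composition argument already relied on \emph{integer division} to cancel the padding factors $\sharp\ntilde{\x_A}$, $\sharp\tilde{\x_B}$, $\sharp\ptilde{\x_C}$; since that is assumed throughout and the relevant proposition is already proven, nothing further is needed. Thus the proof is simply: composition by the previous proposition, identities by the valuation-$1$ refinement of Proposition~\ref{prop:cc_collapse}, and $\sim$-preservation by the valuation-preserving refinement of Proposition~\ref{prop:coll_pres_sim}.
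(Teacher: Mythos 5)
Your proof is correct and follows essentially the same route as the paper's: composition is delegated to the preceding proposition, identities follow from Proposition~\ref{prop:cc_collapse} together with the copycat valuation being $1$, and preservation of $\simstrat$ follows from the valuation-preserving bijection of Proposition~\ref{prop:coll_pres_sim}. No issues.
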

\begin{proof}
It remains to prove that $\coll(-)$ preserves the identities and the
equivalence relation. For identities the proof of Proposition
\ref{prop:cc_collapse} applies, using that valuations for $\cc_A$ are
$1$.

For the equivalence relation, given $\sigma, \tau \in
\text{$\R$-$\Strat(A, B)$}$ and $\x_A \in \wconf{A}, \x_B \in
\wconf{B}$, given $\varphi : \sigma \simstrat \tau$ we proved in 
Proposition \ref{prop:coll_pres_sim} that $\varphi$ specializes to a
bijection
\[
\varphi : \wit^+_\sigma(\x_A, \x_B) \bij \wit^+_\tau(\x_A, \x_B)\,,
\]
but as $\varphi$ is now required to preserve valuations, we have
$\coll(\sigma)_{\x_A, \x_B} = \coll(\tau)_{\x_A, \x_B}$ as needed.
\end{proof}

\subsubsection{Preservation of symmetric monoidal structure} For the
tensor, it is straightforward:

\begin{prop}
The operation $\coll(-)$ is a symmetric monoidal $\sim$-functor.
\end{prop}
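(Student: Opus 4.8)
The plan is to mimic the proof of Proposition~\ref{prop:smsf}, extending it to track valuations. Recall that $\coll(-)$ being a symmetric monoidal $\sim$-functor requires: (1) the components $t^\tensor_{A,B}$ and $t^1$ are well-defined (they are, unchanged from before, since they are just Kronecker-delta matrices with coefficients $1$); (2) naturality of $t^\tensor$; and (3) the coherence conditions relating $t^\tensor$, $t^1$ to the associators, unitors, and symmetries of the two monoidal structures. For (3), the structural strategies (associators, unitors, symmetries) are all copycat-like with valuation set to $1$ everywhere, so the relevant witness bijections from the proof of Proposition~\ref{prop:smsf} already preserve valuations trivially; nothing new is needed. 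So the only real work is naturality.

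First I would recall the naturality square: for arenas $A, B, A', B'$ and $\R$-strategies $\sigma \in \text{$\R$-$\Strat$}(A, A')$, $\tau \in \text{$\R$-$\Strat$}(B, B')$, we must show that the square
\[
\xymatrix{
\coll(A) \times \coll(B)
	\ar[r]^{t^\tensor_{A, B}}
	\ar[d]_{\coll(\sigma) \tensor \coll(\tau)}&
\coll(A\tensor B)
	\ar[d]^{\coll(\sigma \tensor \tau)}\\
\coll(A') \times \coll(B')
	\ar[r]_{t^\tensor_{A', B'}}&
\coll(A' \tensor B')
}
\]
commutes in $\Rel{\R}$. The argument is the same as in Proposition~\ref{prop:smsf}: by Proposition~\ref{prop:def_tensor}, the map $(-\tensor-)$ restricts to a bijection $\wit^+_\sigma(\x_A, \y_{A'}) \times \wit^+_\tau(\x_B, \y_{B'}) \bij \wit^+_{\sigma\tensor\tau}(\x_A \parallel \x_B, \y_{A'} \parallel \y_{B'})$. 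The only additional observation needed is that this bijection is compatible with valuations, which is exactly the defining clause $\v_{\sigma\tensor\tau}(x^\sigma \tensor x^\tau) = \v_\sigma(x^\sigma) \cdot \v_\tau(x^\tau)$. Using this, both paths around the square, evaluated at $((\x_A, \x_B), \y_{A'} \parallel \y_{B'})$, compute to
\[
\sum_{x^\sigma \in \wit^+_\sigma(\x_A, \y_{A'})} \sum_{x^\tau \in \wit^+_\tau(\x_B, \y_{B'})} \v_\sigma(x^\sigma) \cdot \v_\tau(x^\tau)
=
\left(\sum_{x^\sigma \in \wit^+_\sigma(\x_A, \y_{A'})} \v_\sigma(x^\sigma)\right) \cdot \left(\sum_{x^\tau \in \wit^+_\tau(\x_B, \y_{B'})} \v_\tau(x^\tau)\right),
\]
using distributivity of $\cdot$ over $+$ in $\R$; and the right-hand side is precisely $\coll(\sigma)_{\x_A, \y_{A'}} \cdot \coll(\tau)_{\x_B, \y_{B'}}$, which matches the coefficient produced by the left-then-down path $t^\tensor \circ (\coll(\sigma) \tensor \coll(\tau))$, while the down-then-right path $\coll(\sigma \tensor \tau) \circ t^\tensor$ produces the sum over witnesses of $\sigma \tensor \tau$ weighted by their valuation.

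I do not expect any serious obstacle here: unlike the composition case, there is no need for \emph{integer division}, because the tensor-of-strategies operation is realized by a genuine, symmetry-class-preserving bijection on witnesses (coming from parallel composition of event structures), rather than merely an equality of cardinalities. The mild care point is simply to verify that the finitary-sum manipulations are valid in a continuous semiring — but this is routine, since all the witness sets involved are what they are and the indexed sums are the suprema of partial sums as in \eqref{eq:inf_sum}, with distributivity extending to infinite sums by continuity of $\cdot$. The coherence diagrams reduce, as noted, to the corresponding statements for $\Strat$ already established, now trivially valuation-compatible since all structural strategies have valuation identically $1$.
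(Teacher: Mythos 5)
Your proposal is correct and follows essentially the same route as the paper: reuse the structural morphisms of Proposition~\ref{prop:smsf} verbatim, observe that the witness bijection $(-\tensor-)$ from Proposition~\ref{prop:def_tensor} carries the multiplicative valuation clause $\v_{\sigma\tensor\tau}(x^\sigma\tensor x^\tau)=\v_\sigma(x^\sigma)\cdot\v_\tau(x^\tau)$, and conclude naturality by a direct computation using distributivity, with the coherence laws reducing to the $\Strat$ case since all structural strategies have valuation $1$. Your remark that no integer division is needed here (in contrast with composition and promotion) is accurate and consistent with the paper.
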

\begin{proof}
The structural isomorphisms involved are the $\R$-weighted relations
defined with the same formulas as in Section
\ref{subsubsec:pres_mon_str}. All coherence laws follow. For naturality
of $t^\tensor_{A, B}$, we build on the proof of Proposition
\ref{prop:smsf} by noting that the bijection 
\[
(- \tensor -) : \wit^+_\sigma(\x_A, \y_{A'}) \times \wit^+_\tau(\x_B,
\y_{B'}) \to \wit^+_{\sigma \tensor \tau}(\x_A \parallel \x_{B}, \y_{A'}
\parallel \y_{B'})
\]
is such that $\v_{\sigma\tensor \tau}(x^\sigma \tensor x^\tau) =
\v_\sigma(x^\sigma) \cdot \v_\tau(x^\tau)$ by definition of the
valuation for tensor. From this, the naturality of $t^\tensor_{A, B}$
follows by an immediate calculation.
\end{proof}

\subsubsection{Preservation of promotion} As expected, for $C$ a strict
arena we set $t^\oc_C$ as the $\R$-weighted relation defined with
the same formula as in Section \ref{subsubsec:lastglitch}.
We prove:

\begin{prop}
For $A, B$ arenas, strategy $\sigma \in \Strat(A, B)$, and symmetry
classes $\x_{\oc A} \in \wconf{\oc A}$, $\y_{\oc B} \in \wconf{\oc B}$
with $\y_{\oc B} = [\y_B^1, \dots, \y_B^n]$ with each $\y_B^i$
non-empty, we have:
\[
\coll(\oc \sigma)_{\x_{\oc A}, \y_{\oc B}}
= 
\sum_{\stackrel{(\z_A^1, \dots,
\z_A^n)\,\text{s.t.}}
{\x_{\oc A} = [\z_A^i \mid \z_A^i \neq \emptyset]}}
\prod_{1\leq i \leq n}
\coll(\sigma)_{\z_A^i, \y_B^i}
\]
\end{prop}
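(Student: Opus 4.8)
The plan is to follow the proof of Corollary~\ref{cor:main_pres_bang} essentially verbatim, the only new ingredient being a check that the bijection $U$ of Lemma~\ref{lem:bij_bang} is compatible with valuations. The crucial observation is that $U$ is not merely a bijection between sets of witnesses: by construction it records, for each $+$-covered configuration $x^{\oc\sigma}\in\wit^+_{\oc\sigma}(\x_{\oc A},\y_{\oc B})$, the family $(x^i)_{1\le i\le n}$ of configurations of $\sigma$ into which it decomposes, with $x^{\oc\sigma}=[(x^i)_{1\le i\le n}]$. Since each $\y_B^i$ is non-empty, each $x^i$ is non-empty, so this is exactly the decomposition along which the $+$-covered configurations of $\oc\sigma$ are described in Proposition~\ref{prop:char_pcov_bang}; hence by the very definition of the valuation of $\oc\sigma$ we get $\v_{\oc\sigma}(x^{\oc\sigma})=\prod_{1\le i\le n}\v_\sigma(x^i)$. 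In other words, $U$ transports the weight of a witness of $\oc\sigma$ to the product of the weights of its components in $\sigma$.

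Granted this, I would run the same computation as in Corollary~\ref{cor:main_pres_bang}, but summing valuations rather than counting elements. Concretely: pad the left-hand side by the isotropy group, i.e.\ compute $(\sharp\m(\x_{\oc A}))*\coll(\oc\sigma)_{\x_{\oc A},\y_{\oc B}}$ as $\sum_{\pi\in\m(\x_{\oc A})}\sum_{x^{\oc\sigma}}\v_{\oc\sigma}(x^{\oc\sigma})$; substitute $U$ to rewrite this as $\sum_{\varpi\in\m(\x_{\oc A})}\sum_{(\z_A^1,\dots,\z_A^n)}\sum_{(x^i)_i}\prod_i\v_\sigma(x^i)$, where $(\z_A^1,\dots,\z_A^n)$ ranges over the tuples with $\x_{\oc A}=[\z_A^i\mid\z_A^i\neq\emptyset]$; then use commutativity and distributivity of $\cdot$ over $+$ in $\R$ to pull the sum over each $\wit^+_\sigma(\z_A^i,\y_B^i)$ inside the product, recognising $\coll(\sigma)_{\z_A^i,\y_B^i}$; and finally re-express the $\m(\x_{\oc A})$-indexed sum as $(\sharp\m(\x_{\oc A}))*\bigl(\sum_{(\z_A^1,\dots,\z_A^n)}\prod_i\coll(\sigma)_{\z_A^i,\y_B^i}\bigr)$. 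This yields the desired identity multiplied by $\sharp\m(\x_{\oc A})$ on both sides.

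To conclude, I would cancel the factor $\sharp\m(\x_{\oc A})$ using that $\R$ has integer division: the isotropy group is finite, so $\sharp\m(\x_{\oc A})=n$ for some $n\ge1$, and $n*r=n*r'$ in $\R$ gives $r=r'$. This is exactly the step used at the end of the proof of Corollary~\ref{cor:main_pres_bang}, and is the only place where the integer-division hypothesis is invoked.

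The main obstacle, such as it is, is bookkeeping rather than mathematics: one must make sure the valuation accounting stays consistent through the several nested bijections of Section~\ref{subsubsec:rigid_seely} (the rearrangements of $\Sym^-_{\oc A}$, the isotropy reindexing, and Proposition~\ref{prop:char_pcov_bang}), and in particular that the component decomposition tracked by $U$ is precisely the one used to define $\v_{\oc\sigma}$. There is nothing genuinely new to prove about symmetries here: all the hard combinatorial work was done in the unweighted case, and every bijection in that chain is \emph{valuation-transparent}, since it only ever reorganises components without altering the underlying configurations of $\sigma$, while $\v_{\oc\sigma}$ is a commutative product over exactly those components.
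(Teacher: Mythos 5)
Your proposal is correct and follows essentially the same route as the paper: pad by the isotropy group $\m(\x_{\oc A})$, transport valuations through the bijection $U$ of Lemma~\ref{lem:bij_bang} using that $U$ preserves the underlying decomposition $x^{\oc\sigma}=[(x^i)_{1\leq i\leq n}]$ and hence $\v_{\oc\sigma}(x^{\oc\sigma})=\prod_i\v_\sigma(x^i)$, distribute, and cancel by integer division. The key observation you single out (that $U$ is valuation-transparent because it only reorganises the component configurations of $\sigma$) is exactly the point on which the paper's computation rests.
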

\begin{proof}
We perform the computation:
\begin{eqnarray*}
\sharp \m(\x_{\oc A}) * \coll(\oc \sigma)_{\x_{\oc A}, \y_{\oc B}}
&=& \sum_{\pi \in \m(\x_{\oc A})}
\sum_{x^{\oc \sigma} \in \wit^+_{\oc \sigma}(\x_{\oc A},
\y_{\oc B})}
\v_{\oc \sigma}(x^{\oc \sigma})\\
&=& \sum_{\varpi \in \m(\x_{\oc A})}
\sum_{\stackrel{(\z_A^1, \dots,
\z_A^n)\,\text{s.t.}}{
\x_{\oc A} = [\z_A^i \mid \z_A^i \neq \emptyset]}}
\sum_{(x^i) \in \Pi_{1\leq i \leq n} \wit^+_\sigma(\z_A^i, \y_B^i)}
\v_{\oc \sigma}([(x^i)_{1\leq i \leq n}])\\
&=& \sum_{\varpi \in \m(\x_{\oc A})}
\sum_{\stackrel{(\z_A^1, \dots,
\z_A^n)\,\text{s.t.}}{
\x_{\oc A} = [\z_A^i \mid \z_A^i \neq \emptyset]}}
\sum_{(x^i) \in \Pi_{1\leq i \leq n} \wit^+_\sigma(\z_A^i, \y_B^i)}
\prod_{1\leq i \leq n}
\v_\sigma(x^i)\\
&=& \sum_{\varpi \in \m(\x_{\oc A})}
\sum_{\stackrel{(\z_A^1, \dots,
\z_A^n)\,\text{s.t.}}{
\x_{\oc A} = [\z_A^i \mid \z_A^i \neq \emptyset]}}
\prod_{1\leq i \leq n}
\sum_{x^i \in \wit^+_\sigma(\z_A^i, \y_B^i)}
\v_\sigma(x^i)\\
&=& \sum_{\varpi \in \m(\x_{\oc A})}
\sum_{\stackrel{(\z_A^1, \dots,
\z_A^n)\,\text{s.t.}}{
\x_{\oc A} = [\z_A^i \mid \z_A^i \neq \emptyset]}}
\prod_{1\leq i \leq n}
\coll(\sigma)_{\z_A^i, \y_B^i}\\
&=& \sharp \m(\x_{\oc A}) * \sum_{\stackrel{(\z_A^1, \dots,
\z_A^n)\,\text{s.t.}}{
\x_{\oc A} = [\z_A^i \mid \z_A^i \neq \emptyset]}}
\prod_{1\leq i \leq n}
\coll(\sigma)_{\z_A^i, \y_B^i}
\end{eqnarray*}
using the definition of integer multiplication; the bijection $U$ in
Lemma \ref{lem:bij_bang}; the definition of the valuation for $\oc
\sigma$; distributivity of $\cdot$ over sum; definition of
$\coll(\sigma)$; and again definition of integer multiplication.
Finally, the desired equality follows by integer division.
\end{proof}

From this, it follows -- with the same proof -- that promotion is
preserved as in Proposition \ref{prop:pres_prom}.
As for composition, it is not clear whether one can avoid
integer division here.  

From this point, we can conclude the preservation of the relative Seely
structure.

\begin{cor}\label{cor:main_seely_R}
We have a relative Seely $\sim$-functor $\coll(-) : \text{$\R$-$\Strat$}
\to \Rel{\R}$.
\end{cor}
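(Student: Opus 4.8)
\begin{proof}
The plan is to assemble the results established in this section, mirroring the unweighted case of Corollary~\ref{cor:seely_trans}. We already know that $\coll(-) : \text{$\R$-$\Strat$} \to \Rel{\R}$ is a $\sim$-functor, that it is symmetric monoidal via the structural maps $t^\tensor, t^1$ of Section~\ref{subsubsec:pres_mon_str} (now verified valuation-aware), and that it preserves promotion via the quantitative analogue of Proposition~\ref{prop:pres_prom}, which follows from the $\R$-weighted version of Corollary~\ref{cor:main_pres_bang} proved just above. It remains only to supply the missing components $t^\top : \emptyset \to \coll(\top)$, $t^\with_{C,D} : \coll(C) + \coll(D) \to \coll(C\with D)$, and $t^\lin_{A,C} : \coll(A) \times \coll(C) \to \coll(A\lin C)$ for arenas $A$ and strict $C, D$, and to check the five coherence diagrams of Figure~\ref{fig:seely-functors} up to $\sim$ (here, equality).

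First I would set these structural maps by exactly the same formulas as in Section~\ref{subsubsec:rigid_seely}: $t^\top$ with empty domain, $(t^\with_{C,D})_{x, \x} = \delta_{x, s^\with_{C,D}(\x)}$ and $(t^\lin_{A,C})_{x, \x} = \delta_{x, s^\lin_{A,C}(\x)}$, using the bijections of Lemmas~\ref{lem:conf_with} and~\ref{lem:bij_lin_strict}. These are well-defined $\R$-weighted relations and remain invertible up to $\sim$. The key observation making the coherence verifications routine is Proposition~\ref{prop:npcf_one}'s underlying principle: every copycat-like strategy involved in the relative Seely structure of $\R$-$\Strat$ — projections, evaluation $\evm$, dereliction $\der$, digging $\dig$, monoidality $m_{B,C}$, $m_0$, and the structural isos — carries valuation $1$ everywhere, so that $\coll$ of each such strategy coincides with its value in the unweighted case. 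Consequently each of the five diagrams in Figure~\ref{fig:seely-functors}, together with Equation~\eqref{eq:preservationpromotion}, reduces to the identical diagram already checked in the proof of Corollary~\ref{cor:seely_trans}; the argument is again just an analysis of which complete symmetry classes are reached by the component strategies, as in Proposition~\ref{prop:cc_collapse}.

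I do not expect a genuine obstacle here: the real difficulty of the $\R$-weighted development — the compatibility of $\coll$ with composition and with $\oc$ on morphisms, which forced an appeal to \emph{integer division} to divide out the spurious symmetry-group factors — has already been dealt with in the propositions immediately preceding this corollary. The only mild point to keep in mind is that all the bijections invoked (those underlying Lemmas~\ref{lem:mainbij} and~\ref{lem:bij_bang}, and the tensor bijection of Proposition~\ref{prop:smsf}) were built so as to preserve the witness configurations in $\sigma$ and $\tau$ up to symmetry, and hence, since valuations are symmetry-invariant, they transport valuations correctly; this is exactly what makes the sums in \eqref{eq:qcoll} match on both sides of each coherence square. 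Collecting these facts gives the desired relative Seely $\sim$-functor.
\end{proof}

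So by Proposition~\ref{prop:seely_functor_kleisli}, we obtain a cartesian closed $\sim$-functor $\coll_\oc(-) : \text{$\R$-$\Strat_\oc$} \to \Rel{\R}_\oc$.
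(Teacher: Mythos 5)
Your proposal is correct and follows essentially the same route as the paper: the paper's own proof simply defines $t^\top$, $t^\with_{C,D}$ and $t^\lin_{A,C}$ by the formulas of Section~\ref{subsubsec:rigid_seely} and notes the coherence laws follow likewise, relying on the preceding propositions for composition, monoidality and promotion. Your additional observation that the coherence diagrams reduce to the unweighted case because all the structural copycat-like strategies carry valuation $1$ is exactly the implicit justification behind the paper's ``follow likewise.''
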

\begin{proof}
It remains to define $t^\top, t^\with_{C, D}$ and $t^\lin_{A, C}$ for
$A, C, D$ arenas with $C, D$ strict -- those are defined with the same
formulas as in Section \ref{subsubsec:rigid_seely}. The coherence laws
follow likewise.
\end{proof}

\subsection{Preservation of the Interpretation} While Corollary
\ref{cor:main_seely_R} does the heavy lifting, there remain a few things
to check. First, preservation of recursion boils down to:

\begin{prop}\label{prop:coll_cont_R}
Consider $A, B$ arenas. Then, the collapse function is continuous:
\[
\coll(-) : \text{$\R$-$\Strat(A, B)$} \to \Rel{\R}(\coll(A), \coll(B))
\]  
\end{prop}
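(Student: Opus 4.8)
The plan is to replay the proof of Proposition~\ref{prop:coll_cont}, this time carrying the valuations along, and to exploit the fact that the order $\cleq$ on $\R$-strategies was set up so that $\sigma \cleq \tau$ forces the valuations to agree on common $+$-covered configurations. Recall from \eqref{eq:qcoll} that $\coll(\sigma)_{\x_A, \x_B} = \sum_{x \in \wit^+_\sigma(\x_A, \x_B)} \v_\sigma(x)$, which by \eqref{eq:inf_sum} is the supremum in $\R$ of the finite partial sums $\sum_{x \in F} \v_\sigma(x)$ over $F \subseteq_f \wit^+_\sigma(\x_A, \x_B)$. Since the order on $\Rel{\R}(\coll(A), \coll(B))$ is pointwise, it suffices to show coefficientwise that $\coll(-)$ is monotone for $\cleq$ and preserves directed suprema; so we fix $\x_A \in \wconf{A}$ and $\x_B \in \wconf{B}$ throughout.

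First I would establish monotonicity. If $\sigma \cleq \tau$ in $\R$-$\Strat(A, B)$, then (as in Proposition~\ref{prop:sup_pcov}, applied to the directed pair $\{\sigma, \tau\}$) we have $\confp{\sigma} \subseteq \confp{\tau}$, and since $\cleq$ preserves display maps and symmetry structure the defining conditions of $\wit^+_{-}(\x_A, \x_B)$ are computed identically in $\sigma$ and in $\tau$; hence $\wit^+_\sigma(\x_A, \x_B) \subseteq \wit^+_\tau(\x_A, \x_B)$. Moreover, by definition of $\cleq$ on $\R$-strategies, $\v_\sigma$ and $\v_\tau$ agree on $\confp{\sigma}$. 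Since indexed sums in $\R$ are monotone with respect to inclusion of the index set (immediate from \eqref{eq:inf_sum}), we get $\coll(\sigma)_{\x_A, \x_B} \leq \coll(\tau)_{\x_A, \x_B}$.

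Next I would treat directed suprema. Let $D \subseteq \R$-$\Strat(A, B)$ be directed. By Proposition~\ref{prop:sup_pcov}, $\wit^+_{\vee D}(\x_A, \x_B) = \bigcup_{\sigma \in D} \wit^+_\sigma(\x_A, \x_B)$, and by definition of $\cleq$ the valuation $\v_{\vee D}$ restricts on each $\wit^+_\sigma(\x_A, \x_B)$ to $\v_\sigma$. Now for any finite $F \subseteq_f \wit^+_{\vee D}(\x_A, \x_B)$, the argument used for \eqref{eq:better_cont} yields some $\tau \in D$ with $F \subseteq \wit^+_\tau(\x_A, \x_B)$, whence $\sum_{x \in F} \v_{\vee D}(x) = \sum_{x \in F} \v_\tau(x) \leq \coll(\tau)_{\x_A, \x_B} \leq \bigvee_{\sigma \in D} \coll(\sigma)_{\x_A, \x_B}$. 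Taking the supremum over all such $F$ and using \eqref{eq:inf_sum} gives $\coll(\vee D)_{\x_A, \x_B} \leq \bigvee_{\sigma \in D} \coll(\sigma)_{\x_A, \x_B}$; the reverse inequality is monotonicity, applied to each $\sigma \cleq \vee D$. Hence $\coll(\vee D) = \bigvee_{\sigma \in D} \coll(\sigma)$ in $\Rel{\R}(\coll(A), \coll(B))$, as required.

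The only genuine subtlety, and the reason the argument works at all, is that $\cleq$ on $\R$-strategies was defined so that $\sigma \cleq \tau$ forces $\v_\sigma = \v_\tau$ on the common $+$-covered configurations; without this the valuation of $\vee D$ would not be determined by the $\v_\sigma$, and no quantitative information would be preserved. Granted this, the difficulty is combinatorial rather than quantitative: it is precisely the ``a finite subset of a directed union lies in a single member'' observation already isolated in the proof of Proposition~\ref{prop:coll_cont}, which makes the two nested suprema (over finite subsets of witnesses, and over $D$) commute. Everything else is a routine adaptation of that proof, with $\sharp X$ replaced by $\sum_{x \in X} \v_\sigma(x)$.
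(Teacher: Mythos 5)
Your proof is correct and follows essentially the same route as the paper's: both reduce the claim to the identity $\mathcal{P}_f(\wit^+_{\vee D}(\x_A, \x_B)) = \bigcup_{\sigma \in D}\mathcal{P}_f(\wit^+_\sigma(\x_A, \x_B))$ from \eqref{eq:better_cont} together with the definition \eqref{eq:inf_sum} of infinite sums as suprema of finite partial sums, the paper presenting this as a chain of equalities where you split it into two inequalities plus an explicit monotonicity step. Your remark that $\cleq$ forces valuations to agree on common $+$-covered configurations is exactly the point the paper uses implicitly when writing $\v_\sigma$ inside the sums.
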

\begin{proof}
Consider directed $D \subseteq \text{$\R$-$\Strat(A, B)$}$, and $\x_A \in
\wconf{A}, \x_B \in \wconf{B}$. We compute:
\begin{eqnarray*}
\coll(\vee D)_{\x_A, \x_B} &=& \sum_{x^\sigma \in \wit^+_{\vee D}(\x_A,
\x_B)} \v_\sigma(x^\sigma)\\
&=& \bigvee_{X \subseteq_f \wit^+_{\vee D}(\x_A, \x_B)}
\sum_{x^\sigma \in X} \v_\sigma(x^\sigma)\\
&=& \bigvee_{\sigma \in D} \bigvee_{Y \subseteq_f \wit^+_{\sigma}(\x_A, \x_B)}
\sum_{x^\sigma \in Y} \v_\sigma(x^\sigma)
\end{eqnarray*}
which is $\vee_{\sigma \in D} \coll(\sigma)_{\x_A, \x_B}$ by definition
-- we used the definition of infinite sums, and \eqref{eq:better_cont}.
\end{proof}

It follows that the recursion combinator is preserved, with the same
proof as Proposition \ref{prop:pres_recursion}. Likewise, all the
diagrams in Figure \ref{fig:pres_prim} immediately hold. Finally,
the interpretations of $r\cdot -$ trivially agree with each other as
well. To conclude, we have:

\begin{thm}\label{thm:main3}
Consider $\Gamma \vdash M : A$ any term of $\R$-$\PCF$. Then,
\[
\xymatrix@R=15pt{
\rintr{\Gamma}
	\ar[r]^{\rintr{M}}
	\ar[d]_{k^\Ctx_\Gamma}&
\rintr{A}
	\ar[d]^{k^\Ty_A}\\
\coll(\intr{\Gamma})
	\ar[r]_{\coll_\oc(\intr{M})}&
\coll(\intr{A})
}
\]
commutes in $\Rel{\R}_\oc$.
\end{thm}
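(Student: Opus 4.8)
The plan is to prove Theorem~\ref{thm:main3} by induction on the typing derivation $\Gamma \vdash M : A$ of $\R$-$\PCF$, exactly mirroring the structure of the proof of Theorem~\ref{thm:main}. All the genuinely hard work has already been done: Corollary~\ref{cor:main_seely_R} gives us a relative Seely $\sim$-functor $\coll(-) : \text{$\R$-$\Strat$} \to \Rel{\R}$, and hence (by Proposition~\ref{prop:seely_functor_kleisli}) a cartesian closed $\sim$-functor $\coll_\oc(-) : \text{$\R$-$\Strat$}_\oc \to \Rel{\R}_\oc$, together with the mediating isomorphisms $k^\Ctx_\Gamma$ and $k^\Ty_A$ constructed exactly as in Section~\ref{sec:pres_intr} (the constructions of these isos depend only on the Seely structure, which is valuation-free, so they transport verbatim to the $\R$-weighted setting). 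So the theorem reduces to checking that the functor matches the interpretation on each syntactic constructor.

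First I would handle the simply-typed $\lambda$-calculus fragment (variables, abstraction, application, exchange): these are preserved automatically by any cartesian closed $\sim$-functor, by the general argument recalled after the definition of cartesian closed $\sim$-functor in Section~\ref{sec:pres_intr}. Next, for the $\PCF$ primitives $\ttrue$, $\tfalse$, $n$, $\mathbf{if}$, $\tsucc$, $\tpred$, $\iszero$ and $\coin$: since all these basic strategies carry the constant valuation $1$ (Proposition~\ref{prop:npcf_one}), the quantitative collapse $\coll(-)$ of \eqref{eq:qcoll} restricted to them agrees with the $\N$-weighted collapse \eqref{eq:coll}, so the commuting squares of Figure~\ref{fig:pres_prim} hold by Lemma~\ref{lem:ground_preservation}, now read in $\Rel{\R}_\oc$ rather than $\Rel{\N}_\oc$. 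Recursion is handled by continuity of the collapse: Proposition~\ref{prop:coll_cont_R} gives $\coll(-) : \text{$\R$-$\Strat(A, B)$} \to \Rel{\R}(\coll(A), \coll(B))$ continuous, so the fixpoint combinator is preserved by the same supremum-of-finite-approximants argument as in Proposition~\ref{prop:pres_recursion}. The only genuinely new case is the weighting primitive: for $\Gamma \vdash r \cdot M : A$ we have $\intr{r \cdot M} = r \cdot \intr{M}$ and $\rintr{r \cdot M}_{\mu, a} = r \cdot \rintr{M}_{\mu, a}$ by definition, and since the collapse of $r \cdot \sigma$ is $\coll(r\cdot\sigma)_{\x_A,\x_B} = \sum_{x \in \wit^+_\sigma(\x_A,\x_B)} r \cdot \v_\sigma(x) = r \cdot \coll(\sigma)_{\x_A,\x_B}$ by distributivity of $\cdot$ over the indexed sum in $\R$, the square for $r \cdot M$ follows from the square for $M$ by postcomposing with scalar multiplication by $r$.

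There is no real obstacle: the theorem is essentially a bookkeeping corollary of Corollary~\ref{cor:main_seely_R}, Proposition~\ref{prop:coll_cont_R}, Proposition~\ref{prop:npcf_one}, and Lemma~\ref{lem:ground_preservation}. If anything warrants care it is making sure the mediating isomorphisms $k^\Ty_A$, $k^\Ctx_\Gamma$ in the $\R$-weighted setting coincide — as $\R$-weighted relations — with those used in the $\N$-case on the nose (not just up to something), so that the diagram chases of the base cases go through unchanged; but this is immediate since $\Rel{\N}$ embeds into $\Rel{\R}$ and all the structural morphisms involved ($t^\tensor$, $t^\with$, $t^\lin$, $t^\oc$, dereliction, evaluation, etc.) are defined by the very same Kronecker-delta formulas with coefficients $0, 1 \in \R$.

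\begin{proof}
By induction on the derivation of $\Gamma \vdash M : A$. By Corollary~\ref{cor:main_seely_R} and Proposition~\ref{prop:seely_functor_kleisli}, $\coll_\oc(-) : \text{$\R$-$\Strat$}_\oc \to \Rel{\R}_\oc$ is a cartesian closed $\sim$-functor, and the mediating isomorphisms $k^\Ctx_\Gamma : \rintr{\Gamma} \to \coll(\intr{\Gamma})$ and $k^\Ty_A : \rintr{A} \to \coll(\intr{A})$ are defined exactly as in Section~\ref{sec:pres_intr} (their construction uses only the relative Seely structure, not the valuations).

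The combinators of the simply-typed $\lambda$-calculus follow from preservation of the cartesian closed structure, by the general argument after the definition of cartesian closed $\sim$-functor. For the constants $\ttrue, \tfalse, n$ and for $\mathbf{if}, \tsucc, \tpred, \iszero, \coin$: by Proposition~\ref{prop:npcf_one} the corresponding basic $\R$-strategies carry valuation $1$, so that on them \eqref{eq:qcoll} coincides with \eqref{eq:coll}; hence the squares of Figure~\ref{fig:pres_prim}, now read in $\Rel{\R}_\oc$, commute by Lemma~\ref{lem:ground_preservation} together with preservation of the cartesian structure. Recursion is preserved by the argument of Proposition~\ref{prop:pres_recursion}, using continuity of $\coll(-)$ from Proposition~\ref{prop:coll_cont_R} in place of Proposition~\ref{prop:coll_cont}.

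It remains to treat $\Gamma \vdash r \cdot M : A$, assuming the square commutes for $M$. By definition $\intr{r \cdot M} = r \cdot \intr{M}$ and $\rintr{r \cdot M}_{\mu, a} = r \cdot \rintr{M}_{\mu, a}$. For any $\R$-strategy $\sigma \in \text{$\R$-$\Strat$}(A,B)$ and any $\x_A \in \wconf{A}$, $\x_B \in \wconf{B}$ we have, using distributivity of $\cdot$ over the (possibly infinite) indexed sum in $\R$,
\[
\coll(r\cdot \sigma)_{\x_A, \x_B}
= \sum_{x \in \wit^+_\sigma(\x_A, \x_B)} r \cdot \v_\sigma(x)
= r \cdot \sum_{x \in \wit^+_\sigma(\x_A, \x_B)} \v_\sigma(x)
= r \cdot \coll(\sigma)_{\x_A, \x_B}\,.
\]
Thus $\coll_\oc(\intr{r\cdot M}) = r \cdot \coll_\oc(\intr{M})$, and since also $\rintr{r\cdot M} = r \cdot \rintr{M}$, the commuting square for $r \cdot M$ is obtained from that for $M$ by postcomposing both sides with the $\Rel{\R}_\oc$-morphism given by scalar multiplication by $r$. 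This completes the induction.
\end{proof}
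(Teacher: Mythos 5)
Your proof is correct and follows essentially the same route as the paper, which proves Theorem~\ref{thm:main3} ``as for Theorem~\ref{thm:main} with the ingredients introduced in this section'' --- i.e.\ the same induction on the typing derivation, using the relative Seely $\sim$-functor of Corollary~\ref{cor:main_seely_R}, continuity (Proposition~\ref{prop:coll_cont_R}) for recursion, the immediate validity of the Figure~\ref{fig:pres_prim} squares since the primitive strategies carry valuation $1$, and the trivial agreement of the two interpretations of $r\cdot-$. Your write-up simply makes explicit the details the paper leaves implicit, including the distributivity computation for the $r\cdot M$ case.
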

\begin{proof}
As for Theorem \ref{thm:main} with the ingredients introduced in this
section.
\end{proof}

As in the earlier case, we also provide a more concrete statement:

\begin{thm}\label{thm:main4}
Consider any term $\Gamma \vdash M : A$ of $\R$-$\PCF$, and $\gamma \in
\rintr{\Gamma}, a \in \rintr{A}$. Then,
\[
\rintr{M}_{\gamma, a} = \sum_{x \in \wit_{\intr{M}}^+(s^\Ctx_\Gamma(\gamma),
s^\Ty_A(a))} \v_{\intr{M}}(x)\,.
\]
\end{thm}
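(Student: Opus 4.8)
The plan is to derive Theorem~\ref{thm:main4} from Theorem~\ref{thm:main3} by exactly the same route used to obtain Theorem~\ref{thm:main2} from Theorem~\ref{thm:main}, simply tracking the valuation through the isomorphisms. First I would observe that Theorem~\ref{thm:main3} gives the commuting square of $\coll_\oc(\intr{M})$ with the mediating isos $k^\Ctx_\Gamma, k^\Ty_A$ in $\Rel{\R}_\oc$; the goal is to unfold Kleisli composition and the definition of the $k$-isos to read off the coefficient $\rintr{M}_{\gamma, a}$ explicitly. The key tools are Lemma~\ref{lem:link_k_t}, which identifies $k^\Ctx_\Gamma = t^\Ctx_\Gamma \circ \epsilon_{\intr{\Gamma}}$ and $k^\Ty_A = t^\Ty_A \circ \epsilon_{\intr{A}}$ (this is proved for $\Rel{\N}$ but the diagram chase is insensitive to the semiring), together with Lemma~\ref{lem:link_t_s}, which gives the concrete description $(t^{\oc \Ctx}_\Gamma)_{\gamma, \x_{\oc \Gamma}} = \delta_{s^\Ctx_\Gamma(\gamma), \x_{\oc \Gamma}}$ and $(t^\Ty_A)_{a, \x_A} = \delta_{s^\Ty_A(a), \x_A}$. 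One must first check that these two lemmas still hold with $\R$ in place of $\N$; since both are ``direct diagram chase'' / ``direct computation'' arguments using only the structural isos (whose valuations are all $1$) and the generic Seely functor laws now available from Corollary~\ref{cor:main_seely_R}, this is routine.

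Concretely I would proceed as follows. Starting from the square of Theorem~\ref{thm:main3}, rewrite Kleisli composition in $\Rel{\R}_\oc$ as ordinary composition in $\Rel{\R}$ precomposed with $\oc(-)$ and $\epsilon$, exactly as in the proof of Theorem~\ref{thm:main2}: using Lemma~\ref{lem:link_k_t} this collapses to the $\Rel{\R}$-diagram
\[
\xymatrix@R=15pt{
\M_f(\rintr{\Gamma})
        \ar[r]^{\rintr{M}}
        \ar[d]_{t^{\oc \Ctx}_\Gamma}&
\rintr{A}
        \ar[d]^{t^\Ty_A}\\
\coll(\oc \intr{\Gamma})
        \ar[r]_{\coll \intr{M}}&
\coll(\intr{A})
}
\]
where I have used the definition of $t^{\oc \Ctx}_\Gamma = t^\oc_{\intr{\Gamma}} \odot \oc(t^\Ctx_\Gamma)$ to absorb the intermediate $t^\oc$ and $\oc(t^\Ctx_\Gamma)$ factors. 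Now evaluating both paths at $(\gamma, a)$ via matrix multiplication: by Lemma~\ref{lem:link_t_s} the left map sends $\gamma$ to the single class $s^\Ctx_\Gamma(\gamma)$ with coefficient $1$, and the bottom map has coefficient $\coll(\intr{M})_{s^\Ctx_\Gamma(\gamma), \x_A}$; on the other side the right map sends $\x_A$ to $s^\Ty_A(a)$. Hence $\rintr{M}_{\gamma, a} = \coll(\intr{M})_{s^\Ctx_\Gamma(\gamma), s^\Ty_A(a)}$, and then unfolding the definition \eqref{eq:qcoll} of the quantitative collapse gives exactly
\[
\rintr{M}_{\gamma, a} = \sum_{x \in \wit_{\intr{M}}^+(s^\Ctx_\Gamma(\gamma), s^\Ty_A(a))} \v_{\intr{M}}(x)\,,
\]
as required.

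The only genuinely new verifications beyond bookkeeping are: (i) that Lemmas~\ref{lem:link_k_t} and~\ref{lem:link_t_s} survive the passage to $\R$, which follows since the mediating isos are built solely from the structural $\R$-strategies (all with valuation $1$) and from the relative Seely functor structure of Corollary~\ref{cor:main_seely_R}, whose components are the same $\delta$-matrices as before; and (ii) that the reduction of Kleisli to linear composition is valid in $\Rel{\R}_\oc$, which is immediate from the cartesian closed structure. I do not expect a serious obstacle here: the conceptual work was all done in Corollary~\ref{cor:main_seely_R} and Theorem~\ref{thm:main3}, and this statement is essentially their ``pointwise'' reformulation. If anything, the mildly delicate point is making sure the sum in \eqref{eq:qcoll} is the correct (possibly infinite) indexed sum in the sense of~\eqref{eq:inf_sum} when $\wit_{\intr{M}}^+$ is infinite, but this is already built into the definition of $\coll$ and of composition in $\Rel{\R}$, so no extra argument is needed.
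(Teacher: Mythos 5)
Your proposal is correct and matches the paper's own argument, which is literally ``Direct from Theorem \ref{thm:main3}, with the same proof as for Theorem \ref{thm:main2}'': you unfold the Kleisli square via Lemmas \ref{lem:link_k_t} and \ref{lem:link_t_s} and read off the coefficient from the definition of the quantitative collapse. Your explicit remarks on re-verifying those two lemmas over $\R$ and on the infinite-sum convention are sensible bookkeeping that the paper leaves implicit.
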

\begin{proof}
Direct from Theorem \ref{thm:main3}, with the same proof as for Theorem
\ref{thm:main2}.
\end{proof}

From this, one obtains game semantics for various continuous semirings,
inheriting adequacy properties from \cite{DBLP:conf/lics/LairdMMP13}.
Details are out of scope of the paper.

The weighted relational model is inherently infinite, because the sum
\eqref{eq:relsum} involved in the composition of weighted relations has
no reason to be finite. This infinitary nature is sometimes criticized; 
for instance probabilistic coherence spaces
\cite{DBLP:conf/lics/EhrhardPT11} consist in
enriching the weighted relational model with a biorthogonality
construction ensuring (among other things) that all coefficients remain
finite. So it is noteworthy that no infinity arises in $\R$-$\Strat$:
the construction unfolds just fine with only a plain semiring -- or in
fact, only a \emph{monoid} $(\ev{\R}, \cdot, 1)$! Indeed, as it stands,
the sum only arises when collapsing to $\Rel{\R}$.

\section{Conclusion}

As a rough approximation, there are essentially two families of
denotational models in the legacy of linear logic: on the one hand the
\emph{web-based} semantics such as relational models, coherence spaces
and their weighted counterparts, arising from Girard's quantitative
semantics \cite{DBLP:journals/apal/Girard88}; and on the other hand the
\emph{interactive semantics} drawing inspiration, among others, from
Girard's geometry of interaction \cite{girard1989geometry}. The two
families are great for different things: the former family has had
impressive achievements in modeling quantitative aspects of programming,
with notably the recent full abstraction result for probabilistic $\PCF$
due to Ehrhard, Pagani and Tasson \cite{DBLP:journals/jacm/EhrhardPT18};
while the latter has proved particularly powerful in capturing
effectful programming languages \cite{DBLP:journals/siglog/MurawskiT16}.
It is certainly puzzling that these families, though sharing such a
close genesis, have remained almost separated! 

We believe the results presented here are an important step towards
bringing these two families together, aiming towards a unified landscape
of quantitative denotational models of programming languages.
We proved this for $\PCF$, but
there is no doubt that this extends to other languages or evaluation
strategies -- in fact, the first author and de Visme proved a similar
collapse theorem for the (call-by-value) \emph{quantum
$\lambda$-calculus} \cite{DBLP:journals/pacmpl/ClairambaultV20} (this
relies on some of the constructions of this paper, first appearing in an
unpublished technical report by the first author
\cite{DBLP:journals/corr/abs-2006-05080}). Of course, much remains to be done:
notably, we would like to understand better the links between thin
concurrent games and generalized species of structure
\cite{fiore2008cartesian}. Much of the \changed{present} development is also
reminiscent of issues related to rigid resource terms and the Taylor
development of $\lambda$-terms \cite{DBLP:journals/corr/abs-2008-02665}. 

\paragraph{Acknowledgments} We would like to thank Marc de Visme for
lively discussions during the development of this work, and 
the counter-example to representability (Appendix \ref{app:nonrep}).

This work was supported by the ANR project DyVerSe
(ANR-19-CE48-0010-01); and by the Labex MiLyon (ANR-10-LABX-0070) of
Universit\'e de Lyon, within the program ``Investissements d'Avenir''
(ANR-11-IDEX-0007), operated by the French National Research Agency
(ANR).

\bibliographystyle{alpha}
\bibliography{main}

\begin{thebibliography}{FGHW08}

\bibitem[ACU10]{relativemonads}
Thorsten Altenkirch, James Chapman, and Tarmo Uustalu.
\newblock Monads need not be endofunctors.
\newblock In {\em International Conference on Foundations of Software Science
  and Computational Structures}, pages 297--311. Springer, 2010.

\bibitem[AJM00]{ajm}
Samson Abramsky, Radha Jagadeesan, and Pasquale Malacaria.
\newblock Full abstraction for {PCF}.
\newblock {\em Inf. Comput.}, 163(2):409--470, 2000.

\bibitem[BDER97]{DBLP:conf/csl/BaillotDER97}
Patrick Baillot, Vincent Danos, Thomas Ehrhard, and Laurent Regnier.
\newblock Timeless games.
\newblock In {\em Computer Science Logic, 11th International Workshop, {CSL}
  '97, Annual Conference of the EACSL, Aarhus, Denmark, August 23-29, 1997,
  Selected Papers}, pages 56--77, 1997.

\bibitem[Bou09]{DBLP:conf/tlca/Boudes09}
Pierre Boudes.
\newblock Thick subtrees, games and experiments.
\newblock In {\em Typed Lambda Calculi and Applications, 9th International
  Conference, {TLCA} 2009, Brasilia, Brazil, July 1-3, 2009. Proceedings},
  pages 65--79, 2009.

\bibitem[Cas17]{DBLP:phd/hal/Castellan17}
Simon Castellan.
\newblock {\em Concurrent structures in game semantics. (Structures
  concurrentes en s{\'{e}}mantique des jeux)}.
\newblock PhD thesis, University of Lyon, France, 2017.

\bibitem[CC21]{cg3}
Simon Castellan and Pierre Clairambault.
\newblock Disentangling parallelism and interference in game semantics.
\newblock 2021.
\newblock Submitted.

\bibitem[CCPW18]{DBLP:conf/lics/CastellanCPW18}
Simon Castellan, Pierre Clairambault, Hugo Paquet, and Glynn Winskel.
\newblock The concurrent game semantics of probabilistic {PCF}.
\newblock In Anuj Dawar and Erich Gr{\"{a}}del, editors, {\em Proceedings of
  the 33rd Annual {ACM/IEEE} Symposium on Logic in Computer Science, {LICS}
  2018, Oxford, UK, July 09-12, 2018}, pages 215--224. {ACM}, 2018.

\bibitem[CCRW17]{DBLP:journals/lmcs/CastellanCRW17}
Simon Castellan, Pierre Clairambault, Silvain Rideau, and Glynn Winskel.
\newblock Games and strategies as event structures.
\newblock {\em Log. Methods Comput. Sci.}, 13(3), 2017.

\bibitem[CCW19]{cg2}
Simon Castellan, Pierre Clairambault, and Glynn Winskel.
\newblock Thin games with symmetry and concurrent hyland-ong games.
\newblock {\em Log. Methods Comput. Sci.}, 15(1), 2019.

\bibitem[CdV20]{DBLP:journals/pacmpl/ClairambaultV20}
Pierre Clairambault and Marc de~Visme.
\newblock Full abstraction for the quantum lambda-calculus.
\newblock {\em Proc. {ACM} Program. Lang.}, 4({POPL}):63:1--63:28, 2020.

\bibitem[Cla19]{mall}
Pierre Clairambault.
\newblock A tale of additives and concurrency in game semantics.
\newblock 2019.
\newblock Submitted.

\bibitem[Cla20]{DBLP:journals/corr/abs-2006-05080}
Pierre Clairambault.
\newblock Learning to count up to symmetry.
\newblock {\em CoRR}, abs/2006.05080, 2020.

\bibitem[EPT11]{DBLP:conf/lics/EhrhardPT11}
Thomas Ehrhard, Michele Pagani, and Christine Tasson.
\newblock The computational meaning of probabilistic coherence spaces.
\newblock In {\em Proceedings of the 26th Annual {IEEE} Symposium on Logic in
  Computer Science, {LICS} 2011, June 21-24, 2011, Toronto, Ontario, Canada},
  pages 87--96. {IEEE} Computer Society, 2011.

\bibitem[EPT18]{DBLP:journals/jacm/EhrhardPT18}
Thomas Ehrhard, Michele Pagani, and Christine Tasson.
\newblock Full abstraction for probabilistic {PCF}.
\newblock {\em J. {ACM}}, 65(4):23:1--23:44, 2018.

\bibitem[FGHW08]{fiore2008cartesian}
Marcelo Fiore, Nicola Gambino, Martin Hyland, and Glynn Winskel.
\newblock The cartesian closed bicategory of generalised species of structures.
\newblock {\em Journal of the London Mathematical Society}, 77(1):203--220,
  2008.

\bibitem[Gir88]{DBLP:journals/apal/Girard88}
Jean{-}Yves Girard.
\newblock Normal functors, power series and $\lambda$-calculus.
\newblock {\em Ann. Pure Appl. Log.}, 37(2):129--177, 1988.

\bibitem[Gir89]{girard1989geometry}
Jean-Yves Girard.
\newblock Geometry of interaction 1: Interpretation of system f.
\newblock In {\em Studies in Logic and the Foundations of Mathematics}, volume
  127, pages 221--260. Elsevier, 1989.

\bibitem[GS16]{garner2016enriched}
Richard Garner and Michael Shulman.
\newblock Enriched categories as a free cocompletion.
\newblock {\em Advances in Mathematics}, 289:1--94, 2016.

\bibitem[HO00]{ho}
J.~M.~E. Hyland and C.-H.~Luke Ong.
\newblock On full abstraction for {PCF: I, II, and III}.
\newblock {\em Inf. Comput.}, 163(2):285--408, 2000.

\bibitem[Lai20]{DBLP:journals/iandc/Laird20}
James Laird.
\newblock Weighted models for higher-order computation.
\newblock {\em Inf. Comput.}, 275:104645, 2020.

\bibitem[Lam92]{DBLP:journals/tcs/Lamarche92}
Fran{\c{c}}ois Lamarche.
\newblock Quantitative domains and infinitary algebras.
\newblock {\em Theor. Comput. Sci.}, 94(1):37--62, 1992.

\bibitem[LMMP13]{DBLP:conf/lics/LairdMMP13}
Jim Laird, Giulio Manzonetto, Guy McCusker, and Michele Pagani.
\newblock Weighted relational models of typed lambda-calculi.
\newblock In {\em 28th Annual {ACM/IEEE} Symposium on Logic in Computer
  Science, {LICS} 2013, New Orleans, LA, USA, June 25-28, 2013}, pages
  301--310. {IEEE} Computer Society, 2013.

\bibitem[LS88]{lambekscott}
Joachim Lambek and Philip~J Scott.
\newblock {\em Introduction to higher-order categorical logic}, volume~7.
\newblock Cambridge University Press, 1988.

\bibitem[Mel05]{DBLP:conf/lics/Mellies05}
Paul{-}Andr{\'{e}} Melli{\`{e}}s.
\newblock Asynchronous games 4: {A} fully complete model of propositional
  linear logic.
\newblock In {\em 20th {IEEE} Symposium on Logic in Computer Science ({LICS}
  2005), 26-29 June 2005, Chicago, IL, USA, Proceedings}, pages 386--395.
  {IEEE} Computer Society, 2005.

\bibitem[Mel09]{panorama}
Paul-Andr\'e Melli\`es.
\newblock Categorical semantics of linear logic.
\newblock Panoramas et Synth\`eses, 2009.

\bibitem[ML13]{maclane}
Saunders Mac~Lane.
\newblock {\em Categories for the working mathematician}, volume~5.
\newblock Springer Science \& Business Media, 2013.

\bibitem[MT16]{DBLP:journals/siglog/MurawskiT16}
Andrzej~S. Murawski and Nikos Tzevelekos.
\newblock An invitation to game semantics.
\newblock {\em {ACM} {SIGLOG} News}, 3(2):56--67, 2016.

\bibitem[OA20]{DBLP:journals/corr/abs-2008-02665}
Federico Olimpieri and Lionel~Vaux Auclair.
\newblock On the taylor expansion of $\lambda$-terms and the groupoid structure
  of their rigid approximants.
\newblock {\em CoRR}, abs/2008.02665, 2020.

\bibitem[Ong17]{DBLP:conf/lics/Ong17}
C.{-}H.~Luke Ong.
\newblock Quantitative semantics of the lambda calculus: Some generalisations
  of the relational model.
\newblock In {\em 32nd Annual {ACM/IEEE} Symposium on Logic in Computer
  Science, {LICS} 2017, Reykjavik, Iceland, June 20-23, 2017}, pages 1--12.
  {IEEE} Computer Society, 2017.

\bibitem[Paq20]{paquet2020probabilistic}
Hugo Paquet.
\newblock {\em Probabilistic concurrent game semantics}.
\newblock PhD thesis, PhD thesis, University of Cambridge, 2020.

\bibitem[RW11]{DBLP:conf/lics/RideauW11}
Silvain Rideau and Glynn Winskel.
\newblock Concurrent strategies.
\newblock In {\em Proceedings of the 26th Annual {IEEE} Symposium on Logic in
  Computer Science, {LICS} 2011, June 21-24, 2011, Toronto, Ontario, Canada},
  pages 409--418. {IEEE} Computer Society, 2011.

\bibitem[Ulm68]{ulmer1968properties}
Friedrich Ulmer.
\newblock Properties of dense and relative adjoint functors.
\newblock {\em Journal of Algebra}, 8(1):77--95, 1968.

\bibitem[Win86]{DBLP:conf/ac/Winskel86}
Glynn Winskel.
\newblock Event structures.
\newblock In {\em Advances in {Petri} Nets}, pages 325--392, 1986.

\bibitem[Win07]{DBLP:journals/entcs/Winskel07}
Glynn Winskel.
\newblock Event structures with symmetry.
\newblock {\em Electron. Notes Theor. Comput. Sci.}, 172:611--652, 2007.

\end{thebibliography}

\appendix

\section{Postponed Proofs and Constructions}


\subsection{Theory of relative adjunctions and comonads.}
\label{app:relative}

We first recall the basic theory of relative adjunctions, and relative
comonads. These are defined with respect to any functor $J : \C \to
\D$, but we only give the special case where $\C$ is a full subcategory
of $\D$, and $J : \C \hookrightarrow \D$ is the inclusion functor. The
general definitions can be found in e.g. \cite{relativemonads}. 

\newcommand{\B}{\mathcal{B}}
\paragraph{Relative adjunctions.} If $F : \C \to \B$ and $G : \B \to
\D$, we say that $F$ is a \textbf{$J$-relative left adjoint} to $G$ if
for every $C \in \C$ and $B \in \B$ there is a natural bijection 
\[
\B(F(C), B) \cong \D(C, G(B)).
\]

We say that $F$ is a \textbf{$J$-relative right adjoint} to $G$ if for
all $C \in \C$ and $B \in \B$ there is 
\[
\B(B, F(C)) \cong \D(G(B), C).
\]
a natural bijection 
These two situations are respectively pictured as the diagrams below:  
\begin{equation}
\label{eq:adj}
\begin{tikzcd}
& \B \arrow[phantom]{d}{\dashv}  \arrow{dr}{G} & \\ 
\C \arrow{ur}{F}   \arrow[hook]{rr}{} & {} & \D  
\end{tikzcd}
\qquad 
\begin{tikzcd}
& \B \arrow[phantom]{d}{\vdash}  \arrow{dr}{G} & \\ 
\C \arrow{ur}{F} \arrow[hook]{rr}{} & {} & \D  
\end{tikzcd}
\end{equation}

Note that this definition is asymmetric: if $F$ is a right adjoint to $G$ relative to $J$, then it does \emph{not} make sense to say that $G$ is a $J$-left adjoint to $F$. 

\paragraph{Relative comonads}
A \textbf{$J$-relative comonad} consists of: \emph{(1)} 
for every $C \in \C$, an object $\oc C \in \D$; \emph{(2)} 
for every $C \in \C$, a morphism $\der_C : \oc C \to C$; and \emph{(3)} 
for every $B, C \in \C$ and $f : \oc B \to C$, a morphism $f^\dagger : \oc B \to \oc C$,
such that for $A, B, C \in \C$, 
\[
\begin{array}{rl}
\text{\emph{(1)}}& 
\text{if $f : \oc B \to C$, then $f = \der_{C} \circ f^\dagger$,}\\
\text{\emph{(2)}}&
\text{$\der_C^\dagger = \id_{\oc C}$,}\\
\text{\emph{(3)}}&
\text{if $f : \oc A \to B$ and $g : \oc B \to C$, then 
$(g \circ f^\dagger)^\dagger = g^\dagger \circ f^\dagger$.}
\end{array}
\]

The axioms ensure that $\oc$ can be extended to a functor, sending $f : B \to C$ to $(f \circ \der_{B})^\dagger$. 

\paragraph{The Kleisli category of a $J$-relative comonad.} The
relationship between adjunctions and comonads extends to the
$J$-relative setting. For $F$ and $G$ as in the
right-hand diagram in \eqref{eq:adj}, their composite $GF$ is
a relative comonad. Conversely, any relative comonad has an associated
\textbf{Kleisli category} $\C_\oc$ which can be used to construct a
relative adjunction: it has objects those of $\C$, and
homsets given by $\C_\oc(B, C) = \D(\oc B, C)$. We have a situation
\[
\begin{tikzcd}
& \C_\oc \arrow[phantom]{d}{\vdash}  \arrow{dr}{} & \\ 
\C \arrow{ur}{} \arrow[hook]{rr}{} & {} & \D  
\end{tikzcd}
\]
where the right adjoint is identity-on-objects and maps $f : B \to C$ to $f \circ \der_{B}$, and the left adjoint maps $C \in \C_\oc$ to $\oc C \in \D$ and $f \in \C_\oc(B, C)$ to $f^\oc \in \D(\oc B, \oc C)$. 

\subsection{The Kleisli category of a relative Seely category.}
\label{app:kleisli}

For a relative Seely category $\C$ as in Definition~\ref{def:relativeseely}, the Kleisli category for the relative comonad $\oc$ is cartesian closed. We give some details of the proof. Recall that the situation is the following:
\[
\begin{tikzcd}
& \C_\oc \arrow[phantom]{d}{\vdash}  \arrow{dr}{G} & \\ 
\C_s \arrow{ur}{F} \arrow[hook]{rr}{} & {} & \C  
\end{tikzcd}
\]
where $F$ and $G$ are defined in the last paragraph of the previous section. 

\paragraph{Products.}
It is easy to show that $J$-relative right adjoints preserve the limits
that $J$ preserves, see e.g. \cite{ulmer1968properties}. Since $J$
preserves products, $\C_\oc$ has all finite products constructed as in
$\C_s$, with projections $\pi_A \circ \der_{A\with B} \in \C_\oc(A\with
B,A)$ and $\pi_B \circ \der_{A\with B} \in \C_\oc(A\with B,B)$. 

\paragraph{Cartesian closure.} For $A, B \in \C_s$, we define the function space $A \tto B = \oc A \lin B$; we know this is an object of $\C_s$ since by definition the  functor $\oc A \lin - $ has type $\C_s \to \C_s$. We use the (relative) closed structure of $\C$ to derive the required bijection. For $A, B, C \in \C_s$,
\begin{align*}
\C_\oc (A \with B, C) &= \C (\oc (A \with B), C) \\ 
&\cong \C(\oc A \tensor \oc B, C)  \qquad \qquad \text{(using $m_{A, B}$)} \\ 
&\cong \C(\oc A, \oc B \lin C) = \C_\oc (A, B \tto C)
\end{align*} 
and this is natural. The evaluation map $\evm_{A,
B} \in \C_\oc((A \tto B) \with A, B)$ is given by 
\[
\oc((A\tto B) \with A) \xrightarrow{m_{A,B}^{-1}} \oc (A \tto B) \tensor \oc A \xrightarrow{\der_{A \tto B} \tensor \oc A} \oc A \lin B \tensor \oc A \xrightarrow{\evm_{\oc A, B}} B.
\]

\subsection{Synchronization up to Symmetry}
\label{app:syncsym}
First we include the proof of: 

\syncsym*
\begin{proof}
Let us write $A_-$ for $A$ with symmetry replaced with $\tilde{A_-} =
\ntilde{A}$, $\ntilde{A_-} = \ntilde{A}$, and $\ptilde{A_-}$ reduced to
identity bijections. Likewise, $C_+$ is $C$ with isomorphism families
restricted to positive symmetries. Then, then we consider the maps of
essp
\[
(\pr_\sigma \parallel \id_C) : \sigma \parallel C_+ \to A^\perp \parallel B
\parallel C^\perp
\qquad
(\id_A \parallel \pr_\tau) : A_- \parallel \tau \to A \parallel B^\perp
\parallel C
\]
are dual \emph{pre-$\sim$-strategies} in the sense of \cite{cg2}.
\emph{Existence} then follows directly from an application of Lemma 3.23 of
\cite{cg2} to these two.
For \emph{uniqueness}, consider $z^\sigma \in \confp{\sigma}$ and $z^\tau
\in \confp{\tau}$ causally compatible together with $\psi^\sigma :
z^\sigma \sym_\sigma x^\sigma$ and $\psi^\tau : z^\tau \sym_\tau x^\tau$
with $\psi^\sigma_A \in \ntilde{A}$, $\psi^\tau_C \in \ptilde{C}$ and
$\psi^\tau_B \circ \theta = \psi^\sigma_B$. Then,
\[
\pr_\sigma (\psi^\sigma \circ (\varphi^\sigma)^{-1}) = \theta_A^-
\parallel (\psi^\sigma_B \circ (\varphi^\sigma_B)^{-1})
\qquad
\pr_\tau (\psi^\tau \circ (\varphi^\tau)^{-1}) = (\psi^\tau_B \circ
(\varphi^\tau_B)^{-1}) \parallel \theta_C^+
\]
where $\psi^\sigma_B \circ (\varphi^\sigma_B)^{-1} = \psi^\tau_B \circ
\theta \circ \theta^{-1} \circ (\varphi^\tau_B)^{-1} = \psi^\tau_B \circ
(\varphi^\tau_B)^{-1}$, so by Proposition \ref{prop:char_comp},
\[
\omega = (\psi^\sigma \circ (\varphi^\sigma)^{-1}) \odot (\psi^\tau \circ
(\varphi^\tau)^{-1}) \in \tildep{\tau\odot \sigma}
\]
but its image by $\pr_{\tau \odot \sigma}$ is a positive symmetry, so
$\omega$ is an identity symmetry by Lemma 3.28 of \cite{cg2}. It follows
easily from Proposition \ref{prop:char_comp} that $\psi^\sigma =
(\varphi^\sigma)^{-1}$ and $\psi^\tau = (\varphi^\tau)^{-1}$. 
\end{proof}

We can also prove the same property on \emph{symmetries} rather than
configurations. For this, we use \emph{higher symmetries} on ess: if
$\theta, \theta' \in \tilde{E}$, we write
$\Theta : \theta \sym_E \theta'$
for a bijection between their graphs, such that writing $\dom(\Theta) =
\{(a_1, b_1) \mid ((a_1, a_2), (b_1, b_2)) \in
\Theta\}$, 
\[
\dom(\Theta) : \dom(\theta) \sym_E \dom(\theta')
\]
and likewise for $\cod(\Theta)$.

\begin{prop}\label{prop:sync_sym2}
Consider $\sigma : A \vdash B$ and $\tau : B \vdash C$ two strategies.

For $\theta^\sigma \in \tildep{\sigma}, \theta^\tau \in \tildep{\tau}$
and $\Theta : \theta^\sigma_B \sym_B \theta^\tau_B$ s.t. the composite
bijection is \emph{secured}:
\[
\theta^\sigma \parallel \theta^\tau_C 
\quad
\stackrel{\pr_{\sigma}\parallel C}{\simeq}
\quad
\theta^\sigma_A \parallel \theta^\sigma_B \parallel \theta^\tau_C 
\quad
\stackrel{A \parallel \Theta \parallel C}{\sym}
\quad
\theta^\sigma_A \parallel \theta^\tau_B \parallel \theta^\tau_C
\quad
\stackrel{A \parallel \pr_{\tau}^{-1}}{\simeq}
\quad
\theta^\sigma_A \parallel \theta^\tau\,,
\]
then there are (necessarily unique) $\vartheta^\sigma \in
\tildep{\sigma}$ and $\vartheta^\tau \in \tildep{\tau}$ causally
compatible, and 
\[
\Phi^\sigma : \theta^\sigma \sym_\sigma \vartheta^\sigma\,,
\qquad
\qquad
\Phi^\tau : \theta^\tau \sym_\tau \vartheta^\tau\,,
\]
s.t.
$\Phi^\sigma_A$ is negative (\emph{i.e.} $\dom(\Phi^\sigma_A)$
and $\cod(\Phi^\sigma_A)$ negative), $\Phi^\tau_C$ is positive, and
$\Phi^\tau_B \circ \Theta = \Phi^\sigma_B$.
\end{prop}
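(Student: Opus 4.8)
\textbf{Proof plan for Proposition \ref{prop:sync_sym2}.}

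The plan is to reduce this statement about higher symmetries to Proposition \ref{prop:sync_sym} applied to the \emph{event structures with symmetry of symmetries}. Recall that for any ess $E$ one can form $\S E$, the ess whose events are the pairs $(a_1,a_2)$ with $a_1 \sym_E a_2$ (more precisely whose configurations are the symmetries $\theta\in\tilde E$), as used throughout \cite{cg2}. The key observation is that a strategy $\sigma : A$ induces canonically a strategy $\S\sigma : \S A$ on the ess of symmetries, and that this construction is compatible with composition: $\S(\tau\odot\sigma) \cong \S\tau \odot \S\sigma$, with $+$-covered configurations of $\S(\tau\odot\sigma)$ corresponding to causally compatible pairs $(\vartheta^\tau, \vartheta^\sigma)$ of symmetries, via the characterization in Proposition \ref{prop:char_comp}. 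Under this correspondence, the objects in the statement translate directly: a higher symmetry $\Theta : \theta^\sigma_B \sym_B \theta^\tau_B$ is a symmetry between configurations of $\S B$, and the securedness hypothesis on the composite bijection is exactly the securedness hypothesis required to apply Proposition \ref{prop:sync_sym} to $\S\sigma : \S A \vdash \S B$ and $\S\tau : \S B \vdash \S C$.

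Concretely, first I would observe that $\S A$ is itself a (thin concurrent) game, with polarity inherited from $A$ and with positive/negative symmetries $\ptilde{\S A}, \ntilde{\S A}$ defined as the higher symmetries $\Phi$ whose domain and codomain projections are positive (resp. negative); this is the standard construction and the relevant properties ($\sim$-receptivity, thinness, etc. lift to $\S\sigma$) are established in the appendices of \cite{cg2}. Next I would check that the decorations in the statement match: "$\Phi^\sigma_A$ is negative" is precisely "$\Phi^\sigma_A \in \ntilde{\S A}$", and "$\Phi^\tau_C$ is positive" is "$\Phi^\tau_C \in \ptilde{\S C}$", and the condition $\Phi^\tau_B \circ \Theta = \Phi^\sigma_B$ is the commutation condition of Proposition \ref{prop:sync_sym} transported to $\S B$. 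Then the proposition follows by a direct application of Proposition \ref{prop:sync_sym} with $A := \S A$, $B := \S B$, $C := \S C$, $\sigma := \S\sigma$, $\tau := \S\tau$, $x^\sigma := \theta^\sigma$, $x^\tau := \theta^\tau$, $\theta := \Theta$: it yields unique causally compatible $\vartheta^\sigma \in \confp{\S\sigma}$, $\vartheta^\tau \in \confp{\S\tau}$ — i.e. $+$-covered symmetries of $\sigma$ and $\tau$ — together with $\Phi^\sigma : \theta^\sigma \sym_{\S\sigma} \vartheta^\sigma$ and $\Phi^\tau : \theta^\tau \sym_{\S\tau} \vartheta^\tau$ with the stated properties. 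Uniqueness transports verbatim from the uniqueness clause of Proposition \ref{prop:sync_sym}.

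The main obstacle I anticipate is not the application of Proposition \ref{prop:sync_sym} itself, but verifying carefully that the $\S(-)$ construction really does interact with composition, $+$-covered configurations, and causal compatibility as claimed — in particular that the securedness condition stated in the proposition (phrased on the graphs of $\theta^\sigma, \theta^\tau$ via $\Theta$) coincides on the nose with the securedness condition that Proposition \ref{prop:sync_sym} demands of $\S\sigma, \S\tau$ and $\Theta$. This is essentially bookkeeping about how $<_{\S\sigma}$ is defined in terms of $<_\sigma$ (a pair $(s_1,s_2) <_{\S\sigma} (s_1',s_2')$ iff $s_1 <_\sigma s_1'$ or $s_2 <_\sigma s_2'$, and similarly for the game side), but it must be spelled out to ensure the hypotheses line up. An alternative, should the $\S(-)$-functoriality lemmas not be available in the exact form needed, is to redo the proof of Proposition \ref{prop:sync_sym} directly at the level of symmetries, invoking the symmetry-of-symmetries versions of $\sim$-receptivity, extension, and thinness — this is mechanical given the corresponding lemmas in \cite{cg2}, but longer; I would prefer the reduction route if at all possible.
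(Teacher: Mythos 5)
Your route is genuinely different from the paper's, and the paper's is considerably lighter. In Appendix \ref{app:syncsym} the proof does not lift anything to an event structure of symmetries: writing $\theta^\sigma : x^\sigma \sym_\sigma y^\sigma$ and $\theta^\tau : x^\tau \sym_\tau y^\tau$, it applies Proposition \ref{prop:sync_sym} \emph{twice} as a black box --- once to the domain configurations $x^\sigma, x^\tau$ with the symmetry $\dom(\Theta)$, once to the codomains $y^\sigma, y^\tau$ with $\cod(\Theta)$ --- obtaining causally compatible pairs $(u^\sigma, u^\tau)$ and $(v^\sigma, v^\tau)$ together with $\varphi^\sigma : x^\sigma \sym_\sigma u^\sigma$, $\psi^\sigma : y^\sigma \sym_\sigma v^\sigma$ (and likewise for $\tau$). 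It then takes $\vartheta^\sigma = \psi^\sigma \circ \theta^\sigma \circ (\varphi^\sigma)^{-1}$ as the missing side of the evident commuting square, lets $\Phi^\sigma$ be the bijection of graphs induced by that square (and symmetrically for $\tau$), and finishes with a short diagram chase for the polarity and commutation conditions. Note that the securedness hypothesis on the composite bijection of graphs restricts to securedness on both the domain and codomain sides, which is exactly what the two invocations need.

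The reduction through $\S A$ and $\S\sigma$ is appealing, but the infrastructure it presupposes is not, to my knowledge, available off the shelf in \cite{cg2} in the form you need, and establishing it would dwarf the statement you are proving. Two concrete worries. First, for $\S A$ to be a tcg you need the higher symmetries to form an isomorphism family on $\S A$, and the \emph{extension} axiom there is a genuinely two-dimensional condition: given $\Phi : \theta \sym \theta'$ and an extension $\theta \subseteq \theta''$, you must extend $\dom(\Phi)$ and $\cod(\Phi)$ \emph{coherently}, so that the extended bijection of graphs is again a symmetry-between-symmetries; applying the extension axiom of $\tilde{A}$ separately to the domain and codomain gives two extensions with no guarantee they assemble. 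Second, thinness, $\sim$-receptivity and receptivity of $\S\sigma$ as a strategy on $\S A$, plus the isomorphism $\S(\tau \odot \sigma) \cong \S\tau \odot \S\sigma$ matching causal compatibility and $+$-coveredness on the nose, are each comparable in weight to the proposition itself. Your fallback (redoing the argument of Proposition \ref{prop:sync_sym} directly at the level of symmetries) is sound but strictly more work than the two-applications-and-glue argument; I would take the latter.
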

\begin{proof}
Let us write $\theta^\sigma : x^\sigma \sym_\sigma y^\sigma$ and
$\theta^\tau : x^\tau \sym_\tau y^\tau$. Applying Proposition
\ref{prop:sync_sym}, we get
\[
\varphi^\sigma : x^\sigma \sym_\sigma u^\sigma
\qquad
\varphi^\tau : x^\tau \sym_\tau u^\tau
\qquad
\psi^\sigma : y^\sigma \sym_\sigma v^\sigma
\qquad
\psi^\tau : y^\tau \sym_\tau v^\tau
\]
where $u^\sigma, u^\tau$ causally compatible, $v^\sigma, v^\tau$
causally compatible, and satisfying additional properties not listed
here. We may then define $\vartheta^\sigma$ and $\vartheta^\tau$ as the
missing sides of:
\[
\xymatrix{
x^\sigma\ar[r]^{\varphi^\sigma}
	\ar[d]_{\theta^\sigma}&
u^\sigma\\
y^\sigma\ar[r]_{\psi^\sigma}&
v^\sigma
}
\qquad
\qquad
\xymatrix{
x^\tau	\ar[r]^{\varphi^\tau}
	\ar[d]_{\theta^\tau}&
u^\tau\\
y^\tau	\ar[r]_{\psi^\tau}&
v^\tau
}
\]
and $\Phi^\sigma : \theta^\sigma \sym_\sigma \vartheta^\sigma$ and
$\Phi^\tau : \theta^\tau \sym_\tau \vartheta^\tau$ induced by those
commuting diagrams. It is then a simple diagram chasing that the
additional properties are satisfied.
\end{proof}

\subsection{Horizontal Composition of Positive Isomorphisms} 
\label{app:horizontal}
Next we detail the proof of:

\horizontal*
\begin{proof}
We define $\psi \odot \varphi$ on $+$-covered configurations. Take
$x^\tau \odot x^\sigma \in \confp{\tau \odot \sigma}$.
There is, of course, no reason why $\varphi(x^\sigma)$ and
$\psi(x^\tau)$ would be compatible.
However, since $\varphi$ and $\psi$ are positive
isomorphisms, there are (unique) symmetries $\theta_A^-, \theta_B^+, \theta_B^-,
\theta_C^+$ such that
\[
\xymatrix{
x^\sigma 
	\ar[d]_{\varphi}
	\ar[r]^{\pr_\sigma}&
x^\sigma_A \parallel x^\sigma_B
	\ar[d]^{\theta_A^- \parallel \theta_B^+}\\
\varphi(x^\sigma)
	\ar[r]_{\pr_{\sigma'}}&
\varphi(x^\sigma)_A \parallel \varphi(x^\sigma)_B
}
\qquad
\qquad
\xymatrix{
x^\tau
        \ar[d]_{\psi}
        \ar[r]^{\pr_\tau}&
x^\tau_B \parallel x^\tau_C
        \ar[d]^{\theta_B^- \parallel \theta_C^+}\\
\psi(x^\tau)
        \ar[r]_{\pr_{\tau'}}&
\psi(x^\tau)_B \parallel \psi(x^\tau)_C
}
\]
commute. We show that there are unique $y^{\tau'} \odot y^{\sigma'} \in
\confp{\tau'\odot \sigma'}$ and symmetries
\[
\omega : \varphi(x^\sigma) \sym_{\sigma'} y^{\sigma'}
\qquad
\qquad
\nu : \psi(x^\tau) \sym_{\tau'} y^{\tau'}
\]
such that $\omega_A \in \ntilde{A}$, $\nu_C \in \ptilde{C}$ and
$\omega_B \circ \theta_B^+ = \nu_B \circ \theta_B^-$.

\emph{Existence.} We get $\theta_B = \theta_B^- \circ
(\theta_B^+)^{-1} : \varphi(x^\sigma)_B \sym_B \psi(x^\tau)_B$ a
mediating symmetry between $\varphi(x^\sigma)$ and $\psi(x^\tau)$ and
from the two diagrams above we easily deduce that
\[
\xymatrix@R=8pt@C=8pt{
&x^\sigma \parallel x^{\tau}_C
	\ar[dl]_{\varphi \parallel \theta_C^+}
	\ar[rr]^{\pr_\sigma \parallel x^\tau_C}&&
x^\sigma_A \parallel x_B \parallel x^\tau_C
	\ar[rr]^{x^\sigma_A \parallel \pr_\tau^{-1}}&&
x^\sigma_A \parallel x^\tau
	\ar[dr]^{\theta_A^- \parallel \psi}\\
z^{\sigma'} \parallel z^{\tau'}_C
	\ar[rr]_{\pr_{\sigma'}\parallel z^{\tau'}_C}&&
z^{\sigma'}_A \parallel z^{\sigma'}_B \parallel z^{\tau'}_C
	\ar[rr]_{z^{\sigma'}_A \parallel \theta_B \parallel z^{\tau'}_C}&&
z^{\sigma'}_A \parallel z^{\tau'}_B \parallel z^{\tau'}_C
	\ar[rr]_{z^{\sigma'}_A \parallel \pr_\tau^{-1}}&&
z^{\sigma'}_A \parallel z^{\tau'}
}
\]
commutes, writing $z^{\sigma'} = \varphi(x^\sigma)$ and $z^{\tau'} =
\psi(x^\tau)$. As the (bijection induced by) the top row is secured and
$\varphi, \psi$ are order-isomorphisms, it follows that the (bijection
induced by) the bottom row is also secured. Therefore, applying
Proposition \ref{prop:sync_sym}, it follows that there are
\[
y^{\tau'} \odot y^{\sigma'} \in \confp{\tau'\odot \sigma'}
\qquad
\omega : z^{\sigma'} \sym_{\sigma'} y^{\sigma'}
\qquad
\nu : z^{\tau'} \sym_{\tau'} y^{\tau'}
\]
such that $\omega_A \in \ntilde{A}$, $\nu_C \in \ptilde{C}$ and $\nu_B
\circ \theta_B = \omega_B$ as required. 
For \emph{uniqueness}, if
\[
u^{\tau'} \odot u^{\sigma'} \in \confp{\tau'\odot \sigma'}
\qquad
\mu : z^{\sigma'} \sym_{\sigma'} u^{\sigma'}
\qquad
\gamma : z^{\tau'} \sym_{\tau'} u^{\tau'}
\]
then $(\mu \circ \omega^{-1}) \odot (\gamma \circ \nu^{-1}) \in
\tilde{\tau'\odot \sigma'}$ displays to a positive symmetry of $A\vdash
C$, so is an identity by Lemma 3.28 of \cite{cg2}. By Proposition
\ref{prop:char_comp}, $\mu = \omega$ and $\gamma = \nu$.

Now, we may set $(\psi \odot \varphi)(x^\tau \odot x^\sigma) = y^{\tau'}
\odot x^{\sigma'}$. To prove preservation of symmetry, we perform the
exact same construction on symmetries, using Proposition
\ref{prop:sync_sym2}, which commutes with domain and codomain.
The inverse $(\psi \odot \varphi)^{-1}$ is constructed similarly. The
fact that these are inverses and their monotonicity are direct
consequences of the uniqueness of the construction above. Finally, any
order-isomorphism preserving symmetry between ess is generated by a
unique isomorphism of ess, see \emph{e.g.} Lemma D.4 from \cite{cg3}.
\end{proof}

\subsection{Invariance of $\sim^+$-witnesses}
\label{app:invwit}
We show that the cardinality of $\sim^+$-witnesses do not depend on the
choice of representative. 

\cardinv*
\begin{proof}
We show the following. Consider $A$ a game,
$\sigma : A$ a strategy. For any $x_A \in \conf{A}$, set
$\pswit_\sigma(x_A) = \{(x^\sigma, \psi^+) \mid \psi^+ : x^\sigma_A
\sym_A x_A\}$. 
Then, for any $x_A \sym_A y_A$, we have a bijection $\pswit_\sigma(x_A)
\bij \pswit_\sigma(y_A)$. Indeed, fix $\theta : x_A \sym_A y_A$, which
can be factored uniquely as $\theta^+ \circ \theta^-$ and as
$\vartheta^- \circ \vartheta^+$ by Lemma \ref{lem:factor}.
Now, given $(x^\sigma, \psi^+) \in \pswit_\sigma(x_A)$, then there are
unique $y^\sigma \in \conf{\sigma}$, $\varphi^\sigma : x^\sigma
\sym_\sigma y^\sigma$, $\omega^+ : y^\sigma_A \sym_A^+ z_A$ s.t.
the following diagram commutes.
\[
\xymatrix{
x^\sigma_A
        \ar[r]^{\psi^+}
        \ar[d]_{\varphi^\sigma_A}&
x_A
        \ar[d]^{\theta^-}
	\ar[r]^{\vartheta^+}&
z'_A	\ar[d]^{\vartheta^-}\\
y^\sigma_A
        \ar[r]_{\omega^+}&
z_A	\ar[r]_{\theta^+} &y_A
}
\]

For \emph{existence}, refactor $\theta^- \circ \psi^+ = \Theta^+ \circ
\Theta^-$ by Lemma \ref{lem:factor}, say $\Theta^- : x^\sigma_A \sym_A^-
u_A$ and $\Theta^+ : u_A \sym_A^+ z_A$. By Lemma \ref{lem:b4}, there are
unique $\varphi^\sigma : x^\sigma \sym_\sigma y^\sigma$ and $\Omega^+ :
y^\sigma_A \sym_A^+ u_A$ such that $\Omega^+ \circ \varphi^\sigma_A =
\Theta^-$. Setting $\omega^+ = \Theta^+ \circ \Omega^+$ satisfies our
constraints.

\emph{Uniqueness.} If we also have ${y^\sigma}', {\varphi^\sigma}'$, and
${\omega^+}'$ satisfying those constraints, then $\varphi^\sigma \circ
({\varphi^\sigma}')^{-1} \in \tilde{\sigma}$ maps to the identity which
is a positive symmetry, so must be an identity by Lemma
\ref{lem:pos_symm}. It follows that $y^\sigma = {y^\sigma}'$,
$\varphi^\sigma = {\varphi^\sigma}'$, and by necessity $\omega^+ =
{\omega^+}'$ as well.

This yields a construction from $\pswit_\sigma(x_A)$ to
$\pswit_\sigma(y_A)$. Note that the construction is symmetric and may be
applied from $\pswit_\sigma(y_A)$ to $\pswit_\sigma(x_A)$ as well via
$\theta^{-1}$. That the two constructions are inverse follows
immediately from the uniqueness property.
\end{proof}

\section{Not Every Game is Representable}
\label{app:nonrep}

The following counter-example is due to Marc de Visme.

\begin{exa}
Consider the tcg $A$, with events, polarities, and causality and
follows:
\[
\xymatrix@C=10pt@R=10pt{
\ominus_{\grey{1}}
        \ar@{.}[d]
        \ar@{.}[dr]&
\ominus_{\grey{2}}
        \ar@{.}[dl]
        \ar@{.}[d]\\
\oplus_{\grey{1}}&
\oplus_{\grey{2}}
}
\]

Its symmetry comprises all order-isomorphisms between configurations.
The negative
symmetry has all order-isomorphisms included in one of the two maximal
bijections
\[
\raisebox{15pt}{$
\xymatrix@R=10pt@C=10pt{
\ominus_{\grey{1}}
        \ar@{.}[d]
        \ar@{.}[dr]&
\ominus_{\grey{2}}
        \ar@{.}[dl]
        \ar@{.}[d]\\
\oplus_{\grey{1}}&
\oplus_{\grey{2}}
}$}
\sym^-_A
\raisebox{15pt}{$
\xymatrix@R=10pt@C=10pt{
\ominus_{\grey{1}}
        \ar@{.}[d]
        \ar@{.}[dr]&
\ominus_{\grey{2}}
        \ar@{.}[dl]
        \ar@{.}[d]\\
\oplus_{\grey{1}}&
\oplus_{\grey{2}}
}$}
\qquad
\qquad
\raisebox{15pt}{$
\xymatrix@R=10pt@C=10pt{
\ominus_{\grey{1}}
        \ar@{.}[d]
        \ar@{.}[dr]&
\ominus_{\grey{2}}
        \ar@{.}[dl]
        \ar@{.}[d]\\
\oplus_{\grey{1}}&
\oplus_{\grey{2}}
}$}
\sym^-_A
\raisebox{15pt}{$
\xymatrix@R=10pt@C=10pt{
\ominus_{\grey{2}}
        \ar@{.}[d]
        \ar@{.}[dr]&
\ominus_{\grey{1}}
        \ar@{.}[dl]
        \ar@{.}[d]\\
\oplus_{\grey{2}}&
\oplus_{\grey{1}}
}$}
\]
where again, the bijection matches those events in the corresponding
position of the
diagram. Likewise, the positive symmetry has all order-isomorphisms
included in one
of:

\[
\raisebox{15pt}{$
\xymatrix@R=10pt@C=10pt{
\ominus_{\grey{1}}
        \ar@{.}[d]
        \ar@{.}[dr]&
\ominus_{\grey{2}}
        \ar@{.}[dl]
        \ar@{.}[d]\\
\oplus_{\grey{1}}&
\oplus_{\grey{2}}
}$}
\sym^+_A
\raisebox{15pt}{$
\xymatrix@R=10pt@C=10pt{
\ominus_{\grey{1}}
        \ar@{.}[d]
        \ar@{.}[dr]&
\ominus_{\grey{2}}
        \ar@{.}[dl]
        \ar@{.}[d]\\
\oplus_{\grey{1}}&
\oplus_{\grey{2}}
}$}
\qquad
\qquad
\raisebox{15pt}{$
\xymatrix@R=10pt@C=10pt{
\ominus_{\grey{1}}
        \ar@{.}[d]
        \ar@{.}[dr]&
\ominus_{\grey{2}}
        \ar@{.}[dl]
        \ar@{.}[d]\\
\oplus_{\grey{1}}&
\oplus_{\grey{2}}
}$}
\sym^+_A
\raisebox{15pt}{$
\xymatrix@R=10pt@C=10pt{
\ominus_{\grey{1}}
        \ar@{.}[d]
        \ar@{.}[dr]&
\ominus_{\grey{2}}
        \ar@{.}[dl]
        \ar@{.}[d]\\
\oplus_{\grey{2}}&
\oplus_{\grey{1}}
}$}
\]
forming, altogether, a tcg. Then, the endosymmetry
\[
\raisebox{15pt}{$
\xymatrix@R=10pt@C=0pt{
\ominus_{\grey{1}}
        \ar@{.}[dr]&&
\ominus_{\grey{2}}
        \ar@{.}[dl]\\
&\oplus_{\grey{1}}
}$}
\qquad
\sym_A
\qquad
\raisebox{15pt}{$
\xymatrix@R=10pt@C=0pt{
\ominus_{\grey{2}}
        \ar@{.}[dr]&&
\ominus_{\grey{1}}
        \ar@{.}[dl]\\
&\oplus_{\grey{1}}
}
$}
\]
which is neither positive nor negative, uniquely factors as
\[
\raisebox{15pt}{$
\xymatrix@R=10pt@C=0pt{
\ominus_{\grey{1}}
        \ar@{.}[dr]&&
\ominus_{\grey{2}}
        \ar@{.}[dl]\\
&\oplus_{\grey{1}}
}$}
\qquad
\sym_A^-
\qquad
\raisebox{15pt}{$
\xymatrix@R=10pt@C=0pt{
\ominus_{\grey{2}}
        \ar@{.}[dr]&&
\ominus_{\grey{1}}
        \ar@{.}[dl]\\
&\oplus_{\grey{2}}
}
$}
\qquad
\sym_A^+
\qquad
\raisebox{15pt}{$
\xymatrix@R=10pt@C=0pt{
\ominus_{\grey{2}}
        \ar@{.}[dr]&&
\ominus_{\grey{1}}
        \ar@{.}[dl]\\
&\oplus_{\grey{1}}
}
$}
\]
which is not formed of endosymmetries. So this configuration is not
canonical, but
its only symmetric $\{\ominus_{\grey{1}}, \ominus_{\grey{2}},
\oplus_{\grey{2}}\}$ is
not canonical either, for  the same reason.
\end{exa}

\section{Further Content on Groupoids of Strategies}

\subsection{On Weights of Symmetry Classes}
\label{app:weights}
How should one correct the sum, if one is to count symmetry classes
instead of positive witnesses?
Let us fix $A$ a game, $\sigma : A$ any strategy,
and $\x_A \in \wconf{A}$. We show how negative symmetries act on
$\sim^+$-witnesses of $\x_A$.

\begin{prop}\label{prop:act1}
For any $(x^\sigma, \theta^+) \in \pswit_\sigma(\x_A)$ and $\varphi^-
\in \ntilde{\x_A}$ there are unique $(y^\sigma, \psi^+) \in
\pswit_\sigma(\x_A)$ and $\phi : x^\sigma \sym_\sigma y^\sigma$ such
that the following diagram commutes: 
\[
\xymatrix@R=10pt@C=10pt{
x^\sigma_A   \ar[rr]^{\theta^+}
        \ar[d]_{\phi^\sigma_A}&&
\rep{\x}_A
        \ar[d]^{\varphi^-}\\
y^\sigma_A   \ar[rr]_{\psi^+}&&
\rep{\x}_A
}
\]
%
\end{prop}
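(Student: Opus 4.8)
The plan is to reduce Proposition~\ref{prop:act1} to Lemma~\ref{lem:b4} (``Lemma B.4 of \cite{cg2}''), which is exactly the statement that a strategy can absorb a negative reindexing of the game in a unique way. First I would set up the data: we are given $(x^\sigma, \theta^+) \in \pswit_\sigma(\x_A)$, so $x^\sigma \in \confp{\sigma}$ and $\theta^+ : x^\sigma_A \sym_A^+ \rep{\x}_A$, together with a negative endo-symmetry $\varphi^- : \rep{\x}_A \sym_A^- \rep{\x}_A$. The composite $\varphi^- \circ \theta^+ : x^\sigma_A \sym_A \rep{\x}_A$ is a symmetry of the game, but it is neither positive nor negative in general, so Lemma~\ref{lem:b4} does not apply to it directly. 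I would therefore first refactor it, using Lemma~\ref{lem:factor}, as $\varphi^- \circ \theta^+ = \Theta^+ \circ \Theta^-$ with $\Theta^- : x^\sigma_A \sym_A^- u_A$ and $\Theta^+ : u_A \sym_A^+ \rep{\x}_A$ for a unique intermediate configuration $u_A$.

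Next I would feed the negative part $\Theta^- : x^\sigma_A \sym_A^- u_A$ to Lemma~\ref{lem:b4}: this yields a unique $\phi : x^\sigma \sym_\sigma y^\sigma$ and a unique $\Omega^+ : u_A \sym_A^+ y^\sigma_A$ such that $\pr_\sigma(\phi) = \Omega^+ \circ \Theta^- : x^\sigma_A \sym_A y^\sigma_A$; in particular $y^\sigma \in \confp{\sigma}$ since $\phi$ is an order-iso preserving polarities. I would then define $\psi^+ = \Theta^+ \circ (\Omega^+)^{-1} : y^\sigma_A \sym_A^+ \rep{\x}_A$, which exhibits $(y^\sigma, \psi^+) \in \pswit_\sigma(\x_A)$. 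Chasing the definitions, $\psi^+ \circ \phi^\sigma_A = \Theta^+ \circ (\Omega^+)^{-1} \circ \Omega^+ \circ \Theta^- = \Theta^+ \circ \Theta^- = \varphi^- \circ \theta^+$, which is exactly commutation of the required square. (Note that $\psi^+$ lands in a positively-symmetric configuration, and the fact that $\rep{\x}_A$ is canonical is not needed here, only that $\varphi^-$ is negative and everything composes.)

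For uniqueness, suppose $(z^\sigma, \chi^+) \in \pswit_\sigma(\x_A)$ and $\phi' : x^\sigma \sym_\sigma z^\sigma$ also make the square commute, i.e. $\chi^+ \circ \phi'_A = \varphi^- \circ \theta^+$. Then $\phi' \circ \phi^{-1} : y^\sigma \sym_\sigma z^\sigma$ is a symmetry of $\sigma$, and on the game side $\pr_\sigma(\phi' \circ \phi^{-1}) = \phi'_A \circ (\phi^\sigma_A)^{-1} = (\chi^+)^{-1} \circ \varphi^- \circ \theta^+ \circ \theta^{+\,-1} \circ \varphi^{-\,-1} \circ \cdots$; more carefully, since both $\phi'_A$ and $\phi^\sigma_A$ equal (after composition with $\chi^+$, resp.\ $\psi^+$) the same symmetry $\varphi^- \circ \theta^+$, the composite $\phi' \circ \phi^{-1}$ displays to the positive symmetry $(\chi^+)^{-1} \circ \psi^+ : y^\sigma_A \sym_A^+ z^\sigma_A$. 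By Lemma~\ref{lem:pos_symm}, a symmetry of $\sigma$ whose image under $\pr_\sigma$ is positive must be an identity, so $y^\sigma = z^\sigma$ and $\phi = \phi'$, whence also $\psi^+ = \chi^+$.

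The main obstacle is bookkeeping rather than depth: one must be careful that the factorisations supplied by Lemma~\ref{lem:factor} and the absorption supplied by Lemma~\ref{lem:b4} are threaded together so that the intermediate configuration $u_A$ appears consistently, and that the polarity tags ($\sym_A^-$ vs.\ $\sym_A^+$) line up so that Lemma~\ref{lem:b4} is applied only to the genuinely negative piece. The only genuinely subtle point is the uniqueness argument, where invoking Lemma~\ref{lem:pos_symm} (thinness) is essential — it is precisely the property that prevents Player from spuriously reindexing positive events — and I would present that step explicitly rather than leaving it to the reader.
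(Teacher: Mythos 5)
Your proof is correct and follows essentially the same route as the paper's: factor $\varphi^- \circ \theta^+$ negatively-then-positively via Lemma~\ref{lem:factor}, absorb the negative part into $\sigma$ via Lemma~\ref{lem:b4}, compose the positive remainders to get $\psi^+$, and derive uniqueness from thinness (Lemma~\ref{lem:pos_symm}, i.e.\ Lemma~3.28 of \cite{cg2}). Your parenthetical remark that canonicity of $\rep{\x}_A$ is not needed here is also accurate.
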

\begin{proof}
Consider $(x^\sigma, \theta^+) \in \pswit(\x_A)$ and $\varphi^- \in
\ntilde{\x_A}$. We show
that there is unique $\phi^\sigma : x^\sigma \sym_\sigma y^\sigma$ and
$\psi^+ : y^\sigma_A
\sym_A^+
\rep{\x}_A$ making the following diagram commute:
\[
\xymatrix@R=10pt@C=10pt{
x^\sigma_A   \ar[rr]^{\theta^+}
        \ar[d]_{\phi^\sigma_A}&&
\rep{\x}_A
        \ar[d]^{\varphi^-}\\
y^\sigma_A   \ar[rr]_{\psi^+}&&
\rep{\x}_A
}
\]

For existence, by Lemma \ref{lem:factor}, $\varphi^- \circ \theta^+ :
x^\sigma_A \sym_A
\rep{\x}_A$ factors uniquely as
$\Xi^+ \circ \Xi^- : x^\sigma_A \sym_A \rep{\x}_A$.
Next, by Lemma \ref{lem:b4}, there is $\phi^\sigma : x^\sigma
\sym_\sigma y^\sigma$ such
that we have
\[
\phi^\sigma_A = \Omega^+ \circ \Xi^- : x^\sigma_A \sym_A y^\sigma_A
\]
for some $\Omega^+ : y_A \sym_A^+ y^\sigma_A$. We then form $\psi^+ =
\Xi^+
\circ
(\Omega^+)^{-1}$ to conclude.

For uniqueness, if we have $\varphi_1 : x^\sigma \sym_\sigma y^\sigma$ and $\varphi_2 :
x^\sigma \sym_\sigma
z^\sigma$ satisfying the requirements,
\[
\xymatrix@C=40pt@R=15pt{
y^\sigma_A   \ar[r]^{(\sigma \varphi_1)^{-1}}
        \ar[d]_+&
x^\sigma_A   \ar[d]_+
        \ar[r]^{\sigma\varphi_2}&
z^\sigma_A   \ar[d]^+\\
\rep{\x}_A
        \ar[r]_{(\varphi^-)^{-1}}&
\rep{\x}_A
        \ar[r]_{\varphi^-}&
\rep{\x}_A
}
\]
commutes, so $(\sigma \varphi_2) \circ (\sigma \varphi_1)^{-1} = \sigma
(\varphi_2
\circ \varphi_1^{-1})$ is positive, so by Lemma 3.28 of \cite{cg2} we
have $\varphi_2
\circ \varphi_1^{-1} = \id$, so $\varphi_1 = \varphi_2$.
\end{proof}

It follows easily that there is a group action
\[
(\_ \acts \_) : \ntilde{\x_A} \times \pswit(\x_A) \to \pswit(\x_A)\,,
\]
though we shall not use this specifically.

Next, we show that representatives of symmetry classes of configurations
in $\sigma$ can always be chosen to be positively symmetric to the
chosen representative in the game.

\begin{lem}
Consider $\x^\sigma \in \swit_\sigma(\x_A)$. 
Then, there exists $x^\sigma \in \x^\sigma$ such that $\pr_\sigma
x^\sigma \sym_A^+ \rep{\x}_A$.
\end{lem}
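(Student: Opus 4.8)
The plan is to start from an arbitrary representative $z^\sigma \in \x^\sigma$ and show that one can always ``fix up'' its display by acting with a negative symmetry, replacing $z^\sigma$ with a symmetric configuration whose display is positively symmetric to $\rep{\x}_A$. First I would unpack what we know: $z^\sigma \in \confp{\sigma}$ (since $\x^\sigma$ is a symmetry class of $+$-covered configurations) and $\pr_\sigma(z^\sigma) \in \x_A$, so there is some symmetry $\theta : \pr_\sigma(z^\sigma) \sym_A \rep{\x}_A$. By Lemma~\ref{lem:factor} this factors uniquely as $\theta = \theta^+ \circ \theta^-$ with $\theta^- : \pr_\sigma(z^\sigma) \sym_A^- w_A$ and $\theta^+ : w_A \sym_A^+ \rep{\x}_A$ for some intermediate $w_A \in \conf{A}$.

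The key step is then to apply Lemma~\ref{lem:b4} to the strategy $\sigma$, the configuration $z^\sigma$, and the negative symmetry $\theta^- : z^\sigma_A \sym_A^- w_A$ (here $z^\sigma_A = \pr_\sigma(z^\sigma)$). This yields a unique $\varphi : z^\sigma \sym_\sigma x^\sigma$ together with a positive symmetry $\theta'^+ : w_A \sym_A^+ x^\sigma_A$ such that $\pr_\sigma(\varphi) = \theta'^+ \circ \theta^- : z^\sigma_A \sym_A x^\sigma_A$. Since $\varphi : z^\sigma \sym_\sigma x^\sigma$, we have $x^\sigma$ in the same symmetry class as $z^\sigma$, i.e. $x^\sigma \in \x^\sigma$. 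Moreover $x^\sigma \in \confp{\sigma}$ because symmetries are order-isomorphisms preserving polarities (so they preserve $+$-coveredness). It remains to observe that $x^\sigma_A \sym_A^+ \rep{\x}_A$: indeed we have $\theta'^+ : w_A \sym_A^+ x^\sigma_A$ hence $(\theta'^+)^{-1} : x^\sigma_A \sym_A^+ w_A$, and $\theta^+ : w_A \sym_A^+ \rep{\x}_A$; composing and using that $\ptilde{A}$ is closed under composition (groupoid axiom for the positive isomorphism family) gives $\theta^+ \circ (\theta'^+)^{-1} : x^\sigma_A \sym_A^+ \rep{\x}_A$ as required.

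There is no real obstacle here; this is essentially a direct application of the factorization lemma and the ``Opponent changes copy indices, Player adapts'' lemma. The only point requiring a little care is the bookkeeping of which side of each symmetry is positive or negative, and the verification that $x^\sigma$ remains $+$-covered — but this follows immediately from symmetries being polarity-preserving order-isomorphisms, so that $m$ maximal in $x^\sigma$ corresponds under $\varphi^{-1}$ to a maximal element of $z^\sigma$, which is positive by hypothesis. One could phrase the whole argument slightly more slickly by noting that $\pswit_\sigma(\x_A)$ (as defined in Appendix~\ref{app:invwit}) is nonempty whenever $\swit_\sigma(\x_A)$ is, and then extracting the configuration component; but the direct route via Lemmas~\ref{lem:factor} and~\ref{lem:b4} is cleanest and self-contained.
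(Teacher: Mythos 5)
Your proof is correct and follows exactly the same route as the paper's: factor a symmetry $\theta : z^\sigma_A \sym_A \rep{\x}_A$ as $\theta^+ \circ \theta^-$ via Lemma~\ref{lem:factor}, let the negative part act on the configuration via Lemma~\ref{lem:b4}, and compose the resulting positive symmetries. The extra remarks on preservation of $+$-coveredness are fine (the paper leaves this implicit).
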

\begin{proof}
By hypothesis, $\pr_\sigma x^\sigma = x^\sigma_A \in \x_A$, so there
exists 
\[
x^{\sigma}_A \stackrel{\theta}{\sym_A} \rep{\x}_A
\]
which factors uniquely as $x^\sigma_A
\stackrel{\theta^-}{\sym_A^-} y_A \stackrel{\theta^+}{\sym_A^+}
\rep{\x}_A$ by Lemma \ref{lem:factor}. But then, by Lemma \ref{lem:b4},
there is $\varphi : x^\sigma \sym_\sigma y^\sigma$ and $\psi^+ : y_A
\sym_A^+ y^\sigma_A$ such that  
\[
\pr_\sigma \varphi = \psi^+ \circ \theta^- : x^\sigma_A \sym_A
y^\sigma_A\,,
\]
so that in particular $y^\sigma \in \x^\sigma$ and $y^\sigma_A
\stackrel{(\psi^+)^{-1}}{\sym_A^+} y_A \stackrel{\theta^+}{\sym_A^+}
\rep{\x}_A$.
\end{proof}

So, for each $\x^\sigma \in \swit_\sigma(\x_A)$ we fix a representative
$\rep{\x}^\sigma$ such that $\rep{\x}^\sigma_A \sym_A^+ \rep{\x}_A$.
We also choose a reference $\theta_{\x^\sigma}^+ : \rep{\x}^\sigma_A
\sym_A^+ \rep{\x}_A$. Finally, for every $x^\sigma \in \x^\sigma$ we
choose $\kappa_{x^\sigma} : \rep{\x}^\sigma \sym_\sigma x^\sigma$.

Our aim is, for every symmetry class $\x^\sigma \in \swit_\sigma(\x_A)$,
count the number of concrete witnesses in $\x^\sigma$. We introduce some
notations for this set -- let us write
\begin{eqnarray*}
\wit^+_\sigma[\x^\sigma] &=& \{x^\sigma \in \wit^+_\sigma(\x_A) \mid
x^\sigma \in \x^\sigma\}\\
\pswit_\sigma[\x^\sigma] &=& 
\{x^\sigma \in \pswit_\sigma(\x_A) \mid x^\sigma \in \x^\sigma\}
\end{eqnarray*}
for the concrete witnesses (resp. $\sim^+$-witnesses) within a
symmetry class $\x^\sigma \in \swit_\sigma(\x_A)$. 

Then, we prove the following bijection:

\begin{prop}\label{prop:bijwitc}
There is a bijection
$\pswit_\sigma[\x^\sigma] \times \tilde{\x^\sigma} \bij \tilde{\x_A}$.
\end{prop}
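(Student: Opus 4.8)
The statement to prove is a bijection $\pswit_\sigma[\x^\sigma] \times \tilde{\x^\sigma} \bij \tilde{\x_A}$, where $\x^\sigma \in \swit_\sigma(\x_A)$ is a fixed symmetry class of $+$-covered configurations of $\sigma$ lying above $\x_A$, and $\tilde{\x^\sigma}$ (resp.\ $\tilde{\x_A}$) denotes the endosymmetries of the chosen representative $\rep{\x}^\sigma$ (resp.\ $\rep{\x}_A$). The plan is to exploit the action of negative game-symmetries on $\sim^+$-witnesses (Proposition~\ref{prop:act1}) together with the factorization lemmas (Lemma~\ref{lem:factor}, Lemma~\ref{lem:b4}) and the rigidity of positive symmetries in strategies (Lemma~\ref{lem:pos_symm}). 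The underlying intuition is that an endosymmetry $\varphi \in \tilde{\x_A}$ of the game's representative can be "lifted" to act on any fixed reference $\sim^+$-witness; the resulting configuration in $\sigma$ is some other $\sim^+$-witness in the same symmetry class $\x^\sigma$, together with a symmetry between them — and conversely the data of a $\sim^+$-witness in $\x^\sigma$ plus an endosymmetry of $\rep{\x}^\sigma$ records exactly one such lift.

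First I would fix, once and for all, a reference $\sim^+$-witness: since $\rep{\x}^\sigma_A \sym_A^+ \rep{\x}_A$ by the choice of representatives made just above the statement, the pair $(\rep{\x}^\sigma, \theta_{\x^\sigma}^+)$ belongs to $\pswit_\sigma[\x^\sigma]$. Given $\varphi \in \tilde{\x_A}$, applying Proposition~\ref{prop:act1} to $(\rep{\x}^\sigma, \theta_{\x^\sigma}^+)$ and $\varphi^-$ — where $\varphi^-$ is the negative part of the canonical factorization of $\varphi$ obtained from Lemma~\ref{lem:factor} — is not quite what we want, because Proposition~\ref{prop:act1} is stated for $\varphi^- \in \ntilde{\x_A}$ only. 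Instead I would apply Lemma~\ref{lem:b4} directly: factor $\varphi \circ \theta_{\x^\sigma}^+ : \rep{\x}^\sigma_A \sym_A \rep{\x}_A$ as $\Xi^+ \circ \Xi^-$ via Lemma~\ref{lem:factor}, then use Lemma~\ref{lem:b4} on $\Xi^- : \rep{\x}^\sigma_A \sym_A^- y_A$ to obtain a unique $\phi : \rep{\x}^\sigma \sym_\sigma y^\sigma$ and $\Omega^+ : y_A \sym_A^+ y^\sigma_A$, and set $\psi^+ = \Xi^+ \circ (\Omega^+)^{-1} : y^\sigma_A \sym_A^+ \rep{\x}_A$. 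Then $(y^\sigma, \psi^+) \in \pswit_\sigma[\x^\sigma]$ since $y^\sigma \sym_\sigma \rep{\x}^\sigma$ implies $y^\sigma \in \x^\sigma$, and the composite $\kappa_{y^\sigma}^{-1} \circ \phi : \rep{\x}^\sigma \sym_\sigma \rep{\x}^\sigma$ — using the chosen $\kappa_{y^\sigma}$ — is an element of $\tilde{\x^\sigma}$. So I would define the map $\tilde{\x_A} \to \pswit_\sigma[\x^\sigma] \times \tilde{\x^\sigma}$ by $\varphi \mapsto \big((y^\sigma, \psi^+),\ \kappa_{y^\sigma}^{-1} \circ \phi\big)$.

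For the inverse direction, given $(x^\sigma, \theta^+) \in \pswit_\sigma[\x^\sigma]$ and $\vartheta \in \tilde{\x^\sigma}$, I would first transport $\vartheta$ along $\kappa_{x^\sigma}$ to a symmetry $\kappa_{x^\sigma} \circ \vartheta \circ \kappa_{x^\sigma}^{-1}$... more carefully: form $\phi = \kappa_{x^\sigma} \circ \vartheta : \rep{\x}^\sigma \sym_\sigma x'^\sigma$ for the appropriate target (unwinding the definition of $\tilde{\x^\sigma}$ as endosymmetries of $\rep{\x}^\sigma$, and using the $\kappa$'s to move to $x^\sigma$), and then set $\varphi = \theta^+ \circ \pr_\sigma(\phi) \circ (\theta_{\x^\sigma}^+)^{-1} : \rep{\x}_A \sym_A \rep{\x}_A$, an endosymmetry of $\rep{\x}_A$ as required. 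I would then check the two composites are the identity; the non-trivial verification is that starting from $\varphi \in \tilde{\x_A}$, running the forward map, then the inverse, returns $\varphi$ — this follows from the uniqueness clause in Lemma~\ref{lem:b4} together with Lemma~\ref{lem:pos_symm} (the discrepancy between two lifts would display to a positive symmetry of $\sigma$, forcing it to be an identity), exactly as in the uniqueness argument of Proposition~\ref{prop:act1}. The other composite is a routine diagram chase.

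The main obstacle is bookkeeping: there is a tangle of chosen data ($\rep{\x}^\sigma$, $\theta_{\x^\sigma}^+$, the family $\kappa_{x^\sigma}$, and the game-side factorizations) and one must be scrupulous about which symmetry lives over which configuration, and about the direction (domain/codomain) of each. The conceptual content is entirely concentrated in Lemma~\ref{lem:b4} (lifting a negative game-symmetry to the strategy, uniquely up to a residual positive game-symmetry) and Lemma~\ref{lem:pos_symm} (positive symmetries of the strategy that display to positive symmetries of the game are trivial); once these are in hand, well-definedness, the construction of the inverse, and the verification that the two maps are mutually inverse are all forced. I expect the cleanest write-up fixes the reference witness $(\rep{\x}^\sigma, \theta_{\x^\sigma}^+)$ and phrases both directions as "the unique lift of such-and-such game symmetry", so that mutual inverseness is immediate from uniqueness rather than from explicit computation.
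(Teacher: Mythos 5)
Your proof is correct and follows essentially the same route as the paper: one direction of the bijection is the "determined by composition" map $\varphi \mapsto \theta^+ \circ (\kappa_{x^\sigma}\circ\vartheta)_A \circ (\theta^+_{\x^\sigma})^{-1}$, and the other is the unique lift of a game endosymmetry through the strategy, with mutual inverseness forced by thinness (Lemma~\ref{lem:pos_symm}). The only difference is cosmetic: where you inline Lemma~\ref{lem:factor} plus Lemma~\ref{lem:b4}, the paper factors $\psi$ as $\psi^+\circ\psi^-$ using canonicity of $\rep{\x}_A$ and invokes Proposition~\ref{prop:act1}, whose proof is exactly that combination.
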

\begin{proof}
First we show that for every $(x^\sigma, \theta^+) \in
\pswit_\sigma(\x_A)$ and $\varphi \in \tilde{\x^\sigma}$, there is a
unique $\psi \in \tilde{\x_A}$ such that the following diagram commutes:
\[
\xymatrix@R=15pt@C=15pt{
\rep{\x}^\sigma_A
	\ar[r]^{\theta_{\x^\sigma}^+}
	\ar[d]_{(\kappa_{x^\sigma} \circ \varphi)_A}&
\rep{\x}_A 
	\ar[d]^{\psi}\\
x^\sigma_A
	\ar[r]_{\theta^+}&
\rep{\x}_A
}
\]
but this is obvious, as $\psi$ is determined by composition from the
other components.

Reciprocally, we show that for all $\psi \in \tilde{\x_A}$, there are
unique $(x^\sigma, \theta^+) \in \pswit_\sigma(\x_A)$ and $\varphi \in
\tilde{\x^\sigma}$ such that the same diagram above commutes. First, by
canonicity of $\rep{\x}_A$, $\psi = \psi^+ \circ \psi^-$ for $\psi^- \in
\ntilde{\x_A}$ and $\psi^+ \in \ptilde{\x_A}$. By Proposition
\ref{prop:act1}, there are unique $(y^\sigma, \omega^+) \in
\pswit_\sigma(\x_A)$ and $\phi^\sigma : \rep{\x}^\sigma \sym_\sigma
y^\sigma$ such that the following diagram commutes:
\[
\xymatrix@R=15pt@C=15pt{
\rep{\x}^\sigma_A
	\ar[r]^{\theta^+_{\x^\sigma}}
	\ar[d]_{\phi^\sigma_A}&
\rep{\x}_A
	\ar[d]^{\psi^-}\\
y^\sigma_A
	\ar[r]_{\omega^+}&
\rep{\x}_A
}
\]

We may then define
$x^\sigma := y^\sigma$,
$\theta^+ := \psi^+ \circ \omega^+$, and
$\varphi := (\kappa_{y^\sigma})^{-1} \circ \phi^\sigma$
and the diagram is obviously satisfied. It remains to prove uniqueness,
so assume we have $(z^\sigma, \vartheta^+) \in \pswit_\sigma(\x_A)$ and
$\nu \in \tilde{\x^\sigma}$ such that the following diagram commutes:
\[
\xymatrix@R=15pt@C=15pt{
\rep{\x}^\sigma_A
        \ar[r]^{\theta^+_{\x^\sigma}}
        \ar[d]_{(\kappa_{z^\sigma} \circ \nu)_A}&
\rep{\x}_A
        \ar[d]^{\psi}\\
z^\sigma_A
        \ar[r]_{\vartheta^+}&
\rep{\x}_A
}
\]

But then $(\kappa_{z^\sigma} \circ \nu)\circ (\kappa_{x^\sigma} \circ
\varphi^{-1}) \in \tilde{\sigma}$ displays to a positive symmetry, so
must be an identity by Lemma \ref{lem:pos_symm}. Thus $x^\sigma =
y^\sigma$, so $\nu = \varphi$, and so $\theta^+ = \vartheta^+$
as it is uniquely determined from the other components by the diagram.
This gives constructions in both directions, and that they are inverses
follows directly from the uniqueness properties.
\end{proof}

From that bijection, we may conclude the following result:

\begin{thm}
Consider $A$ a game, $\sigma : A$ and $\x_A \in \wconf{A}$, and
$\x^\sigma \in \swit_\sigma(\x_A)$. Then,
\[
\sharp \wit_\sigma^+[\x^\sigma] = \frac{\sharp \ntilde{\x_A}}{\sharp
\tilde{\x^\sigma}}\,.
\]
\end{thm}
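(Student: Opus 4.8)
The key tool is the bijection of Proposition \ref{prop:bijwitc}, which gives
\[
\pswit_\sigma[\x^\sigma] \times \tilde{\x^\sigma} \bij \tilde{\x_A}\,,
\]
so that $\sharp \pswit_\sigma[\x^\sigma] \times \sharp \tilde{\x^\sigma} = \sharp \tilde{\x_A}$ (all these sets being finite, since symmetry classes of finite configurations have finite automorphism groups). Thus the theorem reduces to establishing
\[
\sharp \pswit_\sigma[\x^\sigma] = \sharp \wit_\sigma^+[\x^\sigma] \times \sharp \ptilde{\x_A}\,,
\]
from which one concludes by substituting and simplifying: $\sharp \wit_\sigma^+[\x^\sigma] = \sharp \pswit_\sigma[\x^\sigma] / \sharp \ptilde{\x_A} = \sharp \tilde{\x_A} / (\sharp \tilde{\x^\sigma} \cdot \sharp \ptilde{\x_A}) = \sharp \ntilde{\x_A} / \sharp \tilde{\x^\sigma}$, the last equality using that $\rep{\x}_A$ is canonical, so $\tilde{\x_A} \bij \ntilde{\x_A} \times \ptilde{\x_A}$ and $\sharp \tilde{\x_A} = \sharp \ntilde{\x_A} \times \sharp \ptilde{\x_A}$.

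The core step is therefore the bijection
\[
\pswit_\sigma[\x^\sigma] \bij \wit_\sigma^+[\x^\sigma] \times \ptilde{\x_A}\,.
\]
A $\sim^+$-witness in $\x^\sigma$ is a pair $(x^\sigma, \theta^+)$ with $x^\sigma \in \x^\sigma$ and $\theta^+ : x^\sigma_A \sym_A^+ \rep{\x}_A$; a positive witness in $\x^\sigma$ is simply some $x^\sigma \in \x^\sigma$ with $x^\sigma_A \sym_A^+ \rep{\x}_A$ — note this is \emph{not} all of $\x^\sigma$, but those configurations whose display is \emph{positively} symmetric to $\rep{\x}_A$. The plan is: given $(x^\sigma, \theta^+) \in \pswit_\sigma[\x^\sigma]$, one decomposes $\theta^+$ along the chosen reference symmetries. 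Using Lemma \ref{lem:b4} (negative symmetries act on strategies) one factors the "negative part" of the transport so that it is realized by a symmetry of $\sigma$, landing on a genuine positive witness $y^\sigma$ with $y^\sigma \in \x^\sigma$ and $y^\sigma_A \sym_A^+ \rep{\x}_A$; the "leftover" is a positive endosymmetry of $\rep{\x}_A$, i.e. an element of $\ptilde{\x_A}$. More precisely, using the fixed reference $\kappa^+_{y^\sigma} : y^\sigma_A \sym_A^+ \rep{\x}_A$ (which exists by the same argument as in the proof of Proposition \ref{prop:bijwitc}, choosing representatives positively symmetric to $\rep{\x}_A$), the composite $\theta^+ \circ (\text{transport})^{-1} \circ (\kappa^+_{y^\sigma})^{-1}$ lands in $\ptilde{\x_A}$. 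Conversely, given $y^\sigma \in \wit_\sigma^+[\x^\sigma]$ (so there is a fixed $\kappa^+_{y^\sigma} : y^\sigma_A \sym_A^+ \rep{\x}_A$) and $\varphi^+ \in \ptilde{\x_A}$, the pair $(y^\sigma, \varphi^+ \circ \kappa^+_{y^\sigma})$ is an element of $\pswit_\sigma[\x^\sigma]$; one checks this is the inverse.

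\textbf{Main obstacle.}
The subtle point — and where I expect the real work to be — is well-definedness and injectivity of the forward map: extracting, from an arbitrary $(x^\sigma, \theta^+)$, a \emph{canonical} positive witness $y^\sigma$ in the same symmetry class, in a way that does not depend on spurious choices. One first factors $\theta^+ : x^\sigma_A \sym_A \rep{\x}_A$ as $\Xi^+ \circ \Xi^-$ via Lemma \ref{lem:factor} (noting $\theta^+$, despite the name, is a symmetry one must re-factor — I would rename it $\theta$ to avoid confusion), then applies Lemma \ref{lem:b4} to let $\sigma$ act on $\Xi^-$, producing $\phi^\sigma : x^\sigma \sym_\sigma y^\sigma$ with $y^\sigma_A$ positively symmetric to $\rep{\x}_A$. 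Uniqueness of $y^\sigma$ follows from Lemma \ref{lem:pos_symm} exactly as in Proposition \ref{prop:act1}: any two candidates differ by a symmetry of $\sigma$ displaying to a positive symmetry, hence by an identity. The bookkeeping of how the reference symmetries $\kappa^+_{y^\sigma}$, $\theta_{\x^\sigma}^+$ and $\kappa_{x^\sigma}$ interact — to confirm both composites land in $\ptilde{\x_A}$ and are mutually inverse — is routine diagram-chasing once the factorizations are in place, entirely parallel to the proof of Proposition \ref{prop:bijwitc} but tracking polarities rather than full symmetries.
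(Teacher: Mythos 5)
Your proposal matches the paper's proof: both combine Proposition \ref{prop:bijwitc} with the bijection $\pswit_\sigma[\x^\sigma] \bij \ptilde{\x_A}\times\wit^+_\sigma[\x^\sigma]$ (the paper obtains this by restricting Lemma \ref{lem:pswit_wit}, whose map $\Psi$ fixes the underlying configuration and hence respects symmetry classes in $\sigma$) together with the canonicity splitting $\sharp\tilde{\x_A} = \sharp\ntilde{\x_A}\cdot\sharp\ptilde{\x_A}$, and then cancel the finite factor $\sharp\ptilde{\x_A}$. Your anticipated ``main obstacle'' is in fact vacuous: the symmetry component of a $\sim^+$-witness is already \emph{positive}, so the factorization via Lemma \ref{lem:factor} has trivial negative part, Lemma \ref{lem:b4} returns the identity on $x^\sigma$ (by \emph{thin}), and the forward map collapses to $(x^\sigma,\theta^+)\mapsto(\theta^+\circ(\kappa^+_{x^\sigma_A})^{-1}, x^\sigma)$, which is exactly the paper's $\Psi$ --- the transport machinery you describe is only needed for the variant where the attached symmetry is a general one, as in the ``absorption of symmetries'' corollary.
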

\begin{proof}
By Proposition \ref{prop:bijwitc}, we have
$\sharp \pswit_\sigma[\x^\sigma] \times \sharp \tilde{\x^\sigma} = \sharp
\tilde{\x_A}$, so we have
\[
\sharp \ptilde{\x_A} \times \sharp \wit^+_\sigma[\x^\sigma] \times
\sharp \tilde{\x^\sigma} = \sharp
\ptilde{\x_A} \times \sharp \ntilde{\x_A}
\]
using Lemma \ref{lem:pswit_wit} and canonicity of $\rep{\x}_A$. The
identity follows.
\end{proof}

This finally lets us state the collapse formula for symmetry classes --
below we use that by invariance under symmetry, any the valuation of any
$\R$-strategy lifts canonically to symmetry classes.
So we can finally reformulate \eqref{eq:qcoll} as:

\begin{thm}
Consider $\sigma \in \text{$\R$-$\Strat(A, B)$}$ and $\x_A \in
\wconf{A}$, $\x_B \in \wconf{B}$. Then,
\[
\coll(\sigma)_{\x_A, \x_B} = \sum_{\x^\sigma \in \swit_\sigma(\x_A,
\x_B) }
\frac{\sharp \ntilde{\x_A}}{\sharp \tilde{\x^\sigma}} *
\v_\sigma(\x^\sigma)
\]
\end{thm}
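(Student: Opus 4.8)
The plan is to unfold the definition of $\coll(\sigma)_{\x_A, \x_B}$ from \eqref{eq:qcoll} and partition the set $\wit^+_\sigma(\x_A, \x_B)$ of positive witnesses according to the symmetry class to which each witness belongs. Every $x^\sigma \in \wit^+_\sigma(\x_A, \x_B)$ satisfies $x^\sigma_A \sym_A^- \rep{\x}_A$ and $x^\sigma_B \sym_B^+ \rep{\x}_B$, so in particular $\pr_\sigma(x^\sigma) \in \x_A \parallel \x_B$, hence the symmetry class $\x^\sigma$ of $x^\sigma$ lies in $\swit_\sigma(\x_A, \x_B)$. Thus I would write
\[
\wit^+_\sigma(\x_A, \x_B) = \biguplus_{\x^\sigma \in \swit_\sigma(\x_A, \x_B)} \wit^+_\sigma[\x^\sigma]\,,
\]
using the notation $\wit^+_\sigma[\x^\sigma] = \{x^\sigma \in \wit^+_\sigma(\x_A, \x_B) \mid x^\sigma \in \x^\sigma\}$ introduced before Proposition~\ref{prop:bijwitc}. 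Since the valuation $\v_\sigma$ is invariant under symmetry, it is constant on each $\x^\sigma$ and descends to a function on symmetry classes, so for any $x^\sigma \in \x^\sigma$ we have $\v_\sigma(x^\sigma) = \v_\sigma(\x^\sigma)$.

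From here the computation is a straightforward rearrangement of the sum:
\begin{eqnarray*}
\coll(\sigma)_{\x_A, \x_B}
&=& \sum_{x^\sigma \in \wit^+_\sigma(\x_A, \x_B)} \v_\sigma(x^\sigma)\\
&=& \sum_{\x^\sigma \in \swit_\sigma(\x_A, \x_B)} \sum_{x^\sigma \in \wit^+_\sigma[\x^\sigma]} \v_\sigma(\x^\sigma)\\
&=& \sum_{\x^\sigma \in \swit_\sigma(\x_A, \x_B)} (\sharp \wit^+_\sigma[\x^\sigma]) * \v_\sigma(\x^\sigma)\,,
\end{eqnarray*}
where $*$ denotes integer multiplication in $\R$ as defined in Section~\ref{sec:wrel_npcf}, and I would note carefully that the splitting of the indexed sum over a partition into an iterated sum is licensed by the definition \eqref{eq:inf_sum} of infinite sums as suprema of finite partial sums together with the continuity of $+$ (this is the standard reindexing property for such sums, which I would invoke rather than reprove). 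Then I substitute the value of $\sharp\wit^+_\sigma[\x^\sigma]$ provided by the theorem immediately preceding this statement, namely $\sharp\wit^+_\sigma[\x^\sigma] = \sharp\ntilde{\x_A} / \sharp\tilde{\x^\sigma}$, which yields exactly
\[
\coll(\sigma)_{\x_A, \x_B} = \sum_{\x^\sigma \in \swit_\sigma(\x_A, \x_B)} \frac{\sharp \ntilde{\x_A}}{\sharp \tilde{\x^\sigma}} * \v_\sigma(\x^\sigma)\,,
\]
as required.

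There is essentially only one genuine subtlety, and it is the point I would treat most carefully: the quantity $\sharp\ntilde{\x_A} / \sharp\tilde{\x^\sigma}$ is a plain natural number (this is the content of the cited theorem, obtained by the integer-division argument from $\sharp\ptilde{\x_A} \times \sharp\wit^+_\sigma[\x^\sigma] \times \sharp\tilde{\x^\sigma} = \sharp\ptilde{\x_A} \times \sharp\ntilde{\x_A}$), so the expression $\frac{\sharp\ntilde{\x_A}}{\sharp\tilde{\x^\sigma}} * \v_\sigma(\x^\sigma)$ is well-defined as an integer multiple of an element of $\R$, with no division happening inside $\R$ itself. I would make this explicit so the reader does not mistake the formula for requiring division in the semiring. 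Everything else is bookkeeping: the finiteness of $\ntilde{\x_A}$ and $\tilde{\x^\sigma}$ (needed for the cited theorem to make sense) follows since $\rep{\x}_A$ and representatives of $\x^\sigma$ are finite configurations and symmetries between finite configurations form finite sets, and the partition of $\wit^+_\sigma(\x_A, \x_B)$ is evidently disjoint since distinct symmetry classes are disjoint. Hence the proof reduces, as the excerpt's phrasing "we can finally reformulate" suggests, to assembling \eqref{eq:qcoll}, the invariance of valuations under symmetry, and the preceding cardinality theorem.
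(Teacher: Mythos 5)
Your proof is correct and follows essentially the same route as the paper's: partition $\wit^+_\sigma(\x_A,\x_B)$ into the fibres $\wit^+_\sigma[\x^\sigma]$ over symmetry classes, use invariance of $\v_\sigma$ under symmetry to pull the valuation out of each inner sum, and substitute the cardinality $\sharp\wit^+_\sigma[\x^\sigma] = \sharp\ntilde{\x_A}/\sharp\tilde{\x^\sigma}$ from the preceding theorem. The extra remarks you add (that the quotient is a genuine natural number so no division occurs in $\R$, and that the reindexing of the sum is licensed by continuity) are sound and, if anything, make the argument slightly more explicit than the paper's.
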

\begin{proof}
We calculate:
\begin{eqnarray*}
\coll(\sigma)_{\x_A, \x_B} &=&
   \sum_{x^\sigma \in \wit^+_\sigma(\x_A, \x_B)} \v_\sigma(x^\sigma)\\
&=&\sum_{\x^\sigma \in \swit_\sigma(\x_A, \x_B)}
\sum_{x^\sigma \in \wit^+_\sigma[\x^\sigma]}
\v_\sigma(x^\sigma)\\
&=& \sum_{\x^\sigma \in \swit_\sigma(\x_A, \x_B)}
\sharp \wit^+_\sigma[\x^\sigma] * \v_\sigma(\x^\sigma)\\
&=& 
\sum_{\x^\sigma \in \swit_\sigma(\x_A, \x_B)} 
\frac{\sharp \ntilde{\x_A}}{\sharp \tilde{\x^\sigma}} *
\v_\sigma(\x^\sigma)
\end{eqnarray*}
\end{proof}

\subsection{Absorption of Symmetries} As final contribution, we include
a property which, though not used for the main results of this paper,
was required for the quantum collapse of
\cite{DBLP:journals/pacmpl/ClairambaultV20}. As such, we believe it fits
with the present development.

For $A$ a game, $\sigma : A$ a strategy, and $\x_A \in \wconf{A}$,
a variant of the $\sim^+$-witnesses is
\[
\text{$\sim$-$\wit^+_\sigma(\x_A)$} = \{(x^\sigma, \theta) \mid x^\sigma
\in \wit^+_\sigma(\x_A),~\theta : x^\sigma_A \sym_A \rep{\x}_A\}\,,
\]
so we still consider witnesses $x^\sigma \in \confp{\sigma}$ such that
$x^\sigma_A \sym_A^+ \rep{\x}_A$ still, but associated with all possible
\emph{symmetries}, not only positive symmetries as in
$\pswit_\sigma(\x_A)$. Our last contribution consists in counting
$\text{$\sim$-$\wit^+_\sigma(\x_A)$}$, compared to $\pswit_\sigma(\x_A)$.
First we need:

\begin{lem}\label{lem:dec_canonical}
Consider $A$ a tcg, $\x_A \in \sconf{A}$, and $x \in \conf{A}$ s.t.
$x \sym_A^+ \rep{\x}_A$.

Then, any $\theta : x \sym_A \rep{\x}_A$ factors uniquely as $\theta^-
\circ \theta^+$, where $\theta^+ : x \sym_A^+ \rep{\x}_A$, $\theta^-
\in \ntilde{\x_A}$.
\end{lem}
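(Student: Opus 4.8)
The plan is to reduce the statement to the factorisation Lemma~\ref{lem:factor}, using the canonicity of the chosen representative $\rep{\x}_A$ to pin down the intermediate object of the factorisation.

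First, for \emph{existence}: apply the ``likewise'' form of Lemma~\ref{lem:factor} to $\theta : x \sym_A \rep{\x}_A$, obtaining a decomposition $\theta = \theta^- \circ \theta^+$ with $\theta^+ : x \sym_A^+ z$ and $\theta^- : z \sym_A^- \rep{\x}_A$, for a unique intermediate $z \in \conf{A}$. It then remains only to show $z = \rep{\x}_A$. By hypothesis there is some $\kappa^+ : x \sym_A^+ \rep{\x}_A$; since $\ptilde{A}$ is a groupoid, $\kappa^+ \circ (\theta^+)^{-1} : z \sym_A^+ \rep{\x}_A$ is a positive symmetry, while $(\theta^-)^{-1} : \rep{\x}_A \sym_A^- z$ is negative. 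Their composite
\[
\mu \;=\; \bigl(\kappa^+ \circ (\theta^+)^{-1}\bigr) \circ (\theta^-)^{-1} \;:\; \rep{\x}_A \sym_A \rep{\x}_A
\]
is thus an endosymmetry of $\rep{\x}_A$ presented as a positive symmetry after a negative one, with intermediate object $z$. On the other hand, canonicity of $\rep{\x}_A$ says precisely that $\mu$ factors through $\rep{\x}_A$ itself in exactly this shape, and by the uniqueness clause of Lemma~\ref{lem:factor} the intermediate objects of two such factorisations of $\mu$ must coincide; hence $z = \rep{\x}_A$. Therefore $\theta^+ : x \sym_A^+ \rep{\x}_A$ and $\theta^- \in \ntilde{\x_A}$, and $\theta = \theta^- \circ \theta^+$ is the desired decomposition.

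For \emph{uniqueness}, suppose $\theta = \theta^- \circ \theta^+ = \psi^- \circ \psi^+$ with $\theta^+, \psi^+ : x \sym_A^+ \rep{\x}_A$ and $\theta^-, \psi^- \in \ntilde{\x_A} \subseteq \ntilde{A}$. Both are decompositions of $\theta$ as a negative symmetry after a positive one, so by the uniqueness clause of the ``likewise'' form of Lemma~\ref{lem:factor} we get $\theta^+ = \psi^+$ and $\theta^- = \psi^-$. The only delicate point in the whole argument is the step identifying $z$ with $\rep{\x}_A$: one must keep careful track of the polarities and of the direction of each symmetry so that the composite $\mu$ is genuinely an endosymmetry of $\rep{\x}_A$ exhibiting a (positive)$\circ$(negative) factorisation through $z$, after which the conclusion follows from canonicity together with uniqueness of factorisations; everything else is routine bookkeeping with the groupoid structure of $\ptilde{A}$ and $\ntilde{A}$.
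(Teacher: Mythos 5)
Your proof is correct and follows essentially the same route as the paper's: factor $\theta$ via Lemma~\ref{lem:factor} through an intermediate $z$, then use a fixed positive symmetry $x \sym_A^+ \rep{\x}_A$ to build an endosymmetry of $\rep{\x}_A$ whose negative-then-positive factorisation passes through $z$, so that canonicity together with the uniqueness clause of Lemma~\ref{lem:factor} forces $z = \rep{\x}_A$. The only (harmless) difference is that you spell out the uniqueness part explicitly, which the paper leaves implicit in the phrase ``factors uniquely''.
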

\begin{proof}
Fix some $\varphi : x \sym_A^+ \rep{\x}_A$. Now, take $\theta : x \sym_A
\rep{\x}_A$.
By Lemma \ref{lem:factor}, $\theta$ factors uniquely as $\theta^-
\circ \theta^+$,
where $\theta^+ : x \sym_A^+ z$ and $\theta^- : z \sym_A^- \rep{\x}_A$
for some $z
\in \conf{A}$. But then,
\[
\varphi \circ \theta^{-1} : \rep{\x}_A \sym_A \rep{\x}_A
\]
factors via $(\varphi \circ (\theta^+)^{-1}) : z \sym_A^+ \rep{\x}_A$ and
$(\theta^-)^{-1}  : \rep{\x}_A \sym_A^- z$, so $\rep{\x}_A = z$ since
$\rep{\x}_A$ is canonical.
\end{proof}

\begin{cor}
There is a bijection
$\text{$\sim$-$\wit^+_\sigma(\x_A)$} \bij \ntilde{\x_A} \times
\pswit_\sigma(\x_A)$.
\end{cor}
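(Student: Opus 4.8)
The statement is an immediate consequence of the unique positive/negative factorisation of symmetries through a canonical representative, recorded in Lemma~\ref{lem:dec_canonical}. Recall that an element of $\text{$\sim$-$\wit^+_\sigma(\x_A)$}$ is a pair $(x^\sigma, \theta)$ with $x^\sigma \in \wit^+_\sigma(\x_A)$ --- so in particular $x^\sigma_A \sym_A^+ \rep{\x}_A$ --- together with an \emph{arbitrary} symmetry $\theta : x^\sigma_A \sym_A \rep{\x}_A$; an element of $\pswit_\sigma(\x_A)$ is a pair $(x^\sigma, \theta^+)$ with $\theta^+ : x^\sigma_A \sym_A^+ \rep{\x}_A$ \emph{positive}; and $\ntilde{\x_A}$ is the group of negative endosymmetries of $\rep{\x}_A$.

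First I would define the forward map. Given $(x^\sigma, \theta) \in \text{$\sim$-$\wit^+_\sigma(\x_A)$}$, the hypothesis $x^\sigma_A \sym_A^+ \rep{\x}_A$ lets us apply Lemma~\ref{lem:dec_canonical} to $\theta$, obtaining a unique factorisation $\theta = \theta^- \circ \theta^+$ with $\theta^+ : x^\sigma_A \sym_A^+ \rep{\x}_A$ and $\theta^- \in \ntilde{\x_A}$. Send $(x^\sigma, \theta)$ to $\bigl(\theta^-, (x^\sigma, \theta^+)\bigr) \in \ntilde{\x_A} \times \pswit_\sigma(\x_A)$; this is well-defined since $(x^\sigma, \theta^+) \in \pswit_\sigma(\x_A)$ by construction. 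Conversely, given $\theta^- \in \ntilde{\x_A}$ and $(x^\sigma, \theta^+) \in \pswit_\sigma(\x_A)$, set $\theta := \theta^- \circ \theta^+ : x^\sigma_A \sym_A \rep{\x}_A$, which yields $(x^\sigma, \theta) \in \text{$\sim$-$\wit^+_\sigma(\x_A)$}$.

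Finally I would check the two maps are mutually inverse. Forward-then-backward recomposes $\theta^- \circ \theta^+ = \theta$ and returns the original pair. Backward-then-forward starts from $\theta = \theta^- \circ \theta^+$ with $\theta^+$ positive and $\theta^-$ negative, and by the uniqueness clause of Lemma~\ref{lem:dec_canonical} this is precisely the factorisation extracted by the forward map, so we recover $\bigl(\theta^-, (x^\sigma, \theta^+)\bigr)$. There is no real obstacle: the entire argument is bookkeeping around the canonical factorisation, and the only point deserving care is well-definedness of the forward map, which is exactly where one uses $x^\sigma \in \wit^+_\sigma(\x_A)$ (hence $x^\sigma_A \sym_A^+ \rep{\x}_A$) to license the appeal to Lemma~\ref{lem:dec_canonical}.
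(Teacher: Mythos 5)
Your proof is correct, but it takes a genuinely more direct route than the paper's. You keep the witness $x^\sigma$ fixed and split only the symmetry: $(x^\sigma,\theta) \mapsto (\theta^-,(x^\sigma,\theta^+))$ via the unique factorisation of Lemma~\ref{lem:dec_canonical} (licensed, as you note, by $x^\sigma_A \sym_A^+ \rep{\x}_A$), with inverse given by composition; mutual inversion is immediate from the uniqueness clause. The paper instead routes through Proposition~\ref{prop:act1}: it sends $(x^\sigma,\theta)$ to $(\theta^-,(y^\sigma,\psi^+))$, where $y^\sigma$ is the (generally different) configuration obtained by letting the negative endosymmetry $\theta^-$ act on the $\sim^+$-witness $(x^\sigma,\theta^+)$, together with the induced $\varphi^\sigma : x^\sigma \sym_\sigma y^\sigma$. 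The two bijections differ by the automorphism $(\theta^-,w)\mapsto(\theta^-,\theta^-\acts w)$ of $\ntilde{\x_A}\times\pswit_\sigma(\x_A)$, so both establish the corollary. What the paper's version buys is the explicit ``absorption'' content advertised by the section title: it records how the strategy reindexes its witness in response to Opponent's negative reindexing, keeping track of the mediating symmetry $\varphi^\sigma$ in $\tilde{\sigma}$ --- data that matters in applications where one transports valuations or compares witnesses up to $\sym_\sigma$, as in the quantum collapse. For the bare statement that a bijection exists, your construction is simpler and entirely sufficient.
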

\begin{proof}
First, we show that for all $(x^\sigma, \theta) \in
\text{$\sim$-$\wit^+_\sigma(\x_A)$}$ there are unique $y^\sigma \in
\confp{\sigma}, \varphi^\sigma \in \tilde{\sigma}, \theta^- \in
\ntilde{A}, \theta^+ \in \ptilde{A}$ and $\psi^+ \in \ptilde{A}$, such
that the diagram commutes:
\[
\xymatrix{
x^\sigma_A
	\ar[r]^{\theta^+}
	\ar[d]_{\varphi^\sigma_A}
	\ar[dr]^{\theta}&
\rep{\x}_A
	\ar[d]^{\theta^-}\\
y^\sigma_A
	\ar[r]_{\psi^+}&
\rep{\x}_A
}
\]

By Lemma \ref{lem:dec_canonical}, $\theta$ factors uniquely as claimed.
But then, the other components and their uniqueness follows from
Proposition \ref{prop:act1}. 
Reciprocally, we show that for all $\theta^- \in \ntilde{\x_A}$ and
$(y^\sigma, \psi^+) \in \pswit_\sigma(\x_A)$, there are unique
$x^\sigma \in \conf{\sigma}, \theta \in \tilde{A}, \theta^+ \in
\ptilde{A}$ and
$\varphi^\sigma \in \tilde{\sigma}$ such that the diagram above
commutes -- but this is again Proposition \ref{prop:act1}. 

These two constructions immediately provide the two sides of the
bijection, and that they are inverses  immediately follows from the
uniqueness.
\end{proof}

\end{document}